\documentclass[sigconf]{acmartm}

\AtBeginDocument{%
  }

\usepackage{xcolor}
\usepackage{xspace}
\usepackage{bbm}
\usepackage{graphicx}
\usepackage{hyperref}
\usepackage{cleveref}
\usepackage{stmaryrd}
\usepackage[all]{xy}

\AtEndPreamble{%
  \theoremstyle{acmdefinition}
  \newtheorem{remark}[theorem]{Remark}
}

\crefname{section}{Sec.}{Sec.}
\Crefname{section}{Sec.}{Sec.}
\crefname{example}{Ex.}{Ex.}
\Crefname{example}{Ex.}{Ex.}
\crefname{remark}{Rem.}{Rem.}
\Crefname{remark}{Rem.}{Rem.}
\crefname{lemma}{Lem.}{Lem.}
\Crefname{lemma}{Lem.}{Lem.}
\crefname{proposition}{Prop.}{Prop.}
\Crefname{proposition}{Prop.}{Prop.}
\crefname{definition}{Def.}{Def.}
\Crefname{definition}{Def.}{Def.}
\crefname{theorem}{Thm.}{Thms.}
\Crefname{theorem}{Thm.}{Thms.}
\newcommand{\caseref}[1]{case~\ref{#1}}

\usepackage{mathpartir}
\newcommand{\indrulename}[1]{\texttt{#1}}
\newcommand{\indrule}[3]{\infer{#2}{#3}{\indrulename{#1}}}
\newcommand{\inderiv}[2]{\begin{array}{c}#1\\#2\end{array}}

\renewcommand{\emptyset}{\varnothing}
\newcommand{\Nat}{\mathbb{N}}

\newcommand{\NotNow}[1]{}

\newcommand{\HS}{\hspace{.5cm}}
\newcommand{\ST}{\ |\ }
\newcommand{\ie}{{\em i.e.}\xspace}
\newcommand{\eg}{{\em e.g.}\xspace}

\newcommand{\cf}{{\em cf.}\xspace}
\newcommand{\ih}{i.h.\xspace}
\newcommand{\set}[1]{\{#1\}}

\newcommand{\eqdef}{\,\mathrel{:=}\,}
\newcommand{\iffdef}{\,\mathrel{\overset{\mathrm{def}}{\iff}}\,}

\newcommand{\eqby}[1]{\,=_{#1}\,}
\newcommand{\sub}[2]{\{#1:=#2\}}

\newcommand{\fv}[1]{\mathsf{fv}(#1)}

\newcommand{\dom}[1]{\mathsf{dom}(#1)}
\newcommand{\defn}[1]{\emph{#1}}
\newcommand{\ARG}{(\cdot)}
\newcommand{\powerset}[1]{\mathcal{P}(#1)}
\newcommand{\minof}[2]{\min\{#1,#2\}}

\renewcommand{\theenumi}{\arabic{enumi}}
\renewcommand{\theenumii}{\arabic{enumii}}
\renewcommand{\theenumiii}{\arabic{enumiii}}
\renewcommand{\theenumiv}{\arabic{enumiv}}
\makeatletter
\renewcommand\p@enumii{\theenumi.}
\renewcommand\p@enumiii{\theenumi.\theenumii.}
\renewcommand\p@enumiv{\theenumi.\theenumii.\theenumiii.}
\makeatother

\newcommand{\lambdaST}{{\lambda}^{\to}}
\newcommand{\lambdaF}{{\lambda}^{\mathsf{F}}}
\newcommand{\lambdaFw}{{\lambda}^{\mathsf{F\omega}}}
\newcommand{\lambdaCheck}{\lambda^{\to}_\mathfrak{m}}
\newcommand{\lambdaCheckF}{\lambda^{\mathsf{F}}_{\mathfrak{m}}}
\newcommand{\lambdaCheckFw}{{\lambda^{\mathsf{F\omega}}_{\mathfrak{m}}}}

\newcommand{\origlam}[2]{\lambda #1.\,#2}
\newcommand{\lam}[2]{\origlam{#1}{#2}}
\newcommand{\lamt}[2]{#1^{?}#2}

\newcommand{\tos}{\mathrel{\twoheadrightarrow}}
\newcommand{\symWeakHead}{w}
\newcommand{\tow}{\to_{\symWeakHead}}   
\newcommand{\tows}{\twoheadrightarrow_{\symWeakHead}}
\newcommand{\tof}{\tow}
\newcommand{\tofs}{\tows}

\newcommand{\toinv}{\leftarrow}
\newcommand{\tosym}{\leftrightarrow}
\newcommand{\toinvs}{\twoheadleftarrow}
\newcommand{\conveq}{\twoheadleftarrow\!\!\!\twoheadrightarrow}

\newcommand{\ctxhole}{\Box}
\newcommand{\gctx}{\mathtt{C}}
\newcommand{\wctx}{\mathtt{W}}
\newcommand{\ctxof}[2]{#1[#2]}

\newcommand{\BB}[1]{\textcolor{blue}{#1}}
\newcommand{\RR}[1]{\textcolor{red}{#1}}
\newcommand{\gen}[1]{\BB{\maltese}_{#1}}
\newcommand{\verif}[2]{\RR{\langle}#1\RR{\rangle}#2}
\newcommand{\genteig}[1][\teig]{\gen{#1}}
\newcommand{\verifteig}[2][\teig]{\verif{#1}{#2}}

\newcommand{\iunit}{\star}
\newcommand{\eunit}[2]{#1{?}#2}

\newcommand{\arrow}{\Rightarrow}

\newcommand{\bki}{\kappa}

\newcommand{\ki}{K}
\newcommand{\kitwo}{J}
\newcommand{\kithree}{L}

\newcommand{\tenv}{\xi}
\newcommand{\tenvtwo}{\tenv'}
\newcommand{\typesize}[1]{|#1|}

\newcommand{\origvar}{x}
\newcommand{\origvartwo}{y}
\newcommand{\origvarthree}{z}
\newcommand{\var}{\origvar}
\newcommand{\vartwo}{\origvartwo}
\newcommand{\varthree}{\origvarthree}

\newcommand{\utm}{M}
\newcommand{\utmtwo}{N}
\newcommand{\utmthree}{P}
\newcommand{\utmfour}{Q}
\newcommand{\utmfive}{R}
\newcommand{\utmsix}{S}
\newcommand{\utmProg}{P}

\newcommand{\uhd}{H}
\newcommand{\uneu}{\utm^\uparrow}
\newcommand{\uneutwo}{\utmtwo^\uparrow}
\newcommand{\unf}{\utm^\downarrow}
\newcommand{\unftwo}{\utmtwo^\downarrow}
\newcommand{\unfthree}{\utmthree^\downarrow}
\newcommand{\argu}{I}
\newcommand{\argutwo}{J}
\newcommand{\argunf}{\argu^\downarrow}
\newcommand{\argunftwo}{\argutwo^\downarrow}
\newcommand{\argus}{\vec{\argu}}
\newcommand{\argutwos}{\vec{\argutwo}}
\newcommand{\ttms}{\vec{\ttm}}
\newcommand{\ttmtwos}{\vec{\ttmtwo}}
\newcommand{\ttmthrees}{\vec{\ttmthree}}
\newcommand{\ttmfours}{\vec{\ttmfour}}
\newcommand{\ttmfives}{\vec{\ttmfive}}

\newcommand{\origtm}{t}
\newcommand{\origtmtwo}{s}
\newcommand{\origtmthree}{u}
\newcommand{\tm}{\origtm}
\newcommand{\tmtwo}{\origtmtwo}
\newcommand{\tmthree}{\origtmthree}

\newcommand{\origjudg}[3]{#1\vdash#2:#3}
\newcommand{\judg}[3]{\origjudg{#1}{#2}{#3}}

\newcommand{\subs}[2]{#1^{#2}}
\newcommand{\origsubst}{\sigma}
\newcommand{\subst}{\origsubst}
\newcommand{\gensubst}[1]{\underline{#1}}
\newcommand{\compat}[2]{#1\vDash#2}
\newcommand{\extsub}[2]{[#1\mapsto#2]}

\newcommand{\Prop}{\mathtt{Prop}}
\newcommand{\imp}{\Rightarrow}
\newcommand{\all}[3]{\forall{#1^{#2}}.\,#3}

\newcommand{\tlam}[3]{\lambda{#1^{#2}}.\,#3}

\newcommand{\teig}{\mathbf{a}}
\newcommand{\teigtwo}{\mathbf{b}}
\newcommand{\teigthree}{\mathbf{c}}
\newcommand{\teigP}{\mathbf{p}}

\newcommand{\tvar}{a}
\newcommand{\tvartwo}{b}
\newcommand{\tvarthree}{c}

\newcommand{\ttm}{A}
\newcommand{\ttmtwo}{B}
\newcommand{\ttmthree}{C}
\newcommand{\ttmfour}{D}
\newcommand{\ttmfive}{E}

\newcommand{\tobeta}{\mathrel{\to_\beta}}
\newcommand{\toeta}{\mathrel{\to_\eta}}

\newcommand{\tobetas}{\mathrel{\tos_\beta}}
\newcommand{\toetas}{\mathrel{\tos_\eta}}

\newcommand{\eqbeta}{\mathrel{=_\beta}}

\newcommand{\tobetainvs}{\mathrel{\toinvs_{\beta}}}

\newcommand{\tjudg}[3]{#1\vdash#2:#3}
\newcommand{\pjudgenv}[2]{#1\vdash#2\,\texttt{env}}
\newcommand{\pjudg}[4]{#1;#2\vdash#3:#4}

\newcommand{\pvar}{x}
\newcommand{\pvartwo}{y}

\newcommand{\plam}[2]{\lambda#1.\,#2}
\newcommand{\plamt}[3]{\plam{#1^{#2}}{#3}}
\newcommand{\palli}[3]{\Lambda{#1^{#2}}.\,#3}

\newcommand{\fresh}[3]{\nu{#1^{#2}}.\,#3}

\newcommand{\ptm}{t}
\newcommand{\ptmtwo}{s}
\newcommand{\ptmthree}{u}

\newcommand{\penv}{\Gamma}
\newcommand{\penvtwo}{\Delta}

\newcommand{\wkjudg}[1]{\mathtt{wk}(#1)}
\newcommand{\preq}[2]{#1 \equiv #2}

\newcommand{\tsubst}{\sigma} 
\newcommand{\psubst}{\gamma} 
\newcommand{\psubstext}{{\psubst*}}

\newcommand{\size}[1]{\lvert#1\rvert}
\newcommand{\sig}[3][\penv]{\lvert#2\rvert_{#1\vdash#3}}

\newcommand{\minimalLogic}{minimal logic\xspace}
\newcommand{\typeEnvironment}{environment\xspace}
\newcommand{\atomicProposition}{atomic type\xspace}
\newcommand{\atomicPropositions}{atomic types\xspace}

\newcommand{\unitConstructor}{success\xspace}
\newcommand{\unitEliminator}{guard\xspace}
\newcommand{\kind}{kind\xspace}
\newcommand{\kinds}{kinds\xspace}

\newcommand{\Kinds}{Kinds\xspace}
\newcommand{\logicalVariable}{type-variable\xspace}
\newcommand{\logicalVariables}{type-variables\xspace}
\newcommand{\LogicalVariable}{Type-variable\xspace}
\newcommand{\LogicalVariables}{Type-variables\xspace}
\newcommand{\logicalEigenvariable}{eigenvariable\xspace}
\newcommand{\logicalEigenvariables}{eigenvariables\xspace}
\newcommand{\LogicalEigenvariable}{Eigenvariable\xspace}
\newcommand{\LogicalEigenvariables}{Eigenvariables\xspace}
\newcommand{\logicalTerm}{type\xspace}
\newcommand{\logicalTerms}{types\xspace}
\newcommand{\LogicalTerms}{Types\xspace}
\newcommand{\logicalAbs}{type-abs.\xspace}
\newcommand{\logicalAbstraction}{type-abstraction\xspace}
\newcommand{\logicalAbstractions}{type-abstractions\xspace}

\newcommand{\logicalApp}{type-app.\xspace}
\newcommand{\logicalApplication}{type-application\xspace}

\newcommand{\logicalEnvironment}{kind-environment\xspace}

\newcommand{\logicalSubstitution}{type-substitution\xspace}

\newcommand{\logicalBetaEquivalence}{$\tau$-equivalence\xspace}
\newcommand{\proofVariable}{variable\xspace}
\newcommand{\proofVariables}{variables\xspace}
\newcommand{\proofAbs}{term-abs.\xspace}
\newcommand{\proofAbstraction}{term-abstraction\xspace}

\newcommand{\proofApp}{term-app.\xspace}
\newcommand{\proofApplication}{term-application\xspace}

\newcommand{\proofTerm}{term\xspace}
\newcommand{\proofTerms}{terms\xspace}
\newcommand{\ProofTerms}{Terms\xspace}
\newcommand{\headForm}{head form\xspace}
\newcommand{\headForms}{head forms\xspace}

\newcommand{\HeadForms}{Head forms\xspace}
\newcommand{\proofEnvironment}{environment\xspace}
\newcommand{\proofSubstitution}{substitution\xspace}
\newcommand{\ProofSubstitutions}{Substitutions\xspace}
\newcommand{\metaterm}{metaterm\xspace}
\newcommand{\metaterms}{metaterms\xspace}
\newcommand{\Metaterms}{Metaterms\xspace}
\newcommand{\argument}{input\xspace}
\newcommand{\arguments}{inputs\xspace}
\newcommand{\Argument}{Input\xspace}
\newcommand{\verificationCandidate}{verification candidate\xspace}
\newcommand{\verificationCandidates}{verification candidates\xspace}
\newcommand{\VerificationCandidates}{Verification candidates\xspace}
\newcommand{\candidateAssignment}{candidate-assignment\xspace}
\newcommand{\candidateAssignments}{candidate-assignments\xspace}
\newcommand{\CandidateAssignments}{Candidate-assignments\xspace}
\newcommand{\kindAssignment}{kind-assignment\xspace}
\newcommand{\candidateFamily}{candidate-family\xspace}
\newcommand{\candidateFamilies}{candidate-families\xspace}
\newcommand{\CandidateFamilies}{Candidate-families\xspace}

\newcommand{\judgf}[3]{#1\vdash#2:#3}
\newcommand{\plamf}[2]{\lambda#1.\,#2}
\newcommand{\allf}[2]{\forall#1.\,#2}
\newcommand{\pallfi}[2]{\Lambda{#1}.\,#2}
\newcommand{\freshf}[2]{\nu{#1}.\,#2}
\newcommand{\vcset}{X}
\newcommand{\VC}[1]{\mathcal{VC}_{#1}}
\newcommand{\asig}{\eta}
\newcommand{\asigext}{{\eta*}}
\newcommand{\semf}[2]{\llbracket#1\rrbracket#2}

\newcommand{\compatf}[3][\asig]{#2\vDash^{#1}#3}

\newcommand{\sto}{\rhd} 
\newcommand{\stoinv}{\lhd}
\newcommand{\stovert}{\triangledown}

\newcommand{\kiasig}{\Xi}
\newcommand{\ValidMetaterms}{\mathbb{M}}
\newcommand{\ValidKinds}{\mathbb{K}}
\newcommand{\ValidTypes}{\mathbb{T}}
\newcommand{\ValidTypesK}[1]{\ValidTypes_{#1}}
\newcommand{\GValidTypes}{\ValidTypes^0}
\newcommand{\GValidTypesK}[1]{\ValidTypes^0_{#1}}
\newcommand{\iI}{{i \in I}}
\newcommand{\Prod}[1]{\prod_{#1}}
\newcommand{\singleElem}{\diamondsuit}
\newcommand{\semg}[3]{\llbracket#1:#2\rrbracket#3}
\newcommand{\semgbase}[1]{\llparenthesis#1\rrparenthesis}
\newcommand{\VCK}[2]{\VC{#1:#2}}
\newcommand{\Verif}[1]{\mathsf{Verif}_{#1}}
\newcommand{\VerifK}[2]{\Verif{#1:#2}}

\newcommand{\match}[4]{#1\,/\,#2:#3\Rrightarrow#4}
\newcommand{\coherent}{\frown}
\newcommand{\placeholder}{\text{...}}

\newcommand{\LambdaFInters}{\mathcal{T}}
\newcommand{\linearTypes}{linear-types\xspace}
\newcommand{\LinearTypes}{Linear-types\xspace}
\newcommand{\linearMultitype}{linear-multitype\xspace}
\newcommand{\linearMultitypes}{linear-multitypes\xspace}
\newcommand{\LinearMultitypes}{Linear-multitypes\xspace}
\newcommand{\linearEnvironment}{linear-environment\xspace}
\newcommand{\linearEnvironments}{linear-environments\xspace}
\newcommand{\derivs}[2]{#1\rhd#2}
\newcommand{\deriv}{\Phi}
\newcommand{\derivtwo}{\Psi}
\newcommand{\derivthree}{\Pi}
\newcommand{\limp}{\multimap}
\newcommand{\mset}[1]{[#1]}
\newcommand{\allimp}{\mathop{\rotatebox{90}{$\forall$}}}
\newcommand{\genv}{\mathbf{E}}
\newcommand{\genvtwo}{\mathbf{F}}
\newcommand{\ltyp}{\mathbb{T}}
\newcommand{\ltyptwo}{\mathbb{S}}
\newcommand{\mtyp}{\mathbb{M}}
\newcommand{\mtyptwo}{\mathbb{N}}
\newcommand{\jul}[3]{#1\vdash#2:#3}
\newcommand{\jum}[3]{#1\Vdash#2:#3}
\newcommand{\ruleLVar}{\indrulename{Lvar}}
\newcommand{\ruleLImpI}{\indrulename{Llam}}
\newcommand{\ruleLImpE}{\indrulename{Lapp}}
\newcommand{\ruleLAllI}{\indrulename{Llamt}}
\newcommand{\ruleLAllE}{\indrulename{Lappt}}
\newcommand{\ruleLUnitI}{\indrulename{L$\iunit$}}
\newcommand{\ruleLUnitE}{\indrulename{L$\eunit{}{}$}}
\newcommand{\ruleLFresh}{\indrulename{L$\nu$}}
\newcommand{\ruleLVerifTeig}{\indrulename{L$\langle\rangle\teig$}}
\newcommand{\ruleLVerifImp}{\indrulename{L$\langle\rangle{\imp}$}}
\newcommand{\ruleLVerifAll}{\indrulename{L$\langle\rangle{\forall}$}}
\newcommand{\ruleLGenTeig}{\indrulename{L$\maltese\teig$}}
\newcommand{\ruleLGenImp}{\indrulename{L$\maltese{\imp}$}}
\newcommand{\ruleLGenAll}{\indrulename{L$\maltese{\forall}$}}
\newcommand{\ruleLMulti}{\indrulename{Lmulti}}

\newcommand{\semr}[1]{\llbracket#1\rrbracket}
\newcommand{\realizes}{\Vdash}
\newcommand{\pair}[2]{\langle#1,#2\rangle}

\newcommand{\epistemic}{epistemic\xspace}
\newcommand{\Epistemic}{Epistemic\xspace}
\newcommand{\ruleNum}[1]{\text{\footnotesize{(#1)}}}
\newcommand{\itemNum}[1]{\,\,\textbf{(#1)}\,}
\newcommand{\Case}[2][Case ]{\medskip\noindent\textbf{#1#2}}
\newcommand{\tool}[1]{\textsf{#1}}

\begin{document}

\title[Verifiers and Generators: \Epistemic Semantics for Intuitionistic Logic]{Verifiers and Generators: \\ \Epistemic Semantics for Intuitionistic Logic \\ (Long Version)}

\author[P.~Barenbaum]{Pablo Barenbaum}
\email{pbarenbaum@dc.uba.ar}
\orcid{0009-0003-2494-3345}
\affiliation{%
  \institution{Universidad Nacional de Quilmes (CONICET)}
  \city{Bernal}
  \country{Argentina}
}
\affiliation{%
  \institution{Departamento de Computación, Facultad de Ciencias Exactas y Naturales, Universidad de Buenos Aires}
  \city{CABA}
  \country{Argentina}
}

\settopmatter{printacmref=false}
\setcopyright{none}

\begin{abstract}

This paper explores \emph{\epistemic realizability},
a form of realizability in which the property that a piece of data
constitutes evidence for a logical proposition is semi-decidable.
In this framework, each proposition $\ttm$ is assigned a \emph{verifier}
program $\verif{\ttm}{}$ that checks whether a datum $X$ is a realizer for $\ttm$,
and a dual \emph{generator} program $\gen{\ttm}$ that behaves as a generic
realizer for $\ttm$.
We propose \epistemic realizability interpretations for \minimalLogic,
second-order intuitionistic logic, and higher-order intuitionistic
logic, proving that each system is sound and complete under the
proposed semantics.

\end{abstract}

\received{20 February 2007}
\received[revised]{12 March 2009}
\received[accepted]{5 June 2009}

\maketitle

\section{Introduction}
\label{sec:intro}

Verificationism explains the meaning of logical connectives by means
of the notion of \emph{evidence} rather than the notion of \emph{truth}.
Under this view,
the meaning of a proposition $\ttm$ is given by a \emph{verification procedure}
to recognize evidence for $\ttm$~\cite{Dummett1991}.
This work explores the problem of providing a formal semantics for logical
connectives in an ``epistemically acceptable'' way.
Formally, this means that verification procedures must be \emph{effectively checkable}:
the meaning of a proposition $\ttm$ must be a program $P$
that computes a \emph{partial computable predicate},
such that $P$ accepts $X$ if it constitutes evidence
for $\ttm$, and either rejects $X$ or remains undefined otherwise.
We call this ``\epistemic semantics'', following Nelson~\cite[Sec.~6]{NelsonIntuitionism}.

The primary challenge lies in providing \textbf{\epistemic semantics for logical
connectives}. Standard semantic frameworks typically do not provide effective
means of recognizing evidence.
For example, classical model-theoretic semantics defines the meaning of
a universal quantifier $\forall x. P(x)$ by declaring that a structure satisfies the formula
if and only if it satisfies $P(a)$ for every element $a$ in the universe.
While this definition is mathematically acceptable
---assuming a prior understanding of universal quantification in the metalanguage---
it is unsatisfactory from the perspective of \epistemic semantics.
Rather than providing an effective verification procedure
to determine whether an object is evidence for $\forall x. P(x)$,
the meaning of the universal quantifier is displaced from the object language to the metatheory.

\paragraph{Realizability interpretations.}
Many constructive approaches to semantics take
BHK-style\footnote{BHK stands for Brouwer--Heyting--Kolmogorov.}
interpretations as their starting point~\cite{TroelstraVanDalen1988}.
These interpretations align with verificationism
by defining the meaning of a proposition
through its \emph{realizers} (also referred to as \emph{constructions},
\emph{proofs}, \emph{evidence}, etc.).
One formal incarnation of these ideas
is Kleene's realizability~\cite{Kleene1945},
in which the meaning of conjunction is defined as:
\[
 \pair{n}{m} \realizes (\ttm \land \ttmtwo)
 \iffdef
 n \realizes \ttm \text{ and } m \realizes \ttmtwo
\]
where $\pair{n}{m}$ represents Gödel's pairing function
and $n \realizes \ttm$ denotes that the natural number $n$ is a realizer for $\ttm$.
This definition declares that a realizer for a conjunction $\ttm \land \ttmtwo$
is given by a pair containing a realizer for $\ttm$ and a realizer for $\ttmtwo$.
Moreover, it is \textbf{epistemically acceptable}, in the sense that it provides
a decision procedure: to verify that
a pair $\pair{n}{m}$ is a realizer for $\ttm \land \ttmtwo$, one can recursively verify
that $n$ is a realizer for $\ttm$ and $m$ is a realizer for $\ttmtwo$.

\paragraph{Troublesome connectives.}
Unfortunately, BHK-style interpretations do not provide epistemically acceptable
definitions for all connectives. In particular, the semantics of ``troublesome''
operators ---implication, negation, universal quantification--- is
expressed in terms of \emph{meta-level} universal quantification.
For instance, the meaning of implication in Kleene's realizability
is defined as follows:
\[
  n \realizes (\ttm \imp \ttmtwo)
  \iffdef
  \forall m \in \Nat.\,(m \realizes \ttm
               \implies
               \varphi_n(m) \realizes \ttmtwo)
\]
where $\varphi_n$ denotes the $n$-th partial computable function.
This definition declares that a realizer for an implication $\ttm \imp \ttmtwo$
is given by a partial computable function $\varphi_n$
that transforms \emph{every} realizer for $\ttm$ into a realizer for $\ttmtwo$.
This is a priori \textbf{not epistemically acceptable}, because
to verify that $n$ is a realizer for $\ttm \imp \ttmtwo$, one would have to
semi-decide a universal quantification $\forall m \in \Nat.(...)$
over the natural numbers, which is in general an impossible task\footnote{%
  The meaning of implication in the object language depends
  on the meaning of universal quantification in the metatheory.
  Note that there are ways to understand the universal formula
  $\psi \equiv \forall m \in \Nat.(...)$ in an epistemically acceptable way;
  for example, as the existence
  of a derivation of $\psi$ in Peano arithmetic. (See the subsequent comments on formalism).
}.

\paragraph{The formalist escape route.}
As discussed above, the intuitive ``meaning'' of certain connectives,
specifically \emph{implication}, \emph{negation}, and \emph{universal quantification},
seems difficult to reconcile with an \epistemic semantics.
One way to achieve this reconciliation is through \emph{formalism}.
Formalism can be synthesized as the view that evidence for a proposition $\ttm$ is
provided by a derivation of $\ttm$ in a specific formal proof system.
Any proof system that has the property of \emph{semi-decidable proof-checking}
immediately provides a trivial \epistemic semantics for propositions.
Indeed, the verification procedure is given by the proof-checker itself,
and the string of symbols encoding the derivation serves as the required
existential witness.
Formalism allows us to understand mathematical activity as a
combinatorial game of producing derivations, but leaves open the question
of whether propositions can be given any other meaning beyond the symbols
themselves. For example, in the absence of semantics, proof systems appear
as entirely arbitrary constructions, and it is impossible to even wonder
whether they are sound or complete.

\paragraph{\Epistemic realizability.}
This work studies a \textbf{\epistemic realizability} semantics
for logical propositions, inspired by BHK-style
interpretations and traditional realizability,
but formulated from the perspective of \epistemic semantics.
Early works on realizability were formulated in the language of arithmetic,
so data types other than integers had to be handled through explicit
Gödel codings.
In this paper we use 
---as other works also do---
extensions of the \emph{untyped $\lambda$-calculus}
as the language
to express both programs and data\footnote{This choice is not
essential but quite convenient.}.

The interpretations we propose
center on the dual notions of \emph{verifiers} and \emph{generators}.
We begin by interpreting each proposition $\ttm$ as a
\emph{program} called its \textbf{verifier} and written $\verif{\ttm}{}$.
The verifier for $\ttm$ is an untyped $\lambda$-term
that computes a partial computable predicate.
We say that $X$ is a \emph{realizer} for $\ttm$
if, by definition, $\verif{\ttm}{(X)}$ succeeds.
Success, in this framework, is defined by the property that
$\verif{\ttm}{(X)} \tos \iunit$,
where $\tos$ denotes reduction in zero or more steps
and $\iunit$ is a constant representing successful termination.
Note that the input $X$ is itself expressed using an untyped
$\lambda$-term.

For example, the meaning of a conjunction $\ttm\land\ttmtwo$
is a program $\verif{\ttm\land\ttmtwo}{}$
such that $\verif{\ttm\land\ttmtwo}(X)$ succeeds
if both $\verif{\ttm}{(\pi_1(X))}$ and $\verif{\ttmtwo}{(\pi_2(X))}$
succeed, where $\pi_1(X)$ and $\pi_2(X)$ are the first and second
projections of a pair, in accordance with the BHK interpretation.

\paragraph{Verifiers and generators.}
A significant difficulty arises when attempting to define the meaning
of ``troublesome'' connectives. For example, implication is challenging
due to the reversal of \emph{polarity} in the antecedent.
According to the BHK interpretation,
the verifier $\verif{\ttm\imp\ttmtwo}$ must take an input $X$
and determine whether, for every $Y$ such that $\verif{\ttm}(Y)$
succeeds, the application $\verif{\ttmtwo}{(X\,Y)}$ also succeeds. This verification
is not decidable a priori. Indeed, the predicate:
\[
  P(X) \,\,\,\,\equiv\,\,\,\,
  \forall Y. (
    \verif{\ttm}(Y)\tos\iunit \implies \verif{\ttmtwo}{(X\,Y)}\tos\iunit
  )
\]
starts with a universal quantifier. This is not a priori semi-decidable,
so it is unclear how to define a program $\verif{\ttm\imp\ttmtwo}$ that computes $P$.

To resolve this, we refine the interpretation: now each proposition $\ttm$
is interpreted by both its \textbf{verifier} $\verif{\ttm}$
and a second program, called its \textbf{generator} and written $\gen{\ttm}$.
The generator for $\ttm$ is an untyped $\lambda$-term that
behaves as a generic realizer for $\ttm$.
For example, in the case of conjunction,
$\gen{\ttm\land\ttmtwo}$ behaves as the pair $(\gen{\ttm},\gen{\ttmtwo})$,
also in accordance with the BHK interpretation.

From a computer science perspective,
a verifier $\verif{\ttm}{}$ is a program that \emph{tests}
whether its argument complies with a specification $\ttm$.
Conversely, a generator $\gen{\ttm}{}$ serves as a \emph{stub},
or a ``mock implementation'' that behaves as if it were compliant
with said specification.
One basic property of this system is \emph{correctness}:
a verifier must accept its corresponding generator as a valid realizer,
that is, $\verif{\ttm}{(\gen{\ttm})} \tos \iunit$.

Aided by the presence of generators, we can define the verifier
for an implication in such a way that
$\verif{\ttm\imp\ttmtwo}{(X)}$ succeeds if and only if
$\verif{\ttmtwo}(X\,\gen{\ttm})$ succeeds.
That is, to determine whether $X$ is a realizer for $\ttm\imp\ttmtwo$,
we check whether $X$ produces a realizer for $\ttmtwo$ 
when supplied with the realizer for $\ttm$ produced by the generator.

To complete the picture, we need to define the
generator for an implication $\ttm\imp\ttmtwo$.
Verifiers and generators are defined in a mutually inductive fashion.
Indeed, the generator $\gen{\ttm\imp\ttmtwo}$
is a program that takes an arbitrary input $X$
and returns $\gen{\ttmtwo}$ provided that $\verif{\ttm}{(X)}$ succeeds.
Formally, this can be written as
$\gen{\ttm\imp\ttmtwo} = \lam{\var}{(\eunit{\verif{\ttm}{\var}}{\gen{\ttmtwo}})}$.
Here $\lam{\var}{(...)}$ is the standard $\lambda$-notation for anonymous functions,
and $\eunit{\utm}{\utmtwo}$ is an operator that proceeds to evaluate $\utmtwo$
provided that the evaluation of $\utm$ terminates successfully.

\paragraph{Scope of the work and propositions-as-types.}
We study \epistemic realizability across three incrementally more
expressive systems of intuitionistic logic:
1. \textbf{\minimalLogic};
2. \textbf{second-order intuitionistic logic};
and 3. \textbf{higher-order intuitionistic logic}.
As is well-known, derivations of sequents in natural deduction systems
correspond to terms in typed $\lambda$-calculi, through the propositions-as-types
correspondence (\cf for instance~\cite{Girard1989,Sorensen2006}).
The formal deductive systems that we study are three standard typed
$\lambda$-calculi corresponding to each of the logics above, namely:
1.~\textbf{simply typed $\lambda$-calculus}~($\lambdaST$);
2.~\textbf{polymorphic $\lambda$-calculus}~(System~F~\cite{GirardThesis,Reynolds1974}); and
3.~\textbf{higher-order polymorphic $\lambda$-calculus}~(System~F$\omega$~\cite{GirardThesis}).

\paragraph{Contributions.}
For each of the three aforementioned typed $\lambda$-calculi,
we define a corresponding \emph{untyped} $\lambda$-calculus,
called the \textbf{metacalculus}.
For example, for the simply typed $\lambda$-calculus ($\lambdaST$)
we define a metacalculus ($\lambdaCheck$)
that allows to express both verifiers $\verif{\ttm}$ and generators $\gen{\ttm}$
for every proposition $\ttm$ in the language of \minimalLogic.
For each system, the main result is that the logic is
both \textbf{sound} and \textbf{complete} with respect to the \epistemic realizability
interpretation. This establishes that $\ttm$ is a theorem in the
logical theory if and only if there exists evidence for $\ttm$ in the metacalculus,
that is, an untyped term $X$ such that $\verif{\ttm}{(X)}$ succeeds.
For completeness to hold, we further require that the evidence $X$ must
be a \emph{normalizing} and \emph{pure} term; that is, it must have a
normal form and not contain extraneous elements such as generators $\gen{\ttm}$.

While relating typed $\lambda$-calculi and \epistemic realizability 
may appear straightforward, proving soundness and completeness presents
significant technical challenges.
For example, proving soundness of the \emph{modus ponens} rule,
requires showing that if both $\verif{\ttm\imp\ttmtwo}{(X)}$ and $\verif{\ttm}{(Y)}$
succeed then $\verif{\ttmtwo}{(X\,Y)}$ must also succeed.
In traditional realizability, this is precisely the definition
of implication, leaving nothing to prove.
In our ``\epistemic'' setting, however,
the condition that $\verif{\ttm\imp\ttmtwo}{(X)}$ succeeds
only guarantees that $X$ behaves correctly when applied to the \emph{generator}
$\gen{\ttm}$. Consequently, one must prove that
$\verif{\ttmtwo}{(X\,\gen{\ttm})} \tos \iunit$
and $\verif{\ttm}{Y} \tos \iunit$
imply $\verif{\ttmtwo}{(X\,Y)} \tos \iunit$.
Establishing this property is a non-trivial task, as it requires to show
that in some sense $X$ cannot distinguish between the actual realizer $Y$
and the ``mock'' realizer produced by the generator $\gen{\ttm}$.

\paragraph{Structure of the paper.}
In \cref{sec:preliminaries} we review some preliminary notions and fix
some of the notations used in the rest of the paper.
In~\cref{sec:simply_typed} we study \epistemic realizability for
\minimalLogic.
In~\cref{sec:second_order} we extend the results to second-order logic.
This requires to generalize the notions of verifier and generator to deal
with second-order universal quantifiers. The proof techniques also have
to be adapted, to address the usual difficulties posed by impredicativity.
In~\cref{sec:higher_order} we discuss the extension to higher-order logic.
Finally, in~\cref{sec:conclusion} we conclude,
and discuss some potential applications in programming languages and
proof assistants, as well as related and future work.

\section{Preliminaries}
\label{sec:preliminaries}

In this section we recall some standard notions, to fix nomenclature and
notation.

We work with various typed and untyped $\lambda$-calculi,
all of them instances of \defn{abstract rewriting systems} (ARSs).
An ARS is a pair $(X,\to)$, where $X$ is a set of elements called \emph{objects},
and $\to$ is a binary relation ${\to} \subseteq X \times X$.
Standard references are~\cite{BaaderNipkow,Terese} for rewriting systems
and~\cite{Barendregt1984,Barendregt2013} for the $\lambda$-calculus.
If $R \subseteq X \times Y$ and $S \subseteq Y \times Z$ are binary relations,
we write $R \circ S$ (or just $R\,S$) for their composition, defined by
$\set{(x,z) \in X \times Z \ST \exists y \in Y.\, x \mathrel{R} y \land y \mathrel{R} z}$.
If $(X,\to)$ is an ARS, we write $\to^0$ for the diagonal relation
and ${\to^{n+1}} \eqdef (\to \circ \to^n)$.
We write $\to^=$ for the reflexive closure
(\ie ${\to^=} = {\to^0} \cup {\to}$);
$\to^{-1}$ or $\toinv$ for the inverse relation
(\ie ${\toinv} = \set{(y,x) \ST x \to y}$);
$\to^*$ or $\tos$ for the reflexive--transitive closure
(\ie ${\tos} = \cup_{n \geq 0} \to^n$);
$\tosym$ for the symmetric closure
(\ie ${\tosym} = {\toinv} \cup {\to}$);
and $\conveq$ for reflexive--symmetric--transitive closure
(\ie ${\conveq} = {\tosym^*}$).
Given a relation $\to_X$, we write $=_X$ for $\conveq_X$,
and call it \emph{$X$-convertibility}.
Given relations $\to_X$ and $\to_Y$,
we write $\to_{XY}$ for their union.

An object $x$ is $\to$-\emph{irreducible} or a $\to$-\emph{normal form}
if there is no $y$ such that $x \to y$.
An object $x$ is $\to$-\emph{normalizing} if there exists a normal form $y$
such that $x \tos y$.
An ARS is \emph{deterministic}
if for all objects $x,y,y'$ such that $x \to y$ and $x \to y'$
one has that $y = y'$.
An ARS has the \emph{diamond property} if ${\toinv\,\to} \subseteq {\to\,\toinv}$,
it is \emph{subcommutative} if ${\toinv\,\to} \subseteq {\to^=\,\toinv^=}$,
and it is \emph{confluent} or \emph{Church--Rosser}
if ${\toinvs\,\tos} \subseteq {\tos\,\toinvs}$.
Determinism, the diamond property, and subcommutativity are
each stronger properties than confluence.
The relations $\to_1$ and $\to_2$ \emph{commute} if
${\toinvs_1\,\tos_2} \subseteq {\tos_2\,\toinvs_1}$.

In the concrete rewriting systems that we use, objects are \emph{expressions}
($\utm,\utmtwo,\ttm,\ttmtwo,\tm,\tmtwo,\hdots$) of various sorts
(\proofTerms, \logicalTerms, etc.).
Expressions may contain occurrences of \emph{variables} ($\var,\vartwo,\tvar,\tvartwo,\hdots$)
of various sorts (\proofVariables, \logicalVariables, etc.).
Some syntactic constructors may bind variables, \eg
a $\lambda$-abstraction $\lam{\var}{\tm}$ binds the occurrences of $\var$ in
$\tm$. The notions of free and bound occurrences of a variable are defined
as expected in each case. We work implicitly up to $\alpha$-renaming of bound variables.
We write $\fv{E}$ to denote the set of free variables of the
expression $E$. We write $E_1\sub{\var}{E_2}$ for the expression that results
from performing the capture-avoiding substitution of all the free occurrences
of $\var$ in $E_1$ by $E_2$.
Sometimes we work with substitutions $\subst$, which are functions mapping
variables to expressions.
We write $\subst\extsub{\var}{E}$ for the extension of the substitution $\subst$
with the mapping $\var \mapsto E$, \ie
the substitution $\subst'$ such that
$\subst'(\vartwo) = E$ if $\var = \vartwo$,
and $\subst'(\vartwo) = \subst(\vartwo)$ otherwise.
We write $\subs{E}{\subst}$ to denote the expression
that results from the parallel substitution of each free occurrence of each
free variable $\var$ in $E$ by $\subst(\var)$, avoiding capture.
The usual properties of substitution apply in these cases,
\eg $\subs{E_1\sub{\var}{E_2}}{\subst} = \subs{E_1}{\subst}\sub{\var}{\subs{E_2}{\subst}}$
holds assuming that $\var$ is \emph{safe} for $\subst$,
\ie $\var\notin\fv{\subst(\vartwo)}$ for every $\vartwo$.
Sometimes we use these standard syntactical facts without providing
explicit proofs.

In these systems we usually consider \defn{contexts},
which are expressions with a single occurrence of a distinguished free
variable $\ctxhole$, called the \defn{hole}.
If $\gctx$ is a context and $E$ is an expression, we write
$\ctxof{\gctx}{E}$ for the expression that results from ``plugging'' $E$
into the hole of $\gctx$, possibly capturing variables.

\section{\Epistemic Realizability for Minimal Logic}
\label{sec:simply_typed}

In this section, we develop a \epistemic realizability semantics
for \textbf{\minimalLogic}, the intuitionistic system
that includes implication as the single connective.
The main results in this section are \textbf{soundness}~(\cref{thm:soundness})
and \textbf{completeness}~(\cref{thm:completeness}).

\paragraph{The simply typed $\lambda$-calculus ($\lambdaST$).}
We begin by recalling the typed $\lambda$-calculus that corresponds
to \minimalLogic
from the perspective of the propositions-as-types correspondence,
the (Curry-style) simply typed $\lambda$-calculus.
We assume given a denumerable set of \defn{\atomicPropositions} ($\teig,\teigtwo,\hdots$)
and \defn{variables} ($\var,\vartwo,\hdots$).
The sets of \defn{\logicalTerms} and \defn{terms} of $\lambdaST$ are given by:
\[
  \begin{array}{lrcl}
  \textsc{\LogicalTerms} &
  \ttm,\ttmtwo,\hdots & ::= & \teig \mid \ttm\arrow\ttmtwo
  \\
  \textsc{Terms} &
  \tm, \tmtwo, \hdots & ::= & \var \mid \lam{\var}{\tm} \mid \tm\,\tmtwo
  \end{array}
\]
An \defn{\typeEnvironment}, ranged over by $\penv,\penvtwo,\hdots$,
is a finite set of type assignments $\set{\var_1:\ttm_1,\hdots,\var_n:\ttm_n}$,
where $\var_1,\hdots,\var_n$ are distinct variables.
We write $\dom{\penv}$ for the set of variables that appear in $\penv$.
\defn{Typing judgements} are of the form $\judg{\penv}{\tm}{\ttm}$,
and defined inductively by the following rules:
\[
  \indrule{}{
  }{
    \judg{\penv,\var:\ttm}{\var}{\ttm}
  }
  \,\,\,\,\,\,\,\,
  \indrule{}{
    \judg{\penv,\var:\ttm}{\tm}{\ttmtwo}
  }{
    \judg{\penv}{\lam{\var}{\tm}}{\ttm\arrow\ttmtwo}
  }
  \,\,\,\,\,\,\,\,
  \indrule{}{
    \judg{\penv}{\tm}{\ttm\arrow\ttmtwo}
    \HS
    \judg{\penv}{\tmtwo}{\ttm}
  }{
    \judg{\penv}{\tm\,\tmtwo}{\ttmtwo}
  }
\]
The binary relations of $\beta$ and $\eta$-reduction
are defined by the following reduction rules, closed under arbitrary contexts:
\[
  \begin{array}{r@{\,}c@{\,}l@{\HS\HS}r@{\,}c@{\,}ll}
    (\lam{\var}{\tm})\,\tmtwo & \tobeta & \tm\sub{\var}{\tmtwo}
    &
    \lam{\var}{\tm\,\var} & \toeta & \tm & \text{if $\var\notin\fv{\tm}$}\\
  \end{array}
\]
Both notions of reduction are defined over untyped \proofTerms,
but they can be restricted to typable \proofTerms, using the well-known
property that they preserve types (\ie \emph{subject reduction} holds).

\paragraph{The metacalculus ($\lambdaCheck$).}
Next, we define the \textbf{metacalculus}
in which we will be able to formulate the notions of verifiers, generators,
realizers, and so on, for \minimalLogic.
The metacalculus $\lambdaCheck$ is an extension of the \emph{untyped}
$\lambda$-calculus.
It would be possible to formulate the semantics using the (unextended)
$\lambda$-calculus directly, but working in an extension is
clearer, and more convenient, as it avoids the accidental complexities
of encodings.

To emphasize their specific role, and to avoid confusion
with $\lambdaST$-terms, terms of the metacalculus are called \emph{\metaterms}.
The set of \defn{\metaterms} of $\lambdaCheck$ is given by the grammar:
\[
  \begin{array}{llrll}
  \textsc{\Metaterms}
  &
    \utm,\utmtwo,\hdots
  & ::=  & \var
         & \text{variable}
  \\
  &
  & \mid & \lam{\var}{\utm}
         & \text{abstraction}
  \\
  &
  & \mid & \utm\,\utmtwo
         & \text{application}
  \\
  &
  & \mid & \iunit
         & \text{\unitConstructor}
  \\
  &
  & \mid & \eunit{\utm}{\utmtwo}
         & \text{\unitEliminator}
  \\
  &
  & \mid & \genteig
         & \text{generator}
  \\
  &
  & \mid & \verifteig{\utm}
         & \text{verifier}
  \end{array}
\]
For every \atomicProposition~$\teig$
there are \metaterms $\genteig$ and $\verifteig{\utm}$.
In the case of an arbitrary \logicalTerm $\ttm$,
the \metaterms $\gen{\ttm}$ and $\verif{\ttm}{\utm}$
shall be defined below~(\cref{def:generators_and_verifiers}).

We write $\wctx,\wctx',\hdots$ to stand for the set of \defn{weak head contexts},
defined by the following grammar:
\[
  \wctx ::= \ctxhole
       \mid \wctx\,\utm
       \mid \eunit{\wctx}{\utm}
       \mid \verifteig{\wctx}
\]
We define \defn{reduction} in $\lambdaCheck$, written $\to$, as the union
of the three reduction rules below, closed by arbitrary contexts.
Moreover, \defn{weak head reduction}, written $\tow$,
is defined as the union of the same three reduction rules,
but closed by weak head contexts.
\[
  \begin{array}{rll}
    (\lam{\var}{\utm})\,\utmtwo
    & \to &
    \utm\sub{\var}{\utmtwo}
  \\
    \eunit{\iunit}{\utm}
    & \to &
    \utm
  \\
    \verifteig{\genteig}
    & \to &
    \iunit
  \end{array}
\]
We use the constant $\iunit$ to represent that a computation \emph{succeeds}.
The \unitEliminator $\eunit{\utm}{\utmtwo}$ can be used to check whether $\utm$
succeeds and return $\utmtwo$ if this is the case.
A \metaterm of the form $\genteig$
represents a generic realizer for the \atomicProposition $\teig$,
while a \metaterm of the form $\verifteig{\utm}$ intuitively ``checks''
whether $\utm$ is a realizer for the \atomicProposition $\teig$.

We adopt the convention that the \unitEliminator has higher precedence than
the abstraction, but lower precedence than the application and
the verifier, so for example
$\lam{\var}{\eunit{\verifteig{\utm}}{\utmtwo\,\utmthree}}$
is equivalent to
$\lam{\var}{(\eunit{(\verifteig{\utm})}{(\utmtwo\,\utmthree)})}$.

A \metaterm is \defn{pure} if it only involves variables,
abstractions, and applications.
Terms of $\lambdaST$ are identified with pure \metaterms of $\lambdaCheck$.
Note that reduction in $\lambdaCheck$ over pure terms is just
standard $\tobeta$-reduction.
A \metaterm is \defn{good} if it is pure and $\tobeta$-normalizing.

\paragraph{Interpretation of \logicalTerms.}
The metacalculus $\lambdaCheck$ includes generators and verifiers
\emph{only for \atomicPropositions}.
We extend them to arbitrary \logicalTerms $\ttm$
by providing mutually recursive definitions of the generator
$\gen{\ttm}$ that produces a generic realizer of $\ttm$,
and the verifier $\verif{\ttm}{\utm}$ that checks whether
its argument $\utm$ is a realizer for $\ttm$.

\begin{definition}[Generators and verifiers for $\lambdaCheck$]
\label{def:generators_and_verifiers}
Let $\ttm$ be a \logicalTerm.
The \defn{generator} for $\ttm$, written $\gen{\ttm}$,
and the \defn{verifier} for $\ttm$ with argument $\utm$, written $\verif{\ttm}{\utm}$,
are \metaterms defined as follows by induction on $\ttm$.
In the base case, when $\ttm = \teig$ is an \atomicProposition,
the definition coincides with the \metaterm constructors in the grammar
($\genteig$ and $\verifteig{\utm}$).
In the recursive case:
\[
  \begin{array}{rcl}
    \gen{\ttm\arrow\ttmtwo}
  & \eqdef &
    \lam{\var}{(\eunit{\verif{\ttm}{\var}}{\gen{\ttmtwo}})}
  \\
    \verif{\ttm\arrow\ttmtwo}{\utm}
  & \eqdef &
    \verif{\ttmtwo}{(\utm\,\gen{\ttm})}
  \end{array}
\]
\end{definition}

The definition of $\gen{\ttm\arrow\ttmtwo}$ intuitively means that,
to generate a generic realizer for $\ttm\arrow\ttmtwo$,
we may construct an
abstraction $\lam{\var}{(\eunit{\verif{\ttm}{\var}}{\gen{\ttmtwo}})}$
that checks whether its argument is a realizer for $\ttm$,
and, if this succeeds, generates a realizer for $\ttmtwo$.
The definition of $\verif{\ttm\arrow\ttmtwo}{\utm}$ intuitively means that,
to verify whether $\utm$ is a realizer for $\ttm\arrow\ttmtwo$,
we provide $\utm$ with a generic realizer for $\ttm$, and check whether the
result is a realizer for $\ttmtwo$.

Formally, we say that $\utm$ is a \defn{realizer} for $\ttm$
if $\verif{\ttm}{\utm} \tos \iunit$, and we say that $\ttm$ is
\defn{realizable} if it has a good realizer.

\paragraph{Note on syntax.}
The syntax for verifiers requires $\verif{\ttm}{}$ to always
be accompanied by its argument $\utm$. This is just a syntactic
constraint that helps simplify some proofs.
If the verifier needs to be treated as a stand-alone operator
it can be defined via an $\eta$-like expansion
as $\verif{\ttm}{} \eqdef \lam{\var}{\verif{\ttm}{\var}}$.

\begin{example}[The $K$ and $S$ combinators realize Hilbert-style axioms]
The following reduction shows that the $K$ combinator
($\lam{\var}{\lam{\vartwo}{\var}}$)
realizes the first axiom ($\teig\imp\teigtwo\imp\teig$)
in the Hilbert-style system
for \minimalLogic:
\[
  \begin{array}{lll}
    \verif{\teig\imp\teigtwo\imp\teig}{(\lam{\var}{\lam{\vartwo}{\var}})}
  & = &
    \verif{\teigtwo\imp\teig}{((\lam{\var}{\lam{\vartwo}{\var}})\,\gen{\teig})}
  \\
  & = &
    \verif{\teig}{((\lam{\var}{\lam{\vartwo}{\var}})\,\gen{\teig}\,\gen{\teigtwo})}
  \\
  & \tos &
    \verif{\teig}{\gen{\teig}} \to \iunit
  \end{array}
\]
The $S$ combinator
($\lam{\var}{\lam{\vartwo}{\lam{\varthree}{\var\,\varthree\,(\vartwo\,\varthree)}}}$),
in turn, realizes the second axiom
$((\teig\imp\teigtwo\imp\teigthree) \imp (\teig\imp\teigtwo) \imp \teig\imp\teigthree)$:
\[
  \begin{array}{ll}
  &
    \verif{
      (\teig\imp\teigtwo\imp\teigthree)
       \imp (\teig\imp\teigtwo)
       \imp \teig\imp\teigthree
    }{
      (\lam{\var}{\lam{\vartwo}{\lam{\varthree}{\var\,\varthree\,(\vartwo\,\varthree)}}})
    }
  \\
  = &
    \verif{
      \teigthree
    }{
      (
      (\lam{\var}{\lam{\vartwo}{\lam{\varthree}{\var\,\varthree\,(\vartwo\,\varthree)}}})
      \,
      \gen{\teig\imp\teigtwo\imp\teigthree}
      \,
      \gen{\teig\imp\teigtwo}
      \,
      \gen{\teig}
      )
    }
  \\
  \tos &
    \verif{
      \teigthree
    }{
      (
      \gen{\teig\imp\teigtwo\imp\teigthree}\,\gen{\teig}\,
      (\gen{\teig\imp\teigtwo}\,\gen{\teig})
      )
    }
  \\
  = &
    \verif{
      \teigthree
    }{
      (
      (\lam{\var}{\eunit{\verif{\teig}{\var}}{\gen{\teigtwo\imp\teigthree}}})\,\gen{\teig}\,
      (\gen{\teig\imp\teigtwo}\,\gen{\teig})
      )
    }
  \\
  \to &
    \verif{
      \teigthree
    }{
      (
      (\eunit{\verif{\teig}{\gen{\teig}}}{\gen{\teigtwo\imp\teigthree}})\,
      (\gen{\teig\imp\teigtwo}\,\gen{\teig})
      )
    }
  \\
  \tos &
    \verif{
      \teigthree
    }{
      (
      \gen{\teigtwo\imp\teigthree}\,
      (\gen{\teig\imp\teigtwo}\,\gen{\teig})
      )
    }
  \\
  \tos &
    \verif{
      \teigthree
    }{
      (
      (\lam{\var}{\eunit{\verif{\teigtwo}{\var}}{\gen{\teigthree}}})
      (\gen{\teig\imp\teigtwo}\,\gen{\teig})
      )
    }
  \\
  \tos &
    \verif{
      \teigthree
    }{
      (
      \eunit{\verif{\teigtwo}{(\gen{\teig\imp\teigtwo}\,\gen{\teig})}}{\gen{\teigthree}}
      )
    }
    \tos \hdots \to \iunit
  \end{array}
\]
\end{example}

The two following examples illustrate applications of \epistemic realizability.
Note that these are well-known results that can be established through other
methods, such as standard realizability interpretations.

\begin{example}[Consistency]
It straightforward to show that \emph{not every \logicalTerm is realizable}.
Indeed, let $\teig$ be an arbitrary \atomicProposition
and suppose that $\verif{\teig}{\utm} \tos \iunit$ for a pure
\metaterm $\utm$.
Inspecting the rewriting rules of $\lambdaCheck$,
it is clear that this could only be possible if there were a reduction
$\utm \tos \gen{\teig}$. But $\utm$ is pure and contains no generators,
and it is straighforward to show that reduction cannot create generators,
so we obtain a contradiction.
Combining this with soundness~(\cref{thm:soundness}), we conclude that not
every formula is a theorem in \minimalLogic; that is, the logic is \emph{consistent}.
\end{example}

\begin{example}[Specification results]
The interpretation can be used to characterize the inherent
computational behaviour of the realizers of a given \logicalTerm.
For example, we claim that if $\utm$ is a pure term that realizes $\teig\imp\teig$,
then $\utm \eqbeta \lam{\var}{\var}$.
To see this (informally), suppose that $\verif{\teig\imp\teig}{\utm} \tos \iunit$.
By definition, we have $\verif{\teig}{(\utm\,\gen{\teig})} \tos \iunit$.
This can only happen if $\utm\,\gen{\teig} \tos \gen{\teig}$.
But $\utm$ is a pure term without verifiers, and reduction cannot spontaneously
create a verifier $\verif{\teig}{}$. So $\gen{\teig}$ can be understood as
just an ``inert'' symbol that does not interact with $\utm$,
and it can be replaced by a fresh variable $\var$
to obtain a reduction $\utm\,\var \tos \var$.
Since reduction in $\lambdaCheck$ over pure terms
is $\tobeta$-reduction, it follows that $\utm\,\var \eqbeta \var$,
which is easily seen to imply $\utm \eqbeta \lam{\var}{\var}$.
Combining this with soundness~(\cref{thm:soundness}), one obtains that
\emph{the unique realizer of $\teig\imp\teig$ is the identity}.
This is a classic result, dating back at least to Reynolds~\cite{Reynolds1983}.
\end{example}

\paragraph{Rewriting properties of $\lambdaCheck$.}
Before going on, we mention some basic syntactic results that are necessary
to show that the interpretation is sound and complete.
First, it is straightforward to show that there is at most
one weak head redex~(see~\cref{a:sec:determinism}). Hence:
\begin{proposition}
\label{prop:determinism}
Weak head reduction $\tow$ is deterministic.
\end{proposition}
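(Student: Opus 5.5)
The plan is to show that every \metaterm $\utm$ has at most one decomposition $\utm = \ctxof{\wctx}{\utmfive}$ with $\wctx$ a weak head context and $\utmfive$ a redex, that is, an instance of $(\lam{\var}{\utmthree})\,\utmfour$, of $\eunit{\iunit}{\utmthree}$, or of $\verifteig{\genteig}$. Determinism follows at once, because each of the three rules has a single contractum, so a single redex position gives a single reduct. I will prove uniqueness by structural induction on $\utm$, showing that for any two decompositions $\ctxof{\wctx_1}{\utmfive_1} = \ctxof{\wctx_2}{\utmfive_2} = \utm$ we have $\wctx_1 = \wctx_2$ and $\utmfive_1 = \utmfive_2$.

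The case analysis follows the head constructor of $\utm$. If $\utm$ is a variable, $\iunit$, $\genteig$ or an abstraction, then no weak head context other than $\ctxhole$ can produce it, and $\utm$ itself is not a redex, so there is no decomposition at all. If $\utm = \utmtwo\,\utmthree$, a decomposition either has $\wctx = \ctxhole$, which requires $\utmtwo$ to be an abstraction, or has $\wctx = \wctx'\,\utmthree$, which puts the redex inside $\utmtwo$. The second alternative cannot arise when $\utmtwo$ is an abstraction, since an abstraction has no weak head redex; the argument grammar offers no context that enters $\utmthree$ or goes under a $\lambda$. Within the second alternative, the induction hypothesis on $\utmtwo$ gives uniqueness. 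The cases $\utm = \eunit{\utmtwo}{\utmthree}$ and $\utm = \verifteig{\utmtwo}$ go the same way. The root is a redex only when $\utmtwo = \iunit$ (respectively $\utmtwo = \genteig$), and then $\utmtwo$ is a constant with no weak head redex inside it. Otherwise the hole must lie in $\utmtwo$, and we apply the induction hypothesis.

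The only real obstacle is the overlap check just described. For each root redex we must confirm that the subterm a weak head context could descend into ($\lam{\var}{\utmthree}$, $\iunit$, or $\genteig$) has no weak head redex of its own. This holds because weak head contexts never go under $\lambda$ and never enter the second argument of a guard or the argument of an application. I would state this as a small auxiliary lemma, \emph{values contain no weak head redex}, and prove it directly from the grammar of $\wctx$.
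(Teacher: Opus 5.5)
Your proof is correct and is essentially the paper's. The paper likewise reduces determinism to uniqueness of the decomposition $\ctxof{\wctx}{\utmfive}$ with $\utmfive$ a redex (its lemma on uniqueness of the weak head redex). The only difference is cosmetic: the paper inducts on $\wctx_1$ rather than on the structure of the \metaterm. Its claim that $\ctxof{\wctx_2}{\utmfive_2}$ has no redex at the root when $\wctx_2 \neq \ctxhole$ plays exactly the role of your ``values contain no weak head redex'' lemma.
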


Moreover, $\lambdaCheck$ is an
\emph{orthogonal higher-order rewriting system} in the sense of
Nipkow~\cite{Nipkow1991}, which entails:
\begin{proposition}
\label{prop:confluence}
Reduction in $\lambdaCheck$ is confluent.
\end{proposition}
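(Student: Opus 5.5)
We prove confluence by the Tait--Martin-L\"of parallel reduction method, which is the concrete content behind the appeal to orthogonality in Nipkow's framework. First, we observe that the three rules of $\lambdaCheck$ are left-linear and non-overlapping. The left-hand sides $(\lam{\var}{\utm})\,\utmtwo$, $\eunit{\iunit}{\utm}$ and $\verifteig{\genteig}$ have distinct head constructors, so no two rules overlap at the root. Moreover, no left-hand side has a proper non-variable subterm that is itself a redex: the constants $\iunit$ and $\genteig$ are not redexes, and neither is an abstraction. Hence there are no critical pairs.

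Next, we define parallel reduction $\Rightarrow$ inductively. Every variable, $\iunit$ and $\genteig$ reduces to itself, and $\Rightarrow$ is compatible with each constructor, acting componentwise. In addition there are three contraction rules:
\begin{itemize}
\item if $\utm\Rightarrow\utm'$ and $\utmtwo\Rightarrow\utmtwo'$ then $(\lam{\var}{\utm})\,\utmtwo\Rightarrow\utm'\sub{\var}{\utmtwo'}$;
\item if $\utm\Rightarrow\utm'$ then $\eunit{\iunit}{\utm}\Rightarrow\utm'$;
\item $\verifteig{\genteig}\Rightarrow\iunit$.
\end{itemize}
Routine inductions then show ${\to}\subseteq{\Rightarrow}\subseteq{\tos}$, so that ${\Rightarrow^*}={\tos}$.

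The key step is the substitution lemma: if $\utm\Rightarrow\utm'$ and $\utmtwo\Rightarrow\utmtwo'$, then $\utm\sub{\var}{\utmtwo}\Rightarrow\utm'\sub{\var}{\utmtwo'}$. This is proved by induction on the derivation of $\utm\Rightarrow\utm'$, using the usual substitution commutation identity (with safe variables) in the $\beta$ case. The new rules pose no difficulty here, since $\iunit$ and $\genteig$ are closed constants.

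We then define the complete development $\utm^\bullet$, which contracts all redexes visible in $\utm$ simultaneously:
\begin{itemize}
\item $((\lam{\var}{\utm})\,\utmtwo)^\bullet=\utm^\bullet\sub{\var}{\utmtwo^\bullet}$;
\item $(\eunit{\iunit}{\utm})^\bullet=\utm^\bullet$;
\item $(\verifteig{\genteig})^\bullet=\iunit$;
\item in all other cases, $\bullet$ is applied homomorphically.
\end{itemize}
We prove the triangle property: $\utm\Rightarrow\utmtwo$ implies $\utmtwo\Rightarrow\utm^\bullet$. The proof is by induction on the derivation of $\utm\Rightarrow\utmtwo$. The delicate cases are those where $\utm$ is a redex but the step reduces it only componentwise. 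For an application $(\lam{\var}{\utm_1})\,\utmtwo_1$ reduced componentwise, $\utm_1$ must reduce to an abstraction $\lam{\var}{\utm_1'}$, and we close the case by the substitution lemma. For $\eunit{\iunit}{\utmthree}$ reduced componentwise, $\iunit$ can only reduce to itself, so we contract the guard. The case of $\verifteig{\genteig}$ is similar, since $\genteig$ also only reduces to itself.

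The triangle property gives the diamond property for $\Rightarrow$. Since the diamond property implies confluence, and ${\Rightarrow^*}={\tos}$, reduction in $\lambdaCheck$ is confluent. We expect the main obstacle to be the substitution lemma, the only place where binding is involved; the remaining cases of the triangle property are mechanical.
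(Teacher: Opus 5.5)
Your argument is correct, but it takes a different route from the paper. The paper proves this proposition in one line: it observes that $\lambdaCheck$ is an orthogonal higher-order rewriting system and invokes Nipkow's result that such systems are confluent. Your first paragraph (left-linearity, no overlaps) is exactly the hypothesis check that citation needs. Everything after it replaces the citation with a self-contained Takahashi-style proof via parallel reduction and complete developments.

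The citation buys brevity. Your route buys independence from higher-order rewriting metatheory, and it is robust: it is essentially the technique the paper switches to for $\lambdaCheckF$. There the rule $\verif{\teig}{\gen{\teig}}\to\iunit$, with $\nu$-bindable eigenvariables, is no longer left-linear, so orthogonality can no longer be invoked. Even in that case the paper proves the diamond property of simultaneous reduction by a direct pairwise case analysis of two steps. Your triangle property instead gives the canonical common reduct $\utm^\bullet$ and avoids enumerating pairs of rules.

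One small misattribution, not a gap, concerns where the substitution lemma is used in the triangle property:
\begin{itemize}
\item When $(\lam{\var}{\utm_1})\,\utmtwo_1$ is reduced componentwise to $(\lam{\var}{\utm_1'})\,\utmtwo_1'$, the case closes by the $\beta$-clause of $\Rightarrow$ itself, using $\utm_1'\Rightarrow\utm_1^\bullet$ and $\utmtwo_1'\Rightarrow\utmtwo_1^\bullet$. No substitution lemma is needed.
\item The substitution lemma is needed in the case where the step contracts the redex, to obtain $\utm_1'\sub{\var}{\utmtwo_1'}\Rightarrow\utm_1^\bullet\sub{\var}{\utmtwo_1^\bullet}$.
\end{itemize}
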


A further rewriting property that we need is that
reductions ending in $\iunit$ can be \emph{standardized}
to weak head reductions.
Compare this property with the well-known fact that the head
reduction strategy is head normalizing in the
$\lambda$-calculus~(see \eg~\cite[Coro.~11.4.8]{Barendregt1984}).
We do not include an explicit proof of this result, but \cref{sec:second_order}
proves an analogous fact for the (more difficult) second-order case:

\begin{proposition}[Standardization]
\label{prop:weak_standardization}
If $\utm \tos \iunit$ then $\utm \tows \iunit$.
\end{proposition}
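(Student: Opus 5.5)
We prove the statement by the standard method of postponing non-weak-head (``internal'') steps past weak head steps. Write $\to_i$ for the steps of $\to$ that are not $\tow$-steps, i.e.\ contractions of a redex not in weak head position. The goal is to show that any reduction $\utm \tos \iunit$ can be rearranged into the form $\utm \tows \utmtwo \tos_i \iunit$. Once we have this, we conclude with the following observation: $\iunit$ is a constant, and an internal step never changes the outermost shape of a term. More precisely, if $\utmtwo \to_i \utmtwo'$ then $\utmtwo$ and $\utmtwo'$ have the same head constructor along the weak head spine. So $\utmtwo \tos_i \iunit$ forces $\utmtwo = \iunit$, since otherwise its spine would contain an application, guard, verifier, variable, abstraction or generator that is preserved. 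Hence $\utm \tows \iunit$.

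To obtain the factorization we use parallel internal reduction, as in Takahashi's proof of standardization. Define a parallel reduction $\Rightarrow$ in the usual way, with clauses for $\beta$, for $\eunit{\iunit}{\utm}\Rightarrow\utm'$, and for $\verifteig{\genteig}\Rightarrow\iunit$. Define internal parallel reduction $\Rightarrow_i$ as contracting in parallel only redexes outside weak head position: full parallel reduction under $\lambda$, in arguments of applications, and in the continuation $\utm$ of $\eunit{\wctx}{\utm}$, but only structural descent along the spine. We then prove, in order:
\begin{enumerate}
\item ${\to}\subseteq{\Rightarrow}\subseteq{\tos}$, and $\Rightarrow$ is closed under substitution: if $\utm\Rightarrow\utm'$ and $\utmtwo\Rightarrow\utmtwo'$ then $\utm\sub{\var}{\utmtwo}\Rightarrow\utm'\sub{\var}{\utmtwo'}$.
\item Every $\utm\Rightarrow\utmtwo$ splits as $\utm\tows\utmthree\Rightarrow_i\utmtwo$, by induction on the derivation of $\Rightarrow$.
\item The key swap lemma: $\utm\Rightarrow_i\utmthree\tow\utmtwo$ implies $\utm\tows\utmthree'\Rightarrow_i\utmtwo$.
\end{enumerate}
Iterating (2) and (3) by a standard induction on the length of the reduction (pushing each $\tow$ step leftwards) yields the factorization $\tows\,\Rightarrow_i^*$. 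Finally, $\Rightarrow_i^*$ preserves the weak head shape, which handles the closing argument above.

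\textbf{Main obstacle.} We expect the swap lemma (3) to be the hardest step. The critical case is the $\beta$-case, where the weak head redex $(\lam{\var}{\utmfour})\,\utmfive$ in $\utmthree$ arises from $(\lam{\var}{\utmfour_0})\,\utmfive_0$ in $\utm$ with $\utmfour_0\Rightarrow\utmfour$ and $\utmfive_0\Rightarrow\utmfive$. We contract the redex first in $\utm$ and then apply substitutivity (1). This gives only $\utmfour_0\sub{\var}{\utmfive_0}\Rightarrow \utmfour\sub{\var}{\utmfive}$ with full, not internal, parallel reduction, so we must re-apply (2) to split it into $\tows\,\Rightarrow_i$. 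The guard and verifier rules are simpler: for instance, $\verifteig{\utmsix}\Rightarrow_i\verifteig{\genteig}$ forces $\utmsix=\genteig$, since $\genteig$ is a constant and internal steps cannot produce it at the head. Similarly, $\eunit{\utmsix}{\utmfour}$ with $\utmsix\Rightarrow_i\iunit$ forces $\utmsix=\iunit$. All remaining cases, where the weak head step occurs deeper inside a weak head context $\wctx$, follow by induction on $\wctx$, using the fact that internal steps preserve the spine.
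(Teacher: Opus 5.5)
Your proof is correct, but it takes a different route from the paper. The paper gives no separate proof for $\lambdaCheck$ and instead defers to the second-order argument (\cref{prop:f:standardization}), which is quantitative rather than syntactic. In a non-idempotent intersection type system, $\iunit$ is typable, and subject expansion along arbitrary $\to$-steps makes $\utm$ typable. Weak head subject reduction then strictly decreases the size of the typing derivation, so weak head reduction from $\utm$ terminates in some $\tow$-normal form $\utm_n$. Confluence gives $\utm_n \tos \iunit$, and a $\tow$-normal form reducing to $\iunit$ must be $\iunit$ itself, which is essentially your closing observation.

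Your Takahashi-style factorization ${\tos} \subseteq {\tows}\,{\Rightarrow_i^*}$ needs neither the type system nor confluence, and it proves more: it factors every reduction, not only those ending in $\iunit$. It works because $\lambdaCheck$ is orthogonal with unconditional rules. As a result, an internal step never creates a weak head redex: each weak head redex of the target already sits, with the same shape, in the source.

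The one place your sketch is thin is the $\beta$-case of step (2). Consider $(\lam{\var}{\utm_0})\,\utmtwo_0 \Rightarrow \utm_0'\sub{\var}{\utmtwo_0'}$. After the induction hypothesis gives $\utm_0 \tows \utmthree_0 \Rightarrow_i \utm_0'$, you must split on whether the spine head of $\utmthree_0$ is $\var$ itself. In that sub-case, use the induction hypothesis on $\utmtwo_0 \Rightarrow \utmtwo_0'$ together with compatibility of $\Rightarrow_i$ with weak head contexts; with that, the case goes through.

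What the paper's method buys is robustness. In $\lambdaCheckF$ your swap lemma (3) fails as stated. Rule~9 carries the side condition $\teig \notin \fv{\utm}$, and an internal step can erase the last occurrence of $\teig$. For instance, $\freshf{\teig}{(\pvar\,((\plamf{\pvartwo}{\iunit})\,\gen{\teig}))}$ is $\tow$-normal, yet it internally reduces to $\freshf{\teig}{(\pvar\,\iunit)}$, which has a weak head step. The intersection-type argument is insensitive to this, which is presumably why the paper uses it for the second-order case.
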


\subsection{Soundness of $\lambdaST$ with respect to $\lambdaCheck$}

Our goal is to show that the typed calculus $\lambdaST$ is sound
and complete with respect to the verification semantics induced
by the metacalculus $\lambdaCheck$.
In this section we focus on \textbf{soundness}, which states that
all theorems of \minimalLogic are realizable.
More precisely, whenever $\tm$ is a \emph{proof} of $\ttm$,
\ie a closed \proofTerm $\tm$ such that $\judg{}{\tm}{\ttm}$ holds,
then $\tm$ is also a \emph{realizer}\footnote{
  Recall that all \proofTerms of the typed calculus $\lambdaST$
  can be understood as pure \metaterms of the metacalculus $\lambdaCheck$.
} for $\ttm$, that is, $\verif{\ttm}{\tm} \tos \iunit$.
Moreover, $\tm$ is clearly \emph{good}:
it is pure by construction, and the standard fact that simply typed
terms are strongly normalizing implies that $\tm$ is $\beta$-normalizing.

\paragraph{Stating soundness for open terms.} 
To be able to reason inductively over the structure of \proofTerms,
we are compelled to generalize the statement of soundness
to \emph{open} \proofTerms, \ie \proofTerms that contain free variables.
The generalized version states that if
$\judg{\var_1:\ttm_1,\hdots,\var_n:\ttm_n}{\tm}{\ttmtwo}$
holds, then to obtain a realizer for $\ttmtwo$
one should replace each \proofVariable $\var_i$ by the generator $\gen{\ttm_i}$.
Formally, one has that 
$\verif{\ttmtwo}{\tm\sub{\var_1,\hdots,\var_n}{\gen{\ttm_1},\hdots,\gen{\ttm_n}}}
 \tos \iunit$.

\paragraph{Substitutions and generative substitutions.}
A \emph{substitution} ($\subst,\subst',\hdots$)
is a function mapping variables to \metaterms.
Recall that we write $\subs{\utm}{\subst}$ for the result
of applying the substition $\subst$ on each free variable of $\utm$,
avoiding capture.

The \defn{generative substitution} for an \typeEnvironment $\penv$
is written $\gensubst{\penv}$
and defined as the substitution
such that $\gensubst{\penv}(\var) = \gen{\ttm}$ holds for every $\var:\ttm \in \penv$,
and $\gensubst{\penv}(\var) = \var$ holds for every $\var \notin \dom{\penv}$.
By abuse of notation, sometimes we write just $\penv$ for the generative
substitution, rather than $\gensubst{\penv}$.
For example, if $\penv = (\var:\ttm,\vartwo:\ttmtwo)$ then:
\[
  \subs{((\lam{\var}{\var\,\vartwo\,\vartwo})\,\var)}{\penv}
  = (\lam{\var}{\var\,\gen{\ttmtwo}\,\gen{\ttmtwo}})\,\gen{\ttm}
\]
Using generative substitutions, soundness
can be refined to state that $\judg{\penv}{\tm}{\ttmtwo}$
implies $\verif{\ttmtwo}{\subs{\tm}{\penv}} \tos \iunit$.

\paragraph{Correctness and universality.}
The proofs of soundness and completeness that we provide in this section
are ``syntactic'' in nature, as they rely only on rewriting techniques and
elementary structural induction over the size of expressions.
We anticipate that the transition to second and higher-order logic in
the following sections shall require a different approach.

The proof of soundness of \minimalLogic relies on the two following
key properties:
\begin{enumerate}
\item
  \textbf{Correctness.}
  The generator $\gen{\ttm}$ is accepted by the verifier $\verif{\ttm}{}$,
  that is, $\verif{\ttm}{\gen{\ttm}} \tos \iunit$.
\item
  \textbf{Universality.}
  The generator $\gen{\ttm}$ is ``universal''
  with respect to the verifier $\verif{\ttm}$.
  More precisely, if $\utmProg$ and $\utm$ are \metaterms
  such that $\utmProg$ accepts the generator $\gen{\ttm}$,
  and the verifier $\verif{\ttm}{}$ accepts $\utm$,
  then $\utmProg$ accepts $\utm$.
  Schematically:
  \[
    \indrule{}{
      \utmProg\,\gen{\ttm} \tos \iunit
      \HS
      \verif{\ttm}{\utm} \tos \iunit
    }{
      \utmProg\,\utm \tos \iunit
    }
  \]
\end{enumerate}
These two conditions together state that verifier/generator
pairs enjoy a sort of ``universal property''\footnote{%
  It could be interesting to understand verifier/generator pairs from a categorical perspective.
}.
Correctness is easy to show:

\begin{lemma}[Correctness]
\label{lem:correctness}
$\verif{\ttm}{\gen{\ttm}} \tos \iunit$
\end{lemma}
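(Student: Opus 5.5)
The plan is to prove correctness by induction on the structure of the \logicalTerm $\ttm$, using only the definitions of generators and verifiers (\cref{def:generators_and_verifiers}) and the fact that $\tos$ is closed under arbitrary contexts. Since the statement involves no substitution or open terms, a direct structural induction should suffice, with no strengthening of the statement needed.

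\emph{Base case.} If $\ttm = \teig$ is an \atomicProposition, then $\verifteig{\genteig} \to \iunit$ is exactly the third reduction rule of $\lambdaCheck$, so the claim holds in one step.

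\emph{Inductive case.} If $\ttm = \ttmtwo\arrow\ttmthree$, unfold the definitions:
\[
  \verif{\ttmtwo\arrow\ttmthree}{\gen{\ttmtwo\arrow\ttmthree}}
  = \verif{\ttmthree}{\bigl((\lam{\var}{\eunit{\verif{\ttmtwo}{\var}}{\gen{\ttmthree}}})\,\gen{\ttmtwo}\bigr)}.
\]
A $\beta$-step inside the verifier context gives $\verif{\ttmthree}{(\eunit{\verif{\ttmtwo}{\gen{\ttmtwo}}}{\gen{\ttmthree}})}$. Here $\var$ is chosen fresh, and $\gen{\ttmthree}$ is closed, so the substitution only affects the occurrence in $\verif{\ttmtwo}{\var}$. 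One point needs checking: $\verif{\ttmtwo}{\var}\sub{\var}{\gen{\ttmtwo}} = \verif{\ttmtwo}{\gen{\ttmtwo}}$. This follows from an easy auxiliary observation, proved by induction on the \logicalTerm: generators are closed \metaterms, and $\verif{\ttmtwo}{\utm}\sub{\var}{\utmtwo} = \verif{\ttmtwo}{(\utm\sub{\var}{\utmtwo})}$. Next, by the induction hypothesis for $\ttmtwo$, we have $\verif{\ttmtwo}{\gen{\ttmtwo}} \tos \iunit$. Closing this reduction under the context $\verif{\ttmthree}{(\eunit{\ctxhole}{\gen{\ttmthree}})}$ and then firing $\eunit{\iunit}{\gen{\ttmthree}} \to \gen{\ttmthree}$ yields $\verif{\ttmthree}{\gen{\ttmthree}}$. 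The induction hypothesis for $\ttmthree$ then gives $\tos \iunit$.

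The only mildly delicate point is that $\verif{\ttmthree}{\ctxhole}$ is a genuine context of $\lambdaCheck$ for compound $\ttmthree$. This holds because $\verif{\ttmthree}{\utm}$ unfolds to $\verif{\teig}{(\utm\,\gen{\ttm_1}\cdots\gen{\ttm_k})}$ for the atomic target $\teig$ of $\ttmthree$, so the hole sits inside an application spine under $\verifteig{}$. This can be stated and checked by a one-line induction. With that in place, no rewriting theory (confluence, standardization) is needed, and I expect no real obstacle.
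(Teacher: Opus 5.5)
Your proof is correct and is essentially the paper's. The paper also inducts on $\ttm$: the atomic case is the rule $\verifteig{\genteig} \to \iunit$, and the arrow case chains a $\beta$-step, the induction hypothesis for the antecedent under the guard, the guard rule, and the induction hypothesis for the consequent. Your side checks (substitution commutes with the defined verifier because generators are closed, and $\verif{\ttmthree}{\ctxhole}$ is a genuine context) are details the paper leaves implicit.
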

\begin{proof}
By induction on $\ttm$.
If $\ttm = \teig$
then $\verifteig{\genteig} \to \iunit$.
If $\ttm = (\ttmtwo \arrow \ttmthree)$, then:
  \[
    \begin{array}{rl@{\,\,}llll}
    &&
      \verif{\ttmtwo\arrow\ttmthree}{\gen{\ttmtwo\arrow\ttmthree}}
    =
      \verif{\ttmthree}{(\gen{\ttmtwo\to\ttmthree}\,\gen{\ttmtwo})}
    \\
    & = &
      \verif{\ttmthree}{((\lam{\var}{\eunit{\verif{\ttmtwo}{\var}}{\gen{\ttmthree}}})\,\gen{\ttmtwo})}
    \to
      \verif{\ttmthree}{(\eunit{\verif{\ttmtwo}{\gen{\ttmtwo}}}{\gen{\ttmthree}})}
    \\
    & \tos_{\text{\ih}} &
      \verif{\ttmthree}{(\eunit{\iunit}{\gen{\ttmthree}})}
      \to
      \verif{\ttmthree}{\gen{\ttmthree}}
      \tos_{\text{\ih}}
      \iunit
      \qedhere
    \end{array}
  \]
\end{proof}
The property of universality is comparatively much subtler.
We first observe that if $\utmProg\,\gen{\ttm} \tos \iunit$
and $\utmProg$ is $\lambda$-abstraction that consumes its argument,
for example $\utmProg = \lam{\var}{\utmProg'}$,
then $\utmProg\,\gen{\ttm} \to \utmProg'\sub{\var}{\gen{\ttm}} \tos \iunit$.
Indeed, to reason inductively we need to consider the following generalization
of the statement of universality, in which occurrences of generators
$\gen{\ttm_1},\hdots,\gen{\ttm_n}$ inside a term may be replaced by
terms $\utm_1,\hdots,\utm_n$ that are accepted by the corresponding verifiers.
The property of universality becomes:
\[
  \indrule{}{
    \utmProg\sub{\var_1,\hdots,\var_n}{\gen{\ttm_1},\hdots,\gen{\ttm_n}} \tos \iunit
    \HS
    (\verif{\ttm_i}{\utm_i} \tos \iunit)_{1 \leq i \leq n}
  }{
    \utmProg\sub{\var_1,\hdots,\var_n}{\utm_1,\hdots,\utm_n} \tos \iunit
  }
\]
The first premise of this ``rule'' can be rephrased,
using a generative substitution, as
$\subs{\utmProg}{\penv} \tos \iunit$,
assuming that $\penv$ is the \typeEnvironment
$\penv = (\var_1:\ttm_1,\hdots,\var_n:\ttm_n)$.
We can also write the conclusion as
of the form $\subs{\utmProg}{\subst} \tos \iunit$
where $\subst$ is the substitution mapping each $\var_i$ to $\utm_i$.
More precisely:

\begin{definition}[Compatibility]
Let $\penv$ be an \typeEnvironment and $\subst$ be a substitution.
We say that $\penv$ is \defn{compatible} with $\subst$,
written $\compat{\penv}{\subst}$,
if for every $\var:\ttm \in \penv$ we have that $\var \in \dom{\subst}$
and $\verif{\ttm}{\subs{\var}{\subst}} \tos \iunit$.
\end{definition}

\begin{lemma}[Universality]
\label{lem:universality}
Let $\compat{\penv}{\subst}$
and $\subs{\utm}{\penv} \tos \iunit$.
Then $\subs{\utm}{\subst} \tos \iunit$.
\end{lemma}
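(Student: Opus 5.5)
By \cref{prop:weak_standardization}, the hypothesis $\subs{\utm}{\penv} \tos \iunit$ can be replaced by a weak head reduction $\subs{\utm}{\penv} \tows \iunit$. I will argue by induction on a lexicographic measure: first the length $n$ of this weak head reduction, and second the size of $\utm$. If needed, I will add an outer induction on the maximal size of the types in $\penv$, used when an argument is verified against a smaller type. I will analyse the weak head structure of $\utm$, writing $\utm = \ctxof{\wctx}{\utmtwo}$ where $\utmtwo$ is either a redex of $\utm$ itself or a variable $\var$ in head position. The guiding invariant is that weak head reduction of $\subs{\utm}{\penv}$ either comes from a step that already exists in $\utm$, or uses a generator $\gen{\ttm}$ with $\var:\ttm\in\penv$ that has reached head position.

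\textbf{The head redex belongs to $\utm$.} Suppose the redex is already in $\utm$: a $\beta$-redex, a guard $\eunit{\iunit}{}$, or $\verifteig{\genteig}$ with a literal generator. Then $\utm \tow \utm'$ with $\subs{\utm}{\penv} \tow \subs{\utm'}{\penv}$, because the generative substitution commutes with substitution. The weak head reduction of $\subs{\utm'}{\penv}$ to $\iunit$ is shorter, so the \ih gives $\subs{\utm'}{\subst} \tos \iunit$. Since $\subs{\utm}{\subst} \to \subs{\utm'}{\subst}$, we are done. The base case $\utm = \iunit$ is immediate. Any other head normal form cannot reach $\iunit$.

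\textbf{The head is a variable.} The main case is $\utm = \ctxof{\wctx}{\var}$ with $\var:\ttm\in\penv$, so that $\gen{\ttm}$ is what reduces at the head. In the atomic case $\ttm=\teig$, the head of $\wctx$ must be $\verifteig{\ctxhole}$ plugged into an application-free spine. I will first establish $\verifteig{\subs{\var}{\subst}}\tos\iunit$ from compatibility. Then I will continue the argument on the term obtained by replacing $\verifteig{\var}$ with $\iunit$, which is smaller.

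In the arrow case $\ttm = \ttmtwo\arrow\ttmthree$, the head spine has the form $\var\,\utmthree\,\ldots$. The generator unfolds to $\eunit{\verif{\ttmtwo}{\subs{\utmthree}{\penv}}}{\gen{\ttmthree}}$. To handle the actual argument $\subs{\var}{\subst}$, I will apply the outer \ih on types to $\ttmthree$ and $\ttmtwo$, working with a fresh variable $\vartwo:\ttmthree$ standing for $\var\,\utmthree$. Concretely, I replace the head occurrence $\var\,\utmthree$ by $\vartwo$ and obtain a term $\utm''$ with $\subs{\utm''}{\penv,\vartwo:\ttmthree}\tos\iunit$. The \ih then yields acceptance under $\subst[\vartwo\mapsto \subs{\var}{\subst}\,\subs{\utmthree}{\subst}]$, provided that $\verif{\ttmthree}{(\subs{\var}{\subst}\,\subs{\utmthree}{\subst})}\tos\iunit$.

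That last requirement is the genuine obstacle. What we know is $\verif{\ttmthree}{(\subs{\var}{\subst}\,\gen{\ttmtwo})}\tos\iunit$ and, by the \ih, $\verif{\ttmtwo}{\subs{\utmthree}{\subst}}\tos\iunit$. Closing the gap requires universality itself at the smaller type $\ttmtwo$: take the program $\verif{\ttmthree}{(\subs{\var}{\subst}\,\varthree)}$ with $\varthree:\ttmtwo$. This forces the statement to be strengthened to a simultaneous induction on types, carrying universality for all $\penv$ whose types are bounded by the current type. I expect the hard part to be setting up this measure so that it decreases correctly. In the step above, the program contains the arbitrary term $\subs{\var}{\subst}$, so the induction must be on the type of the replaced generator and not on the term.
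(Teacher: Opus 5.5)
Your argument is correct, and its skeleton matches the paper's. You standardize, then induct lexicographically on $(\typesize{\penv},n)$ with the maximal type size outermost. You split on whether the weak head redex of $\subs{\utm}{\penv}$ already exists in $\utm$ or comes from a generator in head position.

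The genuine difference is in the arrow case, $\var:\ttmtwo\arrow\ttmthree$ and $\utm=\ctxof{\wctx}{\var\,\utmthree}$. The paper proceeds in three steps:
\begin{enumerate}
\item It makes a length-decreasing call on $\ctxof{\wctx}{\vartwo}$ with $\vartwo\mapsto\subs{\var}{\subst}\,\gen{\ttmtwo}$. Compatibility there is literally the hypothesis $\verif{\ttmtwo\arrow\ttmthree}{\subs{\var}{\subst}}\tos\iunit$, and the call yields $\ctxof{\subs{\wctx}{\subst}}{\subs{\var}{\subst}\,\gen{\ttmtwo}}\tows\iunit$.
\item It gets $\verif{\ttmtwo}{\subs{\utmthree}{\subst}}\tows\iunit$.
\item Only at the end does it make the type-decreasing call, on the whole program $\ctxof{\subs{\wctx}{\subst}}{\subs{\var}{\subst}\,\varthree}$ with $\varthree:\ttmtwo$, to swap the mock argument for the real one.
\end{enumerate}

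You instead make the type-decreasing call on the small program $\verif{\ttmthree}{(\subs{\var}{\subst}\,\varthree)}$. This gives the local modus ponens fact $\verif{\ttmthree}{(\subs{\var}{\subst}\,\subs{\utmthree}{\subst})}\tos\iunit$. You then use it as compatibility data for the length-decreasing call with $\vartwo\mapsto\subs{\var}{\subst}\,\subs{\utmthree}{\subst}$, which returns $\subs{\utm}{\subst}$ directly.

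Both routes use three \ih calls and the same measure. Yours isolates exactly the fact that the soundness proof later extracts from universality in its application case. The paper's makes the compatibility check for the length-decreasing call trivial, at the cost of running the type-decreasing call across the entire context.

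A few bookkeeping corrections:
\begin{itemize}
\item The call with $\vartwo:\ttmthree$ is justified by the length component, not by an ``outer \ih on types'', since $\typesize{\penv\cup\set{\vartwo:\ttmthree}}=\typesize{\penv}$. Only the $\varthree:\ttmtwo$ call uses the type decrease.
\item The bounds $m_1,m_2<n$ for the two halves of the unfolded guard need the splitting argument via determinism of $\tow$.
\item No strengthening of the statement and no term-size component are required; the lexicographic measure $(\typesize{\penv},n)$ on the lemma as stated suffices.
\end{itemize}
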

\begin{proof}
We sketch the proof here; see \cref{a:sec:universality} for details.
The idea is first to standardize $\subs{\utm}{\penv} \tos \iunit$
using \cref{prop:weak_standardization}
to obtain $\subs{\utm}{\penv} \tows \iunit$.
Let $n$ be the length of this weak head reduction.
Write $\typesize{\ttm}$ for the size\footnote{Defined by
  $\typesize{\teig} \eqdef 1$ and
  $\typesize{\ttm\imp\ttmtwo} \eqdef 1 + \typesize{\ttm} + \typesize{\ttmtwo}$.}
of the \logicalTerm $\ttm$,
and $\typesize{\penv}$ for the maximum size of a \logicalTerm occurring in $\penv$,
\ie $\typesize{\penv} \eqdef \max\set{\typesize{\ttm} \ST \exists \var.\, (\var:\ttm \in \penv)}$.
We show that $\subs{\utm}{\subst} \tows \iunit$ holds
by induction on the lexicographic measure $(\typesize{\penv}, n)$.
The interesting case is when $\utm$ is of the form
$\utm = \ctxof{\wctx}{\var\,\utmtwo}$ where $\subs{\var}{\penv}$ is an abstraction.
Then $\var : (\ttm\arrow\ttmtwo) \in \penv$
for certain \logicalTerms $\ttm,\ttmtwo$, and the weak head reduction is of the form:
\[
  \begin{array}{rlll}
  &&
    \subs{\utm}{\penv}
    =
    \ctxof{\subs{\wctx}{\penv}}{\gen{\ttm\arrow\ttmtwo}\,\subs{\utmtwo}{\penv}}
    =
    \ctxof{\subs{\wctx}{\penv}}{(\lam{\var}{\eunit{\verif{\ttm}{\var}}{\gen{\ttmtwo}}})\,\subs{\utmtwo}{\penv}}
  \\
  & \tow &
    \ctxof{\subs{\wctx}{\penv}}{\eunit{\verif{\ttm}{\subs{\utmtwo}{\penv}}}{\gen{\ttmtwo}}}
    \tows
    \iunit
  \end{array}
\]
The tail of the reduction sequence
$\ctxof{\subs{\wctx}{\penv}}{\eunit{\verif{\ttm}{\subs{\utmtwo}{\penv}}}{\gen{\ttmtwo}}}
 \tows \iunit$
can be split into two reductions $\verif{\ttm}{\subs{\utmtwo}{\penv}} \tows \iunit$
and $\ctxof{\subs{\wctx}{\penv}}{\gen{\ttmtwo}} \tows \iunit$,
each shorter than the original one.

Consider the \metaterm $\ctxof{\wctx}{\vartwo}$,
where $\vartwo$ is fresh, and 
$\penv_1 := \penv\cup\set{\vartwo:\ttmtwo}$,
and $\subst_1 := \subst\extsub{\vartwo}{\subs{\var}{\subst}\,\gen{\ttm}}$.
The \ih may be applied since $\compat{\penv_1}{\subst_1}$ 
and $\typesize{\penv_1} = \typesize{\penv}$
and $\subs{\ctxof{\wctx}{\vartwo}}{\penv_1} = \ctxof{\subs{\wctx}{\penv}}{\gen{\ttmtwo}} \tows \iunit$
is shorter than the original reduction.
So we obtain that $\subs{\ctxof{\wctx}{\vartwo}}{\subst_1} \tows \iunit$,
which means that
\textbf{(I)}: $\ctxof{\subs{\wctx}{\subst}}{\subs{\var}{\subst}\,\gen{\ttm}} \tows \iunit$. 

Recall that $\verif{\ttm}{\subs{\utmtwo}{\penv}} \tows \iunit$.
The \ih may be applied a second time since $\compat{\penv}{\subst}$
and $\verif{\ttm}{\subs{\utmtwo}{\penv}} \tows \iunit$ is shorter than
the original reduction.
So we obtain \textbf{(II)}: $\verif{\ttm}{\subs{\utmtwo}{\subst}} \tows \iunit$.

Finally, consider the \metaterm $\ctxof{\subs{\wctx}{\subst}}{\subs{\var}{\subst}\,\varthree}$
where $\varthree$ is fresh, and $\penv_2 := \set{\varthree:\ttm}$,
and $\subst_2 := \extsub{\varthree}{\subs{\utmtwo}{\subst}}$.
The \ih may be applied a third time, since $\compat{\penv_2}{\subst_2}$ holds by \textbf{(II)},
and $\typesize{\penv_1} > \typesize{\penv}$,
and $\ctxof{\subs{\wctx}{\subst}}{\subs{\var}{\subst}\,\gen{\ttm}} \tows \iunit$
holds by \textbf{(I)}.
We obtain that
$\subs{\ctxof{\subs{\wctx}{\subst}}{\subs{\var}{\subst}\,\varthree}}{\subst_2}
 \tows \iunit$,
but
$\subs{\ctxof{\subs{\wctx}{\subst}}{\subs{\var}{\subst}\,\varthree}}{\subst_2}
 = \ctxof{\subs{\wctx}{\subst}}{\subs{\var}{\subst}\,\subs{\utmtwo}{\subst}}
 = \subs{\utm}{\subst}$
so we are done.
\end{proof}

Finally, we can show:

\begin{theorem}[Soundness]
\label{thm:soundness}
If $\judg{\penv}{\tm}{\ttm}$ then $\verif{\ttm}{\subs{\tm}{\penv}} \tos \iunit$.
\end{theorem}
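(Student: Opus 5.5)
The plan is to proceed by induction on the derivation of $\judg{\penv}{\tm}{\ttm}$, using Correctness (\cref{lem:correctness}) for the axiom rule, a direct computation for abstraction, and Universality (\cref{lem:universality}) for application.

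\textbf{Variable rule.} Here $\tm = \var$ with $\var:\ttm \in \penv$. Then $\subs{\var}{\penv} = \gen{\ttm}$, and $\verif{\ttm}{\gen{\ttm}} \tos \iunit$ holds by \cref{lem:correctness}.

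\textbf{Abstraction rule.} Here $\tm = \lam{\var}{\tmtwo}$ and $\ttm = \ttmtwo\arrow\ttmthree$, with premise $\judg{\penv,\var:\ttmtwo}{\tmtwo}{\ttmthree}$. We may assume $\var$ is fresh, so it is safe for $\gensubst{\penv}$. Then
$\verif{\ttmtwo\arrow\ttmthree}{\subs{(\lam{\var}{\tmtwo})}{\penv}} = \verif{\ttmthree}{((\lam{\var}{\subs{\tmtwo}{\penv}})\,\gen{\ttmtwo})}$,
and this reduces in one $\beta$-step (inside the verifier argument, which is a legitimate context) to $\verif{\ttmthree}{\subs{\tmtwo}{\penv}\sub{\var}{\gen{\ttmtwo}}}$. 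This term equals $\verif{\ttmthree}{\subs{\tmtwo}{(\penv,\var:\ttmtwo)}}$, and by the induction hypothesis it reduces to $\iunit$.

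\textbf{Application rule.} Here $\tm = \tmtwo\,\tmthree$, with premises $\judg{\penv}{\tmtwo}{\ttmtwo\arrow\ttm}$ and $\judg{\penv}{\tmthree}{\ttmtwo}$. The induction hypotheses give two reductions:
\[
  \verif{\ttm}{(\subs{\tmtwo}{\penv}\,\gen{\ttmtwo})} \tos \iunit
  \HS
  \verif{\ttmtwo}{\subs{\tmthree}{\penv}} \tos \iunit
\]
Take a fresh variable $\varthree$ and set $\utm := \verif{\ttm}{(\subs{\tmtwo}{\penv}\,\varthree)}$. With the environment $\penvtwo := \set{\varthree:\ttmtwo}$ we get $\subs{\utm}{\penvtwo} = \verif{\ttm}{(\subs{\tmtwo}{\penv}\,\gen{\ttmtwo})} \tos \iunit$. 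Now let $\subst := \extsub{\varthree}{\subs{\tmthree}{\penv}}$; by the second hypothesis, $\compat{\penvtwo}{\subst}$. \cref{lem:universality} then yields
$\subs{\utm}{\subst} = \verif{\ttm}{(\subs{\tmtwo}{\penv}\,\subs{\tmthree}{\penv})} = \verif{\ttm}{\subs{(\tmtwo\,\tmthree)}{\penv}} \tos \iunit$.

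The only point needing care is that substitution commutes with verifiers. This holds because $\verif{\ttm}{\utm}$ is built from $\utm$ by a context containing only closed generator/verifier material: $\varthree$ appears only in the argument position, and $\gen{\ttmtwo}$ and the verifier scaffolding have no free variables. So $\subs{(\verif{\ttm}{\utm'})}{\subst} = \verif{\ttm}{\subs{\utm'}{\subst}}$, which is checked by a routine induction on $\ttm$.

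All the real difficulty has already been absorbed into \cref{lem:universality}. The main thing to get right here is the choice of environment $\penvtwo$ in the application case, in particular keeping it separate from $\penv$ so that no variable of $\penv$ is re-substituted.
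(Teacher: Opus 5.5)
Your proposal is correct and matches the paper's proof essentially step for step. It uses induction on the typing derivation, Correctness (\cref{lem:correctness}) for variables, and one $\beta$-step plus the induction hypothesis for abstractions. For applications it uses Universality (\cref{lem:universality}) with a fresh variable, the singleton environment $\set{\varthree:\ttmtwo}$ and the substitution $\extsub{\varthree}{\subs{\tmthree}{\penv}}$, exactly as the paper does, and your remark that substitution commutes with verifiers because generators are closed is a detail the paper uses silently.
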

\begin{proof}
By induction on the derivation of the judgment $\judg{\penv}{\tm}{\ttm}$.
In the \emph{variable} case, suppose that
$\judg{\penv}{\var}{\ttm}$ holds, so $\var:\ttm\in\penv$.
Then
  $\verif{\ttm}{\subs{\tm}{\penv}}
  = \verif{\ttm}{\subs{\var}{\penv}}
  = \verif{\ttm}{\gen{\ttm}}
  \tos \iunit$
  by correctness~(\cref{lem:correctness}).
\\
\indent
In the \emph{abstraction} case,
  we have that $\tm = \lam{\var}{\tmtwo}$
  and $\ttm = (\ttmtwo\arrow\ttmthree)$,
  and $\judg{\penv}{\lam{\var}{\tmtwo}}{\ttmtwo\arrow\ttmthree}$
  is derived from $\judg{\penv,\var:\ttmtwo}{\tmtwo}{\ttmthree}$.
  By $\alpha$-conversion, we may assume that $\var \notin \dom{\penv}$.
  Then we have
  $
    \verif{\ttmtwo\arrow\ttmthree}{\subs{(\lam{\var}{\tmtwo})}{\penv}}
    =
    \verif{\ttmthree}{((\lam{\var}{\subs{\tmtwo}{\penv}})\,\gen{\ttmtwo})}
    \to
    \verif{\ttmthree}{(\subs{\tmtwo}{\penv}\sub{\var}{\gen{\ttmtwo}})}
    =
    \verif{\ttmthree}{\subs{\tmtwo}{\penv,\var:\ttmtwo}}
    \tos_{\text{\ih}} \iunit
  $.
\\
\indent
Finally, in the \emph{application} case,
  we have that $\tm = \tmtwo\,\tmthree$
  and $\judg{\penv}{\tmtwo\,\tmthree}{\ttm}$
  is derived from $\judg{\penv}{\tmtwo}{\ttmtwo\arrow\ttm}$
  and $\judg{\penv}{\tmthree}{\ttmtwo}$.
  By \ih on the first premise,
  $\verif{\ttmtwo\arrow\ttm}{\subs{\tmtwo}{\penv}} \tos \iunit$,
  which by definition of $\verif{\ttmtwo\arrow\ttm}{}$
  means that
  $\verif{\ttm}{(\subs{\tmtwo}{\penv}\,\gen{\ttmtwo})} \tos \iunit$.
  By \ih on the second premise, we have that
  $\verif{\ttmtwo}{\subs{\tmthree}{\penv}} \tos \iunit$.
  Let $\vartwo$ be a fresh variable,
  consider the \metaterm
  $\verif{\ttm}{(\subs{\tmtwo}{\penv}\,\vartwo)}$,
  the typing context $\penvtwo := \set{\vartwo:\ttmtwo}$,
  and the substitution $\subst := \set{\vartwo \mapsto \subs{\tmthree}{\penv}}$.
  Remark that
  $\compat{\penvtwo}{\subst}$
  so by universality (\cref{lem:universality})
  we have that
  $\subs{(\verif{\ttm}{(\subs{\tmtwo}{\penv}\,\vartwo)})}{\penvtwo} \tos \iunit$
  implies
  $
    \subs{(\verif{\ttm}{(\subs{\tmtwo}{\penv}\,\vartwo)})}{\subst} \tos \iunit
  $.
  Performing the substitutions, this means that
  $
    \verif{\ttm}{(\subs{\tmtwo}{\penv}\,\gen{\ttmtwo})} \tos \iunit
  $
  implies
  $
    \verif{\ttm}{(\subs{\tmtwo}{\penv}\,\subs{\tmthree}{\penv})} \tos \iunit
  $. The antecedent of the implication holds, so
  $\verif{\ttm}{\subs{\tm}{\penv}}
   = \verif{\ttm}{(\subs{\tmtwo}{\penv}\,\subs{\tmthree}{\penv})}
   \tos \iunit$.
\end{proof}

\subsection{Completeness of $\lambdaST$ with respect to $\lambdaCheck$}

Completeness is the statement that \emph{realizable \logicalTerms are provable},
\ie that if $\utm$
is a good \metaterm such that $\verif{\ttm}{\utm} \tos \iunit$
then $\vdash \ttm$ holds in \minimalLogic.
As we did for soundness, we have to generalize the statement in
order to account for possibly open terms.

The generalized statement of completeness says that if
$\utm$ is a good \metaterm
such that $\verif{\ttm}{\subs{\utm}{\penv}} \tos \iunit$
then there exists a \proofTerm $\tm$ such that $\judg{\penv}{\tm}{\ttm}$.
Moreover, $\tm$ is such that $\utm \eqbeta \tm$.
In fact, $\tm$ can be chosen to be the $\beta$-normal form of $\utm$.

\paragraph{Measuring the size of realizers.}
The proof of completeness proceeds by induction on an integer
$\sig{\utm}{\ttm}$ called the \defn{proof size}.
It is defined for any pure \metaterm $\utm$,
\textbf{even if it is untypable}, assuming that $\utm$ has a $\beta$-normal form.
Recall the well-known fact that the following grammar of \emph{normal \proofTerms}
characterizes exactly the set of \proofTerms in $\beta$-normal form
(which coincides with the set of \emph{pure \metaterms} in normal form):
\begin{definition}[Neutral and normal \proofTerms]
\label{def:grammar_of_nfs}
\[
  \begin{array}{lrcl}
  \textsc{Neutral \proofTerms} & \uneu & ::= & \,\, \var\,\unf_1\hdots\unf_n \HS (n \geq 0) \\
  \textsc{Normal \proofTerms} & \unf & ::= & \,\, \uneu \,\,\mid\,\, \lam{\var}{\unf}
  \end{array}
\]
\end{definition}

\begin{definition}[Proof size]
\label{def:proof_size}
The \defn{proof size} of a normal \proofTerm $\unf$ with respect to a \proofEnvironment $\penv$
and a \logicalTerm $\ttm$
is a non-negative integer written $\sig{\unf}{\ttm}$
and defined as follows, by induction on the shape of $\unf$,
according to \cref{def:grammar_of_nfs}:
\[
  \begin{array}{l@{\,\,}l@{\,\,}ll}
    \sig{\var\,\unf_1\hdots\unf_n}{\ttm}
    & \eqdef &
    \multicolumn{2}{l}{\!\!\!
          \minof{\size{\ttm}}{\size{\ttm'}}
        + \sum_{i=1}^n\set{{\sig{\unf_i}{\ttm_i}}}
    }
  \\
    &&
    \multicolumn{2}{l}{
    \HS\text{(if $\var: (\ttm_1\imp\hdots\imp\ttm_n\imp\ttm') \in \penv$)}
    }
  \\
      \sig{\var\,\unf_1\hdots\unf_n}{\ttm}
    & \eqdef &
      0
    & \text{(otherwise)}
  \\
      \sig{\lam{\var}{\unf}}{\ttm}
    & \eqdef &
      \sig[\penv,\var:\ttmtwo]{\unf}{\ttmthree}
    & \text{(if $\ttm = (\ttmtwo\imp\ttmthree)$)}
  \\
      \sig{\lam{\var}{\unf}}{\ttm}
    & \eqdef &
      0
    & \text{(otherwise)}
  \end{array}
\]
The definition is generalized for any pure \metaterm $\utm$
as long as it has a normal form $\unf$,
defining $\sig{\utm}{\ttm} \eqdef \sig{\unf}{\ttm}$.
\end{definition}

One key observation is that
$\sig{\utm}{\ttm\imp\ttmtwo} \geq \sig[\penv,\pvar:\ttm]{\utm\,\var}{\ttmtwo}$.
Another key observation is that
if $\var:(\ttm_1\imp\hdots\imp\ttm_n\imp\ttm')\in\penv$
then $\sig{\var\utm_1\hdots\utm_n}{\ttm'} > \sig{\utm_i}{\ttm_i}$
for every $1 \leq i \leq n$
(see \cref{lemma:size_normalize,lemma:sizeVar_argument} in the
appendix for details).

\begin{theorem}[Completeness]
\label{thm:completeness}
If $\verif{\ttm}{\subs{\utm}{\penv}} \tos \iunit$
and $\utm$ is good, then its $\beta$-normal form
is a \proofTerm $\utm^\downarrow$ such that $\judg{\penv}{\utm^\downarrow}{\ttm}$.
\end{theorem}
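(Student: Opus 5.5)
We proceed by induction on the proof size $\sig{\utm}{\ttm}$, with a secondary induction on the size of $\ttm$. Since $\utm$ is good, it has a $\beta$-normal form $\unf$. Reduction in $\lambdaCheck$ on pure terms is $\beta$-reduction and commutes with the generative substitution, so $\verif{\ttm}{\subs{\utm}{\penv}} \tos \verif{\ttm}{\subs{\unf}{\penv}}$. By confluence (\cref{prop:confluence}) the reduction to $\iunit$ then factors through $\verif{\ttm}{\subs{\unf}{\penv}}$, because $\iunit$ is a normal form. We may therefore assume from the start that $\utm = \unf$ is a normal \proofTerm. By \cref{prop:weak_standardization} we may also assume that the reduction to $\iunit$ is a weak head reduction, which is deterministic by \cref{prop:determinism}.

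\textbf{Case $\ttm = \ttmtwo\imp\ttmthree$.} By definition, $\verif{\ttmthree}{(\subs{\unf}{\penv}\,\gen{\ttmtwo})} \tos \iunit$. Take a fresh $\var$; then $\subs{(\unf\,\var)}{\penv,\var:\ttmtwo} = \subs{\unf}{\penv}\,\gen{\ttmtwo}$. The term $\unf\,\var$ is pure and normalizing: if $\unf = \lam{\vartwo}{\unftwo}$ its normal form is $\unftwo\sub{\vartwo}{\var}$, and otherwise $\unf\,\var$ is already normal. By the key observation $\sig{\unf}{\ttmtwo\imp\ttmthree} \geq \sig[\penv,\var:\ttmtwo]{\unf\,\var}{\ttmthree}$, and $\ttmthree$ is smaller than $\ttm$, so the lexicographic measure (proof size, type size) decreases. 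The \ih yields $\judg{\penv,\var:\ttmtwo}{(\unf\,\var)^\downarrow}{\ttmthree}$. If $\unf = \lam{\vartwo}{\unftwo}$ we conclude by the abstraction rule, up to $\alpha$-renaming. If $\unf$ is neutral, we get $\judg{\penv,\var:\ttmtwo}{\unf\,\var}{\ttmthree}$; inverting the application rule should give $\judg{\penv}{\unf}{\ttmtwo\imp\ttmthree}$ directly, since the head of $\unf$ is not $\var$ and the type of the last argument is forced. To justify the inversion, I will check the neutral case below in a form that handles $\ttm$ of any shape.

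\textbf{Case $\unf$ neutral, $\unf = \var\,\unf_1\hdots\unf_n$.} First, $\var\in\dom{\penv}$: otherwise $\verif{\ttm}{(\var\,\subs{\unf_1}{\penv}\hdots)}$ is stuck in weak head form and is not $\iunit$. So $\var:\ttmtwo_1\imp\hdots\imp\ttmtwo_m\imp\teig \in\penv$. Since the verifier feeds generators until it reaches an atom, $\verif{\ttm}{}$ with $\ttm = \ttm_1\imp\hdots\imp\ttm_k\imp\teigtwo$ reduces to $\verifteig[\teigtwo]{(\gen{\ttmtwo_1\imp\hdots}\,\subs{\unf_1}{\penv}\hdots\subs{\unf_n}{\penv}\,\gen{\ttm_1}\hdots\gen{\ttm_k})}$. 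Unfolding the generator, the weak head reduction successively produces guards $\eunit{\verif{\ttmtwo_i}{\subs{\unf_i}{\penv}}}{\cdot}$ for $i\le n$ and $\eunit{\verif{\ttmtwo_{n+j}}{\gen{\ttm_j}}}{\cdot}$ for the remaining arguments. It ends in $\verifteig[\teigtwo]{\genteig}$ only if $m = n+k$ and $\teig=\teigtwo$. Success of each guard gives $\verif{\ttmtwo_i}{\subs{\unf_i}{\penv}}\tos\iunit$. By the second key observation $\sig{\unf_i}{\ttmtwo_i} < \sig{\unf}{\ttm}$, so the \ih gives $\judg{\penv}{\unf_i}{\ttmtwo_i}$, and hence $\judg{\penv}{\unf}{\ttmtwo_{n+1}\imp\hdots\imp\ttmtwo_m\imp\teig}$. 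It remains to show that this type equals $\ttm$, that is, $\ttmtwo_{n+j} = \ttm_j$.

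\textbf{Main obstacle.} That remaining equality must be extracted from the successes $\verif{\ttmtwo_{n+j}}{\gen{\ttm_j}}\tos\iunit$. This requires an auxiliary lemma: \emph{if $\verif{\ttmtwo}{\gen{\ttmthree}}\tos\iunit$ then $\ttmtwo=\ttmthree$}. I would prove it by induction on $\ttmtwo$ and $\ttmthree$, using the same weak-head unfolding. The case $\ttmtwo=\ttmtwo'\imp\ttmtwo''$ against an atom $\ttmthree$ gets stuck. Arrow against arrow yields $\verif{\ttmthree'}{\gen{\ttmtwo'}}$ and $\verif{\ttmtwo''}{\gen{\ttmthree''}}$, and both recursive calls are on smaller types, with the roles swapped in the first. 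The bookkeeping of the weak-head sequence through nested guards, splitting it into the subreductions as in the proof of \cref{lem:universality}, is where I expect the real care to be needed. Once the neutral case holds for arbitrary $\ttm$, the arrow case no longer needs inversion, since a neutral $\unf$ is handled directly by it.
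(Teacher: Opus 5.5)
Your proof is correct, but it is organized differently from the paper's. The paper never analyses a neutral term at a non-atomic type. It writes $\ttm = \ttm_1\imp\hdots\imp\ttm_n\imp\teig$, feeds fresh variables $\var_1,\hdots,\var_n$ all at once, and does the head-variable analysis on the normal form of $\unf\,\var_1\hdots\var_n$ against the atom $\teig$. It then recovers $\unf$ by $\eta$-reducing the resulting abstraction, which needs confluence of $\beta$, commutation of $\beta$ with $\eta$, and subject reduction. In that route the fresh variables reappear as arguments of the head variable. The equalities $\ttmtwo_{n+j} = \ttm_j$ that you isolate as an auxiliary lemma then come out of the main induction: the \ih applied to the argument $\var_j$ gives a typing $\var_j:\ttmtwo_{n+j}$, which forces the equality because $\var_j:\ttm_j$ is in the environment. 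The summand $\minof{\size{\ttm}}{\size{\ttm'}}$ in the proof size is designed so that this $\eta$-expansion does not increase the measure, while each bare variable still has positive size.

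Your route treats a neutral $\unf$ at an arbitrary target type. You pay for this with the separate lemma that $\verif{\ttmtwo}{\gen{\ttmthree}}\tos\iunit$ implies $\ttmtwo=\ttmthree$. That lemma is true and provable as you sketch, by induction on $\size{\ttmtwo}+\size{\ttmthree}$ using the splitting lemma; the swap of roles is why induction on one type alone would not do. In exchange you avoid $\eta$, subject reduction and the commutation result entirely, and you conclude directly about $\unf$. Moreover, your \ih is only ever invoked on the arguments $\unf_i$ and on the body of an abstraction. So plain structural induction on $\unf$, generalized over $\penv$ and $\ttm$, would already suffice, and the proof size becomes unnecessary.

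Two minor points:
\begin{itemize}
\item The case $\unf=\lam{\vartwo}{\unftwo}$ with $\ttm$ atomic should be dismissed explicitly: the verifier is stuck on an abstraction.
\item The inversion step you first hesitated over was in fact sound. The last argument is the fresh variable $\var$, whose type is fixed by the environment, and strengthening removes it. Following that step would have led you back to essentially the paper's argument, in the one-arrow-at-a-time form the paper uses for the second-order case.
\end{itemize}
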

\begin{proof}
We sketch the proof here; see \cref{a:sec:completeness} for details.
Let $\unf$ be the $\beta$-normal form of $\utm$, which is a normal \proofTerm.
We proceed by induction on $\sig{\unf}{\ttm}$.
Write $\ttm = (\ttm_1 \arrow \hdots \ttm_n \arrow \teig)$.
Note that
$\verif{\ttm}{\subs{\utm}{\penv}}
 \tos \verif{\teig}{(\subs{\utm}{\penv}\,\gen{\ttm_1}\hdots\gen{\ttm_n})}
 \tos \iunit$.
Let $\var_1,\hdots,\var_n$ be fresh variables,
and consider the \metaterm $\utm\,\var_1\hdots\var_n$
and the \typeEnvironment $\penvtwo := (\penv,\var_1:\ttm_1,\hdots,\var_n:\ttm_n)$.
Then
$\verif{\teig}{\subs{(\utm\,\var_1\hdots\var_n)}{\penvtwo}}
= \verif{\teig}{(\subs{\utm}{\penv}\,\gen{\ttm_1}\hdots\gen{\ttm_n})}
\tos \iunit$.
Let $\utmtwo^\downarrow$ be the $\beta$-normal form of the application
$\utm^\downarrow\,\var_1\hdots\var_n$.
By confluence (\cref{prop:confluence}),
$\verif{\teig}{\subs{(\utmtwo^\downarrow)}{\penvtwo}} \tos \iunit$.
Note that
$\sig{\unf}{\ttm}
 = \sig{\unf}{\ttm_1\imp\hdots\ttm_n\imp\teig}
 \geq \sig[\penv,\var_1:\ttm_1,\hdots,\var_n:\ttm_n]{\unftwo}{\teig}
 = \sig[\penvtwo]{\unftwo}{\teig}$.
We consider two cases, depending on whether $\utmtwo^\downarrow$
is an abstraction or a neutral \proofTerm.

If $\utmtwo^\downarrow$ is an \textbf{abstraction},
  \ie $\utmtwo^\downarrow = \lam{\vartwo}{\utmthree}$,
  then we have that
  $\verifteig{\lam{\vartwo}{\subs{\utmthree}{\penvtwo}}}
  = \verifteig{\subs{(\lam{\vartwo}{\utmthree})}{\penvtwo}}
  = \verifteig{\subs{(\utmtwo^\downarrow)}{\penvtwo}}
  \tos \iunit$,
  which is impossible,
  because the reducts of an abstraction are always abstractions,
  so in particular $\lam{\vartwo}{\subs{\utmthree}{\penvtwo}}$
  cannot have $\genteig$ as a reduct.

If $\utmtwo^\downarrow$ is a \textbf{neutral term},
  then $\utmtwo^\downarrow = \var\,\utmtwo_1\hdots\utmtwo_m$,
  and we know that
  $\verifteig{(\subs{\var}{\penvtwo}\,\subs{\utmtwo_1}{\penvtwo}\hdots\subs{\utmtwo_m}{\penvtwo})}
  \tos \iunit$.
  Note that $\var\in\dom{\penvtwo}$
  because otherwise we would have
  $\verifteig{(\var\,\subs{\utmtwo_1}{\penvtwo}\hdots\subs{\utmtwo_m}{\penvtwo})}
   \not{\tos} \iunit$, a contradiction.

  Suppose then that $\var:\ttmtwo\in\penvtwo$ for some $\ttmtwo$.
  Without loss of generality, write
  $\ttmtwo = (\ttmtwo_1\arrow\hdots\arrow\ttmtwo_k\arrow\teigtwo)$.
  We write $\lamt{\ttmthree}{\utmthree}$ to abbreviate $\lam{\var}{(\eunit{\verif{\ttmthree}{\var}}{\utmthree})}$
  assuming that $\var \notin \fv{\utmthree}$.
  Note that:
  \[
    \begin{array}{l@{\,\,}l@{\,\,}lll}
      \verifteig{((\lamt{\ttmtwo_1}{\hdots\lamt{\ttmtwo_k}{\gen{\teigtwo}}})\,\subs{\utmtwo_1}{\penvtwo}\hdots\subs{\utmtwo_m}{\penvtwo})}
    & = &
      \verifteig{(\gen{\ttmtwo}\,\subs{\utmtwo_1}{\penvtwo}\hdots\subs{\utmtwo_m}{\penvtwo})}
    \\
    & = &
      \verifteig{(\subs{\var}{\penvtwo}\,\subs{\utmtwo_1}{\penvtwo}\hdots\subs{\utmtwo_m}{\penvtwo})}
      \,\,\tos\,\,\iunit
    \end{array}
  \]
  The cases $k < m$ and $k > m$ are impossible because the reduction
  of $\verifteig{((\lamt{\ttmtwo_1}{\hdots\lamt{\ttmtwo_k}{\gen{\teigtwo}}})\,\subs{\utmtwo_1}{\penvtwo}\hdots\subs{\utmtwo_m}{\penvtwo})}$ would become ``stuck''
  (see \cref{a:thm:completeness} in the appendix for a detailed proof).
  Hence $k = m$
  and we must have that $\verif{\ttmtwo_i}{\subs{\utmtwo_i}{\penvtwo}} \tos \iunit$
  for every $1 \leq i \leq k$;
  otherwise reduction would also become ``stuck''.
  Then we have that
  $
    \verifteig{((\lamt{\ttmtwo_1}{\hdots\lamt{\ttmtwo_k}{\gen{\teigtwo}}})\,\subs{\utmtwo_1}{\penvtwo}\hdots\subs{\utmtwo_k}{\penvtwo})}
    \tos
    \verifteig{\gen{\teigtwo}}
    \to \iunit
  $, and the only way this can happen is if $\teig = \teigtwo$.

  For each $1 \leq i \leq k$
  note that
  $
    \sig{\utm^\downarrow}{\ttm}
    \geq
    \sig[\penvtwo]{\utmtwo^\downarrow}{\teig}
    >
    \sig[\penvtwo]{\utmtwo_i}{\ttmtwo_i}
  $
  holds,
  and note also that $\verif{\ttmtwo_i}{\subs{\utmtwo_i}{\penvtwo}} \tos \iunit$,
  so by \ih
  $\judg{\penvtwo}{\utmtwo^\downarrow_i}{\ttmtwo_i}$.
  Since $\var:(\ttmtwo_1\arrow\hdots\arrow\ttmtwo_k\arrow\teig) \in \penv \subseteq \penvtwo$,
  we can derive the judgment
  $\judg{\penvtwo}{\var\,\unftwo_1\hdots\unftwo_k}{\teig}$.
  Take $\tm := \lam{\var_1\hdots\var_n}{\var\,\utmtwo^\downarrow_1\hdots\utmtwo^\downarrow_k}$.
  Note that:
  \[
    \begin{array}{llll}
      \tm
    & = &
      \lam{\var_1\hdots\var_n}{\var\,\utmtwo^\downarrow_1\hdots\utmtwo^\downarrow_k}
    \\
    & \eqbeta &
      \lam{\var_1\hdots\var_n}{\var\,\utmtwo_1\hdots\utmtwo_k}
    \\
    & = &
      \lam{\var_1\hdots\var_n}{\utmtwo^\downarrow}
    \\
    & \eqbeta &
       \lam{\var_1\hdots\var_n}{\utm^\downarrow\,\var_1\hdots\var_n}
      & \text{as $\utm^\downarrow\,\var_1\hdots\var_n \tobetas \utmtwo^\downarrow$}
    \\
    & \toetas &
       \utm^\downarrow
       & \text{using $n$ steps of $\eta$-reduction}
    \end{array}
  \]
  Thus $\tm \eqbeta\toetas \utm^\downarrow \tobetainvs \utm$.
  By confluence of $\tobeta$, we have
  $\tm \tobetas \tobetainvs \toetas \utm^\downarrow \tobetainvs \utm$.
  Moreover, since $\tobeta$ and $\toeta$ commute and $\utm^\downarrow$ is in $\beta$-normal form,
  the situation is $\tm \tobetas \toetas \utm^\downarrow \tobetainvs \utm$.

  Finally, since $\penvtwo = (\penv,\var_1:\ttm_1,\hdots,\var_n:\ttm_n)$,
  we have that
  $\judg{\penv}{\lam{\var_1\hdots\var_n}{\var\,\utmtwo^\downarrow_1\hdots\utmtwo^\downarrow_k}}{\ttm_1\arrow\hdots\ttm_n\arrow\teig}$,
  that is 
  $\judg{\penv}{\tm}{\ttm}$.
  Since $\tm \tobetas\toetas \utm^\downarrow$,
  by subject reduction we conclude $\judg{\penv}{\utm^\downarrow}{\ttm}$.
\end{proof}

\section{Extension to Second-Order Logic}
\label{sec:second_order}

One way to extend the \epistemic realizability interpretation
from the previous section
would be to define verifiers and generators for additional propositional
connectives.
For example, one might define the verifier for conjunction as
$\verif{\ttm\land\ttmtwo}{\utm} =
 \eunit{\verif{\ttm}{\pi_1(\utm)}}{\verif{\ttmtwo}{\pi_2(\utmtwo)}}$,
and the generator as
$\gen{\ttm\land\ttmtwo} = (\gen{\ttm},\gen{\ttmtwo})$.
However, it is not immediately obvious how to define verifiers and generators
for \emph{positive} connectives such as disjunction ($\ttm\lor\ttmtwo$).

We take a different route, and extend the interpretation from \minimalLogic
to \textbf{second-order logic}, incorporating second-order universal
quantification. It is well-known that other logical connectives
(including the intuitionistic propositional connectives),
as well as inductive datatypes (natural numbers, lists, trees, etc.)
can be expressed in second-order logic, \eg using Böhm--Berarducci encodings~\cite{Bohm85}.

The main results in this section are \textbf{soundness}~(\cref{thm:f:soundness})
and \textbf{completeness}~(\cref{thm:f:completeness}).
While the proof of completeness is an adaptation of the one for \minimalLogic,
the proof of soundness requires a fundamentally different approach.

\paragraph{The second-order $\lambda$-calculus ($\lambdaF$).}
We begin by recalling the typed $\lambda$-calculus corresponding
to second-order logic, namely Girard--Reynolds System~F~\cite{GirardThesis,Reynolds1974}.
We assume given disjoint denumerable sets
of \defn{\logicalVariables} ($\tvar,\tvartwo,\hdots$),
and \defn{\logicalEigenvariables} ($\teig,\teigtwo,\hdots$).
The sets of \emph{\logicalTerms} and \emph{\proofTerms}
are given by the grammar:
\[
  \begin{array}{lrcl}
  \textsc{\LogicalTerms} &
  \ttm, \ttmtwo, \hdots
    & ::= &
           \tvar
      \mid \teig
      \mid \ttm\imp\ttmtwo
      \mid \allf{\tvar}{\ttm}
  \\
  \textsc{\ProofTerms} &
  \ptm,\ptmtwo,\hdots
    & ::= &
           \pvar
      \mid \plamf{\pvar}{\ptm}
      \mid \ptm\,\ptmtwo
      \mid \pallfi{\tvar}{\ptm}
      \mid \ptm\,\ttm
  \end{array}
\]
The (second-order) universal quantifier $\allf{\tvar}{\ttm}$
binds the occurrences of $\tvar$ in $\ttm$.
\LogicalEigenvariables are similar to \logicalVariables, but they
are assumed to never occur bound in \logicalTerms or \proofTerms
(although it will be possible that they occur bound in \metaterms).
A \logicalTerm is \defn{atomic} if it is either a \logicalVariable
or an \logicalEigenvariable.

From the propositions-as-types point of view,
$\plamf{\pvar}{\ptm}$ and $\ptm\,\ptmtwo$
encode introduction and elimination of the implication,
while $\pallfi{\tvar}{\ptm}$ and $\ptm\,\ttm$
encode introduction and elimination of the second-order universal quantifier.
Typing judgments are defined with the three typing rules for $\lambdaST$ given
in the previous section, plus the two following rules:
\[
  \indrule{}{
    \judgf{\penv}{\ptm}{\ttm}
    \HS
    \tvar\notin\fv{\penv}
  }{
    \judgf{\penv}{\pallfi{\tvar}{\ptm}}{\allf{\tvar}{\ttm}}
  }
  \HS
  \indrule{}{
    \judgf{\penv}{\ptm}{\allf{\tvar}{\ttm}}
  }{
    \judgf{\penv}{\ptm\,\ttmtwo}{\ttm\sub{\tvar}{\ttmtwo}}
  }
\]
We define \defn{$\beta$} and $\eta$-reduction as the union of the following
rules, closed by arbitrary contexts:
\[
  \begin{array}{r@{\,\,}l@{\,\,}l@{\HS}r@{\,\,}l@{\,\,}l}
    (\plamf{\pvar}{\ptm})\,\ptmtwo
    & \tobeta &
    \ptm\sub{\pvar}{\ptmtwo}
  &
    (\pallfi{\tvar}{\ptm})\,\ttm
    & \tobeta &
    \ptm\sub{\tvar}{\ttm}
  \\
    \plamf{\pvar}{\ptm\,\pvar}
    & \toeta &
    \ptm
    \,\,\,\,\text{if $\pvar \notin \fv{\ptm}$}
  &
    \pallfi{\tvar}{\ptm\,\tvar}
    & \toeta &
    \ptm
    \,\,\,\,\text{if $\tvar \notin \fv{\ptm}$}
  \end{array}
\]
Both notions of reduction are defined over arbitrary \proofTerms,
even if they are not typable, but they can also be restricted to
typable \proofTerms, using the well-known property that they preserve
types.

\paragraph{The second-order metacalculus ($\lambdaCheckF$).}
The set of \metaterms of $\lambdaCheckF$ is given by the following grammar:
\[
  \begin{array}{lrlllll}
  \multicolumn{7}{l}{\textsc{Metaterms} \HS \utm,\utmtwo,\hdots ::=} \\
  &      & \pvar
         & \text{\proofVariable}
  & \mid & \iunit
         & \text{\unitConstructor}
  \\
  & \mid & \plamf{\pvar}{\utm}
         & \text{\proofAbs}
  & \mid & \eunit{\utm}{\utmtwo}
         & \text{\unitEliminator}
  \\
  & \mid & \utm\,\utmtwo
         & \text{\proofApp}
  & \mid & \gen{\ttm}
         & \text{generator}
  \\
  & \mid & \pallfi{\tvar}{\utm}
         & \text{\logicalAbs}
  & \mid & \verif{\ttm}{\utm}
         & \text{verifier}
  \\
  & \mid & \utm\,\ttm
         & \text{\logicalApp}
  & \mid & \freshf{\teig}{\utm}
         & \text{fresh \logicalEigenvariable}
  \end{array}
\]
A \metaterm is \defn{pure} if it is built using only the five
productions on the left-hand side of the grammar.
\ProofTerms of $\lambdaF$ are identified with pure \metaterms of $\lambdaCheckF$.
A \metaterm is \defn{good} if it is pure and $\tobeta$-normalizing.

\paragraph{Syntactical differences between $\lambdaCheckF$ and $\lambdaCheck$.}
The relationship between $\lambdaCheckF$ and $\lambdaF$
is analogous to that of $\lambdaCheck$ and $\lambdaST$.
We mention some of the differences between $\lambdaCheckF$ and $\lambdaCheck$.
First, besides the usual abstraction and application,
  $\lambdaCheckF$
  also includes abstractions $\pallfi{\tvar}{\utm}$
  and applications $\utm\,\ttm$,
  which abstract over \logicalTerms.
Second,
  the syntax of the metacalculus for \minimalLogic $\lambdaCheck$
  of the previous section
  included only generators and verifiers of the forms
  $\gen{\teig}$ and $\verif{\teig}{}$,
  where $\teig$ was an atomic \logicalTerm.
  Generators and verifiers for more complex
  \logicalTerms were \emph{defined} operators
  (\cref{def:generators_and_verifiers}).
  Instead, the second-order metacalculus $\lambdaCheckF$
  includes generators and verifiers for \emph{arbitrary} \logicalTerms
  ($\gen{\ttm}$ and $\verif{\ttm}{}$) directly on the syntax.
  This is necessary because in $\lambdaCheckF$ \logicalVariables may become
  instantiated \emph{dynamically},
  for example $(\pallfi{\tvar}{\gen{\tvar}})\,\ttm \to \gen{\ttm}$.
Third, 
  $\lambdaCheckF$ includes a constructor $\freshf{\teig}{\utm}$
  that binds a local \logicalEigenvariable $\teig$,
  representing a \emph{symbolic} or \emph{constant} \logicalTerm,
  in the scope of $\utm$.
  This construct is used to verify universal quantifiers.

\paragraph{Reduction in $\lambdaCheckF$.}
\emph{Weak head contexts} are given by the grammar:
\[
  \begin{array}{l@{\,\,\,\,}l@{\,\,}l@{\,\,}l}
    \textsc{Weak head contexts} &
    \wctx & ::= & \ctxhole
              \mid \wctx\,\utm
              \mid \wctx\,\ttm
              \mid \eunit{\wctx}{\utm}
              \mid \verif{\ttm}{\wctx}
              \mid \freshf{\teig}{\wctx}
  \end{array}
\]

The \defn{reduction} relation in $\lambdaCheckF$, written $\to$,
is given by the union of the rewriting rules below,
closed by arbitrary contexts.
The binary relation of \defn{weak head reduction}, written $\tof$,
is also given by the union of the rewriting rules below,
but closed by weak head contexts.
\[
  \begin{array}{lrcll}
    \ruleNum{1} &
    (\plamf{\pvar}{\utm})\,\utmtwo
    & \to &
    \utm\sub{\pvar}{\utmtwo}
  \\
    \ruleNum{2} &
    (\pallfi{\tvar}{\utm})\,\ttm
    & \to &
    \utm\sub{\tvar}{\ttm}
  \\
    \ruleNum{3} &
    \eunit{\iunit}{\utm}
    & \to &
    \utm
  \\
    \ruleNum{4} &
    \verif{\teig}{\gen{\teig}}
    & \to &
    \iunit
  \\
    \ruleNum{5} &
    \gen{\ttm\imp\ttmtwo}
    & \to &
    \plamf{\pvar}{(\eunit{\verif{\ttm}{\pvar}}{\gen{\ttmtwo}})}
  \\
    \ruleNum{6} &
    \verif{\ttm\imp\ttmtwo}{\utm}
    & \to &
    \verif{\ttmtwo}{(\utm\,\gen{\ttm})}
  \\
    \ruleNum{7} &
    \gen{\allf{\tvar}{\ttm}}
    & \to &
    \pallfi{\tvar}{\gen{\ttm}}
  \\
    \ruleNum{8} &
    \verif{\allf{\tvar}{\ttm}}{\utm}
    & \to &
    \freshf{\teig}{\verif{\ttm\sub{\tvar}{\teig}}{(\utm\,\teig)}}
    & \text{($\teig$ fresh)}
  \\
    \ruleNum{9} &
    \freshf{\teig}{\utm}
    & \to &
    \utm
    & \text{(if $\teig\notin\fv{\utm}$)}
  \end{array}
\]
Rules~1 and~2 are $\beta$-reduction rules for \proofAbstraction
and \logicalAbstraction.
Rule~3 is the same as in $\lambdaCheck$.
Rule~4 is the same as in $\lambdaCheck$,
but only applies when $\teig$ is an \emph{\logicalEigenvariable},
not a \logicalVariable.
Rules~5 and~6 implement verification and generation of the implication,
with the same behavior as in \minimalLogic~(\cref{def:generators_and_verifiers})
but through reduction rules instead of definitional equations.
Rules~7 and~8 are the most interesting ones, as they
implement verification and generation of the second-order universal quantifier,
which is the new connective.
Rule~7 states that the generic realizer for $\allf{\tvar}{\ttm}$
is a \logicalAbstraction that receives a type $\tvar$ and returns the
generic realizer $\gen{\ttm}$, where $\ttm$ may depend on $\tvar$.
Rule~8 states that to verify whether $\utm$ is a realizer for $\ttm$
one has to check whether the application $\utm\,\teig$
is a realizer for $\verif{\ttm\sub{\tvar}{\teig}}$, where $\teig$ is
a fresh \logicalEigenvariable.
\LogicalEigenvariables represent ``symbolic'' or ``constant'' types.
Finally, rule~9 removes the fresh \logicalEigenvariable introduction
when it becomes unnecessary. 

\begin{example}[Encoding of conjunction]
Consider the usual second-order encoding of conjunction,
$\ttm\land\ttmtwo \eqdef \allf{\tvarthree}{(\ttm\imp\ttmtwo\imp\tvarthree)\imp\tvarthree}$.
Then $\plamf{\var}{\plamf{\vartwo}{\pallfi{\tvarthree}{\plamf{f}{f\,\var\,\vartwo}}}}$
realizes $\land$-introduction:
\[
  \begin{array}{ll}
  &
    \verif{\teig\imp\teigtwo\imp(\teig\land\teigtwo)}{
      (\plamf{\var}{\plamf{\vartwo}{\pallfi{\tvarthree}{\plamf{f}{f\,\var\,\vartwo}}}})
    }
  \\
  \tos &
    \verif{\teig\land\teigtwo}{
      (\pallfi{\tvarthree}{\plamf{f}{f\,\gen{\teig}\,\gen{\teigtwo}}})
    }
  \\
  \to &
    \freshf{\teigthree}{
      \verif{(\teig\imp\teigtwo\imp\teigthree)\imp\teigthree}{
        ((\pallfi{\tvarthree}{\plamf{f}{f\,\gen{\teig}\,\gen{\teigtwo}}})\,\teigthree)
      }
    }
  \\
  \to &
    \freshf{\teigthree}{
      \verif{(\teig\imp\teigtwo\imp\teigthree)\imp\teigthree}{
        (\plamf{f}{f\,\gen{\teig}\,\gen{\teigtwo}})
      }
    }
  \\
  \to &
    \freshf{\teigthree}{
      \verif{\teigthree}{
        ((\plamf{f}{f\,\gen{\teig}\,\gen{\teigtwo}})\,\gen{\teig\imp\teigtwo\imp\teigthree})
      }
    }
  \\
  \to &
    \freshf{\teigthree}{
      \verif{\teigthree}{
        (\gen{\teig\imp\teigtwo\imp\teigthree}\,\gen{\teig}\,\gen{\teigtwo})
      }
    }
  \\
  \tos &
    \freshf{\teigthree}{\verif{\teigthree}{\gen{\teigthree}}}
    \,\, \tos \,\,
    \freshf{\teigthree}{\iunit}
    \,\, \tos \,\,
    \iunit
  \end{array}
\]
and $\plamf{p}{p\,\teig\,(\plamf{\pvar}{\plamf{\pvartwo}{\pvar}})}$
realizes $\land$-elimination of the first conjunct:
\[
  \begin{array}{ll}
  &
    \verif{(\teig\land\teigtwo)\imp\teig}{
      (\plamf{p}{p\,\teig\,(\plamf{\pvar}{\plamf{\pvartwo}{\pvar}})})
    }
  \\
  \tos &
    \verif{\teig}{
      (\gen{\teig\land\teigtwo}\,\teig\,(\plamf{\pvar}{\plamf{\pvartwo}{\pvar}}))
    }
  \\
  \to &
    \verif{\teig}{
      ((\pallfi{\tvarthree}{\gen{(\teig\imp\teigtwo\imp\tvarthree)\imp\tvarthree}})\,\teig\,(\plamf{\pvar}{\plamf{\pvartwo}{\pvar}}))
    }
  \\
  \to &
    \verif{\teig}{
      (\gen{(\teig\imp\teigtwo\imp\teig)\imp\teig}\,(\plamf{\pvar}{\plamf{\pvartwo}{\pvar}}))
    }
  \\
  \to &
    \verif{\teig}{
      ((\lam{\varthree}{
          (\eunit{(\verif{\teig\imp\teigtwo\imp\teig}{\varthree})}{\gen{\teig}})
        })\,(\plamf{\pvar}{\plamf{\pvartwo}{\pvar}}))
    }
  \\
  \to &
    \verif{\teig}{
      (\eunit{(\verif{\teig\imp\teigtwo\imp\teig}{(\plamf{\pvar}{\plamf{\pvartwo}{\pvar}})})}{\gen{\teig}})
    }
  \\
  \tos &
    \verif{\teig}{(\eunit{\iunit}{\gen{\teig}})}
    \,\, \tos \,\,
    \verif{\teig}{\gen{\teig}}
    \,\, \tos \,\,
    \iunit
  \end{array}
\]
\end{example}

\paragraph{Rewriting properties of $\lambdaCheckF$.}

Unlike weak head reduction in the simply typed case,
weak head reduction is \emph{not} deterministic in the second-order case.
For example,
if $\utm \tow \utm'$
then
$\verif{\ttm\imp\ttmtwo}{\utm} \tof \verif{\ttmtwo}{(\utm\,\gen{\ttm})}$
and $\verif{\ttm\imp\ttmtwo}{\utm} \tof \verif{\ttm\imp\ttmtwo}{\utm'}$.
However, it is easy to show that weak head reduction in $\lambdaCheckF$ enjoys
a strong form of confluence (see~\cref{a:sec:determinism} for details):

\begin{proposition}
\label{prop:f:subcommutative}
Weak head reduction $\tow$ is subcommutative.
\end{proposition}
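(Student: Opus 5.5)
We prove subcommutativity directly: whenever $\utm \tow \utm_1$ and $\utm \tow \utm_2$, there is some $\utm_3$ with $\utm_1 \tow^= \utm_3$ and $\utm_2 \tow^= \utm_3$. The argument is by induction on the weak head context in which the first step is performed. Write the two steps as $\utm = \ctxof{\wctx_1}{R_1} \tow \ctxof{\wctx_1}{R_1'}$ and $\utm = \ctxof{\wctx_2}{R_2} \tow \ctxof{\wctx_2}{R_2'}$, where $R_1,R_2$ are redexes of rules 1--9.

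The first step is to establish that weak head redex positions are \emph{linearly ordered}. Because every production of $\wctx$ has exactly one hole and places it in a fixed argument position, one of $\wctx_1,\wctx_2$ is a prefix of the other. Suppose, say, $\wctx_2 = \ctxof{\wctx_1}{\wctx'}$ with $\wctx' \neq \ctxhole$. Then $R_1$ is itself a redex containing a proper weak head subredex. We then enumerate, rule by rule, which redexes $R_1$ admit a proper subterm in weak head position:
\begin{itemize}
\item Rules 1, 2 and 3 do not: their left-hand sides have an abstraction or $\iunit$ in head position, and weak head contexts never enter abstractions or the body of a guard.
\item Rule 4 does not, since $\gen{\teig}$ with $\teig$ an eigenvariable is not a redex.
\item Rules 5 and 7 do not, as they have no weak head subterm.
\item Rules 6 and 8 do: here the verifier argument $\utm$ sits in weak head position.
\item Rule 9 does: the body $\utm$ of $\freshf{\teig}{\utm}$ sits in weak head position.
\end{itemize}

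The interesting cases are therefore these overlaps. For rule 6, take $\verif{\ttm\imp\ttmtwo}{N}$ with $N \tow N'$. Both sides can be closed in one step each, reaching $\verif{\ttmtwo}{(N'\,\gen{\ttm})}$. This uses the fact that $N \tow N'$ implies $N\,\gen{\ttm} \tow N'\,\gen{\ttm}$, which holds because $\wctx\,\utm$ is a weak head context. Rule 8 is analogous: we reach $\freshf{\teig}{\verif{\ttm\sub{\tvar}{\teig}}{(N'\,\teig)}}$, choosing $\teig$ fresh for both $N$ and $N'$, which is possible since reduction does not create free eigenvariables. For rule 9, take $\freshf{\teig}{N}$ with $\teig\notin\fv{N}$ and $N\tow N'$. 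Then $\teig\notin\fv{N'}$, so $\freshf{\teig}{N'} \to N'$ by rule 9, and the other side is already $N'$. This case is why only $\tow^=$ is obtained rather than the diamond property.

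Two further cases remain. If $\wctx_1=\wctx_2$, then $R_1=R_2$, and we must check that each redex has a unique contractum. For rule 8 this holds up to $\alpha$-renaming of the bound $\teig$, and the case is trivial. If the positions are nested strictly deeper inside $\wctx_1$, we apply the inductive hypothesis to the immediate subterm and close the diagram under the outer context constructor.

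The main obstacle is to make the lemma ``$\tow$ is closed under weak head contexts and under substitution-free renaming'' precise, and to verify freshness side conditions for rule 8 and rule 9. In particular, we need that $\tow$ does not introduce free eigenvariables, which follows by inspecting the rules. The remainder is a routine case analysis, which I would carry out in \cref{a:sec:determinism}.
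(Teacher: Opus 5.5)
Your argument is correct and is essentially the paper's proof, organized by the linear order of weak head positions rather than by induction on the shape of $\utm_1$: the only root-versus-internal overlaps are rules 6, 8 and 9, and they close with exactly the diagrams the paper draws.

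There is one slip in the rule-9 case. The root side there is $N$, not $N'$, so that side also needs the step $N \tow N'$. Consequently every pair of distinct steps closes in exactly one step on each side. The reflexive closure in $\tow^=$ is therefore needed only when the two steps coincide (e.g.\ $\verif{\teig}{\gen{\teig}} \tow \iunit$ taken twice, with $\iunit$ irreducible), not because of rule 9.
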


Regarding confluence of reduction, note that
the fourth rewriting rule of $\lambdaCheckF$ is \emph{not left-linear}.
However, because \logicalEigenvariables cannot be instantiated,
confluence can still be established following standard techniques:
defining a notion of \emph{simultaneous reduction} ($\sto$) such that
${\to} \subseteq {\sto} \subseteq {\tos}$
and such that $\sto$ satisfies the diamond property.
(A detailed proof is provided in~\cref{a:sec:f:confluence}):

\begin{proposition}
\label{prop:f:confluence}
Reduction in $\lambdaCheckF$ is confluent.
\end{proposition}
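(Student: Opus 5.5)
We prove confluence of $\to$ in $\lambdaCheckF$ by the Tait--Martin-Löf method. We define a simultaneous (parallel) reduction $\sto$ on \metaterms, show ${\to}\subseteq{\sto}\subseteq{\tos}$, and show that $\sto$ has the diamond property. Then $\tos$ is the reflexive--transitive closure of $\sto$, which inherits the diamond property, and this gives confluence.

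\textbf{Definition of $\sto$.} The relation $\sto$ is defined inductively.
\begin{itemize}
\item It contains the reflexive rules for variables, $\iunit$, $\gen{\ttm}$ and $\verif{\ttm}{}$.
\item It is closed under all constructors: abstractions, applications, \logicalApplication, guards, verifiers, and $\nu$.
\item For each of rules~1--9 there is a parallel version. For instance, $(\plamf{\pvar}{\utm})\,\utmtwo \sto \utm'\sub{\pvar}{\utmtwo'}$ whenever $\utm\sto\utm'$ and $\utmtwo\sto\utmtwo'$; similarly $\eunit{\utm}{\utmtwo}\sto\utmtwo'$ whenever $\utm\sto\iunit$ and $\utmtwo\sto\utmtwo'$.
\item For the non-left-linear rule~4 we use $\verif{\teig}{\utm}\sto\iunit$ whenever $\utm\sto\gen{\teig}$.
\item For rule~9 we use $\freshf{\teig}{\utm}\sto\utm'$ whenever $\utm\sto\utm'$ and $\teig\notin\fv{\utm'}$. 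Stating the side condition on the target is deliberate: reduction may erase occurrences of $\teig$, so the condition must be checked after reducing.
\item Rules~5--8 only rewrite the type annotation structure. Their parallel versions, such as $\verif{\allf{\tvar}{\ttm}}{\utm}\sto\freshf{\teig}{\verif{\ttm\sub{\tvar}{\teig}}{(\utm'\,\teig)}}$, take parallel steps only in the \metaterm argument, since types do not reduce.
\end{itemize}

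\textbf{Basic properties.} Three lemmas come first.
\begin{enumerate}
\item The inclusions ${\to}\subseteq{\sto}\subseteq{\tos}$ hold. Both are immediate by induction.
\item Substitution lemma: if $\utm\sto\utm'$ and $\utmtwo\sto\utmtwo'$ then $\utm\sub{\pvar}{\utmtwo}\sto\utm'\sub{\pvar}{\utmtwo'}$. Also, $\utm\sto\utm'$ implies $\utm\sub{\tvar}{\ttm}\sto\utm'\sub{\tvar}{\ttm}$ for type substitution.
\item $\sto$ does not create free \logicalEigenvariables, i.e.\ $\utm\sto\utm'$ implies $\fv{\utm'}\subseteq\fv{\utm}$. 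The same holds for free variables generally.
\end{enumerate}

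The type-substitution case of lemma~2 is where the key hypothesis enters. Rule~4 requires the type to be an eigenvariable, and substitution only affects \logicalVariables; \logicalEigenvariables are never substituted. Hence a rule~4 redex $\verif{\teig}{\gen{\teig}}$ stays a redex, and a non-redex $\verif{\tvar}{\gen{\tvar}}$ can become $\verif{\ttm}{\gen{\ttm}}$. The latter is harmless, because it reduces by rules~5--8 exactly as the parallel reduct will.

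\textbf{Diamond property.} Rather than prove the diamond directly, we use Takahashi's method: define the complete development $\utm^\bullet$ and prove that $\utm\sto\utmtwo$ implies $\utmtwo\sto\utm^\bullet$, by induction on the derivation of $\utm\sto\utmtwo$. The development $\utm^\bullet$ contracts every redex of $\utm$, with the following choices.
\begin{itemize}
\item $(\verif{\teig}{\utm})^\bullet\eqdef\iunit$ if $\utm^\bullet=\gen{\teig}$; otherwise it is $\verif{\teig}{\utm^\bullet}$.
\item $(\freshf{\teig}{\utm})^\bullet\eqdef\utm^\bullet$ if $\teig\notin\fv{\utm^\bullet}$; otherwise it is $\freshf{\teig}{\utm^\bullet}$.
\item Guards are treated in the same way as rule~4, conditioned on $\utm^\bullet=\iunit$.
\end{itemize}
This definition needs one auxiliary fact: if $\utm\sto\gen{\teig}$ then $\utm^\bullet=\gen{\teig}$, and similarly for $\iunit$. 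It holds because a reduct of $\gen{\teig}$ or $\iunit$ is itself, and $\utm\sto\utmtwo$ gives $\utmtwo\sto\utm^\bullet$ by induction. To keep this argument non-circular, the key lemma and the auxiliary fact are proved simultaneously.

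\textbf{Main obstacle.} The hard part is the pair of non-left-linear/conditional rules~4 and~9 in the key lemma.
\begin{itemize}
\item \emph{Rule~4.} When $\verif{\teig}{\utm}\sto\iunit$ via $\utm\sto\gen{\teig}$, we need $\utm^\bullet=\gen{\teig}$, which is the auxiliary fact above. It works precisely because $\gen{\teig}$ is a normal form and eigenvariables are rigid.
\item \emph{Rule~9.} The delicate case is $\freshf{\teig}{\utm}\sto\freshf{\teig}{\utmtwo}$ with $\teig\in\fv{\utmtwo}$ but $\teig\notin\fv{\utm^\bullet}$. Since $\utmtwo\sto\utm^\bullet$, the target drops the binder by the parallel rule~9, whose side condition is stated on the target. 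In the converse direction, $\teig\notin\fv{\utmtwo}$ implies $\teig\notin\fv{\utm^\bullet}$ by lemma~3.
\end{itemize}
I expect this free-variable bookkeeping for $\nu$, together with the type-substitution lemma, to be the main technical work. The remaining cases are routine adaptations of the $\lambda$-calculus proof.
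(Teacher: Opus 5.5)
Your proof is correct. Its top-level structure matches the paper's: a simultaneous reduction $\sto$ with ${\to}\subseteq{\sto}\subseteq{\tos}$, a proof of the diamond property, and confluence from there. The key lemma and the design of $\sto$ are different, though.

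\textbf{The paper's route.} The paper keeps $\sto$ minimal:
\begin{itemize}
\item rule~4 fires only on the literal redex $\verif{\teig}{\gen{\teig}}\sto\iunit$;
\item the guard rule fires only on a literal $\eunit{\iunit}{\utm}$;
\item $\nu$-erasure carries the side condition $\teig\notin\fv{\utm}$ on the \emph{source}.
\end{itemize}
It then proves the diamond property directly, by case analysis on pairs of rules (\indrulename{v-cong} against \indrulename{v-eig}/\indrulename{v-imp}/\indrulename{v-all}, \indrulename{fresh1} against \indrulename{fresh2}, \indrulename{app1} against \indrulename{app2}, and so on). The overlaps with rule~4 and with the guard are trivial, because $\gen{\teig}$ and $\iunit$ have no $\sto$-reducts other than themselves. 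The \indrulename{fresh1}/\indrulename{fresh2} overlap uses only the fact that reduction creates no free eigenvariables.

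\textbf{Your route.} You prove Takahashi's triangle property $\utm\sto\utmtwo\implies\utmtwo\sto\utm^\bullet$. This replaces the pairwise overlap analysis by a single induction and yields a canonical common reduct. The price is that $\utm^\bullet$ must resolve rules~3, 4 and~9 by a test. Because you test $\utm^\bullet$, you are led to the liberal, inner-reduct-conditioned versions of rules~3 and~4 and to the target-side condition for rule~9. Your treatment of those delicate cases is right.

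\textbf{Two small corrections.}
\begin{itemize}
\item The liberal rules are not forced by Takahashi's method. Testing the literal shape of $\utm$ also works with the paper's smaller $\sto$: contract $\verif{\teig}{\gen{\teig}}$, contract $\eunit{\iunit}{\cdot}$, and erase $\freshf{\teig}{\utm}$ when $\teig\notin\fv{\utm}$. So your claim that the side condition ``must'' be checked after reducing is an artifact of your chosen development, not of the calculus.
\item The definition of $\utm^\bullet$ does not need your auxiliary fact. Only the rule-4 and guard cases of the triangle lemma use it. There it is just the induction hypothesis applied to the premise $\utm\sto\gen{\teig}$ (resp.\ $\utm\sto\iunit$), so no separate simultaneous induction is required.
\end{itemize}
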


As in the case of $\lambdaCheck$,
reductions to $\iunit$ in $\lambdaCheckF$
can be standardized to weak head reductions.
We omit the proof of standardization due to lack of space,
but see~\cref{a:sec:f:standardization} for details.
The proof relies on an \emph{intersection type system}~\cite{Coppo80},
in particular with \emph{non-idempotent intersection}~\cite{Gardner94,deCarvalho2007}.
These type systems can be used to \emph{characterize} dynamic properties
of programs such as head, weak, and strong normalization~\cite{Bucciarelli17},
and it is well known they can be used to prove standardization results~\cite{Kesner16}.

\begin{proposition}[Standardization]
\label{prop:f:standardization}
If $\utm \tos \iunit$ then $\utm \tofs \iunit$.
\end{proposition}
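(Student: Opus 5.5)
The plan is to follow the hint in the text and characterize ``reduces to $\iunit$'' by typability in a non-idempotent intersection type system for $\lambdaCheckF$. In that system the only ground type is a single linear type $\iunit$ (``succeeds''), with arrow and type-application types built from multisets. The argument has two halves. \textbf{(A)} If $\utm \tos \iunit$, then $\utm$ is typable with $\iunit$. \textbf{(B)} If $\utm$ is typable with $\iunit$, then $\utm \tofs \iunit$, proved by induction on the size of the derivation.

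The type grammar has linear types $\ltyp ::= \iunit \mid \mtyp \limp \ltyp \mid \allimp\ltyp$ (the last for type-applications), and multitypes $\mtyp ::= \mset{\ltyp_1,\hdots,\ltyp_n}$.

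The typing rules are the standard non-idempotent ones for variables, abstractions and applications, where the argument of an application is typed once per element of the multiset. For the remaining constructs:
\begin{itemize}
\item Type-abstraction and type-application ignore the type and use $\allimp$.
\item The constant $\iunit$ gets type $\iunit$.
\item The guard $\eunit{\utm}{\utmtwo}$ requires $\utm:\iunit$ and gives $\utmtwo$'s type.
\item $\freshf{\teig}{\utm}$ inherits the type of $\utm$.
\end{itemize}

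The delicate constructs are $\gen{\ttm}$ and $\verif{\ttm}{\utm}$. Since they reduce by rules 5--8, I would type them so as to mirror those rules exactly, with the relevant rule family given by the outermost shape of $\ttm$:
\begin{itemize}
\item $\gen{\ttm\imp\ttmtwo}$ gets every type of its one-step unfolding $\plamf{\pvar}{\eunit{\verif{\ttm}{\pvar}}{\gen{\ttmtwo}}}$, and similarly for $\gen{\allf{\tvar}{\ttm}}$.
\item $\verif{\ttm\imp\ttmtwo}{\utm}$ is typed as $\verif{\ttmtwo}{(\utm\,\gen{\ttm})}$.
\item $\verif{\teig}{\utm}:\iunit$ is allowed only when $\utm$ has a special linear type $\teig^\bullet$, which is assigned to $\gen{\teig}$.
\end{itemize}
To avoid circularity, derivations are well-founded trees. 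Termination of this unfolding is not an issue, because each rule is only used finitely often in a finite derivation.

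For half \textbf{(A)}, I prove subject expansion: if $\utm \to \utm'$ and $\utm'$ is typable with type $\ltyp$, then so is $\utm$. This goes by induction on the context of the step, using an anti-substitution lemma for rules 1 and 2, which splits the derivation of $\utm\sub{\pvar}{\utmtwo}$ into a derivation of $\utm$ with $\pvar:\mtyp$ and a multiset of derivations for $\utmtwo$. Rule 9 needs weakening for eigenvariables. Rules 3--8 hold by construction of the typing rules. Since $\iunit:\iunit$ is trivially typable, induction on the length of $\utm \tos \iunit$ gives \textbf{(A)}.

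For half \textbf{(B)}, I define the size of a derivation so that the unfolding rules for generators and verifiers each count at least $1$. I then prove weighted subject reduction for weak head steps: if $\utm \tow \utm'$ and $\utm$ has a derivation of size $s$, then $\utm'$ has one of size less than $s$. The non-idempotence is essential here, because substitution just moves the argument derivations into place without duplicating them. Next comes a progress lemma: if $\utm:\iunit$ is typable and $\utm \neq \iunit$, then $\utm$ has a weak head redex. For this I show that a typable term whose weak head is a free variable cannot have type $\iunit$ in an environment in which that variable is not typed. At top level the environment is empty, so the weak head must be a redex. Induction on the size then yields $\utm \tofs \iunit$.

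The main obstacle I expect is making the typing of verifiers and generators both sound for subject reduction and compatible with substitution of types, that is rule 2 and the type-substitution lemma. When $\tvar$ is instantiated to $\ttm$, the derivation for $\gen{\tvar}$ or $\verif{\tvar}{\cdot}$ must turn into one for $\gen{\ttm}$. My first attempt would be to let the intersection types be completely insensitive to logical types, with derivations containing no type information, and to check that the typing rules for $\gen{}$ and $\verif{}{}$ depend only on the shape of the logical type as actually present in the metaterm. A type-variable $\gen{\tvar}$ (or $\verif{\tvar}{\cdot}$) that is still uninstantiated has no unfolding rule and so is typable only trivially. This makes the substitution lemma a straightforward induction.

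The non-left-linear rule 4 needs the same eigenvariable on both sides. To handle it, I would tag $\teig^\bullet$ with the eigenvariable and require an exact match. Finally, $\freshf{\teig}{\cdot}$ renaming is handled by $\alpha$-equivalence.
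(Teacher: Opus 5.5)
\textbf{Gap: subject expansion fails for rule~2, so half (A) breaks.} In your system a type-abstraction forgets its type argument: $\pallfi{\tvar}{\utm}:\allimp\ltyp$ whenever $\utm:\ltyp$. You also make $\gen{\tvar}$ and $\verif{\tvar}{\cdot}$ untypable when $\tvar$ is a type-variable.

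This makes the \emph{substitution} direction (subject reduction) easy, which is the obstacle you anticipated. Subject expansion for rule~2, however, needs the converse, an anti-substitution lemma for types, and that lemma is false. Whether $\gen{\ttm}$ or $\verif{\ttm}{\cdot}$ is typable depends on the shape of $\ttm$, so a derivation for $\utm\sub{\tvar}{\ttm}$ cannot in general be pulled back to one for $\utm$.

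A concrete counterexample is $(\pallfi{\tvar}{\verif{\tvar}{\gen{\tvar}}})\,\teig \tow \verif{\teig}{\gen{\teig}} \tow \iunit$. The reduct $\verif{\teig}{\gen{\teig}}$ has type $\iunit$, but the redex is untypable, since typing it would need $\verif{\tvar}{\gen{\tvar}}:\iunit$.

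This is not a corner case. The correctness computation $\verif{\allf{\tvar}{\tvar\imp\tvar}}{\gen{\allf{\tvar}{\tvar\imp\tvar}}} \tos \iunit$ passes through $(\pallfi{\tvar}{\gen{\tvar\imp\tvar}})\,\teig$. Typing that term would require $\verif{\tvar}{\pvar}:\iunit$ and $\gen{\tvar}:\teig^\bullet$, so (A) already fails on polymorphic correctness. You also cannot repair this by letting $\gen{\tvar}$ and $\verif{\tvar}{\cdot}$ be freely typable: the stuck normal form $\verif{\tvar}{\gen{\tvar}}$ would then get type $\iunit$, and your progress lemma would fail.

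\textbf{The fix used in the paper.} Record the type argument in the linear type. Use linear types $\ttm\allimp\ltyp$, derive $\jul{\genv}{\pallfi{\tvar}{\utm}}{\ttm\allimp\ltyp}$ from $\jul{\genv}{\utm\sub{\tvar}{\ttm}}{\ltyp}$, and derive $\jul{\genv}{\utm\,\ttm}{\ltyp}$ from $\jul{\genv}{\utm}{\ttm\allimp\ltyp}$. The premise is then literally the reduct of rule~2. This has several consequences:
\begin{itemize}
\item Subject expansion and weighted weak head subject reduction for rule~2 become immediate; the size drops by $2$.
\item No type-substitution lemma is needed at all.
\item Uninstantiated $\gen{\tvar}$ and $\verif{\tvar}{\cdot}$ can safely stay untypable.
\item Inductions must be on derivations rather than terms, since this premise is not a subterm.
\end{itemize}

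With that change the rest of your plan works. Your half (B), using weighted subject reduction plus a progress lemma, is a legitimate alternative to the paper's ending. The paper instead uses weak head subject reduction only to get termination of weak head reduction, then confluence together with the fact that a $\tow$-normal form reducing to $\iunit$ must be $\iunit$.

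One more detail is needed for progress: your $\nu$ rule needs the side condition $\teig\notin\fv{\genv,\ltyp}$. Without it, the normal form $\verif{\teig}{(\freshf{\teig}{\gen{\teig}})}$, in which the inner $\teig$ is bound, receives type $\iunit$.
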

\begin{proof}
See~\cref{a:sec:f:standardization} for the proof.
The idea is to propose a type system $\LambdaFInters$
based on \emph{non-idempotent intersection types},
with two key properties.
First, \textbf{weak head subject reduction} ensures that
  if $\utm \tow \utm'$
  and $\utm$ is typable in $\LambdaFInters$
  then $\utm'$ is typable in $\LambdaFInters$,
  where moreover the \emph{size} of the derivation tree for
  $\utm'$ is strictly smaller than the one for $\utm$.
Second, \textbf{subject expansion} ensures that
  if $\utm \to \utm'$
  and $\utm'$ is typable in $\LambdaFInters$
  then $\utm$ is typable in $\LambdaFInters$.

Suppose that $\utm \tos \iunit$.
The constant $\iunit$ is trivially typable in $\LambdaFInters$,
so by subject expansion $\utm$ is also typable.
Let $\utm \tow \utm_1 \tow \utm_2 \hdots$ be a maximal sequence
of weak head reduction steps.
The sequence must be finite because weak head subject reduction ensures
that the size of the typing derivation decreases.
So there exists $n$ such that $\utm \tows \utm_n$
and $\utm_n$ is a $\tow$-normal form.
Since $\utm \tos \utm_n$ and $\utm \tos \iunit$,
by confluence~(\cref{prop:f:confluence}) one concludes that
$\utm_n = \iunit$.
\end{proof}

\subsection{Soundness of $\lambdaF$ with respect to $\lambdaCheckF$}

We start by mentioning that the \textbf{correctness} lemma is easy to
prove for $\lambdaCheckF$ by induction on the size of $\ttm$
(see \cref{a:lem:f:correctness} in the appendix for the proof):

\begin{lemma}[Correctness]
\label{lem:f:correctness}
If $\ttm$ has no free \logicalVariables then
$\verif{\ttm}{\gen{\ttm}} \tofs \iunit$.
\end{lemma}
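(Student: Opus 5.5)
We argue by induction on the size of $\ttm$, measured by the number of connectives ($\imp$ and $\forall$). This measure is invariant under substituting an \logicalEigenvariable for a \logicalVariable, so $\typesize{\ttm\sub{\tvar}{\teig}} = \typesize{\ttm} < \typesize{\allf{\tvar}{\ttm}}$. We strengthen the statement slightly and show that every step of the reduction is a weak head step. This matters because every context we reduce under ($\verif{\ttmtwo}{\wctx}$, $\wctx\,\utm$, $\wctx\,\ttm$, $\eunit{\wctx}{\utm}$, $\freshf{\teig}{\wctx}$) is a weak head context. As a result, the induction hypothesis $\verif{\ttmtwo}{\gen{\ttmtwo}} \tofs \iunit$ can be plugged into any weak head context and still gives a $\tofs$ sequence. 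The hypothesis that $\ttm$ has no free \logicalVariables is needed to rule out the case $\ttm = \tvar$, where $\verif{\tvar}{\gen{\tvar}}$ is stuck because rule~4 applies only to \logicalEigenvariables. It is preserved by the recursive calls.

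In the atomic case $\ttm$ must be an \logicalEigenvariable $\teig$, and $\verif{\teig}{\gen{\teig}} \tof \iunit$ by rule~4 at the root. In the implication case $\ttm = \ttmtwo\imp\ttmthree$, where both $\ttmtwo$ and $\ttmthree$ are closed, we replay the computation from \cref{lem:correctness}. Rule~6 at the root gives $\verif{\ttmthree}{(\gen{\ttmtwo\imp\ttmthree}\,\gen{\ttmtwo})}$. Rule~5 in the context $\verif{\ttmthree}{(\ctxhole\,\gen{\ttmtwo})}$ unfolds the generator, and rule~1 gives $\verif{\ttmthree}{(\eunit{\verif{\ttmtwo}{\gen{\ttmtwo}}}{\gen{\ttmthree}})}$. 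The \ih for $\ttmtwo$, run inside the weak head context $\verif{\ttmthree}{(\eunit{\ctxhole}{\gen{\ttmthree}})}$, reaches $\verif{\ttmthree}{(\eunit{\iunit}{\gen{\ttmthree}})}$. Rule~3 then gives $\verif{\ttmthree}{\gen{\ttmthree}}$, and the \ih for $\ttmthree$ finishes.

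In the quantifier case $\ttm = \allf{\tvar}{\ttmtwo}$ with $\fv{\ttmtwo} \subseteq \set{\tvar}$, rule~8 gives $\freshf{\teig}{\verif{\ttmtwo\sub{\tvar}{\teig}}{(\gen{\allf{\tvar}{\ttmtwo}}\,\teig)}}$ for a fresh $\teig$. Rule~7 in the weak head context $\freshf{\teig}{\verif{\ttmtwo\sub{\tvar}{\teig}}{(\ctxhole\,\teig)}}$ gives $(\pallfi{\tvar}{\gen{\ttmtwo}})\,\teig$. Rule~2 then yields $\freshf{\teig}{\verif{\ttmtwo\sub{\tvar}{\teig}}{\gen{\ttmtwo}\sub{\tvar}{\teig}}}$, which is syntactically $\freshf{\teig}{\verif{\ttmtwo\sub{\tvar}{\teig}}{\gen{\ttmtwo\sub{\tvar}{\teig}}}}$, since substitution acts structurally on the type annotations of generators. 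The type $\ttmtwo\sub{\tvar}{\teig}$ has no free \logicalVariables and is strictly smaller, so the \ih under the context $\freshf{\teig}{\ctxhole}$ gives $\freshf{\teig}{\iunit}$. Finally, rule~9 applies because $\teig\notin\fv{\iunit}$, and we reach $\iunit$.

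We expect the quantifier case to be the only delicate point, for two reasons. First, the induction must be on a size measure rather than structurally on $\ttm$, because the recursive call is on $\ttmtwo\sub{\tvar}{\teig}$, which is not a subterm of $\ttm$. Second, we must check that the freshness side condition of rule~8 and the capture-avoiding substitution of rule~2 interact correctly, so that the term we reach really is the generator for the instantiated body. Both checks are routine once the size measure is fixed to ignore atoms.
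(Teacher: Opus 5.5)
Your proposal is correct and follows the paper's proof essentially step for step. Both argue by induction on the size of $\ttm$ and split into three cases: the eigenvariable case, the implication case (rules 6, 5, 1, \ih, 3, \ih) and the quantifier case (rules 8, 7, 2, \ih, 9). Your explicit remarks that the size is invariant under $\sub{\tvar}{\teig}$ and that the \ih can be run inside weak head contexts only make precise what the paper leaves implicit.
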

Note that correctness only holds for \logicalTerms
\emph{without free \logicalVariables}.
For example, $\verif{\tvar}{\gen{\tvar}}$
is irreducible (since $\tvar$ is a \logicalVariable),
while $\verif{\teig}{\gen{\teig}} \to \iunit$ holds
for an \logicalEigenvariable $\teig$.

Unfortunately, it is not clear how to extend the syntactic
proof of soundness for minimal logic~(\cref{thm:soundness})
to second-order logic.
The kind of difficulties that arise are quite typical in \emph{impredicative}
settings.
In particular, the proof of \textbf{universality}
for the metacalculus for \minimalLogic $\lambdaCheck$~(\cref{lem:universality})
proceeded by induction using a well-founded measure that depended on the size
of \logicalTerms.
This measure cannot be easily adapted to the second-order metacalculus
$\lambdaCheckF$, because contracting a redex $(\allf{\tvar}{\utm})\,\ttm$
results in a \metaterm $\utm\sub{\tvar}{\ttm}$ in which, for example,
a generator $\gen{\ttmtwo}$ occurring inside $\utm$
becomes a generator $\gen{\ttmtwo\sub{\tvar}{\ttm}}$ for a potentially
larger \logicalTerm.

\paragraph{Verification candidates.}
To prove soundness, we use a notion we call \emph{\verificationCandidates},
inspired by reducibility candidates in Girard's proof of strong normalization
for System~F~\cite{GirardThesis}.
A \emph{\verificationCandidate} is a set of \metaterms enjoying certain
closure properties.
The proof technique essentially works by mapping each \logicalTerm
$\ttm$ to a \verificationCandidate $\semf{\ttm}{}$.
A first key lemma ensures that any \metaterm $\utm \in \semf{\ttm}{}$
is accepted by the verifier for $\ttm$,
\ie that $\verif{\ttm}{\utm} \tofs \iunit$.
A second key lemma ensures that each typable \proofTerm of \logicalTerm $\ttm$ 
belongs to $\semf{\ttm}{}$.
As usual in these cases,
the statements of all results have to be generalized
in order to consider the interpretation $\semf{\ttm}{\asig}$
of a \logicalTerm $\ttm$ under an appropriate mapping $\asig$,
called a \emph{\candidateAssignment},
that provides an interpretation for free \logicalVariables.

\begin{definition}[\VerificationCandidates]
Let $\ttm$ be a \logicalTerm.
A set of \metaterms $\vcset$ is an \defn{$\ttm$-\verificationCandidate}
if the three following conditions hold:
\itemNum{1} $\gen{\ttm} \in \vcset$,
\itemNum{2} if $\utm \in \vcset$, then $\verif{\ttm}\utm \tofs \iunit$,
\itemNum{3} if $\utm$, $\utm'$ are \metaterms such that $\utm \tof \utm'$, then $\utm \in \vcset$ if and only if $\utm' \in \vcset$.

We write $\VC{\ttm}$ for the set of all $\ttm$-\verificationCandidates.
\end{definition}

\paragraph{Interpreting \logicalTerms as \verificationCandidates.}
A \defn{\logicalSubstitution} is a function $\tsubst$ mapping each \logicalVariable $\tvar$
to a \logicalTerm $\tsubst(\tvar)$.
A \defn{\candidateAssignment} is a function $\asig$ mapping each \logicalVariable $\tvar$
to a pair $(\ttm,\vcset)$, where $\ttm$ is a \emph{closed} \logicalTerm, \ie
a \logicalTerm without free \logicalVariables,
and $\vcset$ is an $\ttm$-\verificationCandidate.
We write $\asig\extsub{\tvar}{(\ttm,\vcset)}$ for the \candidateAssignment such that
$(\asig\extsub{\tvar}{(\ttm,\vcset)})(\tvar) = (\ttm,\vcset)$
and
$(\asig\extsub{\tvar}{(\ttm,\vcset)})(\tvartwo) = \asig(\tvartwo)$
for any \logicalVariable $\tvartwo$ other than $\tvar$.
For each \candidateAssignment $\asig$, we define its \emph{first projection}
as the \logicalSubstitution $\tilde{\asig}$ such that
$\tilde{\asig}(\tvar) = \ttm$ whenever $\asig(\tvar) = (\ttm,\vcset)$.
By abuse of notation, sometimes we write $\asig$ for the first projection of
the \candidateAssignment $\asig$ instead of $\tilde{\asig}$
(\eg $\subs{\ttm}{\asig}$ stands for $\subs{\ttm}{\tilde{\asig}}$). 

\begin{definition}[Candidate interpretation of \logicalTerms]
\label{def:f:interpreation_of_types}
Let $\ttm$ be a \logicalTerm and $\asig$ be a \candidateAssignment.
We define a set of \metaterms $\semf{\ttm}{\asig}$ by recursion on $\ttm$
as follows:
\begin{enumerate}
\item
  $\semf{\tvar}{\asig} \eqdef \vcset$ if $\asig(\tvar) = (\ttm,\vcset)$.
\item
  $\semf{\teig}{\asig}$
  is the set of \metaterms $\utm$ such that $\utm \tofs \gen{\teig}$.
\item
  $\semf{\ttm\imp\ttmtwo}{\asig}$
  is the set of \metaterms $\utm$ such that: \\
  \hphantom{\HS}
    $\verif{\subs{(\ttm\imp\ttmtwo)}{\asig}}\utm \tofs \iunit$, and
  \\
  \hphantom{\HS}
    for every $\utmtwo\in\semf{\ttm}{\asig}$
    we have that $\utm\,\utmtwo\in\semf{\ttmtwo}{\asig}$.
\item
  $\semf{\allf{\tvar}{\ttm}}{\asig}$
  is the set of \metaterms $\utm$
  such that: \\
  \hphantom{\HS}
    $\verif{\subs{(\allf{\tvar}{\ttm})}{\asig}}\utm \tofs \iunit$, and
  \\
  \hphantom{\HS}
    for every closed \logicalTerm $\ttmtwo$ and every $\vcset\in\VC{\ttmtwo}$, \\
  \hphantom{\HS\HS}
    we have that $\utm\,\ttmtwo\in\semf{\ttm}{\asig\extsub{\tvar}{(\ttmtwo,\vcset)}}$.
\end{enumerate}
\end{definition}

Below, we mention three key properties of the interpretation of types.
We have omitted the detailed proofs but they can be found in
\cref{a:sec:f:soundness}.
First, the interpretation of a type $\ttm$ always yields
an $\ttm$-\verificationCandidate.
More precisely, the following \emph{adequacy} lemma holds
(see \cref{a:lem:f:adequacy_for_propositions} for the proof):

\begin{lemma}[Adequacy for \logicalTerms]
\label{lem:f:adequacy_for_propositions}
$\semf{\ttm}{\asig} \in \VC{\subs{\ttm}{\asig}}$
\end{lemma}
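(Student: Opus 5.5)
We prove $\semf{\ttm}{\asig} \in \VC{\subs{\ttm}{\asig}}$ by induction on $\ttm$, for arbitrary $\asig$, checking in each case the three conditions: (1) $\gen{\subs{\ttm}{\asig}}$ belongs to the set, (2) every member is accepted by $\verif{\subs{\ttm}{\asig}}{}$ via $\tofs$, and (3) the set is closed under weak head expansion and reduction.

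\textbf{Atomic cases.}
\begin{itemize}
\item For a \logicalVariable $\tvar$ with $\asig(\tvar)=(\ttmtwo,\vcset)$, we have $\subs{\tvar}{\asig}=\ttmtwo$ and $\vcset\in\VC{\ttmtwo}$ by definition of \candidateAssignment, so there is nothing to prove.
\item For an \logicalEigenvariable $\teig$ we have $\subs{\teig}{\asig}=\teig$.
  \begin{itemize}
  \item Condition (1) is trivial.
  \item For (2), if $\utm\tofs\gen{\teig}$ then $\verif{\teig}{\utm}\tofs\verif{\teig}{\gen{\teig}}\tof\iunit$, since $\verif{\teig}{\wctx}$ is a weak head context.
  \item For (3), the backward direction is immediate. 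The forward direction uses subcommutativity of $\tow$ (\cref{prop:f:subcommutative}) together with the fact that $\gen{\teig}$ is $\tow$-normal. If $\utm\tof\utm'$ and $\utm\tofs\gen{\teig}$, confluence of $\tow$ gives a common reduct of $\utm'$ and $\gen{\teig}$. That reduct must be $\gen{\teig}$, so $\utm'\tofs\gen{\teig}$.
  \end{itemize}
\end{itemize}

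\textbf{Implication.} Let $\ttm\imp\ttmtwo$ and write $\ttm'=\subs{\ttm}{\asig}$, $\ttmtwo'=\subs{\ttmtwo}{\asig}$; both are closed.
\begin{itemize}
\item Condition (2) is built into the definition.
\item For (1), the first clause is correctness (\cref{lem:f:correctness}), since $\ttm'\imp\ttmtwo'$ is closed. For the second clause, take $\utmtwo\in\semf{\ttm}{\asig}$. Then $\gen{\ttm'\imp\ttmtwo'}\,\utmtwo\tof(\plamf{\pvar}{\eunit{\verif{\ttm'}{\pvar}}{\gen{\ttmtwo'}}})\,\utmtwo\tof\eunit{\verif{\ttm'}{\utmtwo}}{\gen{\ttmtwo'}}$. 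By the \ih, condition (2) for $\ttm$ gives $\verif{\ttm'}{\utmtwo}\tofs\iunit$, so the term weak-head reduces to $\gen{\ttmtwo'}$. That is in $\semf{\ttmtwo}{\asig}$ by the \ih (1), and (3) for $\ttmtwo$ pulls it back along the reduction.
\item For (3), suppose $\utm\tof\utm'$.
  \begin{itemize}
  \item Application clause: $\utm\,\utmtwo\tof\utm'\,\utmtwo$ because $\wctx\,\utmtwo$ is a weak head context, so (3) for $\ttmtwo$ transfers membership in both directions.
  \item Verifier clause: $\verif{\ttm'\imp\ttmtwo'}{\utm}\tof\verif{\ttm'\imp\ttmtwo'}{\utm'}$. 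One direction is immediate by prefixing the step. The other direction uses the lemma below.
  \end{itemize}
\end{itemize}

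The forward transfer of the verifier clause is the main obstacle. I would isolate a lemma: if $\utm\tof\utm'$ and $\utm\tofs\iunit$, then $\utm'\tofs\iunit$. It follows from subcommutativity (\cref{prop:f:subcommutative}), hence confluence, of $\tow$ together with $\iunit$ being $\tow$-normal. Alternatively, it follows from confluence (\cref{prop:f:confluence}) plus standardization (\cref{prop:f:standardization}): $\utm'\tos\iunit$ by confluence, then $\utm'\tofs\iunit$ by standardization. This lemma also covers the $\teig$ case above.

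\textbf{Universal quantifier.} Let $\allf{\tvar}{\ttm}$; by $\alpha$-conversion assume $\tvar$ is safe for $\asig$.
\begin{itemize}
\item Condition (2) is again by definition, and (3) goes exactly as for implication, using the context $\wctx\,\ttmtwo$ and the same lemma.
\item For (1), the verifier clause is correctness. For the second clause, take a closed $\ttmtwo$ and $\vcset\in\VC{\ttmtwo}$. Then $\gen{\subs{(\allf{\tvar}{\ttm})}{\asig}}\,\ttmtwo\tof(\pallfi{\tvar}{\gen{\subs{\ttm}{\asig}}})\,\ttmtwo\tof\gen{\subs{\ttm}{\asig}\sub{\tvar}{\ttmtwo}}$. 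By safety, $\subs{\ttm}{\asig}\sub{\tvar}{\ttmtwo}=\subs{\ttm}{\asig\extsub{\tvar}{(\ttmtwo,\vcset)}}$. The \ih applied to $\ttm$ with the assignment $\asig\extsub{\tvar}{(\ttmtwo,\vcset)}$ puts this generator in $\semf{\ttm}{\asig\extsub{\tvar}{(\ttmtwo,\vcset)}}$, and (3) pulls it back.
\end{itemize}

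The induction is on the structure of $\ttm$ uniformly over all assignments, which is why the changed assignment in the last step is harmless.
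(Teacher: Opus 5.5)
Your proof is correct and follows the paper's argument essentially step by step. It proceeds by induction on $\ttm$, discharges the three candidate conditions in each case via correctness, the induction hypothesis and weak head contexts, and in the $\forall$ case applies the induction hypothesis at the updated assignment $\asig\extsub{\tvar}{(\ttmtwo,\vcset)}$. Your use of subcommutativity of $\tow$ to push $\tofs\gen{\teig}$ and $\tofs\iunit$ forward along a $\tow$-step is more precise than the paper's appeal to confluence of $\to$, which by itself only gives $\tos$; the one slip is notational: under the binder you should write $\subs{\ttm}{\tsubst}$ with $\tsubst=\tilde{\asig}\extsub{\tvar}{\tvar}$, as the paper does, since $\subs{\ttm}{\asig}$ would already have replaced $\tvar$.
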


Second, the two lemmas below are typical in reducibility interpretations
and related techniques, such as logical relations.
The first lemma expresses an \emph{irrelevance} property,
and the second one expresses a \emph{substitution} property.
(See \cref{a:lem:f:irrelevance} and \cref{a:lem:f:substitution}
for the proofs):

\begin{lemma}
\label{lem:f:irrelevance}
If $\asig_1$ and $\asig_2$ agree on the free \logicalVariables of $\ttm$,
then $\semf{\ttm}{\asig_1} = \semf{\ttm}{\asig_2}$.
\end{lemma}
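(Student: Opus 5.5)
We prove the statement by structural induction on $\ttm$, generalizing it to all pairs $\asig_1,\asig_2$ that agree on $\fv{\ttm}$. Throughout we use the elementary syntactic fact that if two \logicalSubstitutions agree on $\fv{\ttm}$ then they act identically on $\ttm$. In particular, $\subs{\ttm}{\asig_1} = \subs{\ttm}{\asig_2}$ for the first projections. This fact takes care of all the verifier conditions appearing in \cref{def:f:interpreation_of_types}.

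The base cases are immediate. If $\ttm = \tvar$, then $\tvar \in \fv{\ttm}$, so $\asig_1(\tvar) = \asig_2(\tvar) = (\ttmtwo,\vcset)$ and both interpretations equal $\vcset$. If $\ttm = \teig$, the definition of $\semf{\teig}{\asig}$ does not mention $\asig$ at all.

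For $\ttm = \ttmtwo\imp\ttmthree$, the assignments $\asig_1,\asig_2$ agree on $\fv{\ttmtwo}$ and on $\fv{\ttmthree}$. The induction hypothesis then gives $\semf{\ttmtwo}{\asig_1} = \semf{\ttmtwo}{\asig_2}$ and $\semf{\ttmthree}{\asig_1} = \semf{\ttmthree}{\asig_2}$. The first clause of the definition coincides because $\subs{(\ttmtwo\imp\ttmthree)}{\asig_1} = \subs{(\ttmtwo\imp\ttmthree)}{\asig_2}$. The second clause coincides because it quantifies over the same set $\semf{\ttmtwo}{\asig_i}$ and asks membership in the same set $\semf{\ttmthree}{\asig_i}$.

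The only step needing care is $\ttm = \allf{\tvar}{\ttmtwo}$, since here the induction hypothesis must be applied to $\ttmtwo$ under assignments that have been extended. Fix a closed $\ttmthree$ and a candidate $\vcset\in\VC{\ttmthree}$. Then $\asig_1\extsub{\tvar}{(\ttmthree,\vcset)}$ and $\asig_2\extsub{\tvar}{(\ttmthree,\vcset)}$ agree on $\fv{\ttmtwo} \subseteq \fv{\allf{\tvar}{\ttmtwo}}\cup\set{\tvar}$: on $\tvar$ by construction, and elsewhere by hypothesis. The induction hypothesis applies to $\ttmtwo$, and this is legitimate because the induction is on the structure of $\ttm$ and quantifies over all assignments. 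It yields $\semf{\ttmtwo}{\asig_1\extsub{\tvar}{(\ttmthree,\vcset)}} = \semf{\ttmtwo}{\asig_2\extsub{\tvar}{(\ttmthree,\vcset)}}$. The verifier clause again matches by the substitution fact; the only point to check is that capture-avoidance under $\forall$ is harmless, which holds because the images of the first projections are closed \logicalTerms. Hence the two sets coincide.

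No genuine obstacle is expected. The main thing to get right is stating the induction with $\asig_1,\asig_2$ universally quantified, so that it can be used under the extended assignments in the $\forall$ case.
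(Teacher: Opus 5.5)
Your proof is correct and takes the same route as the paper, which only says the lemma is straightforward by induction on $\ttm$. Your write-up supplies the detail that induction needs: keeping $\asig_1,\asig_2$ universally quantified, so the hypothesis can be applied to the extended assignments $\asig_i\extsub{\tvar}{(\ttmthree,\vcset)}$ in the $\forall$ case.
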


\begin{lemma}
\label{lem:f:substitution}
$\semf{\ttm\sub{\tvar}{\ttmtwo}}{\asig}
 = \semf{\ttm}{\asig\extsub{\tvar}{(\subs{\ttmtwo}{\asig},\semf{\ttmtwo}{\asig})}}$
\end{lemma}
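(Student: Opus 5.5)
We prove \cref{lem:f:substitution} by induction on $\ttm$, for all \candidateAssignments $\asig$ simultaneously (the binder case needs the \ih at modified assignments). Throughout, write $\asig' \eqdef \asig\extsub{\tvar}{(\subs{\ttmtwo}{\asig},\semf{\ttmtwo}{\asig})}$. Two preliminary remarks make $\asig'$ a legitimate \candidateAssignment and relate the two sides syntactically. First, $\subs{\ttmtwo}{\asig}$ is closed, since $\asig$ maps every \logicalVariable to a closed \logicalTerm. Second, by adequacy (\cref{lem:f:adequacy_for_propositions}), $\semf{\ttmtwo}{\asig}$ is a $\subs{\ttmtwo}{\asig}$-\verificationCandidate. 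We also use the standard syntactic fact $\subs{(\ttm\sub{\tvar}{\ttmtwo})}{\asig} = \subs{\ttm}{\asig'}$, where on the right $\asig'$ denotes its first projection.

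The base and connective cases go as follows.
\begin{itemize}
\item $\ttm = \tvar$: both sides equal $\semf{\ttmtwo}{\asig}$.
\item $\ttm = \tvarthree \neq \tvar$: both sides equal the second component of $\asig(\tvarthree)$.
\item $\ttm = \teig$: both sides are $\set{\utm \ST \utm \tofs \gen{\teig}}$, which does not depend on the assignment.
\item $\ttm = \ttm_1\imp\ttm_2$: unfold \cref{def:f:interpreation_of_types}. The verifier condition $\verif{\subs{(\ttm\sub{\tvar}{\ttmtwo})}{\asig}}\utm \tofs \iunit$ coincides with $\verif{\subs{\ttm}{\asig'}}\utm \tofs \iunit$ by the syntactic fact above. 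The functional condition coincides because the \ih gives $\semf{\ttm_i\sub{\tvar}{\ttmtwo}}{\asig} = \semf{\ttm_i}{\asig'}$ for $i=1,2$.
\end{itemize}

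The quantifier case $\ttm = \allf{\tvarthree}{\ttmthree}$ is where the real work lies. By $\alpha$-conversion we choose $\tvarthree \neq \tvar$ and $\tvarthree \notin \fv{\ttmtwo}$, so that $(\allf{\tvarthree}{\ttmthree})\sub{\tvar}{\ttmtwo} = \allf{\tvarthree}{(\ttmthree\sub{\tvar}{\ttmtwo})}$. The verifier conditions again agree by the syntactic substitution fact. For the second condition, fix a closed $\ttmfour$ and $\vcset \in \VC{\ttmfour}$, and let $\asig_1 \eqdef \asig\extsub{\tvarthree}{(\ttmfour,\vcset)}$. We must show
\[
\semf{\ttmthree\sub{\tvar}{\ttmtwo}}{\asig_1}
= \semf{\ttmthree}{\asig'\extsub{\tvarthree}{(\ttmfour,\vcset)}}.
\]
The \ih at the assignment $\asig_1$ rewrites the left side as $\semf{\ttmthree}{\asig_1\extsub{\tvar}{(\subs{\ttmtwo}{\asig_1},\semf{\ttmtwo}{\asig_1})}}$. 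Since $\tvarthree \notin \fv{\ttmtwo}$, we have $\subs{\ttmtwo}{\asig_1} = \subs{\ttmtwo}{\asig}$, and by irrelevance (\cref{lem:f:irrelevance}) $\semf{\ttmtwo}{\asig_1} = \semf{\ttmtwo}{\asig}$. Finally, because $\tvar \neq \tvarthree$, the two updates commute, so the resulting assignment is literally $\asig'\extsub{\tvarthree}{(\ttmfour,\vcset)}$.

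The main obstacle is this bookkeeping in the binder case: the \ih must be quantified over all assignments, capture must be avoided by the freshness choice, and \cref{lem:f:irrelevance} is needed to discard the binding of $\tvarthree$ when interpreting $\ttmtwo$. The remaining cases are direct unfoldings of the definition.
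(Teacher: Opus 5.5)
Your proof is correct and follows the paper's argument. Like the paper, you induct on $A$ with the same five cases. In the quantifier case you choose the bound variable distinct from $a$ and fresh for $B$, apply the induction hypothesis at the extended assignment, and use irrelevance (\cref{lem:f:irrelevance}) so that the binding of the bound variable does not affect how $B$ is interpreted. There are two cosmetic differences: you prove the equality directly for each pair $(D,X)$ rather than by two inclusions, and you state explicitly, via \cref{lem:f:adequacy_for_propositions}, that the extended assignment is a legitimate candidate-assignment, which the paper leaves implicit.
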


\paragraph{Proving soundness through verification candidates.}
Before stating and proving the main result, we require a few more definitions.
A \defn{\proofSubstitution} is a function $\psubst$
mapping each \proofVariable to a \metaterm.
The \defn{generative substitution} for an \typeEnvironment $\penv$
is written $\gensubst{\penv}$
and defined as the substitution
such that $\gensubst{\penv}(\var) = \gen{\ttm}$ holds for every $\var:\ttm \in \penv$,
and $\gensubst{\penv}(\var) = \var$ holds for every $\var \notin \dom{\penv}$.
By abuse of notation, sometimes we write just $\penv$ for the generative
substitution, rather than $\gensubst{\penv}$.
If $\asig$ is a \candidateAssignment, $\penv$ is an \proofEnvironment,
and $\psubst$ is a \proofSubstitution,
we say that $\psubst$ is \defn{compatible} with $\penv$ under $\asig$,
written $\compatf{\penv}{\psubst}$,
if for every \proofVariable $\pvar:\ttm \in \penv$
we have that $\subs{\pvar}{\psubst} \in \semf{\ttm}{\asig}$.

The main lemma to prove soundness roughly states that
all terms of type $\ttm$ are elements of the $\ttm$-\verificationCandidate
given by $\semf{\ttm}{}$. The precise statement is more general,
to account for the presence of potentially free \proofVariables and
\logicalVariables:

\begin{lemma}[Adequacy for terms]
\label{lem:f:adequacy_for_terms}
Let $\judgf{\penv}{\ptm}{\ttm}$ and $\compatf{\penv}{\psubst}$.
Then $\subs{\ptm}{\asig\psubst} \in \semf{\ttm}{\asig}$.
\end{lemma}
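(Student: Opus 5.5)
We argue by induction on the derivation of $\judgf{\penv}{\ptm}{\ttm}$, for arbitrary $\asig$ and $\psubst$. Throughout we use \cref{lem:f:adequacy_for_propositions}: each $\semf{\ttm}{\asig}$ is a $\subs{\ttm}{\asig}$-\verificationCandidate, so it contains the generator, its elements are accepted by the verifier, and it is closed under weak head expansion and reduction.

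\Case{variable.} If $\ptm = \pvar$ with $\pvar:\ttm\in\penv$, then compatibility gives directly $\subs{\pvar}{\asig\psubst}\in\semf{\ttm}{\asig}$.

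\Case{application.} For $\ptm\,\ptmtwo$, the \ih gives $\subs{\ptm}{\asig\psubst}\in\semf{\ttmtwo\imp\ttm}{\asig}$ and $\subs{\ptmtwo}{\asig\psubst}\in\semf{\ttmtwo}{\asig}$. The second clause of item~3 of \cref{def:f:interpreation_of_types} then yields the result immediately.

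\Case{type application.} For $\ptm\,\ttmtwo$ with $\subs{\ptm}{\asig\psubst}\in\semf{\allf{\tvar}{\ttm}}{\asig}$, we instantiate item~4 with the closed type $\subs{\ttmtwo}{\asig}$ and the candidate $\semf{\ttmtwo}{\asig}$. This candidate belongs to $\VC{\subs{\ttmtwo}{\asig}}$ by adequacy. We obtain $\subs{\ptm}{\asig\psubst}\,\subs{\ttmtwo}{\asig}\in\semf{\ttm}{\asig\extsub{\tvar}{(\subs{\ttmtwo}{\asig},\semf{\ttmtwo}{\asig})}}$, and this set equals $\semf{\ttm\sub{\tvar}{\ttmtwo}}{\asig}$ by \cref{lem:f:substitution}. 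Two details need care here. First, we assume that \logicalTerms in $\asig$ are closed, so that eigenvariables are never captured. Second, any free \logicalVariables of $\ttmtwo$ that are unassigned can be handled by assuming $\asig$ is total.

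\Case{abstraction.} For $\plamf{\pvar}{\ptm}$ of type $\ttmtwo\imp\ttmthree$, write $\utm := \plamf{\pvar}{\subs{\ptm}{\asig\psubst}}$, choosing $\pvar$ fresh. We must check two things.

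First, for every $\utmtwo\in\semf{\ttmtwo}{\asig}$ we need $\utm\,\utmtwo\in\semf{\ttmthree}{\asig}$. We have $\utm\,\utmtwo\tow\subs{\ptm}{\asig\psubst\extsub{\pvar}{\utmtwo}}$. The \ih applies to this reduct, since $\compatf{(\penv,\pvar:\ttmtwo)}{\psubst\extsub{\pvar}{\utmtwo}}$. Closure under weak head expansion (candidate property~(3)) then gives the claim.

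Second, we need $\verif{\subs{(\ttmtwo\imp\ttmthree)}{\asig}}{\utm}\tofs\iunit$. We reduce by rule~6 to $\verif{\subs{\ttmthree}{\asig}}{(\utm\,\gen{\subs{\ttmtwo}{\asig}})}$. Since $\gen{\subs{\ttmtwo}{\asig}}\in\semf{\ttmtwo}{\asig}$ by property~(1), the first part shows $\utm\,\gen{\subs{\ttmtwo}{\asig}}\in\semf{\ttmthree}{\asig}$. Property~(2) then gives acceptance. Here rule~6 must be a weak head step, which it is because it is a root step.

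\Case{type abstraction.} For $\pallfi{\tvar}{\ptm}$ of type $\allf{\tvar}{\ttm}$ with $\tvar\notin\fv{\penv}$, write $\utm := \pallfi{\tvar}{\subs{\ptm}{\asig\psubst}}$.

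For the second clause, take a closed $\ttmtwo$ and $\vcset\in\VC{\ttmtwo}$, and set $\asig' := \asig\extsub{\tvar}{(\ttmtwo,\vcset)}$. Then $\utm\,\ttmtwo\tow\subs{\ptm}{\asig'\psubst}$. The compatibility $\compatf[\asig']{\penv}{\psubst}$ holds by \cref{lem:f:irrelevance}, since $\tvar\notin\fv{\penv}$, so the \ih and weak head expansion conclude.

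For the verifier clause, rule~8 gives $\verif{\subs{(\allf{\tvar}{\ttm})}{\asig}}{\utm}\tow\freshf{\teig}{\verif{\subs{\ttm}{\asig\extsub{\tvar}{\teig}}}{(\utm\,\teig)}}$. We instantiate the second clause with $\ttmtwo := \teig$ and $\vcset := \semf{\teig}{}$, which lies in $\VC{\teig}$ by adequacy. This yields $\utm\,\teig\in\semf{\ttm}{\asig\extsub{\tvar}{(\teig,\semf{\teig}{})}}$, a $\subs{\ttm}{\asig\extsub{\tvar}{\teig}}$-candidate, so the inner verifier reaches $\iunit$ by weak head steps. This holds under $\freshf{\teig}{\ctxhole}$ because that is a weak head context, and rule~9 then removes $\nu$.

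\textbf{Main obstacle.} I expect the main obstacle to be this last case. It requires a careful choice of the fresh eigenvariable $\teig$, not occurring in $\asig$, $\psubst$ or $\ptm$, together with bookkeeping of substitution commutation such as $\subs{\ptm}{\asig\psubst}\sub{\tvar}{\ttmtwo}=\subs{\ptm}{\asig'\psubst}$. That commutation needs $\tvar$ to be safe and the images of $\psubst$ not to contain $\tvar$ free; if necessary, we would restrict $\psubst$ to metaterms without free \logicalVariables.
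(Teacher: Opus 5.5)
Your proposal is correct and follows the paper's proof. Both argue by induction on the typing derivation and invoke adequacy for types, irrelevance and the substitution lemma at the same points. The only, cosmetic, difference is in the two introduction cases: you derive the verifier clause from the already-established application clause, instantiated at $\gen{\subs{\ttmtwo}{\asig}}$ (resp.\ at $(\teig,\semf{\teig}{})$), whereas the paper invokes the induction hypothesis a second time.

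The capture issue you flag is settled, as in the paper, by $\alpha$-renaming the bound $\tvar$ away from the finitely many relevant images of $\psubst$. Your fallback of restricting $\psubst$ to images without free type variables would fail. The abstraction case extends $\psubst$ with arbitrary elements of $\semf{\ttmtwo}{\asig}$, and these may contain free type variables.
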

\begin{proof}
By induction on the derivation of the typing judgment $\judgf{\penv}{\ptm}{\ttm}$.
See \cref{a:lem:f:adequacy_for_terms} for the proof.
\end{proof}

\begin{theorem}[Soundness]
\label{thm:f:soundness}
Let $\penv,\tm,\ttm$ be such that they have no free \logicalVariables\footnote{%
  If they have free \logicalVariables,
  they can be substituted by fresh \logicalEigenvariables to apply the theorem.
  For example the judgment $\judg{\pvar:\tvar}{\pvar}{\tvar}$ holds
  and $\verif{\tvar}{\gen{\tvar}} \not{\tos} \iunit$,
  but substituting $\tvar$ by a fresh \logicalEigenvariable $\teig$
  we have that $\judg{\pvar:\teig}{\pvar}{\teig}$ 
  and $\verif{\teig}{\gen{\teig}} \tos \iunit$.
}.
If $\judg{\penv}{\tm}{\ttm}$
then $\verif{\ttm}{\subs{\tm}{\penv}} \tos \iunit$.
\end{theorem}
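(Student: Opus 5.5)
The plan is to derive soundness directly from adequacy for terms (\cref{lem:f:adequacy_for_terms}), instantiated with the generative substitution, and then read off the verifier condition from the fact that every interpretation is a verification candidate (\cref{lem:f:adequacy_for_propositions}).

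First I fix a \candidateAssignment $\asig$. Since $\penv$, $\tm$ and $\ttm$ have no free \logicalVariables, its choice is irrelevant by \cref{lem:f:irrelevance}. Still, a valid one must exist, so I take $\asig(\tvar) \eqdef (\teig_0, \semf{\teig_0}{})$ for some fixed \logicalEigenvariable $\teig_0$. This is legitimate because $\semf{\teig_0}{}$ is a $\teig_0$-\verificationCandidate by \cref{lem:f:adequacy_for_propositions}. (Alternatively I could check the three conditions directly: $\gen{\teig_0}$ belongs to it; its members weak-head reduce to $\gen{\teig_0}$, hence are accepted via rule~4; and closure under $\tof$ follows from determinism-like behaviour, by subcommutativity, since $\gen{\teig_0}$ is a $\tof$-normal form.) Because the types are closed, $\subs{\ttm}{\asig} = \ttm$, and likewise for all types in $\penv$ and for type annotations inside $\tm$. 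Hence $\subs{\tm}{\asig\psubst} = \subs{\tm}{\psubst}$ for any \proofSubstitution $\psubst$.

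Next I take $\psubst \eqdef \gensubst{\penv}$ and check compatibility $\compatf{\penv}{\gensubst{\penv}}$. For each $\pvar:\ttmtwo \in \penv$ I need $\gen{\ttmtwo} \in \semf{\ttmtwo}{\asig}$. By \cref{lem:f:adequacy_for_propositions}, $\semf{\ttmtwo}{\asig} \in \VC{\subs{\ttmtwo}{\asig}} = \VC{\ttmtwo}$, and condition (1) of a verification candidate gives exactly $\gen{\ttmtwo} \in \semf{\ttmtwo}{\asig}$.

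Then \cref{lem:f:adequacy_for_terms} yields $\subs{\tm}{\penv} \in \semf{\ttm}{\asig}$. Since $\semf{\ttm}{\asig}$ is a $\ttm$-\verificationCandidate, condition (2) gives $\verif{\ttm}{\subs{\tm}{\penv}} \tofs \iunit$, and in particular $\tos \iunit$, as weak head reduction is included in reduction.

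The only delicate point is the bookkeeping of closedness: making sure that $\subs{(\cdot)}{\asig}$ acts trivially on $\ttm$, on $\penv$, and on the \logicalTerms occurring in $\tm$ (type applications $\ptm\,\ttmtwo$ and binders $\pallfi{\tvar}{\cdot}$ only mention bound or absent \logicalVariables). Everything substantive is carried by the adequacy lemmas.
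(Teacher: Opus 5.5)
Your proposal is correct and follows essentially the paper's proof: send every type variable to $(\teig,\semf{\teig}{})$ for a fixed eigenvariable, check $\compatf{\penv}{\gensubst{\penv}}$, apply adequacy for terms, and conclude with condition (2) of verification candidates via adequacy for types. Your justification of compatibility is actually more precise than the paper's. You derive $\gen{\ttmtwo}\in\semf{\ttmtwo}{\asig}$ from condition (1) of a verification candidate. The paper instead appeals directly to correctness, which is not enough on its own: for compound types, membership in $\semf{\ttmtwo}{\asig}$ requires more than acceptance by the verifier.
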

\begin{proof}
Let $\teig$ be a fixed \logicalEigenvariable,
and let $\semf{\teig}{}$ be the set of \metaterms $\utm$
such that $\utm \tows \gen{\teig}$.
Observe that $\asig(\tvar) = (\teig,\semf{\teig}{})$ is a \candidateAssignment.
Note that $\compatf{\penv}{\gensubst{\penv}}$,
where $\gensubst{\penv}$ is the generative substitution for $\penv$,
because if $\pvar:\ttm\in\penv$ then $\verif{\ttm}{\gen{\ttm}} \tos \iunit$
by correctness~(\cref{lem:f:correctness}).
Then by \cref{lem:f:adequacy_for_terms}
we have $\subs{\tm}{\asig\penv} \in \semf{\ttm}{\asig}$.
Finally, by \cref{lem:f:adequacy_for_propositions}
$\semf{\ttm}{\asig}$ is a $\subs{\ttm}{\asig}$-\verificationCandidate,
so $\verif{\subs{\ttm}{\asig}}{\subs{\tm}{\asig\penv}} \tos \iunit$.
Since $\ttm$ and $\tm$ have no free \logicalVariables
this means that $\verif{\ttm}{\subs{\tm}{\penv}} \tos \iunit$.
\end{proof}

\subsection{Completeness of $\lambdaF$ with respect to $\lambdaCheckF$}
\label{sec:second_order:completeness}

Fortunately, the technique to show completeness for \minimalLogic (\cref{thm:completeness})
can be extended to the second-order case.
The details of the completeness proof can be found in \cref{a:sec:completeness}.
Here we only provide some of the key definitions,
and specifically the \emph{proof size} (generalizing \cref{def:proof_size}).

\paragraph{Characterization of normal forms.}
An \defn{\argument} ($\argu,\argutwo,\hdots$)
is either a \metaterm or a \logicalTerm
(given by the grammar $\argu ::= \utm \mid \ttm$).
If $\argus = (\argu_1,\hdots,\argu_n)$ is a sequence of \arguments,
we write $\utm\,\argus$ or $\utm\,\argu_1\hdots\argu_n$
for the application of $\utm$ to the $n$ \arguments.
The sets of \emph{pseudo-redexes}, \emph{\headForms}, \emph{normal \arguments},
\emph{neutral \proofTerms}, and \emph{normal \proofTerms}
are given mutually inductively by the grammar:
\[
  \begin{array}{lr@{\,\,}c@{\,\,}l}
  \textsc{Pseudo-redexes} &
    \Psi, \hdots & ::= & (\plamf{\pvar}{\unf})\,\ttm
            \,\,\mid\,\, (\pallfi{\tvar}{\unf})\,\unftwo
  \\
  \textsc{\HeadForms} &
    \uhd,\uhd',\hdots & ::= & \pvar \mid \Psi
  \\
  \textsc{Normal \arguments} &
    \argunf,\argunftwo,\hdots
           & ::= & \unf
      \,\,\mid\,\, \ttm
  \\
  \textsc{Neutral \proofTerms} &
    \uneu,\uneutwo,\hdots
           & ::= & \uhd\,\argunf_1\hdots\argunf_n
  \\
  \textsc{Normal \proofTerms} &
    \unf,\unftwo,\hdots
           & ::= & \uneu
      \,\,\mid\,\, \plamf{\pvar}{\unf}
      \,\,\mid\,\, \pallfi{\tvar}{\unf}
  \end{array}
\]
It is a routine exercise to show that pure \metaterms
in $\beta$-normal form correspond exactly to normal \proofTerms.

\paragraph{\Argument matching.}
In the case of \minimalLogic, it is clear at a glance
that a \proofTerm of type $\ttm$
can take as arguments a list of $n$ \proofTerms $(\utm_1,\hdots,\utm_n)$
only if $\ttm$ is of the form $\ttm_1\imp\hdots\ttm_n\imp\ttm'$.
In such case, we write
$\match{\ttm}{(\utm_1,\hdots,\utm_n)}{(\ttm_1,\hdots,\ttm_n)}{\ttm'}$.
We generalize this predicate to the second-order case,
writing $\match{\ttm}{\argus}{\ttms}{\ttm'}$
where $\ttm,\ttm'$ are \logicalTerms, $\argus$ is a list of \arguments,
and $\ttms$ a list of \logicalTerms. Inductively,
we declare that $\match{\ttm}{\emptyset}{\emptyset}{\ttm}$ holds,
and:
\[
  \indrule{}{
    \match{\ttmtwo}{\argus}{\ttmtwos}{\ttmtwo'}
  }{
    \match{(\ttm\imp\ttmtwo)}{(\utm,\argus)}{(\ttm,\ttmtwos)}{\ttmtwo'}
  }
  \HS
  \indrule{}{
    \match{\ttm\sub{\tvar}{\ttmtwo}}{\argus}{\ttmthrees}{\ttmthree'}
  }{
    \match{\allf{\tvar}{\ttm}}{(\ttmtwo,\argus)}{(\ttmtwo,\ttmthrees)}{\ttmthree'}
  }
\]
For example,
$\match{(\allf{\tvar}{(\tvar\imp\tvar)\imp\tvar})}{(\ttmtwo,\utm)}{(\ttmtwo,(\ttmtwo\imp\ttmtwo))}{\ttmtwo}$.

\begin{definition}[Proof size of normal \proofTerms]
\label{def:f:proof_size}
The \defn{proof size} of a normal \proofTerm
$\unf$ with respect to a \proofEnvironment $\penv$
and a \logicalTerm $\ttm$ is a non-negative integer written $\sig{\unf}{\ttm}$
and defined as follows, by induction on the shape of the normal form $\unf$.
The definition is extended for normal \arguments ($\sig{\argunf}{\ttm}$)
by declaring that $\sig{\argunf}{\ttm} = 0$ when the \argument $\argunf$ is a \logicalTerm.
\[
  \begin{array}{l@{\,\,\,\,}l}
    \multicolumn{2}{l}{
      \sig{\pvar\,\argunf_1\hdots\argunf_n}{\ttm} \eqdef
            \minof{\size{\ttm}}{\size{\ttm'}}
          + \sum_{i=1}^n\set{{\sig{\argunf_i}{\ttmtwo_i}}}
    }
    \\
    & \text{if $\pvar:\ttmtwo \in \penv$ and $\match{\ttmtwo}{\argus}{\ttmtwos}{\ttm'}$}
    \\
    & \text{$\argus = (\argunf_1,\hdots,\argunf_n)$,
            $\ttmtwos = (\ttmtwo_1,\hdots,\ttmtwo_n)$}
  \\
    \sig{\plamf{\pvar}{\unf}}{\ttm} \eqdef
        \sig[\penv,\pvar:\ttmtwo]{\unf}{\ttmthree}
        & \text{if $\ttm = (\ttmtwo\imp\ttmthree)$}
  \\
    \sig{\pallfi{\tvar}{\unf}}{\ttm} \eqdef
        \sig[\penv]{\unf}{\ttmtwo}
        & \text{if $\ttm = (\allf{\tvar}{\ttmtwo})$, assuming $\tvar\notin\fv{\penv}$}
  \\
    \sig{\unf}{\ttm} \eqdef
        0
        & \text{in any other case}
  \end{array}
\]
Note in particular that $\sig{\uhd\,\argunf_1\hdots\argunf_n}{\ttm} = 0$
if $\uhd$ is a pseudo-redex rather than a variable.
\end{definition}

\begin{theorem}[Completeness]
\label{thm:f:completeness}
If $\verif{\ttm}{\subs{\utm}{\penv}} \tos \iunit$
and $\utm$ is good,
then its $\beta$-normal form is a \proofTerm $\unf$ such
that $\judgf{\penv}{\unf}{\ttm}$.
\end{theorem}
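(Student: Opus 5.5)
We adapt the proof of \cref{thm:completeness}, proceeding by induction on the proof size $\sig{\unf}{\ttm}$ of \cref{def:f:proof_size}, where $\unf$ is the $\beta$-normal form of $\utm$. By confluence (\cref{prop:f:confluence}) we may replace $\utm$ by $\unf$ throughout, so $\verif{\ttm}{\subs{\unf}{\penv}} \tos \iunit$. By standardization (\cref{prop:f:standardization}) this reduction can be taken to be weak head. We then analyse its first steps, which unfold the verifier. Decompose $\ttm$ as a spine of implications and quantifiers ending in an atom. Each $\imp$ contributes an application to a generator via rule~6. Each $\forall$ contributes, via rule~8, a $\nu$-binder for a fresh eigenvariable $\teig_j$ together with an application to $\teig_j$.

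First we generalise the statement. The eigenvariables introduced by rule~8 behave like fresh type variables. We therefore state the theorem allowing eigenvariables to occur in $\penv$ and $\ttm$, and note that a typing derivation remains valid when a fresh eigenvariable is replaced by a type variable not free in $\penv$. This is exactly what is needed to reintroduce $\pallfi{\tvar}{\cdot}$ at the end.

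For the main step, take fresh variables $\var_i$ for the implication positions and eigenvariables $\teig_j$ for the quantifier positions. Form $\utmtwo^\downarrow$, the $\beta$-normal form of $\unf$ applied to these inputs, and put $\penvtwo := \penv,\var_i:\ttm_i$. We then have $\verif{\teigtwo}{\subs{\utmtwo^\downarrow}{\penvtwo}} \tos \iunit$ inside the $\nu$-binders, for an atomic $\teigtwo$. The final atom cannot be a type variable: such a variable would be bound by one of the quantifiers and so has been replaced by an eigenvariable, and an atom that is free in $\ttm$ would leave the verifier stuck.

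We also need the inequality $\sig{\unf}{\ttm} \geq \sig[\penvtwo]{\utmtwo^\downarrow}{\teigtwo}$. This is proved by unfolding \cref{def:f:proof_size}, including the case where $\unf$ is neutral and application produces a larger spine. Here $\utmtwo^\downarrow$ cannot be an abstraction of either kind, since reducts of abstractions are abstractions and can never reach $\gen{\teigtwo}$.

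So $\utmtwo^\downarrow = \uhd\,\argunf_1\hdots\argunf_m$. If $\uhd$ were a pseudo-redex, it would be an ill-sorted application (a term abstraction applied to a type, or a type abstraction applied to a term). Such an application can never be reduced, so the weak head reduction of the verifier would be stuck. Hence $\uhd = \var$. Moreover $\var \in \dom{\penvtwo}$, since otherwise the term is again stuck.

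Write $\var:\ttmtwo$ with $\ttmtwo$ unfolded by rules~5 and~7. The key technical step is the \textbf{matching lemma}: the reduction $\verifteig{(\gen{\ttmtwo}\,\argus)} \tofs \iunit$ (with $\teigtwo$ in place of $\teig$) forces three conditions. First, $\match{\ttmtwo}{\argus}{\ttmtwos}{\ttm'}$ holds with $\ttm' = \teigtwo$. Second, each term input satisfies $\verif{\ttmtwo_i}{\subs{\argunf_i}{\penvtwo}} \tos \iunit$. Third, each type input is consumed by rule~2 against the $\Lambda$ produced by rule~7, which substitutes into the remaining type.

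The proof of the matching lemma is by induction on $m$, tracking the weak head reduction. Type inputs meeting an implication generator, term inputs meeting a $\forall$-generator, too few inputs (a $\lambda$ under a verifier) and too many inputs all get stuck. The weak head reduction is not deterministic, but \cref{prop:f:subcommutative} guarantees that stuckness is independent of the order of steps. The final atom of $\ttmtwo$ after substitution must be syntactically the same eigenvariable as $\teigtwo$ for rule~4 to fire, and this gives $\ttm' = \teigtwo$.

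The proof-size bound $\sig[\penvtwo]{\utmtwo^\downarrow}{\teigtwo} > \sig[\penvtwo]{\argunf_i}{\ttmtwo_i}$ holds for the term inputs, by the sum in \cref{def:f:proof_size} plus the strictly positive $\min$ term. The induction hypothesis then gives $\judgf{\penvtwo}{\argunf_i}{\ttmtwo_i}$. Type inputs need no hypothesis, because the types are syntactic. From these we rebuild $\judgf{\penvtwo}{\var\,\argus}{\teigtwo}$ using the elimination rules; the instantiations match because of the matching relation.

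We then abstract with $\plamf{\var_i}{}$ and $\pallfi{\tvar_j}{}$, turning each $\teig_j$ back into $\tvar_j$, which is fresh for $\penv$. This gives a term $\tm$ with $\judgf{\penv}{\tm}{\ttm}$. As in \cref{thm:completeness}, we have $\tm \tobetas\toetas \unf$, now using both $\eta$-rules and the commutation of $\beta$ with $\eta$. Subject reduction then gives $\judgf{\penv}{\unf}{\ttm}$.

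I expect the main obstacle to be the matching lemma together with the size inequality under type instantiation. Substituting eigenvariables may enlarge the target type, and the $\min$ in \cref{def:f:proof_size} is designed precisely to absorb this.
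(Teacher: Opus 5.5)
Your proof is correct, but it is organised differently from the paper's. You transplant the minimal-logic argument of \cref{thm:completeness} wholesale. You unfold the whole $\imp/\forall$ spine of $\ttm$ at once, analyse the head of the resulting neutral term, and induct on $\sig{\unf}{\ttm}$ alone. That single measure suffices because every call to the induction hypothesis is on a term input of the head variable, where the decrease is strict.

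The paper instead inducts on the lexicographic pair $(\sig{\unf}{\ttm},\size{\ttm})$ and peels one connective per step. For $\ttm=\ttmtwo\imp\ttmthree$ or $\ttm=\allf{\tvar}{\ttmtwo}$ it applies $\unf$ to a single fresh variable or eigenvariable. It then uses the one-step inequalities \cref{lem:f:size_proof_applications} and \cref{lem:f:size_logical_applications}, which are only non-strict; this is why $\size{\ttm}$ is the second component of the measure. It recurses on the smaller type and closes with one $\lambda$ or $\Lambda$ and a single $\eta$-step. The head analysis (\cref{lem:f:verif_gen_match} and \cref{lem:f:typing_derivation_reconstruction}) is confined to the case $\ttm=\teig$.

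What the paper's route buys is locality:
\begin{itemize}
\item there are no nested $\nu$-binders;
\item the size inequality is never chained along a mixed spine;
\item renaming an eigenvariable to a type variable, and the side condition $\tvar\notin\fv{\penv}$ of $\forall$-introduction, are handled one binder at a time.
\end{itemize}
Your route avoids the lexicographic measure, but must check these points globally. For instance, each $\tvar_j$ must be fresh for $\penv$ and for the types of the $\var_i$ preceding it in the spine; this holds after $\alpha$-renaming $\ttm$. Your final $\eta$-argument is also multi-step.

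Your detour through standardization and \cref{prop:f:subcommutative} for the stuckness arguments is not needed. The paper reasons directly with $\tos$, using persistence of head shapes plus confluence.

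Two minor inaccuracies:
\begin{itemize}
\item The weak head reduction need not begin by unfolding the verifier, since the argument may be reduced first. You should appeal to confluence to reorder the steps.
\item Substituting eigenvariables for type variables does not change $\size{\cdot}$. What can grow is instantiation by type inputs in the matching relation, and that is what the $\min$ in \cref{def:f:proof_size} absorbs.
\end{itemize}
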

\begin{proof}
The details of the proof may be found in \cref{a:thm:f:completeness}.
The proof proceeds by induction on the lexicographic pair
$(\sig{\unf}{\ttm},\size{\ttm})$, where $\size{\ttm}$ denotes
the size of the \logicalTerm $\ttm$.
\end{proof}

\section{Extension to Higher-Order Logic}
\label{sec:higher_order}

In this section, we discuss how to extend the \epistemic realizability
interpretations of the previous sections to higher-order logic.
Recall that higher-order logic generalizes systems such as first and
second-order logic by allowing universal quantifiers to bind variables
of arbitrarily high order. \LogicalVariables may represent not
only elements of the universe of discourse, but also predicates on
these elements, predicates on these predicates, and so on.
Due to its expressivity, it has been used as the basis of
functional programming languages such as \textsc{Haskell},
and proof assistants such as \tool{HOL Light} or \tool{Isabelle/HOL}.
Many formalizations of higher-order logic can be found in the literature,
starting with Church's simple theory of types~\cite{Church1940}.
Our work is based on Girard's System F$\omega$~\cite{GirardThesis},
one of the vertices of Barendregt's $\lambda$-cube~\cite{Barendregt1991}.
System F$\omega$ is a \emph{two-layered} typed $\lambda$-calculus with
three sorts of expressions: \emph{\kinds}, \emph{\logicalTerms}, and \emph{\proofTerms}.

The main results in this section are again
\textbf{soundness}~(\cref{thm:hol:soundness})
and \textbf{completeness}~(\cref{thm:hol:completeness}).

\paragraph{The higher-order $\lambda$-calculus ($\lambdaFw$).}
We recall the typed $\lambda$-calculus corresponding
to higher-order logic, namely Girard's System~F$\omega$~\cite{GirardThesis,Reynolds1974}.
The set of \defn{\kinds} is given by the following grammar, including 
base kinds ($\bki,\bki',\hdots$) and a constant $\Prop$:
\[
  \textsc{\Kinds}
  \HS
  \ki,\kitwo,\hdots ::= \bki \mid \Prop \mid \ki\to\kitwo
\]
As in the second-order case,
we assume given disjoint sets of \defn{\logicalVariables} ($\tvar,\tvartwo,\hdots$)
and \defn{\logicalEigenvariables} ($\teig,\teigtwo,\hdots$).
The set of \defn{\logicalTerms} is given by the grammar:
\[
  \textsc{\LogicalTerms}
  \HS
  \ttm, \ttmtwo, \hdots ::= \tvar
                       \mid \teig
                       \mid \tlam{\tvar}{\ki}{\ttm}
                       \mid \ttm\,\ttmtwo
                       \mid \ttm\imp\ttmtwo
                       \mid \all{\tvar}{\ki}{\ttm}
\]
A \defn{\logicalEnvironment} ($\tenv,\tenvtwo,\hdots$)
is a finite set of assignments
$\tvar_1:\ki_1,\hdots,\tvar_n:\ki_n,\teig_1:\ki_{n+1},\hdots,\teig_m:\ki_{n+m}$,
mapping \logicalVariables and \logicalEigenvariables to \kinds.
Kinding judgments are of the form $\tjudg{\tenv}{\ttm}{\ki}$,
meaning that the \logicalTerm $\ttm$ is well-kinded of \kind $\ki$ under the \logicalEnvironment $\tenv$.
The typing rules are the expected ones.
In particular,
$\tjudg{\tenv}{\ttm\imp\ttmtwo}{\Prop}$
holds if $\tjudg{\tenv}{\ttm}{\Prop}$ and $\tjudg{\tenv}{\ttmtwo}{\Prop}$ hold;
and $\tjudg{\tenv}{\all{\tvar}{\ki}{\ttm}}{\Prop}$
holds if $\tjudg{\tenv,\tvar:\ki}{\ttm}{\Prop}$ holds.
We define \defn{$\beta$-convertibility} as the least congruence
generated by the usual $\beta$-reduction rule:
\[
  (\tlam{\tvar}{\ki}{\ttm})\,\ttmtwo
  \tobeta
  \ttm\sub{\tvar}{\ttmtwo}
\]
\LogicalTerms of \kind $\Prop$ are called \emph{propositions}.
For example, the \logicalTerm
$\all{P}{\bki\to\Prop}{\all{f}{\bki\to\bki}{\all{\tvar}{\bki}{(P\,\tvar \imp P\,(f\,\tvar))}}}$
is a proposition.
Note that \logicalVariables bound by universal quantifiers may
play the role of predicate symbols (\eg~$P$)
or function symbols (\eg~$f$).

\paragraph{\ProofTerms and typing judgments.}
Given a denumerable set of \defn{\proofVariables} ($\pvar,\pvartwo,\hdots$),
the set of \defn{\proofTerms} is given by the grammar:
\[
  \textsc{\ProofTerms}
  \HS
  \ptm,\ptmtwo,\hdots
       ::= \pvar
      \mid \plam{\pvar}{\ptm}
      \mid \ptm\,\ptmtwo
      \mid \palli{\tvar}{\ki}{\ptm}
      \mid \ptm\,\ttm
\]
Note that the grammar of \proofTerms is similar to the one for
the second-order case, with the difference that \logicalVariables
bound by \logicalAbstractions are decorated with their \kind.
We write $\plam{\pvar}{\ptm}$ instead of $\plamf{\pvar}{\ptm}$
to avoid confusion with the abstraction at the level of \logicalTerms.

An \defn{\proofEnvironment} ($\penv,\penv',\hdots$) is a finite set of
$\set{\pvar_1:\ttm_1,\hdots,\pvar_n:\ttm_n}$ that maps distinct
\proofVariables to \logicalTerms.
We introduce two further forms of judgment:
\itemNum{1}
   $\pjudgenv{\tenv}{\penv}$, meaning that the \proofEnvironment
  $\penv$ is well-formed under the \logicalEnvironment $\tenv$
  (\ie it maps \proofVariables to \emph{propositions}),
and \itemNum{2}
  $\pjudg{\tenv}{\penv}{\ptm}{\ttm}$,
  meaning that the \proofTerm $\ptm$ is a proof of the proposition $\ttm$
  under the environments $\tenv$ and $\penv$.
Valid judgments are given by:
\[
  \indrule{}{
    (\tjudg{\tenv}{\ttm_i}{\Prop})_{i=1}^{n}
  }{
    \pjudgenv{\tenv}{(\pvar_1:\ttm_1,\hdots,\pvar_n:\ttm_n)}
  }
  \HS
  \indrule{}{
    \pjudgenv{\tenv}{(\penv,\pvar:\ttm)}
  }{
    \pjudg{\tenv}{\penv,\pvar:\ttm}{\pvar}{\ttm}
  }
\]
\[
  \indrule{}{
    \pjudg{\tenv}{\penv,\pvar:\ttm}{\ptm}{\ttmtwo}
  }{
    \pjudg{\tenv}{\penv}{\plamt{\pvar}{\ttm}{\ptm}}{\ttm\imp\ttmtwo}
  }
  \HS
  \indrule{}{
    \pjudg{\tenv}{\penv}{\ptm}{\ttm\imp\ttmtwo}
    \HS
    \pjudg{\tenv}{\penv}{\ptmtwo}{\ttm}
  }{
    \pjudg{\tenv}{\penv}{\ptm\,\ptmtwo}{\ttmtwo}
  }
\]
\[
  \indrule{}{
    \pjudg{\tenv,\tvar:\ki}{\penv}{\ptm}{\ttm}
    \HS
    \tvar\notin\fv{\penv}
  }{
    \pjudg{\tenv}{\penv}{\palli{\tvar}{\ki}{\ptm}}{\all{\tvar}{\ki}{\ttm}}
  }
  \HS
  \indrule{}{
    \pjudg{\tenv}{\penv}{\ptm}{\all{\tvar}{\ki}{\ttm}}
    \HS
    \tjudg{\tenv}{\ttmtwo}{\ki}
  }{
    \pjudg{\tenv}{\penv}{\ptm\,\ttmtwo}{\ttm\sub{\tvar}{\ttmtwo}}
  }
\]
\[
  \indrule{}{
    \pjudg{\tenv}{\penv}{\ptm}{\ttm}
    \HS
    \tjudg{\tenv}{\ttm}{\Prop}
    \HS
    \tjudg{\tenv}{\ttmtwo}{\Prop}
    \HS
    \ttm \eqbeta \ttmtwo
  }{
    \pjudg{\tenv}{\penv}{\ptm}{\ttmtwo}
  }
\]
The last rule is called the \emph{conversion} rule.
We define \defn{$\beta$} and $\eta$-reduction as the union of the following
rules, closed by arbitrary contexts:
\[
  \begin{array}{r@{\,\,}l@{\,\,}l@{\HS}r@{\,\,}l@{\,\,}l}
    (\plam{\pvar}{\ptm})\,\ptmtwo
    & \tobeta &
    \ptm\sub{\pvar}{\ptmtwo}
  &
    (\palli{\tvar}{\ki}{\ptm})\,\ttm
    & \tobeta &
    \ptm\sub{\tvar}{\ttm}
  \\
    \plam{\pvar}{\ptm\,\pvar}
    & \toeta &
    \ptm
    \,\,\,\,\text{if $\pvar \notin \fv{\ptm}$}
  &
    \palli{\tvar}{\ki}{\ptm\,\tvar}
    & \toeta &
    \ptm
    \,\,\,\,\text{if $\tvar \notin \fv{\ptm}$}
  \end{array}
\]

\paragraph{The higher-order metacalculus ($\lambdaCheckFw$)}

Terms of the $\lambdaCheckFw$ calculus (called ``\metaterms'') are defined
as \textbf{equivalence classes of pre-\metaterms} modulo a suitable equivalence relation.
We start by presenting the syntax of pre-\metaterms of $\lambdaCheckFw$:
\[
  \begin{array}{lrlllll}
  \multicolumn{7}{l}{\textsc{Pre-\metaterms} \HS \utm,\utmtwo,\hdots ::=} \\
  &      & \pvar
         & \text{\proofVariable}
  & \mid & \iunit
         & \text{\unitConstructor}
  \\
  & \mid & \plam{\pvar}{\utm}
         & \text{\proofAbs}
  & \mid & \eunit{\utm}{\utmtwo}
         & \text{\unitEliminator}
  \\
  & \mid & \utm\,\utmtwo
         & \text{\proofApp}
  & \mid & \gen{\ttm}
         & \text{generator}
  \\
  & \mid & \palli{\tvar}{\ki}{\utm}
         & \text{\logicalAbs}
  & \mid & \verif{\ttm}{\utm}
         & \text{verifier}
  \\
  & \mid & \utm\,\ttm
         & \text{\logicalApp}
  & \mid & \fresh{\teig}{\ki}{\utm}
         & \text{fresh \logicalEigenvariable}
  \end{array}
\]
The notions of \defn{pure} and \defn{good} \metaterms are defined as
for $\lambdaCheckF$.

The higher-order metacalculus $\lambdaCheckFw$
is similar to second-order metacalculus $\lambdaCheckF$,
except for the fact that $\palli{\tvar}{\ki}{\utm}$
and $\fresh{\teig}{\ki}{\utm}$ include a \kind decoration.
This was not necessary in $\lambdaCheckF$
because all \logicalVariables were morally always of \kind $\Prop$.
The syntax of $\lambdaCheckFw$ includes generators $\gen{\ttm}$
and verifiers $\verif{\ttm}{}$ for arbitrary \logicalTerms $\ttm$,
which shall be restricted to be \emph{propositions}.

\paragraph{Global \kindAssignment.}
\Kinds and \logicalTerms of expressions in $\lambdaCheckFw$ depend
on a \logicalEnvironment $\tenv$.
To alleviate the notation, we stick to the following well-established
convention~\cite{GirardThesis,Gallier1990}.
We assume that there is a global \emph{\kindAssignment} $\kiasig$,
\ie a function mapping each \logicalVariable $\tvar$
and each \logicalEigenvariable $\teig$ to its \kind.
We assume that the global \kindAssignment has ``enough''
variables of each kind,
\ie for each \kind $\ki$ there are denumerably infinite sets of \logicalVariables
and \logicalEigenvariables of \kind $\ki$.
Whenever we have an expression like $\palli{\tvar}{\ki}{\utm}$
that binds a \logicalVariable of \kind $\ki$,
and in case we need to reason about the body $\utm$,
we assume (by $\alpha$-renaming if necessary)
that $\tvar$ has been chosen in such a way that $\kiasig(\tvar) = \ki$,
and similarly for \logicalEigenvariables.
We write $\tenv \subseteq \kiasig$
if the \logicalEnvironment $\tenv$ is \emph{compatible}
with $\kiasig$,
\ie it assigns the same \kinds as $\kiasig$ to \logicalVariables
and \logicalEigenvariables. 
We write $\tjudg{\kiasig}{\ttm}{\ki}$ to mean
that $\tjudg{\tenv}{\ttm}{\ki}$ holds for some $\tenv \subseteq \kiasig$.
Similarly, we may write $\pjudgenv{\kiasig}{\penv}$
to mean that $\pjudgenv{\tenv}{\penv}$ holds,
and $\pjudg{\kiasig}{\penv}{\ptm}{\ttm}$
to mean that $\pjudg{\tenv}{\penv}{\ptm}{\ttm}$ holds,
for some $\tenv \subseteq \kiasig$

\paragraph{Well-kindedness for pre-\metaterms.}
The $\lambdaCheck$ and $\lambdaCheckF$ calculi of previous sections
were completely untyped.
In contrast, in $\lambdaCheckFw$ we shall impose an invariant enforcing that
pre-\metaterms be \emph{well-kinded}.
Specifically, well-kindedness imposes the restriction that,
in all occurrences of
$\gen{\ttm}$ and $\verif{\ttm}{\utm}$ the \logicalTerm $\ttm$ is a proposition,
and that in each \logicalApplication $\utm\,\ttm$
the \logicalTerm $\ttm$ is well-kinded.
For example, $\fresh{\teig}{\Prop}{\gen{\teig}}$ is well-kinded,
while $\fresh{\teig}{\ki\to\Prop}{\gen{\teig}}$
(in which $\gen{\teig}$ generates a ``realizer'' for a type that
is not a proposition) is \emph{not} well-kinded.
The metacalculus is still untyped, in that \metaterms are not required to be
well-typed; for example, $\var\,\var$ is well-kinded.

Formally, we say that a pre-\metaterm $\utm$ is \emph{well-kinded}
if the predicate $\wkjudg{\utm}$ holds,
according to the following inductive rules:
\[
  \indrule{}{
  }{
    \wkjudg{\pvar}
  }
  \,\,\,\,
  \indrule{}{
    \wkjudg{\utm}
  }{
    \wkjudg{\plam{\pvar}{\utm}}
  }
  \,\,\,\,
  \indrule{}{
    \wkjudg{\utm}
    \HS
    \wkjudg{\utmtwo}
  }{
    \wkjudg{\utm\,\utmtwo}
  }
  \,\,\,\,
  \indrule{}{
    \wkjudg{\utm}
    \HS
    \kiasig(\tvar) = \ki
  }{
    \wkjudg{\palli{\tvar}{\ki}{\utm}}
  }
\]
\[
  \indrule{}{
    \wkjudg{\utm}
    \HS
    \tjudg{\kiasig}{\ttm}{\ki}
  }{
    \wkjudg{\utm\,\ttm}
  }
  \,\,\,\,
  \indrule{}{
  }{
    \wkjudg{\iunit}
  }
  \,\,\,\,
  \indrule{}{
    \wkjudg{\utm}
    \HS
    \wkjudg{\utmtwo}
  }{
    \wkjudg{\eunit{\utm}{\utmtwo}}
  }
\]
\[
  \indrule{}{
    \tjudg{\kiasig}{\ttm}{\Prop}
  }{
    \wkjudg{\gen{\ttm}}
  }
  \,\,\,\,
  \indrule{}{
    \tjudg{\kiasig}{\ttm}{\Prop}
    \HS
    \wkjudg{\utm}
  }{
    \wkjudg{\verif{\ttm}{\utm}}
  }
  \,\,\,\,
  \indrule{}{
    \wkjudg{\utm}
    \HS
    \kiasig(\teig) = \ki
  }{
    \wkjudg{\fresh{\teig}{\ki}{\utm}}
  }
\]
Note that the pre-\metaterm $\palli{\tvartwo}{\bki\to\bki}{(\palli{\tvar}{\Prop}{\gen{\tvar}})\,\tvartwo}$
is well-kinded but naive $\beta$-conversion
would yield $\palli{\tvartwo}{\bki\to\bki}{\gen{\tvartwo}}$,
which is not well-kinded.
Reduction in $\lambdaCheckFw$ shall be restricted to
ensure that well-kindedness is preserved.

\paragraph{\Metaterms and \logicalBetaEquivalence.}
The set of \emph{\metaterms} is obtained by quotienting the set of well-kinded pre-\metaterms
by the equivalence relation that identifies them if they are equal up to
$\beta$-conversion of the \logicalTerms occurring in them.
Formally, \emph{\logicalBetaEquivalence}
is written $\preq{\utm}{\utm'}$ and obtained by the following inductive
definition, which assumes that $\utm$ and $\utm'$ are well-kinded:
\[
  \indrule{}{
    \preq{\utm}{\utm'}
  }{
    \preq{\plam{\pvar}{\utm}}{\plam{\pvar}{\utm'}}
  }
  \,\,\,\,
  \indrule{}{
    \preq{\utm}{\utm'}
    \HS
    \preq{\utmtwo}{\utmtwo'}
  }{
    \preq{\utm\,\utmtwo}{\utm'\,\utmtwo'}
  }
  \,\,\,\,
  \indrule{}{
    \preq{\utm}{\utm'}
    \HS
    \kiasig(\tvar) = \ki
  }{
    \preq{\palli{\tvar}{\ki}{\utm}}{\palli{\tvar}{\ki}{\utm'}}
  }
\]
\[
  \indrule{}{
  }{
    \preq{\pvar}{\pvar}
  }
  \HS
  \indrule{}{
    \preq{\utm}{\utm'}
    \HS
    \ttm\eqbeta\ttm'
  }{
    \preq{\utm\,\ttm}{\utm'\,\ttm'}
  }
  \HS
  \indrule{}{
  }{
    \preq{\iunit}{\iunit}
  }
  \HS
  \indrule{}{
    \preq{\utm}{\utm'}
    \HS
    \preq{\utmtwo}{\utmtwo'}
  }{
    \preq{\eunit{\utm}{\utmtwo}}{\eunit{\utm'}{\utmtwo'}}
  }
\]
\[
  \indrule{}{
    \ttm\eqbeta\ttm'
  }{
    \preq{\gen{\ttm}}{\gen{\ttm'}}
  }
  \HS
  \indrule{}{
    \ttm\eqbeta\ttm'
    \HS
    \preq{\utm}{\utm'}
  }{
    \preq{\verif{\ttm}{\utm}}{\verif{\ttm'}{\utm'}}
  }
  \HS
  \indrule{}{
    \preq{\utm}{\utm'}
    \HS
    \kiasig(\teig) = \ki
  }{
    \preq{\fresh{\teig}{\ki}{\utm}}{\fresh{\teig}{\ki}{\utm'}}
  }
\]
We work with \metaterms by writing pre-\metaterms,
treating them \emph{implicitly} modulo \logicalBetaEquivalence.
As usual in other settings (\eg higher-order rewriting systems),
we say that a \logicalVariable or \logicalEigenvariable occurs
\emph{free} in a \metaterm $\utm$ if it occurs free in \emph{every} representative.
From now on, expressions $\utm,\utmtwo,\hdots$ denote \metaterms rather
than pre-\metaterms.
We shall not explicitly state lemmas showing that constructions can
be lifted from pre-\metaterms to \metaterms, but these lifting properties
are easy due to the standard fact that well-kinded \logicalTerms have
unique $\beta$-normal forms.

\paragraph{Reduction in $\lambdaCheckFw$.}
The set of \emph{weak head contexts} is given by:
\[
  \wctx ::= \ctxhole
         \mid \wctx\,\utm
         \mid \wctx\,\ttm
         \mid \eunit{\wctx}{\utm}
         \mid \verif{\ttm}{\wctx}
         \mid \fresh{\teig}{\ki}{\wctx}
\]
The \defn{reduction} relation, written $\to$,
is given by the union of the reduction rules below,
closed by congruence under arbitrary contexts.
The \defn{weak head reduction} relation, written $\tow$,
is given by the union of the reduction rules below,
closed by congruence under weak head contexts.
\[
  \begin{array}{lrcll}
  \ruleNum{1}
  &
    (\plam{\pvar}{\utm})\,\utmtwo
    & \to &
    \utm\sub{\pvar}{\utmtwo}
  \\
  \ruleNum{2}
  &
    (\palli{\tvar}{\ki}{\utm})\,\ttm
    & \to &
    \utm\sub{\tvar}{\ttm}
    & \text{(if $\tjudg{\kiasig}{\ttm}{\ki}$)}
  \\
  \ruleNum{3}
  &
    \eunit{\iunit}{\utm}
    & \to &
    \utm
  \\
  \ruleNum{4}
  &
    \verif{\teig\vec{\ttm}}{\gen{\teig\vec{\ttm}}}
    & \to &
    \iunit
  \\
  \ruleNum{5}
  &
    \verif{\ttm\imp\ttmtwo}{\utm}
    & \to &
    \verif{\ttmtwo}{(\utm\,\gen{\ttm})}
  \\
  \ruleNum{6}
  &
    \verif{\all{\tvar}{\ki}{\ttm}}{\utm}
    & \to &
      \fresh{\teig}{\ki}{\verif{\ttm\sub{\tvar}{\teig}}{(\utm\,\teig)}}
    & \text{($\teig$ fresh)}
  \\
  \ruleNum{7}
  &
    \gen{\ttm\imp\ttmtwo}
    & \to &
    \plam{\pvar}{(\eunit{\verif{\ttm}{\pvar}}{\gen{\ttmtwo}})}
  \\
  \ruleNum{8}
  &
    \gen{\all{\tvar}{\ki}{\ttm}}
    & \to &
    \palli{\tvar}{\ki}{\gen{\ttm}}
  \\
  \ruleNum{9}
  &
    \fresh{\teig}{\ki}{\utm}
    & \to &
    \utm
    & \text{(if $\teig\notin\fv{\utm}$)}
  \end{array}
\]
The rewriting rules are almost exactly the same as to those of the
second-order metacalculus $\lambdaCheckF$.
There are two noteworthy differences.
First, rule~2 requires that the redex is \emph{well-kinded} to proceed,
\ie that the actual argument $\ttm$
has the same \kind as the formal parameter $\tvar$.
For example, the step
  $(\palli{\tvar}{\Prop}{\gen{\tvar\imp\tvar}})\,\teigtwo
  \to
  \gen{\teigtwo\imp\teigtwo}$
is allowed if and only if $\teigtwo:\Prop\in\kiasig$.
This side condition allows to show that \emph{reduction
preserves well-kindedness}.

Second, note that if $\ttm$ is a well-kinded proposition without
free \logicalVariables,
it must be exactly of one of three possible forms up to $\beta$-convertibility:
an applied \logicalEigenvariable $(\teig\,\ttm_1\hdots\ttm_n)$,
an implication $(\ttmtwo\imp\ttmthree)$,
or a universal quantification $(\all{\tvar}{\ki}{\ttmtwo})$.
The reduction rule~4 treats the case for $\teig\,\ttm_1\hdots\ttm_n$,
written $\teig\vec{\ttm}$ in the rule.
Contrast this with the rule for $\lambdaCheckF$,
which was just $\verif{\teig}{\gen{\teig}} \to \iunit$.
In the second-order case, all \logicalEigenvariables were morally
propositions (of \kind $\Prop$), and could not be applied,
while in $\lambdaFw$ they can also play the role of \emph{predicate symbols}
(of \kind $\ki_1\to\hdots\to\ki_n\to\Prop$).

\begin{example}[Symmetry of Leibniz equality]
Let $\ttm,\ttmtwo$ be \logicalTerms of kind $\ki$,
and let \emph{Leibniz equality} be defined by its
usual encoding
$(\ttm =_\ki \ttmtwo) :=
 \all{P}{\ki\to\Prop}{(P\,\ttm \imp P\,\ttmtwo)}$.
Let $Q_P := \tlam{c}{\ki}{(P\,c \imp P\,\ttm)}$.
Then the following reduction shows that \emph{symmetry}
is realizable:
\[
  \begin{array}{l@{\,\,}l}
  &
    \verif{(\ttm =_\ki \ttmtwo) \imp (\ttmtwo =_\ki \ttm)}{
      (\plam{e}{
        \palli{P}{\ki\to\Prop}{
          \plam{\pvar}{
            e\,Q_P\,(\plam{\pvartwo}{\pvartwo})\,\pvar
          }
        }
      })
    }
  \\
  \tos &
    \verif{\ttmtwo =_\ki \ttm}{
      (
        \palli{P}{\ki\to\Prop}{
          \plam{\pvar}{
            \gen{\ttm =_\ki \ttmtwo}\,Q_P\,(\plam{\pvartwo}{\pvartwo})\,\pvar
          }
        }
      )
    }
  \\
  \tos &
    \fresh{\teigP}{\ki\to\Prop}{
      \verif{\teigP\ttmtwo\imp\teigP\ttm}{
        (
          \plam{\pvar}{
            \gen{\ttm =_\ki \ttmtwo}\,Q_{\teigP}\,(\plam{\pvartwo}{\pvartwo})\,\pvar
          }
        )
      }
    }
  \\
  \tos &
    \fresh{\teigP}{\ki\to\Prop}{
      \verif{\teigP\ttm}{
        (
          \gen{\ttm =_\ki \ttmtwo}\,Q_{\teigP}\,(\plam{\pvartwo}{\pvartwo})\,\gen{(\teigP\ttmtwo)}
        )
      }
    }
  \\
  \tos &
    \fresh{\teigP}{\ki\to\Prop}{
      \verif{\teigP\ttm}{
        (
          \gen{(Q_{\teigP}\ttm \imp Q_{\teigP}\ttmtwo)}\,
          (\plam{\pvartwo}{\pvartwo})\,\gen{\teigP\ttmtwo}
        )
      }
    }
  \\
  \equiv &
    \fresh{\teigP}{\ki\to\Prop}{
      \verif{\teigP\ttm}{
        (
          \gen{(\teigP\ttm\imp\teigP\ttm)\imp\teigP\ttmtwo\imp\teigP\ttm}\,
          (\plam{\pvartwo}{\pvartwo})\,\gen{\teigP\ttmtwo}
        )
      }
    } \tos \iunit
  \end{array}
\]
\end{example}

\paragraph{Soundness and completeness of $\lambdaFw$ with respect to $\lambdaCheckFw$.}
The rest of the development follows closely what we have done in
the second-order case~(\cref{sec:second_order}).
In particular, the notion of \emph{generative substitution}
$\gensubst{\penv}$ is defined as before,
and $\subs{\utm}{\penv}$ denotes $\subs{\utm}{\gensubst{\penv}}$.
The proofs of soundness and completeness
are extensions of those for the second-order case
(\cref{thm:f:soundness} and \cref{thm:f:completeness}).
The proof of completeness is a straightforward extension, 
while the proof of soundness requires to parameterize the interpretation
with respect to an \emph{indexed family of \verificationCandidates},
using \logicalTerms as indices, to be able to interpret higher-kinded \logicalTerms.
This follows well-established techniques~\cite{GirardThesis,Gallier1990}.
Details for the proof of soundness can be found in \cref{a:sec:hol:soundness}.

\begin{theorem}[Soundness]
\label{thm:hol:soundness}
Let $\penv,\tm,\ttm$ be such that they have no free \logicalVariables.
If $\pjudg{\kiasig}{\penv}{\tm}{\ttm}$
then $\verif{\ttm}{\subs{\tm}{\penv}} \tos \iunit$.
\end{theorem}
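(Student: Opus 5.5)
The plan is to lift the reducibility-style argument from the second-order case (\cref{thm:f:soundness}) to System~F$\omega$, following Girard's treatment of higher kinds. \VerificationCandidates are reused as before: for a closed proposition $\ttm$, $\VC{\ttm}$ is the set of $\vcset$ with $\gen{\ttm}\in\vcset$, with $\verif{\ttm}{\utm}\tofs\iunit$ for all $\utm\in\vcset$, and closed under $\tow$-expansion and $\tow$-reduction. Since \metaterms are taken modulo \logicalBetaEquivalence, $\VC{\ttm}$ depends only on the $\beta$-class of $\ttm$.

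First I would define, by induction on \kinds, the \candidateFamilies of a closed \logicalTerm $\ttm$ of \kind $\ki$. For $\ki=\Prop$ these are the elements of $\VC{\ttm}$. For $\ki=\kitwo\to\kithree$ they are functions $F$ sending each pair $(\ttmtwo,\vcset)$, with $\ttmtwo$ closed of \kind $\kitwo$ and $\vcset$ a \candidateFamily of $\ttmtwo$, to a \candidateFamily $F(\ttmtwo,\vcset)$ of $\ttm\,\ttmtwo$. Base kinds $\bki$ get a trivial one-point family $\singleElem$. Next, a \candidateAssignment $\asig$ maps each $\tvar$ to a pair $(\ttm,\vcset)$ with $\ttm$ closed of \kind $\kiasig(\tvar)$ and $\vcset$ a family for it. The interpretation $\semf{\ttm}{\asig}$ is then defined by recursion on $\ttm$. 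Variables are looked up in $\asig$. For an applied eigenvariable head $\teig\,\vec{\ttm}$ of \kind $\Prop$ one takes $\set{\utm \ST \utm\tofs\gen{\teig\vec{\ttm}^\asig}}$, and a curried version for higher \kinds. Abstractions $\tlam{\tvar}{\ki}{\ttm}$ are interpreted as the function $(\ttmtwo,\vcset)\mapsto\semf{\ttm}{\asig\extsub{\tvar}{(\ttmtwo,\vcset)}}$, and applications by function application. Implication is as in \cref{def:f:interpreation_of_types}. For $\all{\tvar}{\ki}{\ttm}$ one quantifies over closed $\ttmtwo$ of \kind $\ki$ and families of $\ttmtwo$.

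The routine lemmas then follow in order. (i) Adequacy for \logicalTerms: $\semf{\ttm}{\asig}$ is a family of $\subs{\ttm}{\asig}$ of the appropriate \kind, proved by induction on the kinding derivation. The $\Prop$ cases are as in \cref{lem:f:adequacy_for_propositions}, using correctness for closed propositions (extended to $\lambdaCheckFw$ by induction on the $\beta$-normal form, with rule~4 handling $\teig\vec{\ttm}$) and subcommutativity of $\tow$ for closure under reduction. (ii) Irrelevance and substitution, as in \cref{lem:f:irrelevance} and \cref{lem:f:substitution}. (iii) The new ingredient, \emph{invariance under $\beta$-conversion}: if $\ttm\eqbeta\ttmtwo$ then $\semf{\ttm}{\asig}=\semf{\ttmtwo}{\asig}$. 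This reduces, via substitution, to the single-step case $\semf{(\tlam{\tvar}{\ki}{\ttm})\,\ttmtwo}{\asig}=\semf{\ttm\sub{\tvar}{\ttmtwo}}{\asig}$, plus compatibility with contexts.

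Then adequacy for terms is proved by induction on $\pjudg{\kiasig}{\penv}{\tm}{\ttm}$: if $\compatf{\penv}{\psubst}$ then $\subs{\tm}{\asig\psubst}\in\semf{\ttm}{\asig}$. The abstraction, application and type-abstraction/application cases copy \cref{lem:f:adequacy_for_terms}; the type-application case uses the substitution lemma. The conversion rule is discharged by (iii). Finally the theorem follows as in \cref{thm:f:soundness}. Instantiate every \logicalVariable $\tvar$ by a fixed \logicalEigenvariable $\teig$ of \kind $\kiasig(\tvar)$, paired with its canonical family $\semf{\teig}{}$ (an element of $\VC{\teig}$ at \kind $\Prop$, the curried eigenvariable family otherwise). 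Check $\compatf{\penv}{\gensubst{\penv}}$ by correctness, apply adequacy for terms, and conclude with property~(2) of \verificationCandidates, noting that the instantiation is the identity on closed data.

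I expect the main obstacle to be (i) together with (iii). Families must be ``extensional'' so that the interpretation of $\tlam{\tvar}{\ki}{\ttm}$ is a legitimate family; in particular $F(\ttmtwo,\vcset)$ must depend on $\ttmtwo$ only up to $\beta$. Since \metaterms are quotiented by \logicalBetaEquivalence, I would index families by $\beta$-classes of closed types, which makes this automatic. The delicate remaining point is the eigenvariable case at higher \kinds, where one must check that the interpretation of $\teig$ applied to arguments yields a candidate containing $\gen{\teig\vec{\ttm}}$ and accepted via rule~4.
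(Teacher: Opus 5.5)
Your proposal is correct. It shares the paper's architecture: families indexed by ground types and kinds, candidate assignments, interpretation by recursion on types, adequacy, irrelevance, substitution, $\beta$-stability for the conversion rule, adequacy for terms, and instantiation of type variables by eigenvariables. It differs, however, in one choice that matters.

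\textbf{The choice of family at kind $\Prop$.} The paper axiomatizes candidate families abstractly and instantiates with the family $\Verif{}$ of \cref{prop:hol:verif_family}. There $\VerifK{\ttm}{\Prop}$ is the singleton $\set{\vcset_\ttm}$, with $\vcset_\ttm=\set{\utm \ST \verif{\ttm}{\utm}\tows\iunit}$.
\begin{itemize}
\item This buys convenience at the end: compatibility of $\gensubst{\penv}$ is literally the correctness lemma, and the last step is just an unfolding of $\vcset_\ttm$.
\item The price is paid in adequacy for types (\cref{lem:hol:adequacy_for_types}). Membership in a singleton means it must establish $\semg{\ttmtwo\imp\ttmthree}{\Prop}{\asig}=\vcset_{\subs{(\ttmtwo\imp\ttmthree)}{\asig}}$, and analogously for $\forall$. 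That is, every metaterm accepted by $\verif{\ttmtwo\imp\ttmthree}{}$ must send every realizer of $\ttmtwo$ to a realizer of $\ttmthree$.
\item This is a universality statement. The second-order argument the paper appeals to does not provide it, since that argument only checks the three closure conditions.
\end{itemize}
Your maximal choice (all of $\VC{\ttm}$ at $\Prop$) asks for exactly those closure conditions. So the $\imp$ and $\forall$ cases genuinely transfer verbatim from the second-order proof, and property~(2) alone closes the argument. A canonical inhabitant, if you want one, is still available, e.g.\ $\vcset_\ttm$ itself.

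\textbf{Higher-kinded eigenvariables.} Your curried interpretation is the right ingredient here. It supplies the family used in the final assignment. It also covers the type-abstraction case of adequacy for terms, where rule~6 introduces a fresh eigenvariable that may have an arrow kind. The paper's clause $\set{\utm \ST \utm\tows\gen{\teig}}$, by contrast, is meaningful only at kind $\Prop$.

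\textbf{$\beta$-classes.} Indexing by $\beta$-classes instead of using the paper's explicit stability and canonical-inhabitant axioms is a harmless simplification.

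\textbf{One small correction.} In your setting, $\compatf{\penv}{\gensubst{\penv}}$ does not follow from correctness directly. It follows from property~(1) of the candidate $\semf{\ttm}{\asig}$ (using $\subs{\ttm}{\asig}=\ttm$), which adequacy for types provides; correctness enters only inside that lemma.
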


\begin{theorem}[Completeness]
\label{thm:hol:completeness}
Let $\tjudg{\kiasig}{\ttm}{\Prop}$
and $\wkjudg{\utm}$
and $\pjudgenv{\kiasig}{\penv}$.
If $\verif{\ttm}{\subs{\utm}{\penv}} \tos \iunit$
and $\utm$ is good,
then its $\beta$-normal form is a \proofTerm $\unf$ such
that $\pjudg{\kiasig}{\penv}{\unf}{\ttm}$.
\end{theorem}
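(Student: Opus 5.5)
The plan is to adapt the proof of completeness for the second-order case (\cref{thm:f:completeness}) and argue by induction on the lexicographic pair $(\sig{\unf}{\ttm},\size{\ttm})$. Here $\unf$ is the $\beta$-normal form of $\utm$, and $\size{\ttm}$ is the size of the $\beta$-normal form of $\ttm$, which is well defined because well-kinded \logicalTerms have unique $\beta$-normal forms. The proof size of \cref{def:f:proof_size} carries over with three adjustments: \argument matching $\match{\ttmtwo}{\argus}{\ttmtwos}{\ttm'}$ is taken up to $\beta$-conversion of \logicalTerms, the rule for $\forall$ records the \kind, and sizes are measured on normal forms. First I would record the auxiliary facts about pure \metaterms: confluence and standardization for $\lambdaCheckFw$ (the arguments of \cref{prop:f:confluence,prop:f:standardization} lift through \logicalBetaEquivalence), $\beta$-reduction preserves well-kindedness, and pure normal forms are exactly the normal \proofTerms.

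Next comes a case analysis on the shape of $\ttm$ in normal form. If $\ttm=\ttmtwo\imp\ttmthree$, then $\verif{\ttm}{\subs{\utm}{\penv}}\to\verif{\ttmthree}{(\subs{\utm}{\penv}\,\gen{\ttmtwo})}$. I take $\pvar$ fresh and apply the \ih to $\utm\,\pvar$ under $\penv,\pvar:\ttmtwo$, using $\sig{\unf}{\ttmtwo\imp\ttmthree}\geq\sig[\penv,\pvar:\ttmtwo]{\unf\,\pvar}{\ttmthree}$ together with the decrease of $\size{\ttm}$. Abstraction then gives a typing for $\plam{\pvar}{(\unf\,\pvar)^\downarrow}$, and the same $\eta$/$\beta$ commutation plus subject reduction argument as in \cref{thm:completeness} transfers the typing to $\unf$.

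If $\ttm=\all{\tvar}{\ki}{\ttmtwo}$, then $\verif{\ttm}{\subs{\utm}{\penv}}\to\fresh{\teig}{\ki}{\verif{\ttmtwo\sub{\tvar}{\teig}}{(\subs{\utm}{\penv}\,\teig)}}$. By standardization, the success of this term gives success of the body under the binder. I would prove that an eigenvariable fresh for $\utm$ and $\penv$ can be renamed into a fresh \logicalVariable $\tvar$ of \kind $\ki$, so the \ih applies to $\utm\,\tvar$ under $\penv$, again with $\size{\ttm}$ decreasing. I then close with the $\forall$-introduction rule (its side condition holds because $\tvar$ is fresh) and $\eta$ for \logicalAbstractions.

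If $\ttm=\teig\,\vec{\ttmthree}$ is atomic, I argue as in the neutral case of \cref{thm:completeness}. An abstraction or a head pseudo-redex cannot reduce to $\gen{\teig\vec{\ttmthree}}$; pseudo-redexes are ill-kinded or get stuck. So $\unf=\pvar\,\argunf_1\hdots\argunf_m$ with $\pvar:\ttmtwo\in\penv$. Unfolding $\gen{\ttmtwo}$ along the arguments (rules 7 and 8) and using that reduction gets stuck otherwise yields $\match{\ttmtwo}{\argus}{\ttmtwos}{\ttm'}$ with each \logicalTerm argument well-kinded. It also yields $\verif{\ttmtwo_i}{\subs{\argunf_i}{\penv}}\tos\iunit$ for each term argument, and $\verif{\teig\vec{\ttmthree}}{\gen{\ttm'}}\tos\iunit$. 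The last condition forces $\ttm'\eqbeta\teig\vec{\ttmthree}$.

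The \ih applies to each $\argunf_i$ because $\sig{\pvar\,\argunf_1\hdots\argunf_m}{\ttm}>\sig{\argunf_i}{\ttmtwo_i}$, using $\min\{\size{\ttm},\size{\ttm'}\}\geq 1$. The typing is then assembled by application rules and type-application rules (checking $\tjudg{\kiasig}{\ttmthree}{\ki}$ from well-kindedness of $\utm$), followed by a final use of the conversion rule since $\ttm'\eqbeta\ttm$.

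I expect the main obstacle to be the atomic case. There one must show that $\verif{\teig\vec{\ttmthree}}{(\gen{\ttmtwo}\,\vec{\argu})}\tos\iunit$ forces exact \argument matching up to $\beta$-conversion. This requires an inversion lemma on weak head reductions of $\gen{\ttmtwo}\,\vec{\argu}$ that is robust modulo \logicalBetaEquivalence and tracks the fresh binders $\nu$ produced by nested verifiers. I would prove this inversion first, by induction on the length of the standardized reduction, exploiting subcommutativity of $\tow$ to fix a canonical order of steps.
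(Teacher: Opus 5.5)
Your overall strategy matches the paper's, which lifts the second-order completeness proof: induction on $(\sig{\unf}{\ttm},\size{\ttm})$, a case split on the shape of $\ttm$, generator matching in the atomic case, and $\eta$ plus subject reduction to transfer typings. The $\forall$ case as you describe it does not go through, however.

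\textbf{The gap in the $\forall$ case.} You rename the fresh eigenvariable $\teig$ into a type variable $\tvar$ \emph{before} invoking the induction hypothesis, i.e.\ you apply it to $\utm\,\tvar$ at type $\ttmtwo$. Its premise would then be $\verif{\ttmtwo}{\subs{(\utm\,\tvar)}{\penv}}\tos\iunit$, and this is false in general. Rule~4 only fires on $\verif{\teig\vec{\ttm}}{\gen{\teig\vec{\ttm}}}$ with $\teig$ an eigenvariable, so renaming an eigenvariable to a type variable does not preserve reductions to $\iunit$. The paper stresses this: $\verif{\tvar}{\gen{\tvar}}$ is irreducible. Concretely, take $\ttm=\all{\tvar}{\Prop}{(\tvar\imp\tvar)}$ and $\utm=\palli{\tvar}{\Prop}{\plam{\pvar}{\pvar}}$. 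Then $\verif{\ttm}{\utm}\tos\iunit$, but $\verif{\tvar\imp\tvar}{(\utm\,\tvar)}\tos\verif{\tvar}{\gen{\tvar}}$, which is stuck. So the renaming lemma you plan to prove is false at the level of the metacalculus, and the induction hypothesis cannot be invoked. Relatedly, your case analysis omits types whose normal form is headed by a type variable. In the paper's argument that case is impossible, which is exactly why the body must be instantiated with an eigenvariable rather than a variable.

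\textbf{The fix.} Swap the order, as in the second-order proof. First apply the induction hypothesis to $\utm\,\teig$ at type $\ttmtwo\sub{\tvar}{\teig}$. Its premise holds because success of $\fresh{\teig}{\ki}{(\cdot)}$ entails success of its body, as you note. The measure decreases because $\sig{\unf}{\all{\tvar}{\ki}{\ttmtwo}}\geq\sig{\unftwo}{\ttmtwo\sub{\tvar}{\teig}}$ and $\size{\ttmtwo\sub{\tvar}{\teig}}<\size{\all{\tvar}{\ki}{\ttmtwo}}$. This yields $\pjudg{\kiasig}{\penv}{\unftwo}{\ttmtwo\sub{\tvar}{\teig}}$. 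Only then rename $\teig$ to $\tvar$, \emph{inside the typing derivation}. That is harmless: the typing rules of $\lambdaFw$ treat type variables and eigenvariables uniformly in $\tenv$, and $\teig\notin\fv{\penv}$. Then $\forall$-introduction and $\eta$ finish as you planned.

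\textbf{A smaller imprecision in the atomic case.} Well-kindedness of $\utm$ only gives each type argument \emph{some} kind. That it has the kind of the bound variable, which the type-application rule requires, comes from rule~2's side condition having fired while $\gen{\ttmtwo}$ was unfolded.
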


\section{Conclusion}
\label{sec:conclusion}

This paper formulates \epistemic realizability interpretations
for \textbf{minimal}, \textbf{second}, and \textbf{higher-order logic},
showing that each of the three systems is \textbf{sound}
(\cref{thm:soundness,thm:f:soundness,thm:hol:soundness})
and \textbf{complete}
(\cref{thm:completeness,thm:f:completeness,thm:hol:completeness}).

\paragraph{Related work.}
Our main motivation is philosophical: to explain the meaning of propositions
from a verificationist perspective, but imposing the condition that
the semantics is ``\epistemic'', \ie that recognizing evidence for a
proposition is semi-decidable.
Similar concerns underly several foundational programs, including
Kreisel's theory of constructions~\cite{Kreisel1962},
Prawitz's proof-theoretic semantics~\cite{Prawitz1971},
Dummett's anti-realist theory of meaning~\cite{Dummett1991},
and Martin-L\"of's meaning explanations~\cite{MartinLof1984}.
In those works, the meaning of a proposition is defined
by the conditions for making a judgment that asserts it,
while in our case the meaning of a proposition is defined \emph{compositionally},
without referring to the notion of judgment.
For example, the meaning of $\ttm\imp\ttmtwo$ is a pair of
programs $(\verif{\ttm\imp\ttmtwo}{},\gen{\ttm\imp\ttmtwo})$
which are defined in terms of the meanings of $\ttm$ and $\ttmtwo$,
\ie in terms of the pairs of programs
$(\verif{\ttm}{},\gen{\ttm})$ and $(\verif{\ttmtwo}{},\gen{\ttmtwo})$,
without making reference to a judgment like $\ttm\vdash\ttmtwo$.

Our motivation is perhaps closest to that of
Girard's Ludics~\cite{Girard2001,Faggian2002,Faggian2011,Girard2011,Terui2011}, in which the meaning of
propositions is given by the interaction between proofs and
\emph{counter-proofs} (related to \emph{verifiers})
which may include \emph{daimons} (related to \emph{generators}).

The notions of \emph{correctness} and \emph{universality} in this paper
resemble the notion of \emph{harmony} of introduction and elimination
rules~\cite{Dummett1991,Zeilberger2008}, and specifically the conditions
of \emph{local soundness} and \emph{local completeness}~\cite{Davies2001,Pfenning2001}.

The motivation behind realizability interpretations is to extract
\emph{computational content} from proofs.
Although some realizability interpretations extract $\lambda$-terms from proofs
(\eg Kreisel's),
as far as we know,
none of these intepretations express the verification procedure as a program.
An active field of research is that of Krivine realizability~\cite{Krivine2009},
as a means to extract computational content from
classical and
set theoretical reasoning principles~\cite{Miquel2011,Guillermo2016,Miquey2018,Dinis2021,Fontanella2024}.

\paragraph{Limitations.}
Our completeness theorems require that the realizer is ``good'':
both \emph{pure} and \emph{$\beta$-normalizing}. We conjecture that
they can be strengthened to avoid relying on $\beta$-normalization.
One strategy could be
to show that $\verif{\ttm}{\subs{\utm}{\penv}} \tos \iunit$
implies that $\utm$ is $\beta$-normalizing,
through a non-idempotent intersection type
system (\cf~\cref{prop:f:standardization}).

It is unclear how to extend the semantics to more expressive
systems, \eg allowing dependent elimination on datatypes.
Inductive and coinductive datatypes can be
encoded in type theories with \emph{impredicative}
universes~\cite{Awodey2018,Geuvers2025}. While these encodings can be
reproduced in higher-order logic, the resulting datatypes
only support \emph{non-dependent} elimination.

\paragraph{Potential application: ``typeless types''.}
One interesting aspect is that verifiers and
generators can be implemented directly \emph{by the user}
in an untyped programming language
that supports features such as first-class functions.
For example, the operators of \cref{sec:second_order}
can be easily implemented in a language such as \tool{Python} or a \tool{Lisp} dialect.
The verifier $\verif{\ttm}{}$ can then be used to \emph{dynamically}
check whether an \emph{untyped} program $\utm$ behaves as a term
of type $\ttm$, without inspecting the source code\footnote{%
  This assumes, of course, that $\utm$ has no side-effects.
}.
This could be a way to integrate static and dynamic type
systems, as in gradual typing~\cite{Siek2006}.

In a similar spirit, \epistemic realizability could potentially become
the basis of proof assistants in which proof checking is implemented
not by statically analyzing the proofs, but by dynamically running the
verifier. These tools would not require type annotations on the
source proofs, and they would not need to perform complex operations such
as elaboration or higher-order unification\footnote{Error reporting, however,
could be challenging.}.

\paragraph{Potential as a foundation for ultrafinitism.}
Ultrafinitism is a branch of constructivism concerned with
\emph{feasibility}~\cite{Parikh1971,YesseninVolpin1975,Nelson1986,Sazonov1995}.
An ultrafinitist may deny that an expression like $2^{1000}$ denotes a natural
number, on the grounds that computational resources are
bounded: a computation exceeding our available time is effectively
indistinguishable from a non-terminating one.
The question of providing logical foundations for ultrafinitism has
not been definitely settled,
as attested by some recent and ongoing works~\cite{Gajda2023,Mannucci2023,Hamkins2025}.

The \epistemic realizability semantics proposed in this paper could perhaps
provide the underlying semantics for an ultrafinitistic logic.
For example, taking the usual second-order encoding of natural numbers,
$\Nat := \allf{\tvar}{(\tvar\imp(\tvar\imp\tvar)\imp\tvar)}$
and the usual definition of exponentiation in this setting,
it is feasible to check that the function $\plamf{n}{2^n}$
is a realizer for $\Nat\imp\Nat$,
as it amounts to a ``short'' calculation
$\verif{\Nat\imp\Nat}{(\plamf{n}{2^n})} \tos \verif{\Nat}{(2^{\gen{\Nat}})} \tos \iunit$.
On the other hand, it is unfeasible to check that $(\plamf{n}{2^n})\,1000$
is a realizer for $\Nat$,
as it would amount to a calculation
$\verif{\Nat}{((\plamf{n}{2^n})\,1000)}
 \tos \verif{\Nat}{2^{1000}}
 \tos \iunit$ that exceeds our computational resources\footnote{%
  Note that our soundness and completeness results entail that
  the typing rule for application is valid, so from
  $\verif{\Nat\imp\Nat}{(\plamf{n}{2^n})} \tos \iunit$
  and $\verif{\Nat}{1000} \tos \iunit$
  one may conclude that
  $\verif{\Nat}{((\plamf{n}{2^n})\,1000)} \tos \iunit$.
  A radical ultrafinitist would likely reject
  the validity of the meta-theoretical reasoning used to justify this.
}.
This seems consistent with the positions of some ultrafinitists,
including Sazonov's proposal that only normal (cut-free) proofs should
be allowed~\cite{Sazonov1995},
and Zeilberger's view that general theorems about ``all'' integers
are actually statements about \emph{symbolic} integers~\cite{Zeilberger2004}.

\paragraph{Future work.}
It would be interesting to provide a semantics for
positive connectives directly, without resorting to their second-order encodings.
For example, the generator $\gen{\ttm\lor\ttmtwo}$
should behave non-deterministically as either
$\texttt{inj}_1(\gen{\ttm})$ or $\texttt{inj}_2(\gen{\ttmtwo})$.

Epistemic realizability relates typability ($\vdash\ptm:\ttm$)
and successful termination ($\verif{\ttm}{\ptm} \tos \iunit$).
This may provide a new strategy to connect simple types
and intersection types, which is the topic of recent work~\cite{Pautasso2023}.

As part of ongoing work, we attempt to extend the semantics
to classical propositional logic,
using a variant of Parigot's $\lambda\mu$-calculus~\cite{Parigot1992}.
We expect that in linear and classical calculi based on sequent calculus,
such as $\bar{\lambda}\mu\tilde{\mu}$~\cite{Curien2000},
the duality between verifiers and generators should become more apparent,
with the verifier for $\ttm$ corresponding to the generator for $\neg\ttm$.

An alternative way to provide a \emph{\epistemic} semantics for \minimalLogic
is through an idea, due to Scott~\cite{Scott1980}, that each proposition $\ttm$
can be interpreted as an \emph{idempotent operator} $\semr{\ttm}$
and each typable term $\vdash \ptm : \ttm$
as a \emph{fixed point} $\semr{\ptm}$ of the operator $\semr{\ttm}$.
The problem is whether this idea can be extended to second and higher-order logic.

\bibliographystyle{ACM-Reference-Format}
\bibliography{biblio}

@article{Church1940,
  title={A Formulation of the Simple Theory of Types},
  journal={The Journal of Symbolic Logic},
  author={Church, Alonzo},
  volume={8},
  number={2},
  pages={56--68},
  url={https://books.google.com.ar/books?id=hP_sGwAACAAJ},
  year={1940}
}

@phdthesis{GirardThesis,
    Author = {Jean-Yves Girard},
    Title = {Interpr\'etation fonctionnelle et \'elimination des coupures de l'arithm\'etique d'ordre sup\'erieur},
    School = {Universit\'e Paris 7},
    Type = {Ph{D} thesis},
    Year = {1972}
}

@incollection{Gallier1990,
  author    = {Jean H. Gallier},
  title     = {On {G}irard's ``{C}andidats de {R}eductibilit\'e''},
  booktitle = {Logic and Computer Science},
  editor    = {Piergiorgio Odifreddi},
  series    = {APIC Studies in Data Processing},
  volume    = {31},
  pages     = {123--203},
  publisher = {Academic Press},
  year      = {1990},
  address   = {London, UK}
}

@inproceedings{Nipkow1991,
  title={Higher-order critical pairs},
  author={Nipkow, Tobias},
  booktitle={Proceedings 1991 Sixth Annual IEEE Symposium on Logic in Computer Science},
  pages={342--343},
  year={1991},
  organization={IEEE Computer Society}
}

@Book{Barendregt1984,
  author = {Henk Barendregt},
  title = {The Lambda Calculus: Its Syntax and Semantics},
  publisher = {Elsevier},
  year = {1984},
  volume = {103}
}

@article{Barendregt1991,
  author={Barendregt, Henk},
  title={Introduction to generalized type systems},
  journal={Journal of Functional Programming},
  volume={1},
  number={2},
  pages={125--154},
  year={1991}
}

@book{Barendregt2013,
  title     = {Lambda Calculus with Types},
  author    = {Barendregt, Henk and Dekkers, Wil and Statman, Richard},
  year      = {2013},
  series    = {Perspectives in Logic},
  publisher = {Cambridge University Press}
}

@book{Girard2011,
  author    = {Jean-Yves Girard},
  title     = {The Blind Spot: Lectures on Logic},
  publisher = {European Mathematical Society},
  year      = {2011},
  address   = {Z{\"u}rich, Switzerland},
  isbn      = {978-3-03719-088-3},
  doi       = {10.4171/088}
}

@book{Dummett1991,
  author    = {Michael Dummett},
  title     = {The Logical Basis of Metaphysics},
  publisher = {Harvard University Press},
  year      = {1991},
  address   = {Cambridge, MA},
  isbn      = {978-0674537866}
}

@book{TroelstraVanDalen1988,
  author    = {A. S. Troelstra and D. van Dalen},
  title     = {Constructivism in Mathematics: An Introduction},
  volume    = {1},
  series    = {Studies in Logic and the Foundations of Mathematics},
  number    = {121},
  publisher = {North-Holland},
  address   = {Amsterdam},
  year      = {1988},
  isbn      = {978-0444705068}
}

@article{Kleene1945,
  title   = {On the interpretation of intuitionistic number theory},
  author  = {Kleene, Stephen Cole},
  journal = {Journal of Symbolic Logic},
  year    = {1945},
  volume  = {10},
  number  = {4},
  pages   = {109--124},
  doi     = {10.2307/2269016},
  publisher = {Association for Symbolic Logic}
}

@inproceedings{Kreisel1962,
  author    = {Georg Kreisel},
  title     = {Foundations of Intuitionistic Logic},
  booktitle = {Logic, Methodology and Philosophy of Science: Proceedings of the 1960 International Congress},
  editor    = {E. Nagel and P. Suppes and A. Tarski},
  publisher = {Stanford University Press},
  address   = {Stanford, CA},
  pages     = {198--210},
  year      = {1962}
}

@article{Prawitz1971,
  author    = {Dag Prawitz},
  title     = {Ideas and results in proof theory},
  journal   = {Proceedings of the Second Scandinavian Logic Symposium},
  volume    = {63},
  pages     = {235--307},
  year      = {1971},
  publisher = {North-Holland Publishing Company}
}

@book{MartinLof1984,
  author    = {Per Martin-L{\"o}f},
  title     = {Intuitionistic Type Theory},
  series    = {Studies in Proof Theory},
  volume    = {1},
  year      = {1984},
  publisher = {Bibliopolis, Naples},
  note      = {Notes by Giovanni Sambin of a series of lectures given in Padua, June 1980}
}

@article{Girard2001,
  author    = {Jean-Yves Girard},
  title     = {Locus Solum: From linear logic to ludics},
  journal   = {Mathematical Structures in Computer Science},
  volume    = {11},
  number    = {3},
  pages     = {301--506},
  year      = {2001},
  publisher = {Cambridge University Press}
}

@inproceedings{Reynolds1974,
  title={Towards a theory of type structure},
  author={Reynolds, John C},
  booktitle={Programming Symposium},
  pages={408--425},
  year={1974},
  organization={Springer}
}

@book{BaaderNipkow,
  title={Term Rewriting and All That},
  author={Baader, Franz and Nipkow, Tobias},
  isbn={9780521779203},
  lccn={97028286},
  url={https://books.google.com.ar/books?id=N7BvXVUCQk8C},
  year={1999},
  publisher={Cambridge University Press}
}

@book{Terese,
  author    = {Terese},
  title     = {Term Rewriting Systems},
  publisher = {Cambridge University Press},
  year      = 2003,
  volume    = 55,
  series    = {Cambridge Tracts in Theoretical Computer Science},
}

@inproceedings{Reynolds1983,
  author    = {Reynolds, John C.},
  title     = {Types, Abstraction and Parametric Polymorphism},
  booktitle = {Information Processing 83},
  pages     = {513--523},
  year      = {1983},
  publisher = {North-Holland}
}

@book{Sorensen2006,
  title={Lectures on the Curry-Howard isomorphism},
  author={S{\o}rensen, Morten Heine and Urzyczyn, Pawel},
  volume={149},
  year={2006},
  publisher={Elsevier}
}

@book{Girard1989,
  title     = {Proofs and Types},
  author    = {Girard, Jean-Yves and Lafont, Yves and Taylor, Paul},
  year      = {1989},
  publisher = {Cambridge University Press},
  series    = {Cambridge Tracts in Theoretical Computer Science},
  volume    = {7},
  address   = {Cambridge, UK},
  isbn      = {0-521-37181-3}
}

@article{Bohm85,
  author    = {Corrado B{\"{o}}hm and
               Alessandro Berarducci},
  title     = {Automatic Synthesis of Typed Lambda-Programs on Term Algebras},
  journal   = {Theor. Comput. Sci.},
  volume    = {39},
  pages     = {135--154},
  year      = {1985},
}

@inproceedings{Gardner94,
  title={Discovering needed reductions using type theory},
  author={Gardner, Philippa},
  booktitle={Theoretical Aspects of Computer Software},
  pages={555--574},
  year={1994},
  organization={Springer}
}

@phdthesis{deCarvalho2007,
  title={S{\'e}mantiques de la logique lin{\'e}aire et temps de calcul},
  author={Carvalho, Daniel de},
  year={2007},
  school={Ecole Doctorale Physique et Sciences de la Mati{\`e}re (Marseille)}
}

@article{Bucciarelli17,
  author       = {Antonio Bucciarelli and
                  Delia Kesner and
                  Daniel Ventura},
  title        = {Non-idempotent intersection types for the Lambda-Calculus},
  journal      = {Log. J. {IGPL}},
  volume       = {25},
  number       = {4},
  pages        = {431--464},
  year         = {2017},
}

@inproceedings{Kesner16,
  author       = {Delia Kesner},
  editor       = {Bart Jacobs and
                  Christof L{\"{o}}ding},
  title        = {Reasoning About Call-by-need by Means of Types},
  booktitle    = {Foundations of Software Science and Computation Structures - 19th
                  International Conference, {FOSSACS} 2016, Held as Part of the European
                  Joint Conferences on Theory and Practice of Software, {ETAPS} 2016,
                  Eindhoven, The Netherlands, April 2-8, 2016, Proceedings},
  series       = {Lecture Notes in Computer Science},
  volume       = {9634},
  pages        = {424--441},
  publisher    = {Springer},
  year         = {2016},
  url          = {https://doi.org/10.1007/978-3-662-49630-5\_25},
  doi          = {10.1007/978-3-662-49630-5\_25},
}

@article{Coppo80,
  author       = {Mario Coppo and
                  Mariangiola Dezani{-}Ciancaglini},
  title        = {An extension of the basic functionality theory for the {\(\lambda\)}-calculus},
  journal      = {Notre Dame J. Formal Log.},
  volume       = {21},
  number       = {4},
  pages        = {685--693},
  year         = {1980},
}

@inproceedings{Pautasso2023,
  author =  {Pautasso, Daniele and Ronchi Della Rocca, Simona},
  title =   {{A Quantitative Version of Simple Types}},
  booktitle =   {8th International Conference on Formal Structures for Computation and Deduction (FSCD 2023)},
  pages =   {29:1--29:21},
  series =  {Leibniz International Proceedings in Informatics (LIPIcs)},
  ISBN =    {978-3-95977-277-8},
  ISSN =    {1868-8969},
  year =    {2023},
  volume =  {260},
  editor =  {Gaboardi, Marco and van Raamsdonk, Femke},
  publisher =   {Schloss Dagstuhl -- Leibniz-Zentrum f{\"u}r Informatik},
  address = {Dagstuhl, Germany},
  URL =     {https://drops.dagstuhl.de/entities/document/10.4230/LIPIcs.FSCD.2023.29},
  URN =     {urn:nbn:de:0030-drops-180137},
  doi =     {10.4230/LIPIcs.FSCD.2023.29}
}

@inproceedings{Awodey2018,
  title={Impredicative encodings of (higher) inductive types},
  author={Awodey, Steve and Frey, Jonas and Speight, Sam},
  booktitle={Proceedings of the 33rd Annual ACM/IEEE Symposium on Logic in Computer Science},
  pages={76--85},
  year={2018}
}

@inproceedings{Geuvers2025,
  title={Impredicative Encodings of Inductive and Coinductive Types},
  author={Bronsveld, Steven and Geuvers, Herman and van der Weide, Niels},
  booktitle={10th International Conference on Formal Structures for Computation and Deduction},
  year={2025}
}

@inproceedings{Faggian2002,
  title={Travelling on designs: ludics dynamics},
  author={Faggian, Claudia},
  booktitle={International Workshop on Computer Science Logic},
  pages={427--441},
  year={2002},
  organization={Springer}
}

@article{Faggian2011,
  title={Ludics with repetitions (exponentials, interactive types and completeness)},
  author={Faggian, Claudia and Basaldella, Michele},
  journal={Logical Methods in Computer Science},
  volume={7},
  year={2011},
  publisher={Episciences. org}
}

@article{Terui2011,
  title={Computational ludics},
  author={Terui, Kazushige},
  journal={Theoretical Computer Science},
  volume={412},
  number={20},
  pages={2048--2071},
  year={2011},
  publisher={Elsevier}
}

@article{Zeilberger2008,
  title={On the unity of duality},
  author={Zeilberger, Noam},
  journal={Annals of pure and applied logic},
  volume={153},
  number={1-3},
  pages={66--96},
  year={2008},
  publisher={Elsevier}
}

@article{Davies2001,
  title={A modal analysis of staged computation},
  author={Davies, Rowan and Pfenning, Frank},
  journal={Journal of the ACM (JACM)},
  volume={48},
  number={3},
  pages={555--604},
  year={2001},
  publisher={ACM New York, NY, USA}
}

@article{Pfenning2001,
  title={A judgmental reconstruction of modal logic},
  author={Pfenning, Frank and Davies, Rowan},
  journal={Mathematical structures in computer science},
  volume={11},
  number={4},
  pages={511--540},
  year={2001},
  publisher={Cambridge University Press}
}

@inproceedings{Miquey2018,
  title={Realizability Interpretation and Normalization of Typed Call-by-Need$\backslash$lambda-calculus with Control.},
  author={Miquey, {\'E}tienne and Herbelin, Hugo},
  booktitle={FoSSaCS},
  pages={276--292},
  year={2018}
}

@article{Krivine2009,
  title={Realizability in classical logic},
  author={Krivine, Jean-Louis},
  journal={Panoramas et synth{\`e}ses},
  volume={27},
  pages={197--229},
  year={2009}
}

@article{Miquel2011,
  title={Existential witness extraction in classical realizability and via a negative translation},
  author={Miquel, Alexandre},
  journal={Logical Methods in Computer Science},
  volume={7},
  year={2011},
  publisher={Episciences. org}
}

@article{Guillermo2016,
  title={Specifying Peirce's law in classical realizability},
  author={Guillermo, Mauricio and Miquel, Alexandre},
  journal={Mathematical Structures in Computer Science},
  volume={26},
  number={7},
  pages={1269--1303},
  year={2016},
  publisher={Cambridge University Press}
}

@inproceedings{Dinis2021,
  title={Realizability with Stateful Computations for Nonstandard Analysis},
  author={Bruno Miguel Antunes Dinis and {\'E}tienne Miquey},
  booktitle={Annual Conference for Computer Science Logic},
  year={2021},
  url={https://api.semanticscholar.org/CorpusID:228902453}
}

@inproceedings{Siek2006,
  author    = {Siek, Jeremy G. and Taha, Walid},
  title     = {Gradual Typing for Functional Languages},
  booktitle = {Proceedings of the 2006 Scheme and Functional Programming Workshop},
  pages     = {81--92},
  year      = {2006},
  url       = {http://scheme2006.cs.uchicago.edu/13-siek.pdf}
}

@inproceedings{Fontanella2024,
  author =  {Fontanella, Laura and Geoffroy, Guillaume and Matthews, Richard},
  title =   {{Realizability Models for Large Cardinals}},
  booktitle =   {32nd EACSL Annual Conference on Computer Science Logic (CSL 2024)},
  pages =   {28:1--28:18},
  series =  {Leibniz International Proceedings in Informatics (LIPIcs)},
  ISBN =    {978-3-95977-310-2},
  ISSN =    {1868-8969},
  year =    {2024},
  volume =  {288},
  editor =  {Murano, Aniello and Silva, Alexandra},
  publisher =   {Schloss Dagstuhl -- Leibniz-Zentrum f{\"u}r Informatik},
  address = {Dagstuhl, Germany},
  URL =     {https://drops.dagstuhl.de/entities/document/10.4230/LIPIcs.CSL.2024.28},
  doi =     {10.4230/LIPIcs.CSL.2024.28}
}

@article{YesseninVolpin1975,
	author = {A. S. Yessenin{-}Volpin},
	doi = {10.2307/2272294},
	journal = {Journal of Symbolic Logic},
	number = {1},
	pages = {95--97},
	publisher = {Association for Symbolic Logic},
	title = {The Ultra-Intuitionistic Criticism and the Antitraditional Program for Foundations of Mathematics},
	volume = {40},
	year = {1975}
}

@inproceedings{Sazonov1995,
  author="Sazonov, Vladimir Yu.",
  editor="Leivant, Daniel",
  title="On feasible numbers",
  booktitle="Logic and Computational Complexity",
  year="1995",
  publisher="Springer Berlin Heidelberg",
  address="Berlin, Heidelberg",
  pages="30--51"
}

@book{Nelson1986,
	address = {Princeton, N.J.},
	author = {Edward Nelson},
	editor = {},
	publisher = {Princeton University Press},
	title = {Predicative Arithmetic},
	year = {1986}
}

@article{Parikh1971,
  title={Existence and feasibility in arithmetic},
  author={Rohit Parikh},
  journal={Journal of Symbolic Logic},
  year={1971},
  volume={36},
  pages={494 - 508},
  url={https://api.semanticscholar.org/CorpusID:6467866}
}

@InProceedings{Gajda2023,
  author =	{Gajda, Micha{\l} J.},
  title =	{{Consistent Ultrafinitist Logic}},
  booktitle =	{29th International Conference on Types for Proofs and Programs (TYPES 2023)},
  pages =	{5:1--5:20},
  series =	{Leibniz International Proceedings in Informatics (LIPIcs)},
  ISBN =	{978-3-95977-332-4},
  ISSN =	{1868-8969},
  year =	{2024},
  volume =	{303},
  editor =	{Kesner, Delia and Reyes, Eduardo Hermo and van den Berg, Benno},
  publisher =	{Schloss Dagstuhl -- Leibniz-Zentrum f{\"u}r Informatik},
  address =	{Dagstuhl, Germany},
  URL =		{https://drops.dagstuhl.de/entities/document/10.4230/LIPIcs.TYPES.2023.5},
  URN =		{urn:nbn:de:0030-drops-204833},
  doi =		{10.4230/LIPIcs.TYPES.2023.5}
}

@misc{Hamkins2025,
  title={A potentialist conception of ultrafinitism}, 
  author={Joel David Hamkins},
  year={2025},
  eprint={2512.06564},
  archivePrefix={arXiv},
  primaryClass={math.LO},
  url={https://arxiv.org/abs/2512.06564}, 
}

@misc{Mannucci2023,
      title={Model Theory of Ultrafinitism II: Deconstructing the Term Model (First Draft)}, 
      author={Mirco A. Mannucci},
      year={2023},
      eprint={2311.17931},
      archivePrefix={arXiv},
      primaryClass={math.LO},
      url={https://arxiv.org/abs/2311.17931}, 
}

@inproceedings{Parigot1992,
  author="Parigot, Michel",
  editor="Voronkov, Andrei",
  title="$\lambda$$\mu$-Calculus: An algorithmic interpretation of classical natural deduction",
  booktitle="Logic Programming and Automated Reasoning",
  year="1992",
  publisher="Springer Berlin Heidelberg",
  address="Berlin, Heidelberg",
  pages="190--201",
  isbn="978-3-540-47279-7"
}

@misc{Curien2000,
  author = {Pierre-Louis Curien and Hugo Herbelin},
  title = {The Duality of Computation},
  year = {2000}
}

@incollection{Scott1980,
  author    = {Scott, Dana S.},
  title     = {Relating Theories of the Lambda-calculus},
  booktitle = {To H. B. Curry: Essays on Combinatory Logic, Lambda Calculus and Formalism},
  editor    = {Hindley, J. Roger and Seldin, Jonathan P.},
  publisher = {Academic Press},
  year      = {1980},
  pages     = {403--450},
  address   = {London}
}

@inproceedings{Zeilberger2004,
  title={``{R}eal'' {A}nalysis is a {D}egenerate {C}ase of {D}iscrete {A}nalysis},
  author={Zeilberger, Doron},
  booktitle={New Progress in Difference Equations (Proceedings of the ICDEA 2001)},
  editor={Aulbach, Bernd and Elaydi, Saber and Ladas, Gerry},
  year={2004},
  publisher={Taylor \& Francis},
  address={London},
  pages={1--4}
}

@unpublished{NelsonIntuitionism,
  author       = {Edward Nelson},
  title        = {Understanding Intuitionism},
  institution  = {Princeton University, Department of Mathematics},
  howpublished = {Unpublished manuscript},
  year         = {n.d.},
  note         = {Available at \url{https://web.math.princeton.edu/~nelson/papers/int.pdf}},
  url          = {https://web.math.princeton.edu/~nelson/papers/int.pdf}
}

\newpage
\appendix

\section{Minimal Logic --- Properties of $\lambdaCheck$}

\subsection{Minimal Logic: Determinism of Weak Head Reduction (\cref{prop:determinism})}
  \label{a:sec:determinism}
  
To show that weak head reduction in $\lambdaCheck$ is deterministic,
\ie that if $\utmtwo \tow \utmtwo_1$ and $\utmtwo \tow \utmtwo_2$
then $\utmtwo_1 = \utmtwo_2$,
it suffices to prove the following lemma:

\begin{lemma}[Uniqueness of the weak head redex]
\label{lem:uniqueness_of_weak_head_redex}
Suppose that $\ctxof{\wctx_1}{\utm_1} = \ctxof{\wctx_2}{\utm_2}$
where $\utm_1$ and $\utm_2$ are redexes.
Then $\wctx_1 = \wctx_2$ and $\utm_1 = \utm_2$.
\end{lemma}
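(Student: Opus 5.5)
I will prove the statement by structural induction on $\wctx_1$ (in effect, on the size of the common \metaterm $\ctxof{\wctx_1}{\utm_1}$), with a case analysis on the shape of $\wctx_2$. First I fix the three kinds of redex in $\lambdaCheck$: $(\lam{\var}{\utm})\,\utmtwo$, $\eunit{\iunit}{\utm}$, and $\verifteig{\genteig}$. I also record the elementary observation used throughout: if $\wctx \neq \ctxhole$, then $\ctxof{\wctx}{\utmthree}$ has the same outermost constructor as $\wctx$. That constructor is an application, a \unitEliminator, or a verifier, according as $\wctx$ is $\wctx'\,\utm$, $\eunit{\wctx'}{\utm}$, or $\verifteig{\wctx'}$.

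\Case{$\wctx_1 = \wctx_2 = \ctxhole$.} Then $\utm_1 = \utm_2$ immediately.

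\Case{exactly one context is the hole,} say $\wctx_1 = \ctxhole$ and $\wctx_2 \neq \ctxhole$, so that $\utm_1 = \ctxof{\wctx_2}{\utm_2}$ is itself a redex. I show this is impossible by matching outermost constructors.
\begin{itemize}
\item If $\utm_1 = (\lam{\var}{\utm})\,\utmtwo$, then $\wctx_2 = \wctx'\,\utmtwo$ with $\ctxof{\wctx'}{\utm_2} = \lam{\var}{\utm}$. This forces $\wctx' = \ctxhole$, because no nonempty weak head context builds an abstraction. Hence $\utm_2 = \lam{\var}{\utm}$, which is not a redex, a contradiction.
\item If $\utm_1 = \eunit{\iunit}{\utm}$, then $\wctx_2 = \eunit{\wctx'}{\utm}$ with $\ctxof{\wctx'}{\utm_2} = \iunit$. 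This again forces $\wctx' = \ctxhole$ and $\utm_2 = \iunit$, which is not a redex.
\item If $\utm_1 = \verifteig{\genteig}$, then $\wctx_2 = \verifteig{\wctx'}$ with $\ctxof{\wctx'}{\utm_2} = \genteig$, so $\utm_2 = \genteig$, which is not a redex.
\end{itemize}
The symmetric subcase, $\wctx_2 = \ctxhole$ and $\wctx_1 \neq \ctxhole$, is identical.

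\Case{both contexts are nonempty.} By the observation, $\wctx_1$ and $\wctx_2$ have the same outermost production, and the non-hole components must coincide. If both are applications, $\wctx_1 = \wctx_1'\,\utm$ and $\wctx_2 = \wctx_2'\,\utm$ with $\ctxof{\wctx_1'}{\utm_1} = \ctxof{\wctx_2'}{\utm_2}$. The \ih then gives $\wctx_1' = \wctx_2'$ and $\utm_1 = \utm_2$. Guards and verifiers are handled the same way; in the verifier case, the \atomicProposition indices must also be equal.

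There is essentially no obstacle here. The only point requiring care is the mixed case: I must check that the immediate subterm sitting in the weak head position of each redex is a value-like form ($\lambda$-abstraction, $\iunit$, or $\genteig$) and therefore can neither be a redex nor contain one in weak head position. This holds because weak head contexts never descend under abstractions and $\iunit$ and $\genteig$ are constants.
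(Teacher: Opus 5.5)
Your proof is correct and follows the paper's argument. Like the paper, you induct on $\wctx_1$, rule out the mixed case where exactly one context is $\ctxhole$, and apply the induction hypothesis when both contexts share the same outermost frame. The only difference is that you spell out the mixed-case analysis that the paper leaves implicit: the shape of the root redex forces the inner context to be empty and the plugged term to be an abstraction, $\iunit$, or $\genteig$.
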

\begin{proof}
We proceed by induction on $\wctx_1$:
\begin{enumerate}
\item
  $\wctx_1 = \ctxhole$: then $\utm_1 = \ctxof{\wctx_2}{\utm_2}$.
  We claim that $\wctx_2 = \ctxhole$.
  Indeed, if $\wctx_2 \neq \ctxhole$
  then $\ctxof{\wctx_2}{\utm_2}$ does not contain a redex at the root,
  as can be seen by case analysis on the shape of the weak head
  context $\wctx_2$ and the redex $\utm_2$.
  Hence $\wctx_1 = \wctx_2 = \ctxhole$ and $\utm_1 = \utm_2$.
\item
  $\wctx_1 = \wctx'_1\,\utmtwo$: note that $\wctx_2$ cannot be empty,
  because this would imply that $\wctx_1$ is empty,
  using a symmetrical argument as in the base case.
  Hence $\wctx_1$ and $\wctx_2$ are both non-empty.
  Since $\ctxof{\wctx_1}{\utm_1} = \ctxof{\wctx_2}{\utm_2}$ by hypothesis,
  then $\wctx_2$ must be of the form $\wctx_2 = \wctx'_2\,\utmtwo$
  and $\ctxof{\wctx'_1}{\utm_1} = \ctxof{\wctx'_2}{\utm_2}$.
  By \ih, $\wctx'_1 = \wctx'_2$ and $\utm_1 = \utm_2$.
  To conclude, note that
  $\wctx_1 = \wctx'_1\,\utmtwo = \wctx'_2\,\utmtwo = \wctx_2$.
\item
  $\wctx_1 = \eunit{\wctx'_1}{\utmtwo}$:
  similar to the previous case.
\item
  $\wctx_1 = \verifteig{\wctx'_1}$:
  similar to the previous case.
\end{enumerate}
\end{proof}

\subsection{Minimal Logic: Universality (\cref{lem:universality})}
  \label{a:sec:universality}
  
To prove universality, we start by showing some auxiliary lemmas.

\begin{lemma}[Splitting]
\label{lem:splitting}
The following are equivalent:
\begin{enumerate}
\item
  $\ctxof{\wctx}{\eunit{\utm}{\utmtwo}} \tow^n \iunit$
\item
  There exist $n_1,n_2 \geq 0$ such that
  $\utm \tow^{n_1} \iunit$
  and $\ctxof{\wctx}{\utmtwo} \tow^{n_2} \iunit$
  and $n = 1 + n_1 + n_2$.
\end{enumerate}
\end{lemma}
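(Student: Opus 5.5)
We prove the two implications separately, using that weak head reduction is deterministic (\cref{prop:determinism}) and that every weak head step happens inside a single weak head context.

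For (2)$\Rightarrow$(1) we simply build the reduction. From $\utm \tow^{n_1} \iunit$ we get $\ctxof{\wctx}{\eunit{\utm}{\utmtwo}} \tow^{n_1} \ctxof{\wctx}{\eunit{\iunit}{\utmtwo}}$. This holds because the composition of weak head contexts $\ctxof{\wctx}{\eunit{\ctxhole}{\utmtwo}}$ is again a weak head context, and weak head reduction is closed under weak head contexts. One further root step $\eunit{\iunit}{\utmtwo} \to \utmtwo$, performed under $\wctx$, gives $\ctxof{\wctx}{\utmtwo}$. Then $n_2$ more steps reach $\iunit$, for a total of $n = 1 + n_1 + n_2$ steps.

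For (1)$\Rightarrow$(2) we proceed by induction on $n$, with a case analysis on the first weak head step of $\ctxof{\wctx}{\eunit{\utm}{\utmtwo}}$. The term is not $\iunit$, so $n \geq 1$. By uniqueness of the weak head redex (\cref{lem:uniqueness_of_weak_head_redex}), the redex is located inside the weak head context. We first check that it cannot lie strictly inside $\wctx$ at a position above the hole, i.e.\ overlapping the guard. The only candidate redexes there would be a $\beta$-redex $(\ldots)\,\utmthree$ whose function part is the guard, or a $\verifteig{\cdot}$ redex whose argument is the guard. Neither is possible, because a guard is neither an abstraction nor $\genteig$. Hence the first step contracts a redex whose position is the root of the guard or lies in $\utm$; every other position is either above the guard, which we just excluded, or not reachable by a weak head context.

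\emph{Case 1: the redex is at the root of the guard.} Then $\utm = \iunit$, the step produces $\ctxof{\wctx}{\utmtwo} \tow^{n-1} \iunit$, and we take $n_1 = 0$ and $n_2 = n-1$.

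\emph{Case 2: the redex is in $\utm$.} Then $\utm \tow \utm'$, the term steps to $\ctxof{\wctx}{\eunit{\utm'}{\utmtwo}} \tow^{n-1} \iunit$, and the induction hypothesis gives $n_1', n_2$ with $n-1 = 1 + n_1' + n_2$. We take $n_1 = n_1' + 1$.

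Determinism guarantees that the given $n$-step reduction is exactly the one analysed, so the counts match. The main obstacle is the overlap argument above: showing that no weak head redex straddles the guard. It amounts to a short case inspection of the three rules against the weak head context grammar.
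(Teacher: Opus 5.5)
Your proof is correct and follows the paper's argument: (2)$\Rightarrow$(1) by composing the reductions under the weak head context $\ctxof{\wctx}{\eunit{\ctxhole}{\utmtwo}}$, and (1)$\Rightarrow$(2) by induction on $n$ with the same two cases for the first step (either the guard itself is contracted, forcing $\utm = \iunit$, or a step occurs inside $\utm$). Your extra check that no redex sits above the guard is left implicit in the paper. To be complete it should also exclude an enclosing guard redex and positions higher on the spine of $\wctx$; these fail for the same reason, since every $\ctxof{\wctx'}{\eunit{\utm}{\utmtwo}}$ is a guard, application, or verifier, hence not an abstraction, $\iunit$, or $\genteig$. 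The appeal to determinism is harmless but unnecessary, since you only analyse the given first step.
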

\begin{proof}
($1 \implies 2$)
  Suppose that $\ctxof{\wctx}{\eunit{\utm}{\utmtwo}} \tow^n \iunit$.
  We proceed by induction on $n$.
  The reduction cannot be empty, so $n > 0$, \ie $n = 1 + m$ for some $m \geq 0$,
  and the reduction must be of the form
  $\ctxof{\wctx}{\eunit{\utm}{\utmtwo}} \tow \utmthree \tow^m \iunit$.
  We consider two subcases, depending on whether
  the redex contracted in the first step is $\eunit{\utm}{\utmtwo}$
  or internal to $\utm$:
  \begin{enumerate}
  \item
    If the contracted redex is $\eunit{\utm}{\utmtwo}$,
    then $\utm = \iunit$,
    and the reduction is of the form
    $\ctxof{\wctx}{\eunit{\iunit}{\utmtwo}}
     \tow \ctxof{\wctx}{\utmtwo}
     \tow^m \iunit$.
    Hence $\utm = \iunit \tow^0 \iunit$
    and $\ctxof{\wctx}{\utmtwo} \tow^m \iunit$,
    and taking $n_1 := 0$ and $n_2 := m$ we have that
    $n = 1 + m = 1 + n_1 + n_2$, as required.
  \item
    If the contracted redex is internal to $\utm$,
    then the reduction is of the form
    $\ctxof{\wctx}{\eunit{\utm}{\utmtwo}}
     \tow \ctxof{\wctx}{\eunit{\utm'}{\utmtwo}}
     \tow^m \iunit$
    where $\utm \tow \utm'$.
    By \ih there exist $m_1,m_2$
    such that $\utm' \tow^{m_1} \iunit$
    and $\ctxof{\wctx}{\utmtwo} \tow^{m_2} \iunit$
    and $m = 1 + m_1 + m_2$.
    Taking $n_1 := 1 + m_1$ and $n_2 := m_2$,
    we have that
    $\utm \tow \utm' \tows^{m_1} \iunit$
    where $n = 1 + m = 2 + m_1 + m_2 = 1 + n_1 + n_2$.
  \end{enumerate}
($2 \implies 1$)
  Suppose that $\utm \tow^{n_1} \iunit$ and $\ctxof{\wctx}{\utmtwo} \tow^{n_2} \iunit$.
  Then we have that
  $\ctxof{\wctx}{\eunit{\utm}{\utmtwo}}
  \tow^{n_1} \ctxof{\wctx}{\eunit{\iunit}{\utmtwo}}
  \tow \ctxof{\wctx}{\utmtwo}
  \tow^{n_2} \iunit$
  of length $n = 1 + n_1 + n_2$.
\end{proof}

Recall that a \defn{redex} is a reducible expression.
In this specific setting, a \metaterm $\utm$ is a redex
if it matches the left-hand side of one of the rewriting
rules in $\lambdaCheck$. Then:

\begin{lemma}[Substitution commutes with redexes]
\label{lem:subst_commute_redex}
\label{lem:tow_deterministic}
If $\utmfive \to \utmsix$ is a reduction step at the root
in $\lambdaCheck$,
then $\subs{\utmfive}{\subst} \to \subs{\utmsix}{\subst}$
is a reduction step at the root in $\lambdaCheck$. 
\end{lemma}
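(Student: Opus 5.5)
I will argue by case analysis on which of the three rewriting rules of $\lambdaCheck$ the root step $\utmfive \to \utmsix$ is an instance of. In each case I will show that $\subs{\utmfive}{\subst}$ has the same outermost shape as $\utmfive$. Hence it is again a redex for the same rule, and contracting it yields exactly $\subs{\utmsix}{\subst}$. The constants $\iunit$, $\genteig$ and the verifier constructor $\verifteig{\cdot}$ contain no variables, so substitution fixes them and only acts homomorphically on subterms.

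\Case{$\beta$.} Here $\utmfive = (\lam{\var}{\utm})\,\utmtwo$ and $\utmsix = \utm\sub{\var}{\utmtwo}$. By $\alpha$-conversion I first choose $\var$ safe for $\subst$, \ie $\var \notin \dom{\subst}$ and $\var \notin \fv{\subst(\vartwo)}$ for every $\vartwo$. Then
$\subs{\utmfive}{\subst} = (\lam{\var}{\subs{\utm}{\subst}})\,\subs{\utmtwo}{\subst}$, which is a $\beta$-redex at the root. It contracts to $\subs{\utm}{\subst}\sub{\var}{\subs{\utmtwo}{\subst}}$. By the substitution lemma recalled in \cref{sec:preliminaries}, this equals $\subs{(\utm\sub{\var}{\utmtwo})}{\subst} = \subs{\utmsix}{\subst}$.

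\Case{guard.} Here $\utmfive = \eunit{\iunit}{\utm}$ and $\utmsix = \utm$. Then $\subs{\utmfive}{\subst} = \eunit{\iunit}{\subs{\utm}{\subst}}$, which steps at the root to $\subs{\utm}{\subst} = \subs{\utmsix}{\subst}$.

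\Case{verifier.} Here $\utmfive = \verifteig{\genteig}$ and $\utmsix = \iunit$. Both are closed, so $\subs{\utmfive}{\subst} = \utmfive \to \iunit = \subs{\utmsix}{\subst}$.

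The only step requiring care is the $\beta$ case: the bound variable must be chosen fresh for $\subst$ so that the standard commutation of substitutions applies. Since we work modulo $\alpha$-renaming, this is routine, so I expect no genuine obstacle.
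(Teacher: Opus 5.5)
Your proof is correct and takes the same approach as the paper, which simply proceeds by case analysis on the shape of the root redex. In the $\beta$ case you choose the bound variable safe for $\sigma$ and then apply the substitution-composition property from the preliminaries; this supplies exactly the detail the paper leaves implicit.
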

\begin{proof}
Straightforward, by case analysis on the possible shapes of the redex $\utmfive$.
\end{proof}

\begin{lemma}[Substitution commutes with weak head contexts]
\label{lem:subst_commute_wctx}
\quad
\begin{enumerate}
\item
  If $\wctx$ is a weak head context,
  then $\subs{\wctx}{\subst}$ is a weak head context. 
\item
  $\subs{\ctxof{\wctx}{\utm}}{\subst}
  = \ctxof{\subs{\wctx}{\subst}}{\subs{\utm}{\subst}}$
\end{enumerate}
\end{lemma}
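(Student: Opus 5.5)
Both items are proved by structural induction on the weak head context $\wctx$, following the grammar $\wctx ::= \ctxhole \mid \wctx\,\utm \mid \eunit{\wctx}{\utm} \mid \verifteig{\wctx}$. Throughout, we treat the hole $\ctxhole$ as a distinguished variable outside $\dom{\subst}$, so that $\subs{\ctxhole}{\subst} = \ctxhole$. By $\alpha$-renaming we may also assume that $\subst$ is safe for the (finitely many) bound variables of $\wctx$. This assumption is in fact vacuous here: weak head contexts never place the hole under a binder, since the grammar has no abstraction production. Hence plugging into $\wctx$ causes no variable capture, and the standard capture-avoiding substitution facts from \cref{sec:preliminaries} apply without side conditions.

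For item~1 we do a case analysis on $\wctx$. If $\wctx = \ctxhole$, then $\subs{\wctx}{\subst} = \ctxhole$, which is a weak head context. If $\wctx = \wctx'\,\utm$, then $\subs{\wctx}{\subst} = \subs{\wctx'}{\subst}\,\subs{\utm}{\subst}$. The left component is a weak head context by the \ih, and $\subs{\utm}{\subst}$ is an arbitrary \metaterm, so the result matches the production $\wctx\,\utm$. The case $\eunit{\wctx'}{\utm}$ is identical. For $\verifteig{\wctx'}$, substitution does not affect the \atomicProposition $\teig$: substitutions act only on term variables, and $\teig$ is a constant of the syntax. So $\subs{\wctx}{\subst} = \verifteig{\subs{\wctx'}{\subst}}$, which is a weak head context by the \ih.

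For item~2 we use the same induction. In the base case both sides equal $\subs{\utm}{\subst}$. In the case $\wctx = \wctx'\,\utmtwo$, we compute
$\subs{\ctxof{\wctx}{\utm}}{\subst}
 = \subs{(\ctxof{\wctx'}{\utm}\,\utmtwo)}{\subst}
 = \subs{\ctxof{\wctx'}{\utm}}{\subst}\,\subs{\utmtwo}{\subst}$.
By the \ih this equals $\ctxof{\subs{\wctx'}{\subst}}{\subs{\utm}{\subst}}\,\subs{\utmtwo}{\subst}
 = \ctxof{\subs{\wctx}{\subst}}{\subs{\utm}{\subst}}$. The guard and verifier cases are analogous, pushing the substitution through the constructor and applying the \ih.

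The only potentially delicate point is capture: plugging into a context may in general capture variables, which would break the commutation with substitution. We dispose of it by the observation above that weak head contexts contain no binder above the hole. Everything else is routine.
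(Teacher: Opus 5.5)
Your proof is correct and matches the paper's: both items go by induction on $\wctx$, and the key observation is that the hole of a weak head context never lies under an abstraction, so plugging commutes with capture-avoiding substitution. One small precision: your safety assumption is not literally vacuous, since side metaterms such as the $\utm$ in $\wctx\,\utm$ may contain binders. Those binders are handled by ordinary capture-avoiding substitution, and what the argument actually needs is only that none of them sits above the hole, which you correctly identify.
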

\begin{proof}
Both items are straightforward by induction on $\wctx$.
The key remark is that the hole of a weak head context does not lie
inside an abstraction, so in the expression
$\ctxof{\subs{\wctx}{\subst}}{\subs{\utm}{\subst}}$ there is no
danger of variable capture.
\end{proof}

\begin{lemma}[Origin of weak head redexes]
\label{lem:subst_weak_head_redex}
If $\subs{\utm}{\penv}$ is of the form $\ctxof{\wctx}{\utmfive}$,
where $\utmfive$ is a redex, then exactly one of the following holds:
\begin{enumerate}
\item
  There exist $\wctx_0,\utmfive_0$ such that
  $\utm = \ctxof{\wctx_0}{\utmfive_0}$ 
  where $\subs{\wctx_0}{\penv} = \wctx$
  and $\subs{\utmfive_0}{\penv} = \utmfive$
  and $\utmfive_0$ is a redex.
\item
  There exist $\wctx_0,\var,\utmtwo,\vartwo,\utmthree$ such that
  $\utm = \ctxof{\wctx_0}{\var\,\utmtwo}$
  where $\subs{\wctx_0}{\penv} = \wctx$
  and $\subs{\var}{\penv} = \lam{\vartwo}{\utmthree}$.
\item
  There exist $\wctx_0,\var,\utmtwo$ such that
  $\utm = \ctxof{\wctx_0}{\eunit{\var}{\utmtwo}}$
  where $\subs{\wctx_0}{\penv} = \wctx$
  and $\subs{\var}{\penv} = \iunit$.
\item
  There exist $\wctx_0,\var$ such that
  $\utm = \ctxof{\wctx_0}{\verifteig{\var}}$
  where $\subs{\wctx_0}{\penv} = \wctx$
  and $\subs{\var}{\penv} = \genteig$.
\end{enumerate}
\end{lemma}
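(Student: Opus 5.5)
The plan is to argue by structural induction on the weak head context $\wctx$, simultaneously tracking the corresponding position inside $\utm$. Note that $\subs{\utm}{\penv}$ substitutes only variables, and each $\gensubst{\penv}(\var)$ is either a variable or a generator $\gen{\ttm}$. Since generators for arrow types are \emph{defined} metaterms, $\gen{\ttm\arrow\ttmtwo}$ is literally an abstraction $\lam{\vartwo}{\eunit{\verif{\ttm}{\vartwo}}{\gen{\ttmtwo}}}$, while $\gen{\teig}=\genteig$ is atomic. So the image of a variable is one of three things: a variable, $\genteig$, or an abstraction. In particular, a variable is never mapped to $\iunit$, so case~3 of the statement is in fact vacuous, and I will note this explicitly.

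\textbf{Inductive step.} Suppose $\wctx$ is non-empty. Then the root of $\subs{\utm}{\penv}$ is an application, a guard, or a verifier $\verifteig{\cdot}$. The root of $\utm$ cannot be a variable, because the image of a variable is a variable, $\genteig$, or an abstraction, none of which has such a root. Hence $\utm$ has the same root constructor.

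\begin{itemize}
\item If $\wctx=\wctx'\,\utmthree$, write $\utm=\utm_1\,\utm_2$ with $\subs{\utm_1}{\penv}=\ctxof{\wctx'}{\utmfive}$ and $\subs{\utm_2}{\penv}=\utmthree$. Apply the induction hypothesis to $\utm_1$ and $\wctx'$.
\item The other shapes, $\eunit{\wctx'}{\utmthree}$ and $\verifteig{\wctx'}$, are handled the same way.
\end{itemize}

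In every case I take $\wctx_0$ to be the context obtained by extending the one returned by the induction hypothesis, so that $\subs{\wctx_0}{\penv}=\wctx$ by \cref{lem:subst_commute_wctx}. I do not expect issues with bound variables here, since the hole of a weak head context is never under a binder.

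\textbf{Base case} $\wctx=\ctxhole$, so $\subs{\utm}{\penv}=\utmfive$ is a redex. I split on the three redex shapes.

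\begin{itemize}
\item \textbf{$\beta$-redex} $(\lam{\vartwo}{\utmthree})\,\utmtwo'$. Then $\utm=\utm_1\,\utmtwo$.
  \begin{itemize}
  \item If $\utm_1$ is a variable $\var$, then $\subs{\var}{\penv}$ is an abstraction, which is case~2 with $\wctx_0=\ctxhole$.
  \item Otherwise $\utm_1$ must itself be an abstraction, since only abstractions or variables can become abstractions under $\penv$. This gives case~1.
  \end{itemize}
\item \textbf{Guard redex} $\eunit{\iunit}{\utmthree}$. Here $\utm=\eunit{\utm_1}{\utmtwo}$ and $\subs{\utm_1}{\penv}=\iunit$. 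Since no variable is mapped to $\iunit$, we get $\utm_1=\iunit$, which is case~1.
\item \textbf{Verifier redex} $\verifteig{\genteig}$. Here $\utm=\verifteig{\utm_1}$. Either $\utm_1=\genteig$, giving case~1, or $\utm_1$ is a variable with $\subs{\utm_1}{\penv}=\genteig$, giving case~4.
\end{itemize}

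\textbf{Exclusivity.} For ``exactly one'', I will observe that the cases are distinguished by the shape of the subterm of $\utm$ at the hole position. That position is uniquely determined, because $\wctx$ and $\utmfive$ are unique by \cref{lem:uniqueness_of_weak_head_redex}, and the induction above pins down $\wctx_0$ as the unique preimage context of $\wctx$. Case~1 places a redex there. In cases~2--4 the relevant subterm ($\var\,\utmtwo$, $\eunit{\var}{\utmtwo}$, or $\verifteig{\var}$) has a variable in its principal position and so is not a redex. The main, if mild, obstacle is this bookkeeping: one has to argue that the decomposition of $\utm$ is forced. I would handle it by proving that if $\subs{\ctxof{\wctx_0}{\utmfour}}{\penv}=\ctxof{\wctx}{\utmfive}$ with $\subs{\wctx_0}{\penv}=\wctx$, then $\wctx_0$ is unique, again by induction on $\wctx$.
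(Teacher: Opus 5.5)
Your proof is correct and is essentially the paper's argument. The paper says only ``straightforward by induction on $\utm$'', and your induction on $\wctx$ walks the same weak-head spine of $\utm$, with the work done by two observations: variables are mapped only to variables, $\genteig$, or abstractions, and the base case splits on the three redex shapes. Your explicit treatment of exclusivity (via uniqueness of the preimage context $\wctx_0$) and of the vacuity of case~3 fills in details the paper leaves implicit; the paper notes that case~3 is impossible only later, in the proof of universality.
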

\begin{proof}
Straightforward by induction on $\utm$.
\end{proof}

\begin{lemma}[Universality]
\label{a:lem:universality}
Let $\compat{\penv}{\subst}$.
If $\subs{\utm}{\penv} \tos \iunit$
then $\subs{\utm}{\subst} \tos \iunit$.
\end{lemma}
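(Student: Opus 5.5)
First, by \cref{prop:weak_standardization} I replace the hypothesis $\subs{\utm}{\penv} \tos \iunit$ by a weak head reduction $\subs{\utm}{\penv} \tow^n \iunit$. I then prove the stronger claim that $\subs{\utm}{\subst} \tows \iunit$, by induction on the lexicographic pair $(\typesize{\penv}, n)$, where $\typesize{\penv}$ is the maximal size of a \logicalTerm in $\penv$ (with $\typesize{\emptyset} = 0$). The claim is quantified over all $\penv,\subst,\utm$, so that the induction hypothesis can be used for different environments and metaterms. Since $\tow \subseteq \to$, the conclusion follows.

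If $n = 0$, then $\subs{\utm}{\penv} = \iunit$. Either $\utm = \iunit$, and we are done, or $\utm$ is a variable $\var$ with $\subs{\var}{\penv} = \iunit$. The latter is impossible, because the generative substitution maps variables either to themselves or to generators, and no generator is $\iunit$. If $n > 0$, write $\subs{\utm}{\penv} = \ctxof{\wctx}{\utmfive} \tow \ctxof{\wctx}{\utmfive'} \tow^{n-1} \iunit$ and apply \cref{lem:subst_weak_head_redex}. Case~3 is impossible because no generator is $\iunit$. Case~4 requires $\subs{\var}{\penv} = \genteig$, which means $\var : \teig \in \penv$. I plan to handle it through \cref{lem:splitting}-style reasoning on the context $\ctxof{\wctx_0}{\verifteig{\ctxhole}}$: the verifier fires, giving $\ctxof{\wctx}{\iunit}$. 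On the $\subst$ side, $\verifteig{\subs{\var}{\subst}} \tos \iunit$ by compatibility, and by standardization this becomes a weak head reduction. The term $\ctxof{\wctx_0}{\vartwo}$ with a fresh $\vartwo$ has no generators for $\vartwo$, so I instantiate $\vartwo$ by $\iunit$ and apply the \ih with a shorter reduction and the same $\penv$. Case~1 is the easy one. By \cref{lem:subst_commute_redex} the step is the image of a step $\utm \tow \utm_0$, so $\subs{\utm_0}{\penv} \tow^{n-1} \iunit$. The \ih gives $\subs{\utm_0}{\subst} \tows \iunit$, and \cref{lem:subst_commute_redex,lem:subst_commute_wctx} give $\subs{\utm}{\subst} \tow \subs{\utm_0}{\subst}$.

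The main obstacle is case~2: $\utm = \ctxof{\wctx_0}{\var\,\utmtwo}$ with $\var : \ttm\arrow\ttmtwo \in \penv$. I follow the sketch in \cref{lem:universality}. The first step contracts the redex to $\ctxof{\wctx}{\eunit{\verif{\ttm}{\subs{\utmtwo}{\penv}}}{\gen{\ttmtwo}}}$. Then \cref{lem:splitting} yields $\verif{\ttm}{\subs{\utmtwo}{\penv}} \tow^{n_1} \iunit$ and $\ctxof{\wctx}{\gen{\ttmtwo}} \tow^{n_2} \iunit$ with $n_1, n_2 < n$. The argument has three applications of the \ih:
\begin{enumerate}
\item The first uses $\penv_1 = \penv, \vartwo:\ttmtwo$ and $\subst_1 = \subst\extsub{\vartwo}{\subs{\var}{\subst}\,\gen{\ttm}}$. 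Compatibility $\compat{\penv_1}{\subst_1}$ holds because $\verif{\ttm\arrow\ttmtwo}{\subs{\var}{\subst}}$ is by definition $\verif{\ttmtwo}{(\subs{\var}{\subst}\,\gen{\ttm})}$. The measure decreases since $\typesize{\penv_1} = \typesize{\penv}$ and the reduction has length $n_2 < n$. This gives (I): $\ctxof{\subs{\wctx_0}{\subst}}{\subs{\var}{\subst}\,\gen{\ttm}} \tows \iunit$.
\item The second applies the \ih to $\verif{\ttm}{\utmtwo}$ with the same $\penv$ and length $n_1$. This gives (II): $\verif{\ttm}{\subs{\utmtwo}{\subst}} \tows \iunit$.
\item The third uses the environment $\penv_2 = \set{\varthree:\ttm}$, which has strictly smaller size $\typesize{\ttm} < \typesize{\ttm\arrow\ttmtwo} \le \typesize{\penv}$, so any reduction length is allowed. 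It is applied to the metaterm $\ctxof{\subs{\wctx_0}{\subst}}{\subs{\var}{\subst}\,\varthree}$ with $\varthree$ fresh, with compatibility given by (II). Because the only free variable it maps is $\varthree$, its generative instance is exactly the term in (I).
\end{enumerate}
The conclusion is $\ctxof{\subs{\wctx_0}{\subst}}{\subs{\var}{\subst}\,\subs{\utmtwo}{\subst}} = \subs{\utm}{\subst} \tows \iunit$, using \cref{lem:subst_commute_wctx}. The delicate points are freshness and capture in these substitutions, which \cref{lem:subst_commute_wctx} controls because weak head holes are not under binders. The third application drops the dependence on $n$, which is exactly why the lexicographic order is needed.
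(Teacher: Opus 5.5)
Your proposal is correct and follows the paper's proof essentially step for step. You standardize to a weak head reduction, induct on the lexicographic pair $(\typesize{\penv}, n)$, and split cases with the origin-of-weak-head-redexes lemma. In the abstraction case you use the same three applications of the induction hypothesis: with $\penv,\vartwo:\ttmtwo$, then with $\penv$ itself on $\verif{\ttm}{\utmtwo}$, then with $\set{\varthree:\ttm}$.

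Your verifier case applies the hypothesis to $\ctxof{\wctx_0}{\iunit}$ and explicitly standardizes $\verifteig{\subs{\var}{\subst}} \tos \iunit$. That is the paper's argument too, and it is slightly more careful than the paper, which writes an equality where the weak head reduction $\ctxof{\subs{\wctx_0}{\subst}}{\verifteig{\subs{\var}{\subst}}} \tows \ctxof{\subs{\wctx_0}{\subst}}{\iunit}$ is meant.
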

\begin{proof}
Suppose that $\compat{\penv}{\subst}$.
By \cref{prop:weak_standardization},
it suffices to show that $\subs{\utm}{\penv} \tows \iunit$
implies $\subs{\utm}{\subst} \tows \iunit$.
Suppose that $\subs{\utm}{\penv} \tows \iunit$.
Let us write $\typesize{\ttm}$ for the size of the \logicalTerm $\ttm$,
and $\typesize{\penv}$ for the maximum size of a \logicalTerm occurring in $\penv$,
\ie $\typesize{\penv} \eqdef \max\set{\typesize{\ttm} \ST \exists \var.\, (\var:\ttm \in \penv)}$.
We show that $\subs{\utm}{\subst} \tows \iunit$
by induction on the lexicographic measure $(\typesize{\penv}, n)$,
where $n$ is the length of the reduction sequence $\subs{\utm}{\penv} \tow^n \iunit$.
We consider two cases, depending on whether this reduction sequence is empty or
not, \ie $n = 0$ or $n > 0$.

  \textbf{Empty ($n = 0$).}
  Then $\subs{\utm}{\penv} = \iunit$.
  We claim that $\utm = \iunit$.
  Indeed, note that $\utm$ cannot be a variable, because if
  $\utm = \var$ is a variable
  then either $\var:\ttm \in \penv$
  and $\subs{\utm}{\penv} = \gen{\ttm} \neq \iunit$,
  or $\var \notin \dom{\penv}$ and $\subs{\utm}{\penv} = \var \neq \iunit$.
  Moreover, if $\utm$ is not $\iunit$ nor a variable, then
  $\subs{\utm}{\penv} \neq \iunit$
  (for example, if $\utm$ is an application then
  $\subs{\utm}{\penv}$ is also an application).
  So we have that $\utm = \iunit$ and it is immediate to conclude that
  $\subs{\utm}{\subst} = \iunit \tows \iunit$.

  \textbf{Non-empty ($n = 1 + m$).}
  Since the reduction sequence $\subs{\utm}{\penv} \tow^n \iunit$ is non-empty,
  $\subs{\utm}{\penv}$ has a weak head redex.
  By \cref{lem:subst_weak_head_redex} there are four cases:
  either $\utm = \ctxof{\wctx}{\utmfive}$ where $\utmfive$ is a redex,
  or $\utm = \ctxof{\wctx}{\var\,\utmtwo}$ where $\subs{\var}{\penv}$ is an abstraction,
  or $\utm = \ctxof{\wctx}{\eunit{\var}{\utmtwo}}$ where $\subs{\var}{\penv} = \iunit$,
  or $\utm = \ctxof{\wctx}{\verifteig{\var}}$ where $\subs{\var}{\penv} = \genteig$.
  We consider each of these four cases individually:
  \begin{enumerate}
  \item
    If $\utm = \ctxof{\wctx}{\utmfive}$ where $\utmfive$ is a redex:
    consider the root reduction step $\utmfive \tow \utmsix$.
    First note that $\subs{\utmfive}{\penv} \tow \subs{\utmsix}{\penv}$
    is also a root reduction step by \cref{lem:subst_commute_redex}.
    Since head reduction is deterministic (\cref{lem:tow_deterministic}),
    the input reduction sequence must be of the form:
    \[
      \begin{array}{rlll}
      &&
        \subs{\utm}{\penv}
      \\
      & = &
        \ctxof{\subs{\wctx}{\penv}}{\subs{\utmfive}{\penv}}
        & \text{by \cref{lem:subst_commute_wctx}}
      \\
      & \tow &
        \ctxof{\subs{\wctx}{\penv}}{\subs{\utmsix}{\penv}}
        & \text{by \cref{lem:subst_commute_wctx}, using that $\subs{\utmfive}{\penv} \tow \subs{\utmsix}{\penv}$}
      \\
      & \tow^m &
        \iunit
      \end{array}
    \]
    The reduction sequence
    $\ctxof{\subs{\wctx}{\penv}}{\subs{\utmsix}{\penv}} \tow^m \iunit$
    is shorter than the original one ($m < n$).
    We may apply the \ih, using the same \typeEnvironment $\penv$,
    in such a way that the lexicographic measure decreases:
    $(\typesize{\penv},n) > (\typesize{\penv},m)$.
    By \ih we have that
    $\ctxof{\subs{\wctx}{\subst}}{\subs{\utmsix}{\subst}} \tows \iunit$.
    To conclude, note that
    $\subs{\utmfive}{\subst} \tow \subs{\utmsix}{\subst}$
    is a root reduction step by \cref{lem:subst_commute_redex},
    so:
    \[
      \begin{array}{rlll}
      &&
        \subs{\utm}{\subst}
      \\
      & = &
        \ctxof{\subs{\wctx}{\subst}}{\subs{\utmfive}{\subst}}
        & \text{by \cref{lem:subst_commute_wctx}}
      \\
      & \tow &
        \ctxof{\subs{\wctx}{\subst}}{\subs{\utmsix}{\subst}}
        & \text{by \cref{lem:subst_commute_wctx}, using that $\subs{\utmfive}{\subst} \tow \subs{\utmsix}{\subst}$}
      \\
      & \tows &
        \iunit
        & \text{as obtained from the \ih}
      \end{array}
    \]
  \item
    If $\utm = \ctxof{\wctx}{\var\,\utmtwo}$
    where $\subs{\var}{\penv}$ is an abstraction,
    \ie $\subs{\var}{\penv} = \lam{\vartwo}{\utmthree}$.

    First, we claim that $\var : \ttm\arrow\ttmtwo \in \penv$
    for certain \logicalTerms $\ttm,\ttmtwo$.
    Indeed, if $\var \notin \dom{\penv}$ then $\subs{\var}{\penv} = \var$
    is not an abstraction; so we must have that $\var \in \dom{\penv}$.
    Moreover if $\var : \teig \in \penv$, \ie if it is assigned an \atomicProposition,
    then $\subs{\var}{\penv} = \genteig$ is not an abstraction;
    so we must have that $\var$ occurs in $\penv$ with a \logicalTerm which
    is not an \atomicProposition.
    Thus $\var$ must occur in $\penv$ with an arrow type.

    Then the input reduction sequence must be of the form:
    \[
      \begin{array}{rlll}
      &&
        \subs{\utm}{\penv}
      \\
      & = &
        \subs{\ctxof{\wctx}{\var\,\utmtwo}}{\penv}
      \\
      & = &
        \ctxof{\subs{\wctx}{\penv}}{\subs{\var}{\penv}\,\subs{\utmtwo}{\penv}}
        & \text{by \cref{lem:subst_commute_wctx}}
      \\
      & = &
        \ctxof{\subs{\wctx}{\penv}}{\gen{\ttm\arrow\ttmtwo}\,\subs{\utmtwo}{\penv}}
        & \text{as $\var:\ttm\arrow\ttmtwo \in \penv$}
      \\
      & = &
        \ctxof{\subs{\wctx}{\penv}}{(\lam{\var}{\eunit{\verif{\ttm}{\var}}{\gen{\ttmtwo}}})\,\subs{\utmtwo}{\penv}}
      \\
      & \tow &
        \ctxof{\subs{\wctx}{\penv}}{\eunit{\verif{\ttm}{\subs{\utmtwo}{\penv}}}{\gen{\ttmtwo}}}
      \\
      & \tow^m &
        \iunit
      \end{array}
    \]
    Consider the tail of the reduction sequence:
    \[
      \ctxof{\subs{\wctx}{\penv}}{\eunit{\verif{\ttm}{\subs{\utmtwo}{\penv}}}{\gen{\ttmtwo}}}
       \tow^m \iunit
    \]
    By \cref{lem:splitting}, there exist $m_1,m_2 \geq 0$ such that
    $\verif{\ttm}{\subs{\utmtwo}{\penv}} \tow^{m_1} \iunit$
    and
    $\ctxof{\subs{\wctx}{\penv}}{\gen{\ttmtwo}} \tow^{m_2} \iunit$
    and $m = 1 + m_1 + m_2$.

    Since $\compat{\penv}{\subst}$ by hypothesis
    and $\var : \ttm\arrow\ttmtwo \in \penv$,
    we know that $\verif{\ttm\arrow\ttmtwo}{\subs{\var}{\subst}} \tos \iunit$.
    By definition of $\verif{\ttm\arrow\ttmtwo}{\ARG}$, we have:
    \[
      \verif{\ttmtwo}{(\subs{\var}{\subst}\,\gen{\ttm})}
      = \verif{\ttm\arrow\ttmtwo}{\subs{\var}{\subst}}
      \tos \iunit
    \]

    Let $\vartwo$ be a fresh variable, and consider
    the \metaterm $\ctxof{\wctx}{\vartwo}$,
    the \typeEnvironment $\penv_1 := \penv\cup\set{\vartwo:\ttmtwo}$,
    and the substitution $\subst_1 := \subst\extsub{\vartwo}{\subs{\var}{\subst}\,\gen{\ttm}}$.
    Note that $\compat{\penv_1}{\subst_1}$
    because for $\vartwo:\ttmtwo\in\penv_1$
    we have that
    $\verif{\ttmtwo}{(\vartwo^{\subst_1})}
    = \verif{\ttmtwo}{(\var^{\subst}\,\gen{\ttm})}
    \tows \iunit$,
    and for any variable other than $\vartwo$ the property
    follows from the fact that $\compat{\penv}{\subst}$.
    Recall that $\var:\ttm\arrow\ttmtwo \in \penv$
    so $\typesize{\penv} \geq \typesize{\ttm\arrow\ttmtwo} > \typesize{\ttmtwo}$,
    which means that
    $\typesize{\penv_1}
     = \typesize{\penv,\vartwo:\ttmtwo}
     = \max\set{\typesize{\penv},\typesize{\ttmtwo}}
     = \typesize{\penv}$.
    Also, note that:
    \[
      \begin{array}{rlll}
      &&
        \subs{\ctxof{\wctx}{\vartwo}}{\penv_1}
      \\
      & = &
        \ctxof{\subs{\wctx}{\penv_1}}{\subs{\vartwo}{\penv_1}}
        & \text{by \cref{lem:subst_commute_wctx}}
      \\
      & = &
        \ctxof{\subs{\wctx}{\penv}}{\subs{\vartwo}{\penv_1}}
        & \text{since $\vartwo$ is fresh for $\wctx$}
      \\
      & = &
        \ctxof{\subs{\wctx}{\penv}}{\gen{\ttmtwo}}
        & \text{by definition of $\subs{\vartwo}{\penv_1}$}
      \\
      & \tow^{m_2} &
        \iunit
        & \text{as already shown}
      \end{array}
    \]
    Furthermore, note that the lexicographic measure decreases:
    $(\typesize{\penv},n)
     = (\typesize{\penv_1},n)
     > (\typesize{\penv_1},m_2)$,
    so we may apply the \ih to obtain that
    $\subs{\ctxof{\wctx}{\vartwo}}{\subst_1} \tows \iunit$.
    Hence:
    \[
      \begin{array}{rlll}
      &&
        \ctxof{\subs{\wctx}{\subst}}{\subs{\var}{\subst}\,\gen{\ttm}}
      \\
      & = &
        \ctxof{\subs{\wctx}{\subst}}{\subs{\vartwo}{\subst_1}}
        & \text{by definition of $\subs{\vartwo}{\subst_1}$}
      \\
      & = &
        \ctxof{\subs{\wctx}{\subst_1}}{\subs{\vartwo}{\subst_1}}
        & \text{since $\vartwo$ is fresh for $\wctx$}
      \\
      & = &
        \subs{\ctxof{\wctx}{\vartwo}}{\subst_1}
        & \text{by \cref{lem:subst_commute_wctx}}
      \\
      & \tows &
        \iunit
        & \text{as just shown}
      \end{array}
    \]

    Recall that we have $\verif{\ttm}{\subs{\utmtwo}{\penv}} \tow^{m_1} \iunit$.
    Taking the environment $\penv$ and the substitution $\subst$,
    we can note that the lexicographic measure decreases:
    $(\typesize{\penv},n) > (\typesize{\penv},m_1)$,
    so may apply the \ih to obtain that
    $\verif{\ttm}{\subs{\utmtwo}{\subst}} \tows \iunit$.

    Now let $\varthree$ be a fresh variable, and consider
    the \metaterm $\ctxof{\subs{\wctx}{\subst}}{\subs{\var}{\subst}\,\varthree}$,
    the \typeEnvironment $\penv_2 := \set{\varthree:\ttm}$,
    and the substitution $\subst_2 := \extsub{\varthree}{\subs{\utmtwo}{\subst}}$.
    Note that $\compat{\penv_2}{\subst_2}$
    because $\verif{\ttm}{\subs{\utmtwo}{\subst}} \tows \iunit$, as we have
    just proved.
    Moreover, we have that:
    \[
      \begin{array}{rlll}
      &&
        \subs{\ctxof{\subs{\wctx}{\subst}}{\subs{\var}{\subst}\,\varthree}}{\penv_2}
      \\
      & = &
        \ctxof{\subs{(\subs{\wctx}{\subst})}{\penv_2}}{\subs{(\subs{\var}{\subst})}{\penv_2}\,\subs{\varthree}{\penv_2}}
        & \text{by \cref{lem:subst_commute_wctx}}
      \\
      & = &
        \ctxof{\subs{\wctx}{\subst}}{\subs{\var}{\subst}\,\subs{\varthree}{\penv_2}}
        & \text{since $\varthree$ is fresh for $\wctx,\var,\subst$}
      \\
      & = &
        \ctxof{\subs{\wctx}{\subst}}{\subs{\var}{\subst}\,\gen{\ttm}}
        & \text{since $\varthree : \ttm \in \penv_2$}
      \\
      & \tows &
        \iunit
        & \text{as already shown}
      \end{array}
    \]
    Let the length of this reduction sequence be called $n'$.
    Note that the lexicographic measure decreases:
    $(\typesize{\penv},n) > (\typesize{\penv_2},n')$
    because the first component decreases, \ie
    $\typesize{\penv}
     \geq \typesize{\ttm\arrow\ttmtwo}
     > \typesize{\ttm}
     = \typesize{\penv_2}$,
    so we may apply the \ih to conclude that
    $\subs{\ctxof{\subs{\wctx}{\subst}}{\subs{\var}{\subst}\,\varthree}}{\subst_2}
     \tows \iunit$.
    Finally, note that:
    \[
      \begin{array}{rlll}
      &&
        \subs{\utm}{\subst}
      \\
      & = &
        \subs{\ctxof{\wctx}{\var\,\utmtwo}}{\subst}
      \\
      & = &
        \ctxof{\subs{\wctx}{\subst}}{\subs{\var}{\subst}\,\subs{\utmtwo}{\subst}}
        & \text{by \cref{lem:subst_commute_wctx}}
      \\
      & = &
        \ctxof{\subs{\wctx}{\subst}}{\subs{\var}{\subst}\,\subs{\varthree}{\subst_2}}
        & \text{by definition of $\subst_2$}
      \\
      & = &
        \ctxof{\subs{(\subs{\wctx}{\subst})}{\subst_2}}{\subs{(\subs{\var}{\subst})}{\subst_2}\,\subs{\varthree}{\subst_2}}
        & \text{since $\varthree$ is fresh for $\wctx,\var,\subst$}
      \\
      & = &
        \subs{\ctxof{\subs{\wctx}{\subst}}{\subs{\var}{\subst}\,\varthree}}{\subst_2}
        & \text{by \cref{lem:subst_commute_wctx}}
      \\
      & \tows &
        \iunit
        & \text{as just shown}
      \end{array}
    \]
  \item
    If $\utm = \ctxof{\wctx}{\eunit{\var}{\utmtwo}}$
    where $\subs{\var}{\penv} = \iunit$:
    we claim that this case is impossible. 
    Indeed, since $\subs{\var}{\penv} = \iunit$,
    we know that $\var \in \dom{\penv}$,
    but if $\var : \ttm \in \dom{\penv}$
    then $\subs{\var}{\penv} = \gen{\ttm} \neq \iunit$,
    contradicting the fact that $\subs{\var}{\penv} = \iunit$.
  \item
    If $\utm = \ctxof{\wctx}{\verifteig{\var}}$
    where $\subs{\var}{\penv} = \genteig$:
    then it must be the case that $\var : \teig \in \penv$.
    Since head reduction is deterministic (\cref{lem:tow_deterministic}),
    the input reduction sequence must be of the form:
    \[
      \begin{array}{rlll}
      &&
        \subs{\utm}{\penv}
      \\
      & = &
        \subs{\ctxof{\wctx}{\verifteig{\var}}}{\penv}
      \\
      & = &
        \ctxof{\subs{\wctx}{\penv}}{\verifteig{\subs{\var}{\penv}}}
        & \text{by \cref{lem:subst_commute_wctx}}
      \\
      & = &
        \ctxof{\subs{\wctx}{\penv}}{\verifteig{\genteig}}
        & \text{since $\var : \teig \in \penv$}
      \\
      & \tow &
        \ctxof{\subs{\wctx}{\penv}}{\iunit}
      \\
      & = &
        \subs{\ctxof{\wctx}{\iunit}}{\penv}
        & \text{by \cref{lem:subst_commute_wctx}}
      \\
      & \tow^m &
        \iunit
      \end{array}
    \]
    Note that the lexicographic measure decreases:
    $(\typesize{\penv},n) > (\typesize{\penv},m)$,
    so by \ih we have that
    $\subs{\ctxof{\wctx}{\iunit}}{\subst} \tows \iunit$.
    Hence:
    \[
      \begin{array}{rlll}
      &&
        \subs{\utm}{\subst}
      \\
      & = &
        \subs{\ctxof{\wctx}{\verifteig{\var}}}{\subst}
      \\
      & = &
        \ctxof{\subs{\wctx}{\subst}}{\verifteig{\subs{\var}{\subst}}}
        & \text{by \cref{lem:subst_commute_wctx}}
      \\
      & = &
        \ctxof{\subs{\wctx}{\subst}}{\iunit}
        & \text{because $\verifteig{\subs{\var}{\subst}} \tows \iunit$ by $\compat{\penv}{\subst}$}
      \\
      & = &
        \subs{\ctxof{\wctx}{\iunit}}{\subst}
        & \text{by \cref{lem:subst_commute_wctx}}
      \\
      & \tows &
        \iunit
        & \text{as just shown}
      \end{array}
    \]
  \end{enumerate}
\end{proof}

\subsection{Minimal Logic: Completeness (\cref{thm:completeness})}
  \label{a:sec:completeness}
  
\begin{remark}
$\sig[\penv,\var:\ttm]{\var}{\ttm'} = \minof{\size{\ttm}}{\size{\ttm'}}$.
\end{remark}

\begin{lemma}
\label{lemma:size_weakening}
If $\var \notin \fv{\unf}$
then $\sig[\penv]{\unf}{\ttm} = \sig[\penv,\var:\ttmtwo]{\unf}{\ttm}$
\end{lemma}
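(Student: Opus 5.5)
This is a weakening lemma for the proof size of \cref{def:proof_size}. I will prove it by structural induction on the normal \proofTerm $\unf$, following the grammar of \cref{def:grammar_of_nfs}. The induction hypothesis is generalized over the environment $\penv$ and the \logicalTerm $\ttm$, since both change in the abstraction case. Throughout I assume, as is implicit in the notation $\penv,\var:\ttmtwo$, that $\var\notin\dom{\penv}$.

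\Case{Neutral term.} Here $\unf = \vartwo\,\unf_1\hdots\unf_n$. Since $\var\notin\fv{\unf}$, we have $\vartwo \neq \var$ and $\var\notin\fv{\unf_i}$ for every $i$. Because $\vartwo\neq\var$, the type assigned to $\vartwo$ in $\penv,\var:\ttmtwo$ is the same as in $\penv$, or $\vartwo$ is unassigned in both. So the same clause of the definition applies on both sides.
\begin{itemize}
\item If $\vartwo:(\ttm_1\imp\hdots\imp\ttm_n\imp\ttm')\in\penv$, both sides equal $\minof{\size{\ttm}}{\size{\ttm'}}$ plus $\sum_i \sig{\unf_i}{\ttm_i}$ (computed in the respective environments). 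The summands agree by the induction hypothesis applied to each $\unf_i$ with \logicalTerm $\ttm_i$.
\item Otherwise both sides are $0$.
\end{itemize}

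\Case{Abstraction.} Here $\unf = \lam{\varthree}{\unf'}$. By $\alpha$-conversion I choose $\varthree$ distinct from $\var$ and outside $\dom{\penv}$, so that $\var\notin\fv{\unf'}$.
\begin{itemize}
\item If $\ttm = \ttmthree\imp\ttmthree'$, the left side is $\sig[\penv,\varthree:\ttmthree]{\unf'}{\ttmthree'}$ and the right side is $\sig[\penv,\var:\ttmtwo,\varthree:\ttmthree]{\unf'}{\ttmthree'}$. These are equal by the induction hypothesis applied to $\unf'$ with environment $\penv,\varthree:\ttmthree$. This uses that environments are sets, so the two extensions commute.
\item Otherwise both sides are $0$.
\end{itemize}

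The only delicate point is the abstraction case. I need the proof size to be invariant under $\alpha$-renaming of the bound variable, so that $\varthree$ may be chosen fresh for both $\var$ and $\dom{\penv}$. This holds because the definition only inspects the environment entry of the bound variable and that entry is renamed along with it. I would note this as a routine remark rather than prove it in detail.
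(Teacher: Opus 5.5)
Your proof is correct and takes the same approach as the paper, which only says the result is ``straightforward by induction on $\unf$''. You supply the details the paper leaves implicit: generalizing the induction over $\penv$ and $\ttm$, using $\vartwo\neq\var$ in the neutral case, and using $\alpha$-renaming together with the commutation of environment extensions in the abstraction case.
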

\begin{proof}
Straightforward by induction on $\unf$.
\end{proof}

\begin{lemma}[Proof size of application to \proofVariables]
\label{lemma:size_normalize}
\quad
\begin{enumerate}
\item
  Let $\unf$ be a normal \proofTerm.
  Then the application $\unf\,\var$ $\beta$-reduces to a $\beta$-normal form
  $\unftwo$ in at most one step,
  and $\sig{\unf}{\ttm\imp\ttmtwo} \geq \sig[\penv,\pvar:\ttm]{\unftwo}{\ttmtwo}$.
\item
  Let $\unf$ be a normal \proofTerm.
  Then $\unf\,\var_1\hdots\var_n$ $\beta$-reduces to a $\beta$-normal form $\utmtwo$
  in at most $n$ steps,
  and $\sig{\unf}{\ttm_1\imp\hdots\ttm_n\imp\ttmtwo} \geq \sig[\penv,\var_1:\ttm_1,\hdots,\var_n:\ttm_n]{\unftwo}{\ttmtwo}$.
\end{enumerate}
\end{lemma}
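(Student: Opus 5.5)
I would prove item~1 by a case analysis on the shape of the normal \proofTerm $\unf$, following \cref{def:grammar_of_nfs}, and then obtain item~2 from item~1 by induction on $n$.

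\textbf{Item 1.} There are two cases.

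If $\unf$ is neutral, \ie $\unf = \vartwo\,\unf_1\hdots\unf_k$, then $\unf\,\var = \vartwo\,\unf_1\hdots\unf_k\,\var$ is already a neutral normal form, reached in zero steps; I take $\unftwo := \unf\,\var$. For the inequality I split on whether $\vartwo \in \dom{\penv}$.
\begin{itemize}
\item If $\vartwo \notin \dom{\penv}$, then (assuming, as usual, $\var$ fresh so $\vartwo \neq \var$) $\vartwo \notin \dom{\penv,\var:\ttm}$. Both sides are $0$.
\item If $\vartwo:(\ttmtwo_1\imp\hdots\imp\ttmtwo_k\imp\ttm') \in \penv$, the left-hand side is $\minof{\size{\ttm\imp\ttmtwo}}{\size{\ttm'}} + \sum_i \sig{\unf_i}{\ttmtwo_i}$.
  \begin{itemize}
  \item If $\ttm' = \ttm''\imp\ttm'''$, the right-hand side is $\minof{\size{\ttmtwo}}{\size{\ttm'''}} + \sum_i \sig[\penv,\var:\ttm]{\unf_i}{\ttmtwo_i} + \sig[\penv,\var:\ttm]{\var}{\ttm''}$. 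By \cref{lemma:size_weakening}, since $\var$ is fresh, the middle sums agree. So it suffices that $\minof{\size{\ttmtwo}}{\size{\ttm'''}} + \minof{\size{\ttm}}{\size{\ttm''}} \le \minof{\size{\ttm\imp\ttmtwo}}{\size{\ttm''\imp\ttm'''}}$. Each term on the left is bounded by the corresponding sizes, and $\size{A\imp B} = 1 + \size{A} + \size{B}$, so the left is at most $\min(\size{\ttm}+\size{\ttmtwo},\ \size{\ttm''}+\size{\ttm'''}) <$ the right.
  \item If $\ttm'$ is atomic, the head type cannot absorb the extra argument, and the right-hand side is $0$ by the ``otherwise'' clause. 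This needs checking against the precise reading of the definition; I interpret a mismatch in arity as falling into the otherwise clause.
  \end{itemize}
\end{itemize}

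If $\unf = \lam{\vartwo}{\unf'}$, then $\unf\,\var \tobeta \unf'\sub{\vartwo}{\var}$. This is normal, since substituting a variable for a variable creates no redex. Its proof size under $(\penv,\var:\ttm)$ for $\ttmtwo$ equals, by $\alpha$-renaming, $\sig[\penv,\vartwo:\ttm]{\unf'}{\ttmtwo}$, which is exactly $\sig{\lam{\vartwo}{\unf'}}{\ttm\imp\ttmtwo}$. So equality holds. The only care needed is a renaming-invariance fact for proof size, proved by a routine induction on $\unf'$.

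\textbf{Item 2.} I induct on $n$. Apply item~1 to $\unf$ with $\var_1$ and type $\ttm_1\imp(\ttm_2\imp\hdots\imp\ttmtwo)$, obtaining a normal $\unf'$ in at most one step with $\sig{\unf}{\ttm_1\imp\hdots\imp\ttmtwo} \ge \sig[\penv,\var_1:\ttm_1]{\unf'}{\ttm_2\imp\hdots\imp\ttmtwo}$. Then apply the induction hypothesis to $\unf'\,\var_2\hdots\var_n$ in environment $\penv,\var_1:\ttm_1$. Step counts add up to at most $n$, and the inequalities chain.

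\textbf{Main obstacle.} The neutral case is the delicate one. I need the arithmetic inequality on the $\min$ terms to hold, and I need the convention for mismatched arity (a head type ending in an atom while more arguments are supplied) to yield $0$ on the right. I expect to resolve this by reading the definition's side condition $\var:(\ttm_1\imp\hdots\imp\ttm_n\imp\ttm')\in\penv$ as requiring exactly $n$ arrows to be peeled.
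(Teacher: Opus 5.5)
Your proposal is correct and follows the paper's proof essentially step for step. The abstraction case gives equality after one $\beta$-step. In the neutral case you take $\unftwo := \unf\,\var$: either the right-hand side is $0$ because the head's type cannot absorb the extra argument (the paper's condition $\mathcal{Q}$ failing), or you apply \cref{lemma:size_weakening} together with the same inequality $\minof{\size{\ttm}}{\size{\ttm''}}+\minof{\size{\ttmtwo}}{\size{\ttm'''}} < \minof{\size{\ttm\imp\ttmtwo}}{\size{\ttm''\imp\ttm'''}}$. Item~2 then follows by induction on $n$. Your reading of the arity side condition is exactly the one the paper relies on, and it also disposes of the subcase you did not list explicitly: when $\vartwo\in\dom{\penv}$ has a type with fewer than $k$ arrows, both sides are $0$.
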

\begin{proof}
The second item is immediate by induction on $n$, resorting to the first item
in each step.
To show the first item, proceed by induction on the shape of the
normal \proofTerm $\unf$ according to~\cref{def:grammar_of_nfs}.

In the \textbf{abstraction} case,
  we have that $\unf = \lam{\var}{\unftwo}$.
  Note that $\unf\,\var = (\lam{\var}{\unftwo})\,\var \to \unftwo$
  and:
  \[
    \sig{\unf}{\ttm\imp\ttmtwo}
    =
    \sig{\lam{\var}{\unftwo}}{\ttm\imp\ttmtwo}
    =
    \sig[\penv,\var:\ttm]{\unftwo}{\ttmtwo}
  \]

In the \textbf{neutral \proofTerm} case,
  we have that $\unf = \vartwo\,\unf_1\hdots\unf_n$.
  Take $\unftwo := \vartwo\,\unf_1\hdots\unf_n\,\var$.
  Let $\mathcal{P}$ and $\mathcal{Q}$ be the following conditions:
  \[
    \begin{array}{lll}
      \mathcal{P} & \equiv &
          \text{there exist $\ttmthree,\ttmthree_1,\hdots,\ttmthree_n,\ttmfour$} \\
        &&\text{such that $\vartwo:\ttmthree \in \penv$
                and $\ttmthree = (\ttmthree_1\imp\hdots\ttmthree_n\imp\ttmfour)$}
    \\
      \mathcal{Q} & \equiv &
          \text{there exist $\ttmthree,\ttmthree_1,\hdots,\ttmthree_n,\ttm',\ttmtwo'$} \\
        &&\text{such that $\vartwo:\ttmthree \in \penv$
                and $\ttmthree = (\ttmthree_1\imp\hdots\ttmthree_n\imp\ttm'\imp\ttmtwo')$}
    \end{array}
  \]
  If $\mathcal{Q}$ does not hold, we are done,
  because
    $\sig[\penv,\var:\ttm]{\vartwo\,\unf_1\hdots\unf_n\,\var}{\ttmtwo} = 0$,
  so
    $\sig{\unf}{\ttm\imp\ttmtwo} \geq 0 = \sig[\penv,\var:\ttm]{\vartwo\,\unf_1\hdots\unf_n\,\var}{\ttmtwo}$.

  Assume that $\mathcal{Q}$ holds, and note that $\mathcal{P}$ also holds.
  More precisely,
    there exist $\ttmthree,\ttmthree_1,\hdots,\ttmthree_n,\ttm',\ttmtwo'$
    such that $\vartwo:\ttmthree \in \penv$
    and $\ttmthree = (\ttmthree_1\imp\hdots\ttmthree_n\imp\ttm'\imp\ttmtwo')$.
  Then:
  \[
    \begin{array}{@{}r@{\,\,}cll}
    &&
      \sig{\unf}{\ttm\imp\ttmtwo}
    \\
    & = &
      \sig{\vartwo\,\unf_1\hdots\unf_n}{\ttm\imp\ttmtwo}
    \\
    & = &
      \minof{\size{\ttm\imp\ttmtwo}}{\size{\ttm'\imp\ttmtwo'}}
      + \sum_{i=1}^n \sig{\unf_i}{\ttmthree_i}
    \\
    & = &
      \minof{1+\size{\ttm}+\size{\ttmtwo}}{1+\size{\ttm'}+\size{\ttmtwo'}}
      + \sum_{i=1}^n \sig{\unf_i}{\ttmthree_i}
    \\
    & > &
      \minof{\size{\ttm}}{\size{\ttm'}}
      + \minof{\size{\ttmtwo}}{\size{\ttmtwo'}}
      + \sum_{i=1}^n \sig{\unf_i}{\ttmthree_i}
    \\
    & = &
      \minof{\size{\ttmtwo}}{\size{\ttmtwo'}} + \sum_{i=1}^n \sig{\unf_i}{\ttmthree_i} + \minof{\size{\ttm}}{\size{\ttm'}}
    \\
    & = &
      \minof{\size{\ttmtwo}}{\size{\ttmtwo'}}
          + \sum_{i=1}^n \sig{\unf_i}{\ttmthree_i}
          + \sig[\penv,\var:\ttm]{\var}{\ttm'}
    \\
    & = &
      \minof{\size{\ttmtwo}}{\size{\ttmtwo'}}
          + \sum_{i=1}^n \sig[\penv,\var:\ttm]{\unf_i}{\ttmthree_i}
          + \sig[\penv,\var:\ttm]{\var}{\ttm'}
      \,\,\text{by \cref{lemma:size_weakening}}
    \\
    & = &
      \sig[\penv,\var:\ttm]{\vartwo\,\unf_1\hdots\unf_n\,\var}{\ttmtwo}
    \end{array}
  \]
\end{proof}

\begin{lemma}[Proof size of arguments]
\label{lemma:sizeVar_argument}
Let $\var\utm_1\hdots\utm_n$ be a \metaterm headed by a variable
and suppose that $\var:(\ttmtwo_1\imp\hdots\imp\ttmtwo_n\imp\ttm)\in\penv$.
Then for all $1 \leq i \leq n$
we have that
$\sig{\var\utm_1\hdots\utm_n}{\ttm} > \sig{\utm_i}{\ttmtwo_i}$.
\end{lemma}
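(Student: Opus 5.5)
Since the proof size is defined on $\beta$-normal forms, I will read the statement as saying that $\var\,\utm_1\hdots\utm_n$ is a normal \proofTerm, so each $\utm_i$ is itself normal and the definition applies directly. (If the $\utm_i$ are only required to be normalizing, then the normal form of $\var\,\utm_1\hdots\utm_n$ is $\var\,\utm_1^\downarrow\hdots\utm_n^\downarrow$ and $\sig{\utm_i}{\ttmtwo_i}=\sig{\utm_i^\downarrow}{\ttmtwo_i}$ by definition, so this costs nothing.) The argument is then a direct unfolding of \cref{def:proof_size}; no induction is needed.

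First I unfold the left-hand side. Since $\var:(\ttmtwo_1\imp\hdots\imp\ttmtwo_n\imp\ttm)\in\penv$, the first clause of \cref{def:proof_size} applies with $\ttm' := \ttm$ and argument types $\ttmtwo_1,\hdots,\ttmtwo_n$. This gives
\[
\sig{\var\,\utm_1\hdots\utm_n}{\ttm} = \minof{\size{\ttm}}{\size{\ttm}} + \sum_{j=1}^n \sig{\utm_j}{\ttmtwo_j} = \size{\ttm} + \sum_{j=1}^n \sig{\utm_j}{\ttmtwo_j}.
\]

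Next I fix $i$ and compare. Every summand is a non-negative integer, so the sum is at least $\sig{\utm_i}{\ttmtwo_i}$. Moreover $\size{\ttm}\geq 1$, because every \logicalTerm has size at least $1$. Hence the left-hand side is at least $1+\sig{\utm_i}{\ttmtwo_i}$, which is strictly greater than $\sig{\utm_i}{\ttmtwo_i}$.

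The only delicate point is making sure the first clause really fires with $\ttm'$ equal to the target type $\ttm$. Otherwise the $\min$ term could be smaller, though it is still at least $1$ in any case. So the $+1$ comes from the positivity of type sizes, and the proof does not depend on the precise value of the $\min$.
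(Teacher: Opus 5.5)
Your proof is correct and is essentially the paper's argument: the first clause of the proof-size definition fires with $\ttm' = \ttm$, giving $\size{\ttm} + \sum_{j=1}^{n} \sig{\utm_j}{\ttmtwo_j}$, which strictly exceeds each summand. Your remarks that $\size{\ttm} \geq 1$ and that the arrow decomposition of the variable's type is forced simply make explicit what the paper calls immediate.
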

\begin{proof}
Immediate, since by definition
$\sig{\var\utm_1\hdots\utm_n}{\ttm}
= \size{\ttm} + \sum_{i=1}^{n} \sig{\utm_i}{\ttmtwo_i}
> \sig{\utm_i}{\ttmtwo_i}$.
\end{proof}

\begin{theorem}[Completeness]
\label{a:thm:completeness}
If $\verif{\ttm}{\subs{\utm}{\penv}} \tos \iunit$
and $\utm$ is good,
then its $\beta$-normal form is a term $\utm^\downarrow$
such that $\judg{\penv}{\utm^\downarrow}{\ttm}$.
\end{theorem}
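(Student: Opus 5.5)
We prove the statement by strong induction on the proof size $\sig{\unf}{\ttm}$, where $\unf$ is the $\beta$-normal form of $\utm$. It exists and is a normal \proofTerm in the sense of \cref{def:grammar_of_nfs}, since $\utm$ is good.

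\textbf{Step 1: reduce to an atomic goal.} Write $\ttm = \ttm_1\arrow\hdots\arrow\ttm_n\arrow\teig$ and pick fresh $\var_1,\hdots,\var_n$. Unfolding \cref{def:generators_and_verifiers} gives $\verif{\ttm}{\subs{\utm}{\penv}} = \verif{\teig}{(\subs{\utm}{\penv}\,\gen{\ttm_1}\hdots\gen{\ttm_n})} = \verif{\teig}{\subs{(\utm\,\var_1\hdots\var_n)}{\penvtwo}}$, where $\penvtwo := \penv,\var_1:\ttm_1,\hdots,\var_n:\ttm_n$. Let $\unftwo$ be the normal form of $\unf\,\var_1\hdots\var_n$, which exists by \cref{lemma:size_normalize}. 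Since $\utm\,\var_1\hdots\var_n \tobetas \unftwo$ and $\beta$-steps lift through the substitution, confluence (\cref{prop:confluence}) gives $\verif{\teig}{\subs{\unftwo}{\penvtwo}} \tos \iunit$. \Cref{lemma:size_normalize} also gives $\sig{\unf}{\ttm} \geq \sig[\penvtwo]{\unftwo}{\teig}$.

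\textbf{Step 2: case analysis on $\unftwo$.} We use standardization (\cref{prop:weak_standardization}) and determinism (\cref{prop:determinism}) to turn $\tos\iunit$ into a unique weak head reduction that can be inspected.

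If $\unftwo$ is an abstraction, then $\verif{\teig}{(\lambda\ldots)}$ is a weak head normal form different from $\iunit$. This contradicts the unique weak head reduction, so the case is impossible.

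Otherwise $\unftwo = \vartwo\,\unftwo_1\hdots\unftwo_m$. If $\vartwo\notin\dom{\penvtwo}$, the term is stuck, so $\vartwo:\ttmtwo_1\arrow\hdots\arrow\ttmtwo_k\arrow\teigtwo\in\penvtwo$. The head $\gen{\ttmtwo}$ unfolds to $\lam{\varthree_1}{\eunit{\verif{\ttmtwo_1}{\varthree_1}}{\hdots}}$.

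\textbf{Step 3 (main obstacle): extract the argument checks.} We argue by induction on $i \le \min(k,m)$, using the Splitting lemma (\cref{lem:splitting}). The weak head reduction first substitutes $\subs{\unftwo_1}{\penvtwo}$, then must evaluate the guard $\verif{\ttmtwo_1}{\subs{\unftwo_1}{\penvtwo}}$ to $\iunit$, and continues with $\gen{\ttmtwo_2\arrow\hdots}$ applied to the remaining arguments. Iterating, we obtain $\verif{\ttmtwo_i}{\subs{\unftwo_i}{\penvtwo}}\tos\iunit$ for each $i$.

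We then exclude the mismatched cases. If $m<k$, the term reaches $\verif{\teig}{(\lam{\varthree}{\ldots})}$, which is stuck. If $m>k$, it reaches $\verif{\teig}{(\genteig[\teigtwo]\,\utmthree\ldots)}$, where $\genteig[\teigtwo]$ is not an abstraction and no rule fires, so it is also stuck. Hence $k=m$, the term reaches $\verifteig{\genteig[\teigtwo]}$, and that reduces only when $\teigtwo=\teig$.

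\textbf{Step 4: apply the induction hypothesis and reassemble.} By \cref{lemma:sizeVar_argument},
\[
  \sig{\unf}{\ttm} \geq \sig[\penvtwo]{\unftwo}{\teig} > \sig[\penvtwo]{\unftwo_i}{\ttmtwo_i}
\]
for each $i$. Each $\unftwo_i$ is normal and pure, hence good and equal to its own normal form, so the induction hypothesis yields $\judg{\penvtwo}{\unftwo_i}{\ttmtwo_i}$. The application rule then gives $\judg{\penvtwo}{\vartwo\,\unftwo_1\hdots\unftwo_k}{\teig}$, and $n$ abstractions give $\judg{\penv}{\lam{\var_1\hdots\var_n}{\unftwo}}{\ttm}$.

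Finally, $\lam{\var_1\hdots\var_n}{\unftwo}$ is obtained from $\lam{\var_1\hdots\var_n}{\unf\,\var_1\hdots\var_n}$ by $\beta$, and that term $\eta$-reduces to $\unf$. So $\lam{\var_1\hdots\var_n}{\unftwo}\tobetainvs\cdot\toetas\unf$. Since $\lam{\var_1\hdots\var_n}{\unftwo}$ is already $\beta$-normal, commutation of $\beta$ and $\eta$ gives $\lam{\var_1\hdots\var_n}{\unftwo}\toetas\unf$ directly. Subject reduction for $\eta$ then gives $\judg{\penv}{\unf}{\ttm}$.
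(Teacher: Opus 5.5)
Your proof is correct and follows the paper's argument: the same measure $\sig{\unf}{\ttm}$, the same reduction to an atomic goal via fresh variables and \cref{lemma:size_normalize}, the same use of \cref{lemma:sizeVar_argument}, and the same $\beta\eta$-commutation plus subject reduction at the end. The one variation is that you rule out the stuck cases and extract $\verif{\ttmtwo_i}{\subs{\unftwo_i}{\penvtwo}}\tos\iunit$ using standardization, determinism and \cref{lem:splitting}, rather than the paper's direct ``every reduct keeps this shape'' arguments, which is a tidier way to do it.

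There is one small slip in the last step. What lets commutation turn $\lam{\var_1\hdots\var_n}{\unftwo}\tobetainvs\cdot\toetas\unf$ into $\lam{\var_1\hdots\var_n}{\unftwo}\toetas\unf$ is that $\unf$ is $\beta$-normal, since the closing $\beta$-leg starts from $\unf$. The $\beta$-normality of $\lam{\var_1\hdots\var_n}{\unftwo}$ is not what is needed there.
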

\begin{proof}
Let $\unf$ be the $\beta$-normal form of $\utm$, which is a normal \proofTerm.
We proceed by induction on the natural number $\sig{\unf}{\ttm}$.
Without loss of generality, let us write
$\ttm = (\ttm_1 \arrow \hdots \ttm_n \arrow \teig)$.
Note that
$\verif{\ttm}{\subs{\utm}{\penv}}
 \tos \verif{\teig}{(\subs{\utm}{\penv}\,\gen{\ttm_1}\hdots\gen{\ttm_n})}
 \tos \iunit$.
Let $\var_1,\hdots,\var_n$ be fresh variables,
and consider the \metaterm $\utm\,\var_1\hdots\var_n$
and the \typeEnvironment $\penvtwo := (\penv,\var_1:\ttm_1,\hdots,\var_n:\ttm_n)$.
Note that $\subs{\utm}{\penv} = \subs{\utm}{\penvtwo}$ because $\var_1,\hdots,\var_n$
are fresh.
Then we have that
$\verif{\teig}{\subs{(\utm\,\var_1\hdots\var_n)}{\penvtwo}}
= \verif{\teig}{(\subs{\utm}{\penvtwo}\,\gen{\ttm_1}\hdots\gen{\ttm_n})}
= \verif{\teig}{(\subs{\utm}{\penv}\,\gen{\ttm_1}\hdots\gen{\ttm_n})}
\tos \iunit$.
Let $\utmtwo^\downarrow$ be the $\beta$-normal form of $\utm^\downarrow\,\var_1\hdots\var_n$.
Then by confluence (\cref{prop:confluence})
we have that
$\verif{\teig}{\subs{(\utmtwo^\downarrow)}{\penvtwo}} \tos \iunit$.
Note that $\utmtwo^\downarrow$ is pure and
that
$\sig{\unf}{\ttm}
 = \sig{\unf}{\ttm_1\imp\hdots\ttm_n\imp\teig}
 \geq \sig[\penv,\var_1:\ttm_1,\hdots,\var_n:\ttm_n]{\unftwo}{\teig}
 = \sig[\penvtwo]{\unftwo}{\teig}$
holds by \cref{lemma:size_normalize}.

Since $\utmtwo^\downarrow$ is a pure \metaterm in $\beta$-normal form, we consider two
cases, depending on whether it is an abstraction or a neutral \proofTerm.

\medskip
First,
  if $\utmtwo^\downarrow$ is an \textbf{abstraction},
  \ie $\utmtwo^\downarrow = \lam{\vartwo}{\utmthree}$,
  then we have that
  $\verifteig{\lam{\vartwo}{\subs{\utmthree}{\penvtwo}}}
  = \verifteig{\subs{(\lam{\vartwo}{\utmthree})}{\penvtwo}}
  = \verifteig{\subs{(\utmtwo^\downarrow)}{\penvtwo}}
  \tos \iunit$,
  which is impossible,
  because the reducts of an abstraction are always abstractions,
  so in particular $\lam{\vartwo}{\subs{\utmthree}{\penvtwo}}$
  cannot have $\genteig$ as a reduct.
\medskip

Second, if $\utmtwo^\downarrow$ is a \textbf{neutral term},
  then $\utmtwo^\downarrow = \var\,\utmtwo_1\hdots\utmtwo_m$,
  and we have that
  $\verifteig{(\subs{\var}{\penvtwo}\,\subs{\utmtwo_1}{\penvtwo}\hdots\subs{\utmtwo_m}{\penvtwo})}
  = \verifteig{\subs{(\var\,\utmtwo_1\hdots\utmtwo_m)}{\penvtwo}}
  = \verifteig{\subs{(\utmtwo^\downarrow)}{\penvtwo}}
  \tos \iunit$.
  We claim that $\var\in\dom{\penvtwo}$.
  Indeed, if $\var\notin\dom{\penvtwo}$, then $\subs{\var}{\penvtwo} = \var$,
  and we have that
  $\verifteig{(\var\,\subs{\utmtwo_1}{\penvtwo}\hdots\subs{\utmtwo_m}{\penvtwo})}
   \tos \iunit$,
  which is impossible,
  because the reducts of a \metaterm headed by a variable $\var$
  are always headed by $\var$,
  so in particular $\var\,\subs{\utmtwo_1}{\penvtwo}\hdots\subs{\utmtwo_m}{\penvtwo}$
  cannot have $\genteig$ as a reduct.

  Then $\var \in \dom{\penvtwo}$. Suppose then that $\var:\ttmtwo\in\penvtwo$.
  Without loss of generality, $\ttmtwo$ is of the form
  $\ttmtwo = (\ttmtwo_1\arrow\hdots\arrow\ttmtwo_k\arrow\teigtwo)$.
  For an arbitrary \metaterm $\utm$,
  and assuming $\var \notin \fv{\utm}$,
  let us write $\lamt{\ttm}{\utm}$ to abbreviate $\lam{\var}{(\eunit{\verif{\ttm}{\var}}{\utm})}$.
  Note that:
  \[
    \begin{array}{rlll}
    &&
      \verifteig{((\lamt{\ttmtwo_1}{\hdots\lamt{\ttmtwo_k}{\gen{\teigtwo}}})\,\subs{\utmtwo_1}{\penvtwo}\hdots\subs{\utmtwo_m}{\penvtwo})}
    \\
    & = &
      \verifteig{(\gen{\ttmtwo}\,\subs{\utmtwo_1}{\penvtwo}\hdots\subs{\utmtwo_m}{\penvtwo})}
    \\
    & = &
      \verifteig{(\subs{\var}{\penvtwo}\,\subs{\utmtwo_1}{\penvtwo}\hdots\subs{\utmtwo_m}{\penvtwo})}
    \\
    & \tos &
      \iunit
    \end{array}
  \]
  We first claim that $k = m$.
  Let us discard the cases $k < m$ and $k > m$:
  
  The case $k < m$ is impossible, because if $k < m$ we obtain a contradiction.
    We claim that
    $\verifteig{((\lamt{\ttmtwo_1}{\hdots\lamt{\ttmtwo_k}{\gen{\teigtwo}}})\,\subs{\utmtwo_1}{\penvtwo}\hdots\subs{\utmtwo_m}{\penvtwo})}$
    cannot reduce to $\iunit$.
    To show this, we consider two cases:
    \begin{enumerate}
    \item
      \label{thm:completeness:case_k_lt_m_stuck}
      If for every index $1 \leq i \leq m$ we have that
      $\verif{\ttmtwo_i}{\subs{\utmtwo_i}{\penvtwo}} \tos \iunit$,
      then:
      $
        \verifteig{((\lamt{\ttmtwo_1}{\hdots\lamt{\ttmtwo_k}{\gen{\teigtwo}}})\,\subs{\utmtwo_1}{\penvtwo}\hdots\subs{\utmtwo_m}{\penvtwo})}
        \tos
        \verifteig{(\gen{\teigtwo}\,\subs{\utmtwo_{k+1}}{\penvtwo}\hdots\subs{\utmtwo_m}{\penvtwo})}
      $.
      The reducts of $\gen{\teigtwo}\,\subs{\utmtwo_{k+1}}{\penvtwo}\hdots\subs{\utmtwo_m}{\penvtwo}$
      always have $\gen{\teigtwo}$ at the head and exactly $m - k > 0$ arguments,
      so the argument of $\verifteig{\ARG}$ does not reduce to $\genteig$,
      and the \metaterm does not reduce to $\iunit$.
    \item
      Otherwise, let $1 \leq i \leq m$ be the least index such that
      $\verif{\ttmtwo_i}{\subs{\utmtwo_i}{\penvtwo}}$ does not reduce to $\iunit$.
      Then:
      \[
        \begin{array}{ll}
        &
        \verifteig{((\lamt{\ttmtwo_1}{\hdots\lamt{\ttmtwo_k}{\gen{\teigtwo}}})\,\subs{\utmtwo_1}{\penvtwo}\hdots\subs{\utmtwo_m}{\penvtwo})}
        \\
        \tos &
        \verifteig{((\eunit{\verif{\ttmtwo_i}{\subs{\utmtwo_i}{\penvtwo}}}{(\lamt{\ttmtwo_{i+1}}{\hdots\lamt{\ttmtwo_k}{\gen{\teigtwo}}})})\,\subs{\utmtwo_{i+1}}{\penvtwo}\hdots\subs{\utmtwo_m}{\penvtwo})}
        \end{array}
      \]
      Since $\verif{\ttmtwo_i}{\subs{\utmtwo_i}{\penvtwo}}$ does not reduce to $\iunit$,
      the unit eliminator
      $(\eunit{\verif{\ttmtwo_i}{\subs{\utmtwo_i}{\penvtwo}}}{\ARG})$
      does not reduce to a redex,
      so the argument of $\verifteig{\ARG}$ does not reduce to $\genteig$,
      and the \metaterm does not reduce to $\iunit$.
    \end{enumerate}

  The case $k > m$ is impossible, because if $k > m$ we obtain a contradiction.
    We claim that
    $\verifteig{((\lamt{\ttmtwo_1}{\hdots\lamt{\ttmtwo_k}{\gen{\teigtwo}}})\,\subs{\utmtwo_1}{\penvtwo}\hdots\subs{\utmtwo_m}{\penvtwo})}$
    cannot reduce to $\iunit$.
    To show this, note first that
    if there is an index $1 \leq i \leq k$ such that
    $\verif{\ttmtwo_i}{\subs{\utmtwo_i}{\penvtwo}}$ does not reduce to $\iunit$,
    the situation is similar as for \cref{thm:completeness:case_k_lt_m_stuck} above.
    If for every index $1 \leq i \leq k$ we have that
    $\verif{\ttmtwo_i}{\subs{\utmtwo_i}{\penvtwo}} \tos \iunit$,
    then:
    \[
      \verifteig{((\lamt{\ttmtwo_1}{\hdots\lamt{\ttmtwo_k}{\gen{\teigtwo}}})\,\subs{\utmtwo_1}{\penvtwo}\hdots\subs{\utmtwo_m}{\penvtwo})}
      \tos
      \verifteig{(\lamt{\ttmtwo_{m+1}}{\hdots\lamt{\ttmtwo_k}{\gen{\teigtwo}}})}
    \]
    Note that the reducts of $\lamt{\ttmtwo_{m+1}}{\ARG}$ always start with an abstraction,
    so the argument of $\verifteig{\ARG}$ does not reduce to $\genteig$,
    and the \metaterm does not reduce to $\iunit$.

  Hence $k = m$.
  If there is an index $1 \leq i \leq k$ such that
  $\verif{\ttmtwo_i}{\subs{\utmtwo_i}{\penvtwo}}$ does not reduce to $\iunit$,
  the situation is similar as for \cref{thm:completeness:case_k_lt_m_stuck} above.
  Hence $\verif{\ttmtwo_i}{\subs{\utmtwo_i}{\penvtwo}} \tos \iunit$
  for every index $1 \leq i \leq k$, and we have that:
  \[
    \verifteig{((\lamt{\ttmtwo_1}{\hdots\lamt{\ttmtwo_k}{\gen{\teigtwo}}})\,\subs{\utmtwo_1}{\penvtwo}\hdots\subs{\utmtwo_m}{\penvtwo})}
    \tos
    \verifteig{\gen{\teigtwo}}
  \]
  so $\verifteig{\gen{\teigtwo}} \tos \iunit$,
  which means that $\teig = \teigtwo$.

  Recall that $\utmtwo^\downarrow$ is a pure \metaterm in normal form,
  and that $\utmtwo^\downarrow = \var\,\utmtwo_1\hdots\utmtwo_k$,
  so $\utmtwo_1,\hdots,\utmtwo_k$ are also pure \metaterms in normal form.
  Note also that:
  \[
    \sig{\utm^\downarrow}{\ttm}
    \geq
    \sig[\penvtwo]{\utmtwo^\downarrow}{\teig}
    >
    \sig[\penvtwo]{\utmtwo_i}{\ttmtwo_i}
  \]
  holds for all $1 \leq i \leq k$, by resorting to \cref{lemma:sizeVar_argument}.
  Since, for each $1 \leq i \leq k$,
  we know that $\verif{\ttmtwo_i}{\subs{\utmtwo_i}{\penvtwo}} \tos \iunit$,
  by \ih we obtain that
  $\judg{\penvtwo}{\utmtwo^\downarrow_i}{\ttmtwo_i}$.
  Since $\var:(\ttmtwo_1\arrow\hdots\arrow\ttmtwo_k\arrow\teig) \in \penv \subseteq \penvtwo$,
  we can derive the judgment
  $\judg{\penvtwo}{\var\,\utmtwo^\downarrow_1\hdots\utmtwo^\downarrow_k}{\teig}$.
  Take $\tm := \lam{\var_1\hdots\var_n}{\var\,\utmtwo^\downarrow_1\hdots\utmtwo^\downarrow_k}$.
  Note that:
  \[
    \begin{array}{llll}
      \tm
    & = &
      \lam{\var_1\hdots\var_n}{\var\,\utmtwo^\downarrow_1\hdots\utmtwo^\downarrow_k}
    \\
    & \eqbeta &
      \lam{\var_1\hdots\var_n}{\var\,\utmtwo_1\hdots\utmtwo_k}
    \\
    & = &
      \lam{\var_1\hdots\var_n}{\utmtwo^\downarrow}
    \\
    & \eqbeta &
       \lam{\var_1\hdots\var_n}{\utm^\downarrow\,\var_1\hdots\var_n}
      & \text{as $\utm^\downarrow\,\var_1\hdots\var_n \tobetas \utmtwo^\downarrow$}
    \\
    & \toetas &
       \utm^\downarrow
       & \text{using $n$ steps of $\eta$-reduction}
    \end{array}
  \]
  Thus $\tm \eqbeta\toetas \utm^\downarrow \tobetainvs \utm$.
  By confluence of $\tobeta$~\cite[Theorem~3.2.8]{Barendregt1984}, we have
  $\tm \tobetas \tobetainvs \toetas \utm^\downarrow \tobetainvs \utm$.
  Moreover, since $\tobeta$ and $\toeta$ commute~\cite[Lemma~3.3.8]{Barendregt1984}
  and $\utm^\downarrow$ is in $\beta$-normal form,
  the situation is $\tm \tobetas \toetas \utm^\downarrow \tobetainvs \utm$.

  Finally, since $\penvtwo = (\penv,\var_1:\ttm_1,\hdots,\var_n:\ttm_n)$,
  we have that
  $\judg{\penv}{\lam{\var_1\hdots\var_n}{\var\,\utmtwo^\downarrow_1\hdots\utmtwo^\downarrow_k}}{\ttm_1\arrow\hdots\ttm_n\arrow\teig}$,
  that is 
  $\judg{\penv}{\tm}{\ttm}$.
  Since $\tm \tobetas\toetas \utm^\downarrow$,
  by subject reduction~\cite[Proposition~1B.6]{Barendregt2013}
  we conclude that $\judg{\penv}{\utm^\downarrow}{\ttm}$.
\end{proof}

\section{Second-Order Logic --- Properties of $\lambdaCheckF$}

\subsection{Second-Order: Subcommutativity of Weak Head Reduction (\cref{prop:f:subcommutative})}
  \label{a:sec:f:subcommutativity}
  
\begin{proposition}
\label{a:prop:f:subcommutative}
Weak head reduction $\tow$ is subcommutative.
\end{proposition}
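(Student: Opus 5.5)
Recall that subcommutativity means ${\toinv_{\symWeakHead}\,\tow} \subseteq {\tow^=\,\toinv_{\symWeakHead}^=}$: given $\utm \tow \utm_1$ and $\utm \tow \utm_2$, we must find $\utm_3$ with $\utm_1 \tow^= \utm_3$ and $\utm_2 \tow^= \utm_3$. I will write each step as the contraction of a root redex in a weak head context, $\utm = \ctxof{\wctx_1}{\utmfive_1} = \ctxof{\wctx_2}{\utmfive_2}$. The proof is by induction on $\wctx_1$, or equivalently on $\utm$, with a case analysis on the relative position of the two holes. Because weak head contexts never go under binders and have a single spine, the two holes are either equal, disjoint-free nested, or one lies strictly inside the other along the spine. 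Two holes in different subterms cannot occur, since each context constructor has exactly one hole-bearing position.

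\textbf{Case 1: both contexts are non-empty with the same outermost constructor.} For example, $\wctx_i = \wctx'_i\,\utmtwo$, or $\verif{\ttm}{\wctx'_i}$, or $\freshf{\teig}{\wctx'_i}$. The induction hypothesis applied to the immediate subterm gives a common $\tow^=$-reduct $\utm'_3$, and closing it under the same constructor gives $\utm_3$. Weak head reduction is closed under weak head contexts, so this preserves $\tow^=$.

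\textbf{Case 2: the holes coincide.} Then $\utmfive_1 = \utmfive_2$, and the two steps must use the same rule. No left-hand side unifies with two rules: rules 5 and 7 act on generators, rules 6 and 8 on verifiers of different connectives, and rule 4 applies only to an eigenvariable. The only point needing care is rule 8, where the fresh eigenvariable is chosen up to $\alpha$-renaming of the $\nu$-binder, so the results are identical and $\utm_3 = \utm_1 = \utm_2$.

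\textbf{Case 3: one step is at the root and the other lies strictly inside it along the weak head spine.} This is the main obstacle, and I will enumerate the root rules whose left-hand side has a metavariable in a weak head position.
\begin{enumerate}
\item Rule 6, $\verif{\ttm\imp\ttmtwo}{\utm} \to \verif{\ttmtwo}{(\utm\,\gen{\ttm})}$, with an inner step $\utm \tow \utm'$. Both sides close in one step: the left side gives $\verif{\ttmtwo}{(\utm'\,\gen{\ttm})}$ via the context $\verif{\ttmtwo}{\ctxhole\,\gen{\ttm}}$, and the right side reaches the same term by rule 6.
\item Rule 8 is handled the same way using the context $\freshf{\teig}{\verif{\ttm\sub{\tvar}{\teig}}{(\ctxhole\,\teig)}}$.
\item Rule 9, $\freshf{\teig}{\utm} \to \utm$ with $\utm \tow \utm'$. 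I will use the fact that reduction does not create free eigenvariables, so $\teig \notin \fv{\utm'}$. Both sides then close to $\utm'$.
\item Rules 1 and 2 have an abstraction at the head. Rule 3 has $\iunit$ and rule 4 has $\gen{\teig}$ in its argument position, and these are in $\tow$-normal form. So there are no inner weak head redexes in these cases, once I check that $\verif{\teig}{\wctx}$ cannot have its hole on $\gen{\teig}$, which is a normal form.
\end{enumerate}
Rules 5 and 7 have no subterm at all.

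The remaining configuration is the one in which one context is $\ctxhole$ and the other is non-empty. It is covered by Case 3 together with the observation that the non-root step's redex lies inside the weak head argument of the root redex. To verify this I will check that the non-empty context's outermost constructor matches the shape of the root redex, namely $\verif{\ttm}{\wctx'}$ or $\freshf{\teig}{\wctx'}$. The one delicate point is preservation of freshness side-conditions under substitution and reduction in rules 8 and 9, which I will state as a small auxiliary fact: if $\utm \to \utm'$, then $\fv{\utm'} \subseteq \fv{\utm}$ for eigenvariables.
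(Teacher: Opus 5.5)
Your proposal is correct and is essentially the paper's proof. It uses induction on the term with a root-versus-internal split. The only nontrivial peaks are rules~6, 8 and~9 against a step inside their weak-head argument, and each closes in one step on either side, using for rule~9 that reduction creates no free eigenvariables. Rules~1--4 have a $\tow$-irreducible subterm in weak-head position, so they produce no such peak.

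One harmless slip: weak head contexts do go under the $\nu$-binder via $\freshf{\teig}{\wctx}$. What your argument actually uses is that they never go under $\lambda$ or $\Lambda$, and that each constructor has a single hole-bearing position.
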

\begin{proof}
Suppose that $\utm_1 \tow \utm_2$ and $\utm_1 \tow \utm_3$,
and let us show that there exists a \metaterm $\utm_4$
such that $\utm_2 \tow^= \utm_4$ and $\utm_3 \tow^= \utm_4$.
We proceed by induction on $\utm_1$.

Note that the cases in which $\utm_1$ is a \textbf{\proofVariable} ($\utm_1 = \pvar$),
a \textbf{\proofAbstraction} ($\utm_1 = \plamf{\pvar}{\utm'_1}$),
a \textbf{\logicalAbstraction} ($\utm_1 = \pallfi{\tvar}{\utm'_1}$),
and a \textbf{success} ($\utm_1 = \iunit$)
are impossible, because \metaterms of these shapes
are $\tow$-irreducible.

If $\utm_1$ is a \textbf{\proofApplication}, $\utm_1 = \utmtwo_1\,\utmthree_1$:
then there are three subcases, depending on whether the
steps $\utm_1 \tow \utm_2$ and $\utm_1 \tow \utm_3$
are both at the root, both internal to $\utmtwo_1$,
or one at the root and one internal:
\begin{enumerate}
\item
  If both steps are at the root:
  note that there is only one rewriting rule in $\lambdaCheckF$ that involves
  an application at the root.
  Hence $\utmtwo_1 = \plamf{\pvar}{\utmtwo'_1}$
  and the two steps are of the form
  $(\plamf{\pvar}{\utmtwo'_1})\,\utmthree_1
   \tow
   \utmtwo'_1\sub{\pvar}{\utmthree_1}$,
  so $\utm_2 = \utm_3 = \utmtwo'_1\sub{\pvar}{\utmthree_1}$
  and we are done.
\item
  If both steps are internal to $\utmtwo_1$:
  then the step $\utm_1 \tow \utm_2$
  is of the form $\utmtwo_1\,\utmthree_1 \tow \utmtwo_2\,\utmthree_1$
  with $\utmtwo_1 \tow \utmtwo_2$.
  Similarly, the step $\utm_1 \tow \utm_2$
  is of the form $\utmtwo_1\,\utmthree_1 \tow \utmtwo_3\,\utmthree_1$
  with $\utmtwo_1 \tow \utmtwo_3$.
  By \ih, there exists a term $\utmtwo_4$ such that
  $\utmtwo_2 \tow^= \utmtwo_4$ and $\utmtwo_3 \tow^= \utmtwo_4$.
  Taking $\utm_4 := \utmtwo_4\,\utmthree_1$
  we have that
  $\utm_2 = \utmtwo_2\,\utmthree_1 \tow \utmtwo_4\,\utmthree_1 = \utm_4$
  and
  $\utm_3 = \utmtwo_3\,\utmthree_1 \tow \utmtwo_3\,\utmthree_1 = \utm_4$.
\item
  If one step is at the rule, and one internal to $\utmtwo_1$:
  we claim that this case is impossible.
  Without loss of generality,
  suppose that the first step $\utm_1 \tow \utm_2$ is at the root
  and the second step $\utm_1 \tow \utm_3$ is internal to $\utmtwo_1$.
  Then $\utmtwo_1 = \lam{\var}{\utmtwo'_1}$
  and the first step is of the form
  $\utm_1 = (\plamf{\pvar}{\utmtwo'_1})\,\utmthree_1
          \tow \utmtwo'_1\sub{\pvar}{\utmthree_1} = \utm_2$.
  Then the second step is of the form
  $\utm_1 = (\plamf{\pvar}{\utmtwo'_1})\,\utmthree_1
          \tow \utmtwo_2\,\utmthree_1$
  with $\plamf{\pvar}{\utmtwo'_1} \tow \utmtwo_2$,
  but there are no rules that allow weak head reduction $\tow$
  on a $\lambda$-abstraction.
\end{enumerate}

If $\utm_1$ is a \textbf{\logicalApplication} ($\utm_1 = \utmtwo_1\,\ttm$)
or a \textbf{guard} ($\utm_1 = \eunit{\utmtwo_1}{\utmtwo_2}$),
the proof is similar to that of the \proofAbstraction.

If $\utm_1$ is a \textbf{generator}, $\utm_1 = \gen{\ttm}$,
the proof is immediate because there is at most one way to reduce $\utm_1$.

If $\utm_1$ is a \textbf{verifier}, $\utm_1 = \verif{\ttm}{\utmtwo_1}$
there are four cases, depending on the shape of $\ttm$:
\begin{enumerate}
\item \LogicalVariable ($\ttm = \tvar$):
  Then the only way to reduce $\utm_1$ is with a reduction internal to
  $\utmtwo_1$ and we conclude resorting to the \ih.
\item \LogicalEigenvariable ($\ttm = \teig$):
  Then there are two cases.
  If $\utmtwo_1 = \gen{\teig}$,
  then $\utm_1$ reduces at the root,
  so $\utm_1 = \verif{\teig}{\gen{\teig}} \tow \iunit$
  and this is the only way to reduce $\utm_1$.
  Otherwise $\utmtwo_1 \neq \gen{\teig}$,
  so both steps $\utm_1 \to \utm_2$ and $\utm_1 \to \utm_3$
  must be internal to $\utmtwo_1$ and we conclude by
  resorting to the \ih.
\item Implication ($\ttm = (\ttmtwo \imp \ttmthree)$):
  If both steps are at the root or both are internal to $\utmtwo_1$,
  it is easy to conclude,
  similarly as in the case of the \proofAbstraction.
  Without loss of generality, suppose that
  $\utm_1 \tow \utm_2$ is at the root
  and
  $\utm_1 \tow \utm_3$ is internal to $\utmtwo_1$.
  Then the situation is the following,
  where $\utmtwo_1 \tow \utmtwo_2$:
  \[
    \xymatrix{
      \utm_1 = \verif{\ttm\imp\ttmtwo}{\utmtwo_1}
      \ar[r]
      \ar[d]
    &
      \verif{\ttmtwo}{(\utmtwo_1\,\gen{\ttm})} = \utm_2
      \ar@{.>}[d]
    \\
      \utm_3 = \verif{\ttm\imp\ttmtwo}{\utmtwo_2}
      \ar@{.>}[r]
    &
      \verif{\ttmtwo}{(\utmtwo_2\,\gen{\ttm})} = \utm_4
    }
  \]
\item Universal quantification ($\ttm = (\allf{\tvar}{\ttmtwo})$):
  If both steps are at the root or both are internal to $\utmtwo_1$,
  it is easy to conclude,
  similarly as in the case of the \proofAbstraction.
  Without loss of generality, suppose that
  $\utm_1 \tow \utm_2$ is at the root
  and
  $\utm_1 \tow \utm_3$ is internal to $\utmtwo_1$.
  Then the situation is the following,
  where $\utmtwo_1 \tow \utmtwo_2$:
  \[
    \xymatrix{
      \utm_1 = \verif{\allf{\tvar}{\ttmtwo}}{\utmtwo_1}
      \ar[r]
      \ar[d]
    &
      \freshf{\teig}{\verif{\ttmtwo\sub{\tvar}{\teig}}{(\utmtwo_1\,\teig)}}
      = \utm_2
      \ar@{.>}[d]
    \\
      \utm_3 = \verif{\allf{\tvar}{\ttmtwo}}{\utmtwo_2}
      \ar@{.>}[r]
    &
      \freshf{\teig}{\verif{\ttmtwo\sub{\tvar}{\teig}}{(\utmtwo_2\,\teig)}}
      = \utm_4
    }
  \]
\end{enumerate}

If $\utm_1$ is a \textbf{fresh \logicalEigenvariable introduction}
($\utm_1 = \freshf{\teig}{\utmtwo_1})$,
  then if both steps are at the root or both are internal to $\utmtwo_1$,
  it is easy to conclude, similarly as in the case of the \proofAbstraction.
  Without loss of generality, suppose that
  $\utm_1 \tow \utm_2$ is at the root
  and
  $\utm_1 \tow \utm_3$ is internal to $\utmtwo_1$.
  Then the situation is the following,
  where $\utmtwo_1 \tow \utmtwo_2$:
  \[
    \xymatrix{
      \utm_1 = \freshf{\teig}{\utmtwo_1}
      \ar[r]
      \ar[d]
    &
      \utmtwo_1 = \utm_2
      \ar@{.>}[d]
    \\
      \utm_3 = \freshf{\teig}{\utmtwo_2}
      \ar@{.>}[r]
    &
      \utmtwo_2 = \utm_4
    }
  \]
  Note that the step on the top of the diagram is such that
  $\teig \notin \fv{\utmtwo_1}$
  and reduction does not create free \logicalEigenvariables so
  the fact that $\teig \notin \fv{\utmtwo_2}$
  allows us to perform the step on the bottom of the diagram.
\end{proof}

\subsection{Second-Order: Confluence (\cref{prop:f:confluence})}
  \label{a:sec:f:confluence}
  
\begin{definition}[Simultaneous reduction]
The relation $\utm \sto \utmtwo$,
called \emph{simultaneous reduction},
is defined as follows between \metaterms
of $\lambdaCheckF$:
\[
  \indrule{var}{
  }{
    \pvar \sto \pvar
  }
  \HS
  \indrule{lam}{
    \utm \sto \utm'
  }{
    \plamf{\pvar}{\utm} \sto \plamf{\pvar}{\utm'}
  }
\]
\[
  \indrule{app1}{
    \utm \sto \utm'
    \HS
    \utmtwo \sto \utmtwo'
  }{
    \utm\,\utmtwo \sto \utm'\,\utmtwo'
  }
  \HS
  \indrule{app2}{
    \utm \sto \utm'
    \HS
    \utmtwo \sto \utmtwo'
  }{
    (\plamf{\pvar}{\utm})\,\utmtwo \sto \utm'\sub{\pvar}{\utmtwo'}
  }
\]
\[
  \indrule{tlam}{
    \utm \sto \utm'
  }{
    \pallfi{\tvar}{\utm} \sto \pallfi{\tvar}{\utm'}
  }
\]
\[
  \indrule{tapp1}{
    \utm \sto \utm'
  }{
    \utm\,\ttm \sto \utm'\,\ttm
  }
  \HS
  \indrule{tapp2}{
    \utm \sto \utm'
  }{
    (\pallfi{\tvar}{\utm})\,\ttm \sto \utm'\sub{\tvar}{\ttm}
  }
\]
\[
  \indrule{success}{
  }{
    \iunit \sto \iunit
  }
\]
\[
  \indrule{guard1}{
    \utm \sto \utm'
    \HS
    \utmtwo \sto \utmtwo'
  }{
    \eunit{\utm}{\utmtwo} \sto \eunit{\utm'}{\utmtwo'}
  }
  \HS
  \indrule{guard2}{
    \utm \sto \utm'
  }{
    \eunit{\iunit}{\utm} \sto \utm'
  }
\]
\[
  \indrule{g-refl}{
  }{
    \gen{\ttm} \sto \gen{\ttm}
  }
  \HS
  \indrule{g-imp}{
  }{
    \gen{\ttm\imp\ttmtwo} \sto \plamf{\var}{\eunit{\verif{\ttm}{\pvar}}{\gen{\ttmtwo}}}
  }
\]
\[
  \indrule{g-all}{
  }{
    \gen{\allf{\tvar}{\ttm}} \sto \pallfi{\tvar}{\gen{\ttm}}
  }
\]
\[
  \indrule{v-cong}{
    \utm \sto \utm'
  }{
    \verif{\ttm}{\utm} \sto \verif{\ttm}{\utm'}
  }
  \HS
  \indrule{v-eig}{
  }{
    \verif{\teig}{\gen{\teig}} \sto \iunit
  }
\]
\[
  \indrule{v-imp}{
    \utm \sto \utm'
  }{
    \verif{\ttm\imp\ttmtwo}{\utm} \sto \verif{\ttmtwo}{(\utm'\,\gen{\ttm})}
  }
\]
\[
  \indrule{v-all}{
    \utm \sto \utm'
    \HS
    \teig\notin\fv{\ttm,\utm}
  }{
    \verif{\allf{\tvar}{\ttm}}{\utm} \sto \freshf{\teig}{\verif{\ttm\sub{\tvar}{\teig}}{(\utm'\,\teig)}}
  }
\]
\[
  \indrule{fresh1}{
    \utm \sto \utm'
  }{
    \freshf{\teig}{\utm} \sto \freshf{\teig}{\utm'}
  }
  \HS
  \indrule{fresh2}{
    \utm \sto \utm'
    \HS
    \teig\notin\fv{\utm}
  }{
    \freshf{\teig}{\utm} \sto \utm'
  }
\]
\end{definition}

\begin{remark}
\quad
\begin{enumerate}
\item
  If $\plamf{\pvar}{\utm} \sto \utmtwo$
  then $\utmtwo$ is of the form
  $\utmtwo = \lam{\var}{\utm'}$ and $\utm \sto \utm'$.
\item
  If $\pallfi{\tvar}{\utm} \sto \utmtwo$
  then $\utmtwo$ is of the form
  $\utmtwo = \pallfi{\tvar}{\utm'}$ and $\utm \sto \utm'$.
\item
  If $\iunit \sto \utmtwo$ then $\utmtwo = \iunit$.
\item
  If $\gen{\teig} \sto \utmtwo$ then $\utmtwo = \gen{\teig}$.
\end{enumerate}
\end{remark}

\begin{lemma}[Reflexivity]
\label{lem:f:sto_refl}
$\utm \sto \utm$
\end{lemma}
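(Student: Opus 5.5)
The plan is a straightforward structural induction on the \metaterm $\utm$, where for each syntactic constructor I pick the ``congruence'' rule of simultaneous reduction and close the premises with the induction hypothesis.

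The cases go as follows. A \proofVariable $\pvar$ is handled by \indrulename{var}, and the success constant $\iunit$ by \indrulename{success}. A generator $\gen{\ttm}$ is handled by \indrulename{g-refl}; this needs no hypothesis, since generators have no \metaterm subterms. For the binders $\plamf{\pvar}{\utm'}$, $\pallfi{\tvar}{\utm'}$ and $\freshf{\teig}{\utm'}$, the \ih gives $\utm' \sto \utm'$, and we apply \indrulename{lam}, \indrulename{tlam} and \indrulename{fresh1} respectively. For the applications $\utm'\,\utmtwo$ and $\utm'\,\ttm$, we apply \indrulename{app1} and \indrulename{tapp1} to the hypotheses for the immediate \metaterm subterms. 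A guard $\eunit{\utm'}{\utmtwo}$ is handled by \indrulename{guard1}, and a verifier $\verif{\ttm}{\utm'}$ by \indrulename{v-cong}.

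I must be careful to use only the congruence rules. The reduction-performing rules, namely \indrulename{app2}, \indrulename{tapp2}, \indrulename{guard2}, \indrulename{g-imp}, \indrulename{g-all}, \indrulename{v-eig}, \indrulename{v-imp}, \indrulename{v-all} and \indrulename{fresh2}, are never needed. Since every constructor of the \metaterm grammar of $\lambdaCheckF$ has a congruence rule, the case analysis is exhaustive.

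I expect essentially no obstacle. The only point to check is that working modulo $\alpha$-renaming causes no trouble in the binder cases. The induction is on the size of $\utm$, which $\alpha$-renaming preserves, so the \ih applies to the body after any renaming.
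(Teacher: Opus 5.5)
Your proof is correct and takes the same approach as the paper, whose proof is just ``straightforward by induction on $\utm$.'' Your case analysis, which covers every constructor using only the congruence rules \indrulename{var}, \indrulename{success}, \indrulename{g-refl}, \indrulename{lam}, \indrulename{tlam}, \indrulename{fresh1}, \indrulename{app1}, \indrulename{tapp1}, \indrulename{guard1} and \indrulename{v-cong}, spells out exactly that argument.
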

\begin{proof}
Straightforward by induction on $\utm$.
\end{proof}

\begin{lemma}[Substitution]
\label{lem:f:sto_substitution}
If $\utm \sto \utm'$ and $\utmtwo \sto \utmtwo'$
then $\utm\sub{\var}{\utmtwo} \sto \utm'\sub{\var}{\utmtwo'}$.
\end{lemma}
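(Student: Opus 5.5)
We argue by induction on the derivation of $\utm \sto \utm'$, keeping $\utmtwo \sto \utmtwo'$ and the variable $\var$ fixed. By $\alpha$-conversion, every variable bound by the rule at the root of the derivation (the $\pvar$ in \indrulename{lam} and \indrulename{app2}, the $\tvar$ in \indrulename{tlam} and \indrulename{tapp2}, the $\teig$ in \indrulename{v-all}, \indrulename{fresh1} and \indrulename{fresh2}) is distinct from $\var$ and not free in $\utmtwo,\utmtwo'$. This freshness is used in every binder case.

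The base and congruence cases are routine. In \indrulename{var}, if the variable is $\var$ we get $\utmtwo \sto \utmtwo'$ directly; otherwise \indrulename{var} applies. The axiom rules \indrulename{success}, \indrulename{g-refl}, \indrulename{g-imp}, \indrulename{g-all} and \indrulename{v-eig} involve no \proofVariables. (\LogicalTerms contain none, and the right-hand side of \indrulename{g-imp} binds its own variable.) So both sides are unchanged and the same axiom applies. The congruence rules \indrulename{lam}, \indrulename{app1}, \indrulename{tlam}, \indrulename{tapp1}, \indrulename{guard1}, \indrulename{v-cong} and \indrulename{fresh1} follow by applying the \ih to each premise and reapplying the rule, since substitution commutes with each constructor. \indrulename{guard2} is similar, because $\iunit$ is unaffected by substitution.

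The redex-contracting rules need substitution-composition identities. In \indrulename{app2} we have $(\plamf{\pvartwo}{\utm_0})\,\utmthree \sto \utm_0'\sub{\pvartwo}{\utmthree'}$. The \ih gives $\utm_0\sub{\var}{\utmtwo} \sto \utm_0'\sub{\var}{\utmtwo'}$ and likewise for $\utmthree$. Applying \indrulename{app2} then yields $\utm_0'\sub{\var}{\utmtwo'}\sub{\pvartwo}{\utmthree'\sub{\var}{\utmtwo'}}$, which equals $\utm_0'\sub{\pvartwo}{\utmthree'}\sub{\var}{\utmtwo'}$ by the standard substitution lemma, using $\pvartwo \notin \fv{\utmtwo'}$ and $\pvartwo \neq \var$. \indrulename{tapp2} is analogous: substituting the \logicalTerm $\ttm$ for $\tvar$ commutes with substituting $\utmtwo'$ for $\var$, because $\tvar \notin \fv{\utmtwo'}$ and $\ttm$ contains no \proofVariables. \indrulename{v-imp} and \indrulename{v-all} are direct, since $\gen{\ttm}$ and $\teig$ are untouched by the substitution.

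I expect the side conditions on \logicalEigenvariables to be the main point needing care. In \indrulename{v-all} we must check $\teig \notin \fv{\ttm, \utm\sub{\var}{\utmtwo}}$. In \indrulename{fresh2} we must check $\teig \notin \fv{\utm\sub{\var}{\utmtwo}}$. The naive bound $\fv{\utm\sub{\var}{\utmtwo}} \subseteq \fv{\utm}\cup\fv{\utmtwo}$ suffices only if $\teig \notin \fv{\utmtwo}$. Since $\teig$ is bound in the redex, we rename it before applying the \ih, choosing it fresh also for $\utmtwo$ and $\utmtwo'$. We must then verify that the renamed derivation is still a derivation of the same size, which is a routine equivariance property of $\sto$ under renaming of \logicalEigenvariables. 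With this done, the premises and side conditions transfer, and each case closes by reapplying the same rule.
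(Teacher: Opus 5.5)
Your proof is correct and is the same argument as the paper's. The paper simply states that the lemma is straightforward by induction on the derivation of $\utm \sto \utm'$, citing reflexivity (\cref{lem:f:sto_refl}) for the \indrulename{var} case; as your treatment shows, the \indrulename{var} rule alone already suffices there. What you add is an explicit account of the substitution-commutation identities in \indrulename{app2} and \indrulename{tapp2}, and of the eigenvariable freshness side conditions in \indrulename{v-all}, \indrulename{fresh1} and \indrulename{fresh2}, which the paper leaves implicit under its convention of working up to $\alpha$-renaming.
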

\begin{proof}
Straightforward by induction on the derivation of $\utm \sto \utm'$,
resorting to reflexivity~(\cref{lem:f:sto_refl}) in the \indrulename{var} case.
\end{proof}

\begin{lemma}[Type substitution]
\label{lem:f:sto_type_substitution}
If $\utm \sto \utm'$
then $\utm\sub{\tvar}{\ttm} \sto \utm'\sub{\tvar}{\ttm}$.
\end{lemma}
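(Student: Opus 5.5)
I will prove the statement by induction on the derivation of $\utm \sto \utm'$, checking rule by rule that the same rule still applies to the substituted terms. A preliminary step is to fix the hygiene conventions. By $\alpha$-renaming, every bound \logicalVariable in $\utm$, and every \logicalVariable bound by $\pallfi{}{}$ in the rules, is distinct from $\tvar$ and from $\fv{\ttm}$. Eigenvariables bound by $\freshf{}{}$ are never affected by a \logicalSubstitution, since they live in a disjoint namespace and $\ttm$ is a \logicalTerm. I also need two standard syntactic facts: $(\utm\sub{\pvar}{\utmtwo})\sub{\tvar}{\ttm} = \utm\sub{\tvar}{\ttm}\sub{\pvar}{\utmtwo\sub{\tvar}{\ttm}}$, and $\utm\sub{\tvartwo}{\ttmtwo}\sub{\tvar}{\ttm} = \utm\sub{\tvar}{\ttm}\sub{\tvartwo}{\ttmtwo\sub{\tvar}{\ttm}}$ when $\tvartwo\neq\tvar$ and $\tvartwo\notin\fv{\ttm}$. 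The type-level analogue for \logicalTerms holds as well.

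Most cases are routine. The axioms \indrulename{var}, \indrulename{success} and \indrulename{g-refl} are immediate. The congruence rules \indrulename{lam}, \indrulename{app1}, \indrulename{tlam}, \indrulename{tapp1}, \indrulename{guard1}, \indrulename{v-cong} and \indrulename{fresh1} follow directly from the \ih, because substitution commutes with each constructor. The rules \indrulename{g-imp} and \indrulename{g-all} are fine because substitution distributes over $\imp$ and $\forall$: $(\allf{\tvartwo}{\ttmtwo})\sub{\tvar}{\ttm} = \allf{\tvartwo}{\ttmtwo\sub{\tvar}{\ttm}}$. The rules \indrulename{guard2} and \indrulename{v-imp} also follow directly from the \ih. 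For \indrulename{app2}, I apply the rule to the substituted premises given by the \ih and then use the first commutation fact. For \indrulename{tapp2}, I use the second commutation fact: $(\pallfi{\tvartwo}{\utm})\sub{\tvar}{\ttm}\,\ttmtwo\sub{\tvar}{\ttm} \sto \utm'\sub{\tvar}{\ttm}\sub{\tvartwo}{\ttmtwo\sub{\tvar}{\ttm}} = \utm'\sub{\tvartwo}{\ttmtwo}\sub{\tvar}{\ttm}$.

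The cases involving eigenvariables are where the non-left-linear rule matters, so I check them carefully. For \indrulename{v-eig}, $\teig$ is an \logicalEigenvariable and is left unchanged by the substitution. Hence $(\verif{\teig}{\gen{\teig}})\sub{\tvar}{\ttm} = \verif{\teig}{\gen{\teig}} \sto \iunit$. This is exactly why the rule is stated for eigenvariables only, and the lemma would fail if it also fired for \logicalVariables. For \indrulename{v-all}, I first choose $\teig$ fresh also for $\ttm$; this is possible by renaming the bound $\teig$ on the right-hand side. Then $\teig\notin\fv{\ttmtwo\sub{\tvar}{\ttm},\utm\sub{\tvar}{\ttm}}$, and I use $\ttmtwo\sub{\tvartwo}{\teig}\sub{\tvar}{\ttm} = \ttmtwo\sub{\tvar}{\ttm}\sub{\tvartwo}{\teig}$, together with the \ih on $\utm\sto\utm'$, to re-apply \indrulename{v-all}. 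For \indrulename{fresh2}, I need $\teig\notin\fv{\utm\sub{\tvar}{\ttm}}$, which holds provided $\teig\notin\fv{\ttm}$.

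I expect the main obstacle to be this last side condition. If $\ttm$ itself mentions $\teig$, then substitution could create a free occurrence of $\teig$ inside the scope of $\freshf{\teig}{}$. The way out is to $\alpha$-rename the binder $\freshf{\teig}{}$ to an eigenvariable not occurring in $\ttm$ before performing the substitution, which is legitimate because we work up to $\alpha$-equivalence. Once the renaming is done uniformly in $\utm$ and $\utm'$, the freshness conditions are preserved and the side conditions of \indrulename{fresh2} and \indrulename{v-all} go through.
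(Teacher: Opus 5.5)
Your proof is correct and takes the same route as the paper: induction on the derivation of $\utm \sto \utm'$, with the key observation that in the \indrulename{v-eig} case $\teig$ is an \logicalEigenvariable and so is untouched by $\sub{\tvar}{\ttm}$. Your explicit handling of the freshness side conditions of \indrulename{v-all}, \indrulename{fresh1} and \indrulename{fresh2} via $\alpha$-renaming fills in details that the paper dismisses as ``straightforward''.
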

\begin{proof}
Straightforward by induction on the derivation of $\utm \sto \utm'$.
Note that in the \indrulename{v-eig} case
we have that $\utm = \verif{\teig}{\gen{\teig}} \sto \iunit = \utm'$
where $\teig$ is a \logicalEigenvariable
so $\teig \neq \tvar$.
\end{proof}

\begin{lemma}[Tait--Martin-Löf technique]
\label{lem:f:tait_martin_lof}
\quad
\begin{enumerate}
\item ${\to} \subseteq {\sto}$
\item ${\sto} \subseteq {\tos}$
\item Diamond property: $\stoinv\,\sto \subseteq \sto\,\stoinv$
\end{enumerate}
\end{lemma}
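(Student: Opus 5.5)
We prove the three items separately, following the standard Tait--Martin-Löf scheme.

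\textbf{Item 1} (${\to} \subseteq {\sto}$) goes by induction on the context in which the step is contracted. For a root step we use the corresponding ``contracting'' rule, with reflexivity (\cref{lem:f:sto_refl}) on the subterms:
\begin{itemize}
\item rule~1 uses \indrulename{app2}, rule~2 uses \indrulename{tapp2}, and rule~3 uses \indrulename{guard2};
\item rule~4 uses \indrulename{v-eig}, rules~5 and~7 use \indrulename{g-imp} and \indrulename{g-all};
\item rule~6 uses \indrulename{v-imp}, rule~8 uses \indrulename{v-all} (the fresh side condition is exactly the freshness of $\teig$), and rule~9 uses \indrulename{fresh2}.
\end{itemize}
For a step under a constructor we use the congruence rule (\indrulename{lam}, \indrulename{app1}, \indrulename{v-cong}, \indrulename{fresh1}, etc.), with the \ih on the reduced subterm and reflexivity on the others.

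\textbf{Item 2} (${\sto} \subseteq {\tos}$) goes by induction on the derivation of $\utm \sto \utm'$. Congruence rules compose the reductions obtained from the \ih inside the context. Contracting rules first reduce the subterms by \ih and then fire the root step. For example, in \indrulename{app2} we have $(\plamf{\pvar}{\utm})\,\utmtwo \tos (\plamf{\pvar}{\utm'})\,\utmtwo' \to \utm'\sub{\pvar}{\utmtwo'}$. In \indrulename{fresh2} we need $\teig\notin\fv{\utm}$ to fire rule~9 before reducing $\utm$; alternatively we reduce first, since reduction does not create free eigenvariables.

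\textbf{Item 3} (diamond property) is the main work. Following Takahashi, we define the complete development $\utm^\bullet$, which contracts all redexes visible in $\utm$ in the sense of the rules above. Examples:
\begin{itemize}
\item $((\plamf{\pvar}{\utm})\,\utmtwo)^\bullet = \utm^\bullet\sub{\pvar}{\utmtwo^\bullet}$;
\item $(\verif{\ttm\imp\ttmtwo}{\utm})^\bullet = \verif{\ttmtwo}{(\utm^\bullet\,\gen{\ttm})}$;
\item $(\freshf{\teig}{\utm})^\bullet = \utm^\bullet$ if $\teig\notin\fv{\utm}$;
\item $(\verif{\teig}{\gen{\teig}})^\bullet=\iunit$.
\end{itemize}
We then show that $\utm \sto \utmtwo$ implies $\utmtwo \sto \utm^\bullet$, by induction on the derivation of $\utm\sto\utmtwo$, using the substitution lemmas \cref{lem:f:sto_substitution} and \cref{lem:f:sto_type_substitution} in the \indrulename{app2} and \indrulename{tapp2} cases. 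The diamond property follows immediately.

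We expect three delicate cases, which are where the non-left-linearity and the side conditions interact.
\begin{itemize}
\item \textbf{Rule \indrulename{v-eig}.} If $\verif{\teig}{\gen{\teig}}\sto\utmtwo$ via \indrulename{v-cong}, then by the remark $\gen{\teig}\sto\gen{\teig}$ only. Hence $\utmtwo=\verif{\teig}{\gen{\teig}}$, which still $\sto$-reduces to $\iunit$. This is exactly why eigenvariables must be uninstantiable; \cref{lem:f:sto_type_substitution} relies on the same fact.
\item \textbf{Rules \indrulename{v-imp} and \indrulename{v-all}.} The development must use the $\bullet$ of the argument inside the unfolded verifier. When $\utm\sto\utmtwo$ used \indrulename{v-cong}, we close with \indrulename{v-imp} or \indrulename{v-all}. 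For \indrulename{v-all}, $\alpha$-conversion lets us choose the same fresh $\teig$ on both sides.
\item \textbf{Rule \indrulename{fresh2}.} Here we need the auxiliary fact that $\utm\sto\utm'$ implies $\fv{\utm'}\subseteq\fv{\utm}$ for eigenvariables, so that the side condition $\teig\notin\fv{\utm'}$ remains true after reduction.
\end{itemize}

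A remaining subtlety is the generator clause. Since $\gen{\ttm\imp\ttmtwo}$ may either stay put (\indrulename{g-refl}) or unfold, we take $\gen{\ttm\imp\ttmtwo}^\bullet$ to be the unfolding. We must then check that the unfolding $\sto$-reduces to itself, which is reflexivity.
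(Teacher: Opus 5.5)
Your proposal is correct. Items 1 and 2 follow the paper: induction on the step (resp.\ on the derivation of $\sto$), with reflexivity for the untouched subterms and the freshness side conditions for rules~8 and~9.

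For item~3 you take a genuinely different route.

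\textbf{The paper's route.} It proves the diamond property directly, by induction on the source term and case analysis on the \emph{pair} of rules used in the two divergent steps. It builds an ad hoc joining term for each mixed pair, such as \indrulename{app1}/\indrulename{app2}, \indrulename{tapp1}/\indrulename{tapp2}, \indrulename{v-cong}/\indrulename{v-imp}, \indrulename{v-cong}/\indrulename{v-all}, \indrulename{fresh1}/\indrulename{fresh2} and \indrulename{g-refl}/\indrulename{g-imp}, plus the symmetric variants.

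\textbf{Your route.} You define a complete development $\utm^\bullet$ and prove the triangle property: $\utm \sto \utmtwo$ implies $\utmtwo \sto \utm^\bullet$. The diamond then follows at once, with the canonical join $\utm_1^\bullet$.

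\textbf{Trade-off.} You pay for an auxiliary function whose clauses involve deliberate choices:
\begin{itemize}
\item a split on whether $\teig \in \fv{\utm}$ for $\freshf{\teig}{\utm}$;
\item unfolding $\gen{\ttm\imp\ttmtwo}$ and $\gen{\allf{\tvar}{\ttm}}$ exactly once;
\item not developing the freshly introduced $\gen{\ttm}$ inside the unfolded verifier.
\end{itemize}
In return, each case has a single fixed target, so there are no mixed or symmetric pairs to enumerate, and you obtain the stronger triangle property. The paper's route avoids defining $\bullet$ but must treat every pair of rules separately.

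\textbf{Shared ingredients.} Both arguments rest on the same two facts you single out. First, $\iunit$ and $\gen{\teig}$ only $\sto$-reduce to themselves, which is what tames the non-left-linear rule~4. Second, $\sto$ creates no free eigenvariables, which is needed for \indrulename{fresh2} and \indrulename{v-all}.
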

\begin{proof}
For the first item, suppose that $\utm \to \utmtwo$.
It is straightforward to show that $\utm \sto \utmtwo$
by induction on $\utm$
and resorting to reflexivity~(\cref{lem:f:sto_refl}).
For the second item, suppose that $\utm \sto \utmtwo$.
It is straightforward to show that $\utm \tos \utmtwo$
by induction on the derivation of $\utm \sto \utmtwo$.

For the third item, suppose that
$\utm_1 \sto \utm_2$ and $\utm_1 \sto \utm_3$
and let us show that there exists a term $\utm_4$
such that $\utm_2 \sto \utm_4$ and $\utm_3 \sto \utm_4$.
We proceed by induction on $\utm_4$.

If $\utm_1$ is a \textbf{variable} ($\utm_1 = \pvar$)
or a \textbf{success} ($\utm_1 = \iunit$),
it is immediate because $\utm_1 = \utm_2 = \utm_3$ and it suffices
to take $\utm_4 := \utm_1$.

If $\utm_1$ is a \textbf{\proofAbstraction} ($\utm_1 = \plamf{\pvar}{\utmtwo_1}$)
or a \textbf{\logicalAbstraction} ($\utm_1 = \pallfi{\tvar}{\utmtwo_1}$),
because the reduction can only be internal to $\utmtwo_1$,
and we may conclude by resorting to the \ih.

If $\utm_1$ is a \textbf{\proofApplication} ($\utm_1 = \utmtwo_1\,\utmthree_1$)
there are four cases (two of them being symmetric),
depending on whether
both steps are derived using \indrulename{app1},
both steps are derived using \indrulename{app2},
or one is derived using \indrulename{app1} and the other one using \indrulename{app2}:
\begin{enumerate}
\item
  \indrulename{app1}/\indrulename{app1}:
  Straightforward resorting to the \ih.
\item
  \indrulename{app2}/\indrulename{app2}:
  Then $\utmtwo_1 = \plamf{\pvar}{\utmtwo'_1}$
  where $\utmtwo'_1 \sto \utmtwo'_2$ and $\utmtwo'_1 \sto \utmtwo'_3$
  and $\utmthree_1 \sto \utmthree_2$ and $\utmthree_1 \sto \utmthree_3$.
  By \ih there exists $\utmtwo'_4$
  such that $\utmtwo'_2 \sto \utmtwo'_4$ and $\utmtwo'_3 \sto \utmtwo'_4$
  and there exists $\utmthree_4$
  such that $\utmthree_2 \sto \utmthree_4$
  and $\utmthree_3 \sto \utmthree_4$.
  Taking $\utm_4 := \utmtwo'_4\sub{\pvar}{\utmthree'_4}$
  and applying \cref{lem:f:sto_substitution}, we have that:
  \[
    \begin{array}{ccc}
      (\plamf{\pvar}{\utmtwo'_1})\,\utmthree_1
      & \sto &
      \utmtwo'_2\sub{\pvar}{\utmthree_2}
    \\ \stovert & & \stovert \\
      \utmtwo'_3\sub{\pvar}{\utmthree_3}
      & \sto &
      \utmtwo'_4\sub{\pvar}{\utmthree_4}
    \end{array}
  \]
\item
  \indrulename{app1}/\indrulename{app2}:
  Then $\utmtwo_1 = \plamf{\pvar}{\utmtwo'_1}$
  where $\utmtwo'_1 \sto \utmtwo'_2$ and $\utmtwo'_1 \sto \utmtwo'_3$
  and $\utmthree_1 \sto \utmthree_2$ and $\utmthree_1 \sto \utmthree_3$.
  By \ih there exists $\utmtwo'_4$
  such that $\utmtwo'_2 \sto \utmtwo'_4$ and $\utmtwo'_3 \sto \utmtwo'_4$
  and there exists $\utmthree_4$
  such that $\utmthree_2 \sto \utmthree_4$
  and $\utmthree_3 \sto \utmthree_4$.
  Taking $\utm_4 := \utmtwo'_4\sub{\pvar}{\utmthree_4}$
  and applying \cref{lem:f:sto_substitution}, we have that:
  \[
    \begin{array}{ccc}
      (\plamf{\pvar}{\utmtwo'_1})\,\utmthree_1
      & \sto &
      (\plamf{\pvar}{\utmtwo'_2})\,\utmthree_2
    \\ \stovert & & \stovert \\
      \utmtwo'_3\sub{\pvar}{\utmthree_3}
      & \sto &
      \utmtwo'_4\sub{\pvar}{\utmthree_4}
    \end{array}
  \]
\end{enumerate}

If $\utm_1$ is a \textbf{\logicalApplication} ($\utm_1 = \utmtwo_1\,\ttm$)
there are four cases (two of them being symmetric),
depending on whether
both steps are derived using \indrulename{tapp1},
both steps are derived using \indrulename{tapp2},
or one is derived using \indrulename{tapp1} and the other one using \indrulename{tapp2}:
\begin{enumerate}
\item
  \indrulename{tapp1}/\indrulename{tapp1}:
  Straightforward resorting to the \ih.
\item
  \indrulename{tapp2}/\indrulename{tapp2}:
  Then $\utmtwo_1 = \pallfi{\tvar}{\utmtwo'_1}$
  where $\utmtwo'_1 \sto \utmtwo'_2$ and $\utmtwo'_1 \sto \utmtwo'_3$
  By \ih there exists $\utmtwo'_4$
  such that $\utmtwo'_2 \sto \utmtwo'_4$
  and $\utmtwo'_3 \sto \utmtwo'_4$.
  Taking $\utm_4 := \utmtwo'_4\sub{\tvar}{\ttm}$
  and applying \cref{lem:f:sto_type_substitution} we have that:
  \[
    \begin{array}{ccc}
      (\pallfi{\tvar}{\utmtwo'_1})\,\ttm
      & \sto &
      \utmtwo'_2\sub{\tvar}{\ttm}
    \\ \stovert & & \stovert \\
      \utmtwo'_3\sub{\tvar}{\ttm}
      & \sto &
      \utmtwo'_4\sub{\tvar}{\ttm}
    \end{array}
  \]
\item
  \indrulename{tapp1}/\indrulename{tapp2}:
  Then $\utmtwo_1 = \pallfi{\tvar}{\utmtwo'_1}$
  where $\utmtwo'_1 \sto \utmtwo'_2$ and $\utmtwo'_1 \sto \utmtwo'_3$.
  By \ih there exists $\utmtwo'_4$
  such that $\utmtwo'_2 \sto \utmtwo'_4$
  and $\utmtwo'_3 \sto \utmtwo'_4$.
  Taking $\utm_4 := \utmtwo'_4\sub{\tvar}{\ttm}$
  and applying \cref{lem:f:sto_type_substitution}, we have that:
  \[
    \begin{array}{ccc}
      (\pallfi{\tvar}{\utmtwo'_1})\,\ttm
      & \sto &
      (\pallfi{\tvar}{\utmtwo'_2})\,\ttm
    \\ \stovert & & \stovert \\
      \utmtwo'_3\sub{\tvar}{\ttm}
      & \sto &
      \utmtwo'_4\sub{\tvar}{\ttm}
    \end{array}
  \]
\end{enumerate}

If $\utm_1$ is a \textbf{guard} ($\utm_1 = \eunit{\utmtwo_1}{\utmthree_1}$)
there are four cases (two of them being symmetric),
depending on whether
both steps are derived using \indrulename{guard1},
both steps are derived using \indrulename{guard2},
or one is derived using \indrulename{guard1} and the other one using \indrulename{guard2}:
\begin{enumerate}
\item
  \indrulename{guard1}/\indrulename{guard1}:
  Straightforward resorting to the \ih.
\item
  \indrulename{guard2}/\indrulename{guard2}:
  Then $\utmtwo_1 = \iunit$
  where $\utmtwo_2 = \utmtwo_3 = \iunit$
  and $\utmthree_1 \sto \utmthree_2$ and $\utmthree_1 \sto \utmthree_3$.
  By \ih there exists $\utmthree_4$
  such that $\utmthree_2 \sto \utmthree_4$
  and $\utmthree_3 \sto \utmthree_4$.
  Taking $\utm_4 := \utmthree_4$ we have that:
  \[
    \begin{array}{ccc}
      \eunit{\iunit}{\utmthree_1}
      & \sto &
      \utmthree_2
    \\ \stovert & & \stovert \\
      \utmthree_3
      & \sto &
      \utmthree_4
    \end{array}
  \]
\item
  \indrulename{guard1}/\indrulename{guard2}:
  Then $\utmtwo_1 = \iunit$
  where $\utmtwo_2 = \utmtwo_3 = \iunit$
  and $\utmthree_1 \sto \utmthree_2$ and $\utmthree_1 \sto \utmthree_3$.
  By \ih there exists $\utmthree_4$
  such that $\utmthree_2 \sto \utmthree_4$
  and $\utmthree_3 \sto \utmthree_4$.
  Taking $\utm_4 := \utmthree_4$, we have that:
  \[
    \begin{array}{ccc}
      \eunit{\iunit}{\utmthree_1}
      & \sto &
      \eunit{\iunit}{\utmthree_2}
    \\ \stovert & & \stovert \\
      \utmthree_3
      & \sto &
      \utmthree_4
    \end{array}
  \]
\end{enumerate}

If $\utm_1$ is a \textbf{generator} ($\utm_1 = \gen{\ttm}$),
note first that if both steps are derived using the same rule,
it is immediate to conclude
because $\utm_2 = \utm_3$ and it suffices to take $\utm_4 := \utm_2$
and resort to reflexivity~(\cref{lem:f:sto_refl}).
Moreover,
note that it is not possible to derive one step using the \indrulename{g-imp} rule
and the other one using the \indrulename{g-all} rule,
because we would simultaneously have $\ttm = (\ttmtwo\imp\ttmthree)$
and $\ttm = (\allf{\tvar}{\ttmfour})$, which is impossible.
The remaining cases are when
one step is derived using the \indrulename{g-refl} rule
and the other one is derived using either the \indrulename{g-imp}
or the \indrulename{g-all} rule. These cases are immediate
because if $\utm_1 \sto \utm_2$ is derived using the \indrulename{g-refl}
rule then $\utm_2 = \utm_1$, and it suffices to take $\utm_4 := \utm_3$.

If $\utm_1$ is a \textbf{verifier} ($\utm_1 = \verif{\ttm}{\utmtwo_1}$),
note first that if both steps are derived using the same rule
it is immediate to conclude by resorting to the \ih. 
Moreover, note that the rules \indrulename{v-eig}, \indrulename{v-imp}, 
and \indrulename{v-all} are mutually exclusive,
because $\ttm$ cannot be simultaneously of two shapes
(\eg an \logicalEigenvariable and an implication).
The remaining cases are when
one step is derived using the \indrulename{v-cong} rule
and the other one is derived using either the \indrulename{v-eig},
the \indrulename{v-imp}, or the \indrulename{v-all} rule:
\begin{enumerate}
\item
  \indrulename{v-cong}/\indrulename{v-eig}:
  Then $\utm_1 = \verif{\teig}{\gen{\teig}}$.
  Suppose that the step $\utm_1 \sto \utm_2$
  is derived from the \indrulename{v-cong} rule.
  Then necessarily $\utm_2 = \verif{\teig}{\gen{\teig}}$,
  and the situation is:
  \[
    \begin{array}{ccc}
      \verif{\teig}{\gen{\teig}}
      & \sto &
      \verif{\teig}{\gen{\teig}}
    \\ \stovert & & \stovert \\
      \iunit
      & \sto &
      \iunit
    \end{array}
  \]
\item
  \indrulename{v-cong}/\indrulename{v-imp}:
  Then $\ttm = (\ttmtwo\imp\ttmthree)$.
  Suppose that the step $\utm_1 \sto \utm_2$
  is derived from the \indrulename{v-cong} rule
  and $\utm_1 \sto \utm_2$
  is derived from the \indrulename{v-imp} rule.
  Then $\utm_2 = \verif{\ttmtwo\imp\ttmthree}{\utmtwo_2}$
  where $\utmtwo_1 \sto \utmtwo_2$,
  and $\utm_3 = \verif{\ttmthree}{(\utmtwo_3\,\gen{\ttmtwo})}$
  where $\utmtwo_1 \sto \utmtwo_3$.
  By \ih there exists $\utmtwo_4$
  such that $\utmtwo_2 \sto \utmtwo_4$
  and $\utmtwo_3 \sto \utmtwo_4$,
  and the situation is:
  \[
    \begin{array}{ccc}
      \verif{\ttmtwo\imp\ttmthree}{\utmtwo_1}
      & \sto &
      \verif{\ttmtwo\imp\ttmthree}{\utmtwo_2}
    \\ \stovert & & \stovert \\
      \verif{\ttmthree}{(\utmtwo_3\,\gen{\ttmtwo})}
      & \sto &
      \verif{\ttmthree}{(\utmtwo_4\,\gen{\ttmtwo})}
    \end{array}
  \]
\item
  \indrulename{v-cong}/\indrulename{v-all}:
  Then $\ttm = (\allf{\tvar}{\ttmtwo})$.
  Suppose that the step $\utm_1 \sto \utm_2$
  is derived from the \indrulename{v-cong} rule
  and $\utm_1 \sto \utm_2$
  is derived from the \indrulename{v-all} rule.
  Then $\utm_2 = \verif{\allf{\tvar}{\ttmtwo}}{\utmtwo_2}$
  where $\utmtwo_1 \sto \utmtwo_2$,
  and $\utm_3 = \freshf{\teig}{\verif{\ttmtwo\sub{\tvar}{\teig}}{(\utmtwo_3\,\teig)}}$
  where $\utmtwo_1 \sto \utmtwo_3$.
  By \ih there exists $\utmtwo_4$
  such that $\utmtwo_2 \sto \utmtwo_4$
  and $\utmtwo_3 \sto \utmtwo_4$,
  and the situation is:
  \[
    \begin{array}{ccc}
      \verif{\allf{\tvar}{\ttmtwo}}{\utmtwo_1}
      & \sto &
      \verif{\allf{\tvar}{\ttmtwo}}{\utmtwo_2}
    \\ \stovert & & \stovert \\
      \freshf{\teig}{\verif{\ttmtwo\sub{\tvar}{\teig}}{(\utmtwo_3\,\teig)}}
      & \sto &
      \freshf{\teig}{\verif{\ttmtwo\sub{\tvar}{\teig}}{(\utmtwo_4\,\teig)}}
    \end{array}
  \]
\end{enumerate}

If $\utm_1$ is a \textbf{fresh \logicalEigenvariable introduction} ($\utm_1 = \freshf{\teig}{\utmtwo_1}$)
there are four cases (two of them being symmetric),
depending on whether
both steps are derived using \indrulename{fresh1},
both steps are derived using \indrulename{fresh2},
or one is derived using \indrulename{fresh1} and the other one using \indrulename{fresh2}:
\begin{enumerate}
\item
  \indrulename{fresh1}/\indrulename{fresh1}:
  Straightforward resorting to the \ih.
\item
  \indrulename{fresh2}/\indrulename{fresh2}:
  Straightforward resorting to the \ih.
\item
  \indrulename{fresh1}/\indrulename{fresh2}:
  Then $\utmtwo_1 \sto \utmtwo_2$
  and $\utmtwo_1 \sto \utmtwo_3$
  and $\teig \notin \fv{\utmtwo_1}$.
  By \ih there exists $\utmtwo_4$
  such that $\utmtwo_2 \sto \utmtwo_4$
  and $\utmtwo_3 \sto \utmtwo_4$,
  and the situation is:
  \[
    \begin{array}{ccc}
      \freshf{\teig}{\utmtwo_1}
      & \sto &
      \freshf{\teig}{\utmtwo_2}
    \\ \stovert & & \stovert \\
      \utmtwo_3
      & \sto &
      \utmtwo_4
    \end{array}
  \]
  To be able to perform the reduction on the right,
  observe that $\teig \notin \fv{\utmtwo_2}$
  because $\teig \notin \fv{\utmtwo_1}$ and $\utmtwo_1 \sto \utmtwo_2$,
  and reduction does not create free \logicalEigenvariables.
\end{enumerate}
\end{proof}

\begin{proposition}
\label{a:prop:f:confluence}
Reduction in $\lambdaCheckF$ is confluent.
\end{proposition}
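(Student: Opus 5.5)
The plan is the standard Tait--Martin-Löf argument, using \cref{lem:f:tait_martin_lof} as the only real ingredient. Since ${\to} \subseteq {\sto} \subseteq {\tos}$, taking reflexive--transitive closures gives ${\tos} = {\sto^*}$. It therefore suffices to show that $\sto^*$ has the diamond property, \ie ${\stoinv^*\,\sto^*} \subseteq {\sto^*\,\stoinv^*}$. This is exactly confluence of $\to$.

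To obtain the diamond property for $\sto^*$, I will first prove a ``strip'' lemma. If $\utm_1 \sto \utm_2$ and $\utm_1 \sto^n \utm_3$, then there is $\utm_4$ with $\utm_2 \sto^n \utm_4$ and $\utm_3 \sto \utm_4$. The proof is by induction on $n$. For $n = 0$, take $\utm_4 := \utm_2$. For the inductive step, first close the single peak using item~3 of \cref{lem:f:tait_martin_lof}, then apply the \ih to the remaining $n-1$ steps.

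Next, given $\utm_1 \sto^m \utm_2$ and $\utm_1 \sto^n \utm_3$, I will induct on $m$, applying the strip lemma at each step. The result is a common reduct $\utm_4$ with $\utm_2 \sto^n \utm_4$ and $\utm_3 \sto^m \utm_4$. This is the usual tiling of an $m \times n$ grid of diamonds.

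The genuine content, and the only delicate point, is item~3 of \cref{lem:f:tait_martin_lof}. That item was established above, including the non-left-linear rule $\verif{\teig}{\gen{\teig}} \to \iunit$ and the side conditions on fresh eigenvariables, via \cref{lem:f:sto_substitution,lem:f:sto_type_substitution}.

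One small point needs checking in the translation between relations. Reflexivity of $\sto$ (\cref{lem:f:sto_refl}) ensures that zero-step reductions are covered. Together with inclusion~2 of \cref{lem:f:tait_martin_lof}, this means every $\sto^*$-sequence can be read back as a $\tos$-sequence, so the common reduct obtained for $\sto^*$ is indeed a common $\tos$-reduct. I expect no real obstacle beyond this bookkeeping.
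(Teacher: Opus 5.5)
Your proposal is correct and follows the paper's own argument: the paper obtains confluence as an immediate consequence of \cref{lem:f:tait_martin_lof}, citing the abstract result \cite[Prop.~1.1.11]{Terese}, and that result is exactly the strip-lemma and grid-tiling argument you spell out from ${\to}\subseteq{\sto}\subseteq{\tos}$ together with the diamond property of $\sto$. One cosmetic point: the appeal to reflexivity of $\sto$ in your last paragraph is not needed, since $\sto^0$ is the identity relation and ${\sto^*}\subseteq{\tos}$ already follows from ${\sto}\subseteq{\tos}$.
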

\begin{proof}
An immediate consequence of \cref{lem:f:tait_martin_lof}
by abstract results \cite[Prop.~1.1.11]{Terese}.
\end{proof}

\subsection{Second-Order: Standardization (\cref{prop:f:standardization})}
  \label{a:sec:f:standardization}
  
We use a non-idempotent intersection type system $\LambdaFInters$ to prove
standardization~(\cref{prop:f:standardization}).
\medskip

The sets of \defn{\linearTypes} and \defn{\linearMultitypes} of $\LambdaFInters$
are given mutually inductively by the grammar:
\[
  \begin{array}{lrcl}
    \textsc{\LinearTypes} &
    \ltyp,\ltyptwo,\hdots
        & ::=  & \iunit
    \\
       && \mid & \gen{\tvar}
    \\
       && \mid & \mtyp\limp\ltyp
    \\
       && \mid & \ttm\allimp\ltyp
  \\
    \textsc{\LinearMultitypes} &
    \mtyp,\mtyptwo,\hdots
      & ::= & \mset{\ltyp_i}_\iI
  \end{array}
\]
\LinearMultitypes are finite multisets of \linearTypes.
The notation $\mset{\ltyp_i}_\iI$ stands for a multiset of elements
indexed by a finite set $I$, that sometimes we may also write
more explicitly as $\mset{\ltyp_1,\hdots,\ltyp_n}$.
The \defn{size} of a \linearMultitype $\mtyp$ is written $\size{\mtyp}$
and defined as its cardinality as a multiset (\ie number of elements,
counting their multiplicities).
We write $\mtyp+\mtyptwo$ for the union of \linearMultitypes.
A \defn{\linearEnvironment} is a function $\genv$ mapping each
\proofVariable to a \linearMultitype. We assume that \linearEnvironments
have \emph{finite support}, \ie that $\genv(\pvar) \neq \mset{\,}$
for only finitely many \proofVariables.
If $\genv,\genvtwo$ are \linearEnvironments, we write $\genv + \genvtwo$
for the \linearEnvironment such that
$(\genv + \genvtwo)(\pvar) = \genv(\pvar)  + \genvtwo(\pvar)$
for each \proofVariable~$\pvar$.

We define two forms of typing judgments, $\jul{\genv}{\utm}{\ltyp}$ 
and $\jum{\genv}{\utm}{\mtyp}$.
Valid judgments in system $\LambdaFInters$ are defined inductively by
the following rules:
\[
  \indrule{\ruleLVar}{
  }{
    \jul{\pvar:\mset{\ltyp}}{\pvar}{\ltyp}
  }
  \HS
  \indrule{\ruleLImpI}{
    \jul{\genv,\pvar:\mtyp}{\utm}{\ltyp}
  }{
    \jul{\genv}{\plamf{\pvar}{\utm}}{\mtyp\limp\ltyp}
  }
\]
\[
  \indrule{\ruleLImpE}{
    \jul{\genv_1}{\utm}{\mtyp\limp\ltyp}
    \HS
    \jum{\genv_2}{\utmtwo}{\mtyp}
  }{
    \jul{\genv_1+\genv_2}{\utm\,\utmtwo}{\ltyp}
  }
\]
\[
  \indrule{\ruleLAllI}{
    \jul{\genv}{\utm\sub{\tvar}{\ttm}}{\ltyp}
  }{
    \jul{\genv}{\pallfi{\tvar}{\utm}}{\ttm\allimp\ltyp}
  }
  \HS
  \indrule{\ruleLAllE}{
    \jul{\genv}{\utm}{\ttm\allimp\ltyp}
  }{
    \jul{\genv}{\utm\,\ttm}{\ltyp}
  }
\]
\[
  \indrule{\ruleLUnitI}{
  }{
    \jul{}{\iunit}{\iunit}
  }
  \HS
  \indrule{\ruleLUnitE}{
    \jul{\genv_1}{\utm}{\iunit}
    \HS
    \jul{\genv_2}{\utmtwo}{\ltyp}
  }{
    \jul{\genv_1+\genv_2}{\eunit{\utm}{\utmtwo}}{\ltyp}
  }
\]
\[
  \indrule{\ruleLFresh}{
    \jul{\genv}{\utm}{\ltyp}
    \HS
    \teig\notin\fv{\genv,\ltyp}
  }{
    \jul{\genv}{\freshf{\teig}{\utm}}{\ltyp}
  }
  \HS
  \indrule{\ruleLVerifTeig}{
    \jul{\genv}{\utm}{\gen{\teig}}
  }{
    \jul{\genv}{\verif{\teig}{\utm}}{\iunit}
  }
\]
\[
  \indrule{\ruleLVerifImp}{
    \jul{\genv}{\verif{\ttmtwo}{(\utm\,\gen{\ttm})}}{\iunit}
  }{
    \jul{\genv}{\verif{\ttm\imp\ttmtwo}{\utm}}{\iunit}
  }
\]
\[
  \indrule{\ruleLVerifAll}{
    \jul{\genv}{\freshf{\teig}{\verif{\ttm\sub{\tvar}{\teig}}{(\utm\,\teig)}}}{\iunit}
    \HS
    \teig\notin\fv{\ttm,\utm}
  }{
    \jul{\genv}{\verif{\allf{\tvar}{\ttm}}{\utm}}{\iunit}
  }
\]
\[
  \indrule{\ruleLGenTeig}{
  }{
    \jul{}{\gen{\teig}}{\gen{\teig}}
  }
  \HS
  \indrule{\ruleLGenImp}{
    \jul{\genv}{\plamf{\pvar}{\eunit{\verif{\ttm}{\pvar}}{\gen{\ttmtwo}}}}{\ltyp}
  }{
    \jul{\genv}{\gen{\ttm\imp\ttmtwo}}{\ltyp}
  }
\]
\[
  \indrule{\ruleLGenAll}{
    \jul{\genv}{\pallfi{\tvar}{\gen{\ttm}}}{\ltyp}
  }{
    \jul{\genv}{\gen{\allf{\tvar}{\ttm}}}{\ltyp}
  }
  \HS
  \indrule{\ruleLMulti}{
    (\jul{\genv_i}{\utm}{\ltyp_i})_\iI
  }{
    \jum{+_\iI\genv_i}{\utm}{\mset{\ltyp_i}_\iI}
  }
\]
Derivation trees (or just \emph{derivations})
in this system are ranged over by $\deriv,\derivtwo,\hdots$.
We write $\derivs{\deriv}{\jul{\genv}{\utm}{\ltyp}}$
to mean that $\deriv$ is a derivation tree with conclusion $\jul{\genv}{\utm}{\ltyp}$.
The \defn{size} of a derivation $\deriv$ is written $\size{\deriv}$
and defined as the number of inference rules in $\deriv$,
without counting instances of the $\ruleLMulti$ rule.

\begin{lemma}[Weighted substitution lemma]
\label{lem:lin:weighted_substitution_lemma}
If $\derivs{\deriv}{\jul{\genv,\pvar:\mtyp}{\utm}{\ltyp}}$
and $\derivs{\derivtwo}{\jum{\genvtwo}{\utmtwo}{\mtyp}}$
where $\pvar\notin\fv{\utmtwo}$,
then there exists a derivation $\deriv'$
such that $\derivs{\deriv'}{\jul{\genv+\genvtwo}{\utm\sub{\pvar}{\utmtwo}}{\ltyp}}$
and $\size{\deriv'} = \size{\deriv} - \size{\mtyp} + \size{\derivtwo}$.
\end{lemma}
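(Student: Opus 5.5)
The plan is to prove the weighted substitution lemma by induction on the derivation $\deriv$ of $\jul{\genv,\pvar:\mtyp}{\utm}{\ltyp}$, stated jointly with an auxiliary multi-type version. The auxiliary statement reads: if $\derivs{\deriv}{\jum{\genv,\pvar:\mtyp}{\utm}{\mtyptwo}}$, then there is a derivation $\deriv'$ of $\jum{\genv+\genvtwo}{\utm\sub{\pvar}{\utmtwo}}{\mtyptwo}$ with the same size equation. It is needed because the \ruleLImpE rule has a multi-type premise. Derivations of multi-type judgments are just families of linear derivations glued by \ruleLMulti, which does not count towards size.

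\textbf{Splitting lemma.} First I would isolate the fact that makes the arithmetic work: a derivation $\derivs{\derivtwo}{\jum{\genvtwo}{\utmtwo}{\mtyp_1+\mtyp_2}}$ splits into $\derivtwo_1$ and $\derivtwo_2$ typing $\utmtwo$ with $\mtyp_1$ and $\mtyp_2$. Their environments satisfy $\genvtwo=\genvtwo_1+\genvtwo_2$ and their sizes satisfy $\size{\derivtwo}=\size{\derivtwo_1}+\size{\derivtwo_2}$. Conversely, any two such derivations can be merged. This is immediate from the shape of \ruleLMulti, since a multiset indexing splits.

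\textbf{Cases of the induction.}
\begin{itemize}
\item \ruleLVar. If $\utm=\pvar$, then $\mtyp=\mset{\ltyp}$, $\genv$ is empty, and $\derivtwo$ consists of one linear derivation of $\utmtwo:\ltyp$. We take that derivation as $\deriv'$; its size is $1-1+\size{\derivtwo}$, as required. If $\utm$ is a different variable, then $\mtyp=\mset{\,}$ and $\derivtwo$ is an empty \ruleLMulti with size $0$, so $\deriv'=\deriv$.
\item \ruleLUnitI and \ruleLGenTeig. Here $\pvar$ does not occur, so $\mtyp$ is empty and the claim is trivial.
\item \ruleLImpE and \ruleLUnitE. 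These split the environment as $\genv_1+\genv_2$, so the multitype splits as $\mtyp=\mtyp_1+\mtyp_2$. We split $\derivtwo$ accordingly, apply the induction hypothesis to each premise, and recombine. The sizes add up as $(\size{\deriv_1}-\size{\mtyp_1}+\size{\derivtwo_1})+(\size{\deriv_2}-\size{\mtyp_2}+\size{\derivtwo_2})+1$.
\item \ruleLImpI. After $\alpha$-renaming the bound variable away from $\pvar$ and $\fv{\utmtwo}$, this is a direct use of the induction hypothesis.
\item \ruleLFresh. Direct use of the induction hypothesis. We need the eigenvariable side condition $\teig\notin\fv{\genv+\genvtwo}$, which holds by $\alpha$-renaming $\teig$ away from $\genvtwo$; the environment $\genvtwo$ mentions finitely many types.
\item \ruleLAllI. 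The premise types $\utm\sub{\tvar}{\ttm}$. Since $\tvar$ can be chosen fresh for $\utmtwo$, we have $\utm\sub{\tvar}{\ttm}\sub{\pvar}{\utmtwo}=\utm\sub{\pvar}{\utmtwo}\sub{\tvar}{\ttm}$, and the induction hypothesis applies.
\item Verifier and generator rules (\ruleLVerifTeig, \ruleLVerifImp, \ruleLVerifAll, \ruleLGenImp, \ruleLGenAll). These are syntax-directed unfoldings whose premise's subject is obtained from the conclusion's subject by a root rewrite. Substitution commutes with these unfoldings, because generators contain no proof variables and $\pvar$ is distinct from the bound variable in $\plamf{\pvar}{\eunit{\verif{\ttm}{\pvar}}{\gen{\ttmtwo}}}$. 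The induction hypothesis on the premise then gives the result.
\end{itemize}

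\textbf{Expected obstacle.} The main difficulty is not the arithmetic but making the induction well-founded. The rules \ruleLAllI, \ruleLVerifImp, \ruleLVerifAll, \ruleLGenImp and \ruleLGenAll change the subject term in their premise, so induction on $\utm$ fails. I would therefore induct on $\size{\deriv}$ (equivalently, on the derivation tree), which each rule strictly decreases. I would also check the bookkeeping of $\fv{}$ conditions, namely $\pvar\notin\fv{\utmtwo}$ and eigenvariable freshness, so that each substitution commutation used above is legitimate.
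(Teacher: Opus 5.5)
Your proposal is correct and follows the paper's proof. Both argue by induction on the derivation $\deriv$: in the \ruleLImpE and \ruleLUnitE cases, $\derivtwo$ is split along the multiset decomposition of $\mtyp$, and every other case uses the induction hypothesis directly (the paper does the $|I|+1$-way split inline, where you use an auxiliary multi-type statement). You left out the \ruleLAllE case, but it is an immediate use of the induction hypothesis, since $(\utmthree\,\ttm)\sub{\pvar}{\utmtwo} = \utmthree\sub{\pvar}{\utmtwo}\,\ttm$.
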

\begin{proof}
By induction on $\deriv$:

\Case{$\ruleLVar$:}
  Let $\derivs{\deriv}{\jul{\pvartwo:\mset{\ltyp}}{\pvartwo}{\ltyp}}$.
  We consider two subcases, depending on whether $\pvartwo = \pvar$ or not:
  \begin{enumerate}
  \item
    If $\utm = \pvartwo = \pvar$,
    the derivation $\deriv$
    is of the form $\derivs{\deriv}{\jul{\pvar:\mset{\ltyp}}{\pvar}{\ltyp}}$,
    so $\genv = \emptyset$
    and $\mtyp = \mset{\ltyp}$.
    Moreover, the derivation $\derivtwo$ is of the form
    $\derivs{\derivtwo}{\jum{\genvtwo}{\utmtwo}{\mset{\ltyp}}}$
    and it must be derived by $\ruleLMulti$ from a single premise
    $\derivs{\derivtwo'}{\jul{\genvtwo}{\utmtwo}{\ltyp}}$.
    Our goal is to construct a derivation with conclusion
    $\jul{\genv+\genvtwo}{\utm\sub{\pvar}{\utmtwo}}{\ltyp}$,
    that is to say, $\jul{\genvtwo}{\utmtwo}{\ltyp}$.
    It suffices to take $\deriv' := \derivtwo'$.
    Note that $\size{\deriv} = 1$ and $\size{\mtyp} = 1$, so
    $\size{\deriv'}
     = \size{\derivtwo'}
     = \size{\derivtwo}
     = \size{\deriv} - \size{\mtyp} + \size{\derivtwo}$.
  \item
    If $\utm = \pvartwo \neq \pvar$, 
    the derivation $\deriv$
    is of the form $\derivs{\deriv}{\jul{\pvartwo:\mset{\ltyp}}{\pvartwo}{\ltyp}}$,
    so $\genv = (\pvartwo:\mset{\ltyp})$
    and $\mtyp = \mset{\,}$.
    Moreover, the derivation $\derivtwo$
    is of the form $\derivs{\derivtwo}{\jum{\genvtwo}{\utmtwo}{\mset{\,}}}$
    and it must be derived by $\ruleLMulti$ with the empty set of premises,
    so in particular $\genvtwo = \emptyset$
    and $\size{\derivtwo} = 0$.
    Our goal is to construct a derivation with conclusion
    $\jul{\genv+\genvtwo}{\utm\sub{\pvar}{\utmtwo}}{\ltyp}$,
    that is to say, $\jul{\pvartwo:\mset{\ltyp}}{\pvartwo}{\ltyp}$.
    It suffices to take $\deriv' := \deriv$.
    Note that $\size{\mtyp} = 0$, so
    $\size{\deriv'}
     = \size{\deriv}
     = \size{\deriv} - \size{\mtyp} + \size{\derivtwo}
    $.
  \end{enumerate}

\Case{$\ruleLImpI$:}
  Let $\deriv$ be of the form:
  \[
    \indrule{\ruleLImpI}{
      \inderiv{\deriv_1}{\jul{\genv,\pvar:\mtyp,\pvartwo:\mtyp'}{\utmthree}{\ltyp'}}
    }{
      \jul{\genv,\pvar:\mtyp}{\plamf{\pvartwo}{\utmthree}}{\mtyp'\limp\ltyp'}
    }
  \]
  Assume by $\alpha$-conversion that $\pvartwo \notin \fv{\pvar,\utmtwo}$.
  By \ih there exists a derivation
  $\derivs{\deriv'_1}{\jul{(\genv,\pvar:\mtyp)+\genvtwo}{\utmthree\sub{\pvar}{\utmtwo}}{\ltyp'}}$
  such that $\size{\deriv'_1} = \size{\deriv_1} - \size{\mtyp} + \size{\derivtwo}$.
  Note that $((\genv,\pvar:\mtyp)+\genvtwo) = ((\genv+\genvtwo),\pvar:\mtyp)$.
  Applying $\ruleLImpI$
  we obtain a derivation
  $\derivs{\deriv'}{\jul{\genv+\genvtwo}{\plamf{\pvartwo}{\utmthree\sub{\pvar}{\utmtwo}}}{\mtyptwo'\limp\ltyp'}}$.
  Moreover
  $\size{\deriv'}
  = 1 + \size{\deriv'_1}
  = 1 + \size{\deriv_1} - \size{\mtyp} + \size{\derivtwo}
  = \size{\deriv} - \size{\mtyp} + \size{\derivtwo}$.

\Case{$\ruleLImpE$:}
  Let $\deriv$ be derived using the $\ruleLImpE$ rule.
  Then it is necessarily of the form:
  \[
    \indrule{\ruleLImpE}{
      \inderiv{\deriv_0}{
        \jul{\genv_0,\pvar:\mtyp_0}{\utmthree}{\mset{\ltyptwo_i}_\iI\limp\ltyp}
      }
      \HS
      \derivthree
    }{
      \jul{\genv_0+_\iI\genv_i,\pvar:\mtyp}{\utmthree\,\utmfour}{\mtyp'\limp\ltyp'}
    }
  \]
  with
  \[
    \derivthree =
      \indrule{\ruleLMulti}{
        \left(
        \inderiv{\deriv_i}{
          \jul{\genv_i,\pvar:\mtyp_i}{\utmfour}{\ltyptwo_i}
        }
        \right)_\iI
      }{
        \jul{+_\iI\genv_i,\pvar:+_\iI\mtyp_i}{\utmfour}{\mset{\ltyptwo_i}_\iI}
      }
  \]
  where $I$ is a finite set,
  and we assume that there is an index $0 \notin I$,
  and moreover $\utm = \utmthree\,\utmfour$
  and $\genv = \genv_0+_\iI\genv_i$
  and $\mtyp = \mtyp_0+_\iI\mtyp_i$.
  \\
  The derivation $\derivs{\derivtwo}{\jul{\genvtwo}{\utmtwo}{\mtyp}}$
  is derived from the $\ruleLMulti$ rule,
  and thus it can be split in $|I| + 1$ derivations
  $\derivs{\derivtwo_i}{\jul{\genvtwo_i}{\utmtwo}{\mtyp_i}}$,
  one for each index $i \in \set{0} \cup I$,
  in such a way that
  $\genvtwo = \genvtwo_0+_\iI\genvtwo_i$
  and $\size{\derivtwo} = \size{\derivtwo_0} +_\iI \size{\derivtwo_i}$.
  \\
  Applying the \ih on $\deriv_0$ and $\derivtwo_0$
  we have that there exists a derivation
    $\derivs{\deriv'_0}{
       \jul{\genv_0+\genvtwo_0}{\utmthree\sub{\pvar}{\utmtwo}}{\mset{\ltyptwo_i}_\iI\limp\ltyp}
     }$
  such that
    $\size{\deriv'_0} = \size{\deriv_0} - \size{\mtyp_0} + \size{\derivtwo_0}$.
  \\
  Applying the \ih for each $i \in I$ on $\deriv_i$ and $\derivtwo_i$
  we have that there exists a derivation
    $\derivs{\deriv'_i}{
       \jul{\genv_i+\genvtwo_i}{\utmfour\sub{\pvar}{\utmtwo}}{\ltyptwo_i}
     }$
  such that
    $\size{\deriv'_i} = \size{\deriv_i} - \size{\mtyp_i} + \size{\derivtwo_i}$.
  \\
  Hence we may construct $\deriv'$ as follows:
  \[
    \indrule{\ruleLImpE}{
      \inderiv{\deriv'_0}{
        \jul{\genv_0+\genvtwo_0}{\utmthree\sub{\pvar}{\utmtwo}}{\mset{\ltyptwo_i}_\iI\limp\ltyp}
      }
      \HS
      \derivthree'
    }{
      \jul{\genv_0+_\iI\genv_i+\genvtwo_0+_\iI\genvtwo_i}{\utmthree\sub{\pvar}{\utmtwo}\,\utmfour\sub{\pvar}{\utmtwo}}{\mtyp'\limp\ltyp'}
    }
  \]
  with 
  \[
    \derivthree' :=
      \indrule{\ruleLMulti}{
        \left(
        \inderiv{\deriv'_i}{
          \jul{\genv_i+\genvtwo_i}{\utmfour\sub{\pvar}{\utmtwo}}{\ltyptwo_i}
        }
        \right)_\iI
      }{
        \jul{+_\iI\genv_i+_\iI\genvtwo_i}{\utmfour\sub{\pvar}{\utmtwo}}{\mset{\ltyptwo_i}_\iI}
      }
  \]

  To conclude, note that $\size{\deriv} = 1 + \size{\deriv_0} +_\iI \size{\deriv_i}$,
  so
  $\size{\deriv'}
  = 1 + \size{\deriv'_0} +_\iI \size{\deriv'_i}
  = 1 + (\size{\deriv_0} - \size{\mtyp_0} + \size{\derivtwo_0})
      +_\iI (\size{\deriv_i} - \size{\mtyp_i} + \size{\derivtwo_i})
  = 1 +  \size{\deriv_0} +_\iI \size{\deriv_i}
      - (\size{\mtyp_0} +_\iI \size{\mtyp_i})
      + (\size{\derivtwo_0} +_\iI \size{\derivtwo_i})
  = \size{\deriv} - \size{\mtyp} + \size{\derivtwo}$.

\Case{$\ruleLAllI$:}
  Let $\deriv$ be of the form:
  \[
    \indrule{\ruleLAllI}{
      \inderiv{\deriv_1}{\jul{\genv,\pvar:\mtyp}{\utmthree\sub{\tvar}{\ttm}}{\ltyp'}}
    }{
      \jul{\genv,\pvar:\mtyp}{\pallfi{\tvar}{\utmthree}}{\ttm\allimp\ltyp'}
    }
  \]
  where by $\alpha$-conversion we may assume that $\tvar \notin \fv{\utmtwo}$.
  By \ih there exists a derivation
  $\derivs{\deriv'_1}{\jul{\genv+\genvtwo}{\utmthree\sub{\tvar}{\ttm}\sub{\pvar}{\utmtwo}}{\ltyp'}}$
  such that
  $\size{\deriv'_1} = \size{\deriv_1} - \size{\mtyp} + \size{\derivtwo}$.
  Note that
  $\utmthree\sub{\tvar}{\ttm}\sub{\pvar}{\utmtwo}
  = \utmthree\sub{\pvar}{\utmtwo}\sub{\tvar}{\ttm}$
  because $\tvar \notin \fv{\utmtwo}$.
  Applying $\ruleLAllI$
  we obtain a derivation
  $\derivs{\deriv'}{\jul{\genv+\genvtwo}{\allf{\tvar}{\utmthree\sub{\pvar}{\utmtwo}}}{\ttm\allimp\ltyp'}}$.
  Moreover,
  $\size{\deriv'}
   = 1 + \size{\deriv'_1}
   = 1 + \size{\deriv_1} - \size{\mtyp} + \size{\derivtwo}
   = \size{\deriv} - \size{\mtyp} + \size{\derivtwo}$.

\Case{$\ruleLAllE$:}
  Let $\deriv$ be of the form:
  \[
    \indrule{\ruleLAllE}{
      \inderiv{\deriv_1}{\jul{\genv,\pvar:\mtyp}{\utmthree}{\ttm\allimp\ltyp}}
    }{
      \jul{\genv,\pvar:\mtyp}{\utmthree\,\ttm}{\ltyp}
    }
  \]
  By \ih there exists a derivation
  $\derivs{\deriv'_1}{\jul{\genv+\genvtwo}{\utmthree\sub{\pvar}{\utmtwo}}{\ttm\allimp\ltyp}}$
  such that
  $\size{\deriv'_1} = \size{\deriv_1} - \size{\mtyp} + \size{\derivtwo}$.
  Applying $\ruleLAllE$
  we obtain a derivation
  $\derivs{\deriv'}{\jul{\genv+\genvtwo}{\utmthree\sub{\pvar}{\utmtwo}\,\ttm}{\ltyp}}$.
  Moreover,
  $\size{\deriv'}
   = 1 + \size{\deriv'_1}
   = 1 + \size{\deriv_1} - \size{\mtyp} + \size{\derivtwo}
   = \size{\deriv} - \size{\mtyp} + \size{\derivtwo}$.

\Case{$\ruleLUnitI$:}
  Let $\deriv$ be derived from $\ruleLUnitI$.
  Then $\derivs{\deriv}{\jul{}{\iunit}{\iunit}}$
  where $\genv = \emptyset$
  and $\mtyp = \mset{\,}$
  and $\ltyp = \iunit$.
  Note that the derivation $\derivtwo$
  is of the form $\derivs{\derivtwo}{\jum{\genvtwo}{\utmtwo}{\mset{\,}}}$
  and it must be derived by $\ruleLMulti$ with the empty set of premises,
  so $\genvtwo = \emptyset$ and $\size{\derivtwo} = 0$.
  Our goal is to construct a derivation with conclusion
  $\jul{\genv+\genvtwo}{\utm\sub{\pvar}{\utmtwo}}{\ltyp}$,
  that is to say, $\jul{}{\iunit}{\iunit}$.
  It suffices to take $\deriv' := \deriv$,
  noting that
  $\size{\deriv'}
   = \size{\deriv}
   = \size{\deriv} - \size{\mtyp} + \size{\derivtwo}
  $.

\Case{$\ruleLUnitE$:}
  Let $\deriv$ be of the form:
  \[
    \indrule{\ruleLUnitE}{
      \inderiv{\deriv_1}{\jul{\genv_1,\pvar:\mtyp_1}{\utmthree}{\iunit}}
      \HS
      \inderiv{\deriv_2}{\jul{\genv_2,\pvar:\mtyp_2}{\utmfour}{\ltyp}}
    }{
      \jul{\genv_1+\genv_2,\pvar:\mtyp_1+\mtyp_2}{\eunit{\utmthree}{\utmfour}}{\ltyp}
    }
  \]
  where $\genv = \genv_1+\genv_2$
  and $\mtyp = \mtyp_1+\mtyp_2$
  and $\utm = \eunit{\utmthree}{\utmfour}$.
  Note that the derivation $\derivs{\derivtwo}{\jul{\genvtwo}{\utmtwo}{\mtyp}}$
  is derived from the $\ruleLMulti$ rule,
  so it can be split into
  $\derivs{\derivtwo_1}{\jul{\genvtwo_1}{\utmtwo}{\mtyp_1}}$
  and
  $\derivs{\derivtwo_2}{\jul{\genvtwo_2}{\utmtwo}{\mtyp_2}}$,
  where
  $\genvtwo = \genvtwo_1+\genvtwo_2$
  and $\size{\derivtwo} = \size{\derivtwo_1} + \size{\derivtwo_2}$.
  By \ih on $\deriv_1$,
  there exists a derivation
    $\derivs{\deriv'_1}{\jul{\genv_1+\genvtwo_1}{\utmthree\sub{\pvar}{\utmtwo}}{\iunit}}$
  such that  
    $\size{\deriv'_1} = \size{\deriv_1} - \size{\mtyp_1} + \size{\derivtwo_1}$.
  By \ih on $\deriv_2$,
  there exists a derivation
    $\derivs{\deriv'_2}{\jul{\genv_2+\genvtwo_2}{\utmfour\sub{\pvar}{\utmtwo}}{\ltyp}}$
  such that  
    $\size{\deriv'_2} = \size{\deriv_2} - \size{\mtyp_2} + \size{\derivtwo_2}$.
  Applying $\ruleLUnitE$ on $\deriv'_1$ and $\deriv'_2$
  we obtain a derivation
    $\derivs{\deriv'}{\jul{\genv_1+\genv_2+\genvtwo_1+\genvtwo_2}{
       \eunit{\utmthree\sub{\pvar}{\utmtwo}}{\utmfour\sub{\pvar}{\utmtwo}}
     }{\ltyp}}$.
  To conclude, note that
    $\size{\deriv'}
     = 1 + \size{\deriv'_1} + \size{\deriv'_2}
     = 1 + (\size{\deriv_1} - \size{\mtyp_1} + \size{\derivtwo_1})
         + (\size{\deriv_2} - \size{\mtyp_2} + \size{\derivtwo_2})
     = 1 + \size{\deriv_1} + \size{\deriv_2}
         - (\size{\mtyp_1} + \size{\mtyp_2})
         + (\size{\derivtwo_1} + \size{\derivtwo_2})
     = \size{\deriv} - \size{\mtyp} + \size{\derivtwo}$.

\Case{$\ruleLFresh$:}
  Let $\deriv$ be of the form:
  \[
    \indrule{\ruleLFresh}{
      \inderiv{\deriv_1}{\jul{\genv,\pvar:\mtyp}{\utmthree}{\ltyp}}
    }{
      \jul{\genv,\pvar:\mtyp}{\freshf{\teig}{\utmthree}}{\ltyp}
    }
  \]
  where $\teig\notin\fv{\genv,\mtyp,\ltyp}$.
  By $\alpha$-conversion, we may assume also that $\teig \notin \fv{\genvtwo,\utmtwo}$.
  By \ih there exists a derivation
    $\derivs{\deriv'_1}{\jul{\genv+\genvtwo}{\utmthree\sub{\pvar}{\utmtwo}}{\ltyp}}$
  such that
    $\size{\deriv'_1} = \size{\deriv_1} - \size{\mtyp} + \size{\derivtwo}$.
  Applying $\ruleLFresh$
  we obtain a derivation
    $\derivs{\deriv'}{\jul{\genv+\genvtwo}{\freshf{\teig}{\utmthree\sub{\pvar}{\utmtwo}}}{\ltyp}}$.
  Moreover,
  $\size{\deriv'}
   = 1 + \size{\deriv'_1}
   = 1 + \size{\deriv_1} - \size{\mtyp} + \size{\derivtwo}
   = \size{\deriv} - \size{\mtyp} + \size{\derivtwo}$.

\Case{$\ruleLVerifTeig$:}
  Let $\deriv$ be of the form:
  \[
    \indrule{\ruleLVerifTeig}{
      \inderiv{\deriv_1}{\jul{\genv,\pvar:\mtyp}{\utmthree}{\gen{\teig}}}
    }{
      \jul{\genv,\pvar:\mtyp}{\verif{\teig}{\utmthree}}{\iunit}
    }
  \]
  By \ih there exists a derivation
    $\derivs{\deriv'_1}{\jul{\genv+\genvtwo}{\utmthree\sub{\pvar}{\utmtwo}}{\gen{\teig}}}$
  such that
    $\size{\deriv'_1} = \size{\deriv_1} - \size{\mtyp} + \size{\derivtwo}$.
  Applying $\ruleLVerifTeig$
  we obtain a derivation
    $\derivs{\deriv'}{\jul{\genv+\genvtwo}{\verif{\teig}{\utmthree\sub{\pvar}{\utmtwo}}}{\ltyp}}$.
  Moreover,
  $\size{\deriv'}
   = 1 + \size{\deriv'_1}
   = 1 + \size{\deriv_1} - \size{\mtyp} + \size{\derivtwo}
   = \size{\deriv} - \size{\mtyp} + \size{\derivtwo}$.

\Case{$\ruleLVerifImp$:}
  Let $\deriv$ be of the form:
  \[
    \indrule{\ruleLVerifImp}{
      \inderiv{\deriv_1}{\jul{\genv,\pvar:\mtyp}{\verif{\ttmtwo}{(\utmthree\,\gen{\ttm})}}{\iunit}}
    }{
      \jul{\genv,\pvar:\mtyp}{\verif{\ttm\imp\ttmtwo}{\utmthree}}{\iunit}
    }
  \]
  By \ih there exists a derivation
    $\derivs{\deriv'_1}{\jul{\genv+\genvtwo}{\verif{\ttmtwo}{(\utmthree\sub{\pvar}{\utmtwo}\,\gen{\ttm})}}{\iunit}}$
  such that
    $\size{\deriv'_1} = \size{\deriv_1} - \size{\mtyp} + \size{\derivtwo}$.
  Applying $\ruleLVerifImp$
  we obtain a derivation
    $\derivs{\deriv'}{\jul{\genv+\genvtwo}{\verif{\ttm\imp\ttmtwo}{\utmthree\sub{\pvar}{\utmtwo}}}{\iunit}}$
  Moreover,
  $\size{\deriv'}
   = 1 + \size{\deriv'_1}
   = 1 + \size{\deriv_1} - \size{\mtyp} + \size{\derivtwo}
   = \size{\deriv} - \size{\mtyp} + \size{\derivtwo}$.

\Case{$\ruleLVerifAll$:}
  Let $\deriv$ be of the form:
  \[
    \indrule{\ruleLVerifAll}{
      \jul{\genv,\pvar:\mtyp}{\freshf{\teig}{\verif{\ttm\sub{\tvar}{\teig}}{(\utmthree\,\teig)}}}{\iunit}
    }{
      \jul{\genv,\pvar:\mtyp}{\verif{\allf{\tvar}{\ttm}}{\utmthree}}{\iunit}
    }
  \]
  where $\teig\notin\fv{\ttm,\utmthree}$.
  By \ih there exists a derivation
    $\derivs{\deriv'_1}{\jul{\genv+\genvtwo}{\freshf{\teig}{\verif{\ttm\sub{\tvar}{\teig}}{(\utmthree\sub{\pvar}{\utmtwo}\,\teig)}}}{\iunit}}$
  such that
    $\size{\deriv'_1} = \size{\deriv_1} - \size{\mtyp} + \size{\derivtwo}$.
  Applying $\ruleLVerifAll$
  we obtain a derivation
    $\derivs{\deriv'}{\jul{\genv+\genvtwo}{\verif{\allf{\tvar}{\ttm}}{\utmthree\sub{\pvar}{\utmtwo}}}{\iunit}}$.
  Moreover,
  $\size{\deriv'}
   = 1 + \size{\deriv'_1}
   = 1 + \size{\deriv_1} - \size{\mtyp} + \size{\derivtwo}
   = \size{\deriv} - \size{\mtyp} + \size{\derivtwo}$.

\Case{$\ruleLGenTeig$:}
  Let $\deriv$ be derived from $\ruleLGenTeig$.
  Then $\derivs{\deriv}{\jul{}{\gen{\teig}}{\gen{\teig}}}$,
  where $\genv = \emptyset$
  and $\mtyp = \mset{\,}$
  and $\ltyp = \gen{\teig}$.
  Note that the derivation $\derivtwo$
  is of the form $\derivs{\derivtwo}{\jum{\genvtwo}{\utmtwo}{\mset{\,}}}$
  and it must be derived by $\ruleLMulti$ with the empty set of premises,
  so $\genvtwo = \emptyset$ and $\size{\derivtwo} = 0$.
  Our goal is to construct a derivation with conclusion
  $\jul{\genv+\genvtwo}{\utm\sub{\pvar}{\utmtwo}}{\ltyp}$,
  that is to say, $\jul{}{\gen{\teig}}{\gen{\teig}}$.
  It suffices to take $\deriv' := \deriv$,
  noting that
  $\size{\deriv'}
   = \size{\deriv}
   = \size{\deriv} - \size{\mtyp} + \size{\derivtwo}
  $.

\Case{$\ruleLGenImp$:}
  Let $\deriv$ be of the form:
  \[
    \indrule{\ruleLGenImp}{
      \inderiv{\deriv_1}{\jul{\genv,\pvar:\mtyp}{\plamf{\pvartwo}{\eunit{\verif{\ttm}{\pvartwo}}{\gen{\ttmtwo}}}}{\ltyp}}
    }{
      \jul{\genv,\pvar:\mtyp}{\gen{\ttm\imp\ttmtwo}}{\ltyp}
    }
  \]
  By \ih there exists a derivation
    $\derivs{\deriv'_1}{\jul{\genv+\genvtwo}{\plamf{\pvartwo}{\eunit{\verif{\ttm}{\pvartwo}}{\gen{\ttmtwo}}}}{\ltyp}}$
  such that
    $\size{\deriv'_1} = \size{\deriv_1} - \size{\mtyp} + \size{\derivtwo}$.
  Applying $\ruleLGenImp$ we obtain a derivation
    $\derivs{\deriv'}{\jul{\genv+\genvtwo}{\gen{\ttm\imp\ttmtwo}}{\ltyp}}$.
  Moreover,
  $\size{\deriv'}
   = 1 + \size{\deriv'_1}
   = 1 + \size{\deriv_1} - \size{\mtyp} + \size{\derivtwo}
   = \size{\deriv} - \size{\mtyp} + \size{\derivtwo}$.
  \\
  \emph{Note:} in this case, one can argue that necessarily
  $\genv = \emptyset$ and $\mtyp = \mset{\,}$,
  and consequently $\genvtwo = \emptyset$,
  but we do not need to make this remark for our proof to go through.

\Case{$\ruleLGenAll$:}
  Let $\deriv$ be of the form:
  \[
    \indrule{\ruleLGenAll}{
      \inderiv{\deriv_1}{\jul{\genv,\pvar:\mtyp}{\pallfi{\tvar}{\gen{\ttm}}}{\ltyp}}
    }{
      \jul{\genv,\pvar:\mtyp}{\gen{\allf{\tvar}{\ttm}}}{\ltyp}
    }
  \]
  By \ih there exists a derivation
    $\derivs{\deriv'_1}{\jul{\genv+\genvtwo}{\pallfi{\tvar}{\gen{\ttm}}}{\ltyp}}$
  such that
    $\size{\deriv'_1} = \size{\deriv_1} - \size{\mtyp} + \size{\derivtwo}$.
  Applying $\ruleLGenImp$ we obtain a derivation
    $\derivs{\deriv'}{\jul{\genv+\genvtwo}{\gen{\allf{\tvar}{\ttm}}}{\ltyp}}$.
  Moreover,
  $\size{\deriv'}
   = 1 + \size{\deriv'_1}
   = 1 + \size{\deriv_1} - \size{\mtyp} + \size{\derivtwo}
   = \size{\deriv} - \size{\mtyp} + \size{\derivtwo}$.
\end{proof}

\begin{lemma}[Weak head subject reduction]
\label{lem:lin:weak_head_subject_reduction}
Let $\utm \tof \utm'$ be a weak head reduction step.
If $\derivs{\deriv}{\jul{\genv}{\utm}{\ltyp}}$,
then there exists a derivation $\deriv'$
such that $\derivs{\deriv'}{\jul{\genv}{\utm'}{\ltyp}}$
and $\size{\deriv} > \size{\deriv'}$.
\end{lemma}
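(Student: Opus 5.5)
We prove the statement by induction on the weak head context in which the step $\utm \tof \utm'$ takes place, i.e.\ on $\wctx$ where $\utm = \ctxof{\wctx}{\utmfive}$, $\utm' = \ctxof{\wctx}{\utmfive'}$ and $\utmfive \to \utmfive'$ is one of the nine root rules. The inductive cases are routine. If $\wctx = \wctx_1\,\utmtwo$, the last rule of $\deriv$ must be $\ruleLImpE$, since $\ruleLVar$, $\ruleLImpI$ and the other rules cannot conclude an application. We apply the \ih to the left premise and reassemble with the same right premise. The cases $\wctx_1\,\ttm$ (last rule $\ruleLAllE$) and $\eunit{\wctx_1}{\utmtwo}$ (last rule $\ruleLUnitE$, \ih on the first premise) are analogous.

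The cases $\verif{\ttm}{\wctx_1}$ and $\freshf{\teig}{\wctx_1}$ need a little more care. For $\verif{\ttm}{\wctx_1}$, the last rule is one of $\ruleLVerifTeig$, $\ruleLVerifImp$ or $\ruleLVerifAll$. For $\ruleLVerifTeig$ we apply the \ih directly. For $\ruleLVerifImp$ and $\ruleLVerifAll$ the premise types a larger metaterm $\verif{\ttmtwo}{(\utmthree\,\gen{\ttm})}$, respectively $\freshf{\teig}{\verif{\ttm\sub{\tvar}{\teig}}{(\utmthree\,\teig)}}$. Since $\utmthree$ still sits in a weak head position of it, we strengthen the induction to the measure ``size of the derivation'' rather than the shape of $\wctx$, and apply the \ih to the premise with the extended context. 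For $\freshf{\teig}{\wctx_1}$ we use the \ih and re-apply $\ruleLFresh$; its side condition is unaffected because $\genv,\ltyp$ are unchanged. In every case the size strictly decreases because the \ih gives a strict decrease and the surrounding rules are counted identically.

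The base cases are the root steps.
\begin{enumerate}
\item For $(\plamf{\pvar}{\utm_0})\,\utmtwo \to \utm_0\sub{\pvar}{\utmtwo}$, the derivation ends with $\ruleLImpE$ over $\ruleLImpI$. The weighted substitution lemma (\cref{lem:lin:weighted_substitution_lemma}) yields a derivation of size $\size{\deriv_0} - \size{\mtyp} + \size{\derivtwo}$, which is at most $\size{\deriv} - 2$.
\item For $(\pallfi{\tvar}{\utm_0})\,\ttm$, the derivation is $\ruleLAllE$ over $\ruleLAllI$, and the premise already types $\utm_0\sub{\tvar}{\ttm}$. This is where the rule $\ruleLAllI$, which substitutes the instantiating type, is designed to fit, and the size drops by $2$.
\item For $\eunit{\iunit}{\utm_0}$, we drop $\ruleLUnitE$ and the $\ruleLUnitI$ leaf, using that $\iunit$ can only be typed with the empty environment.
\item For $\verif{\teig}{\gen{\teig}} \to \iunit$, we replace two rules by $\ruleLUnitI$. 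This requires inverting $\gen{\teig}:\gen{\teig}$, which forces $\genv = \emptyset$.
\item Rules 5--8 are each undone by dropping one of $\ruleLGenImp$, $\ruleLVerifImp$, $\ruleLGenAll$ or $\ruleLVerifAll$. These rules were designed exactly as one-step unfoldings. For rule 8, $\alpha$-renaming of the fresh $\teig$ is needed, using an eigenvariable-renaming lemma on derivations.
\item For $\freshf{\teig}{\utm_0} \to \utm_0$, we drop $\ruleLFresh$.
\end{enumerate}

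The main obstacle, I expect, is the generation/inversion lemmas. We need to know which rule can end a derivation for each shape of subject: for verifier subjects the typing rule is determined by the shape of $\ttm$, and for generator subjects one of $\ruleLGenTeig$, $\ruleLGenImp$ or $\ruleLGenAll$. We also need the eigenvariable renaming used in rule 8 and in $\ruleLFresh$. We would state these inversions first as a separate lemma by inspection of the rules, and then run the case analysis above.
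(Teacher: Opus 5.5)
Your proposal is correct and follows the paper's proof. The paper likewise does induction on the derivation $\deriv$, which is what your strengthened size measure amounts to. It handles internal weak head steps by the induction hypothesis on the relevant premise, the $\beta$-step via the weighted substitution lemma (\cref{lem:lin:weighted_substitution_lemma}), and each remaining root rule by stripping the one or two rules that match it; your explicit inversion and eigenvariable-renaming remarks only make precise what the paper leaves implicit.
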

\begin{proof}
We proceed by induction on $\deriv$.

\Case{$\ruleLVar$:}
  Then $\utm = \pvar$ which is $\tof$-irreducible.

\Case{$\ruleLImpI$:}
  Then $\utm = \plamf{\pvar}{\utmtwo}$, which is $\tof$-irreducible.

\Case{$\ruleLImpE$:}
  Then $\utm = \utmtwo\,\utmthree$.
  If the reduction is internal to $\utmtwo$, it is immediate to conclude by \ih.
  So we may assume that the reduction is at the root.
  Then $\utmtwo = \plamf{\pvar}{\utmfour}$
  and the step is of the form
  $\utm = (\plamf{\pvar}{\utmfour})\,\utmthree
     \tof \utmfour\sub{\pvar}{\utmthree} = \utm'$
  and the derivation is of the form:
  \[
    \indrule{\ruleLImpE}{
      \indrule{\ruleLImpI}{
        \inderiv{\deriv_1}{\jul{\genv_1,\pvar:\mtyp}{\utmfour}{\ltyp}}
      }{
        \jul{\genv_1}{\plamf{\pvar}{\utmfour}}{\mtyp\limp\ltyp}
      }
      \HS
      \inderiv{\deriv_2}{\jum{\genv_2}{\utmthree}{\mtyp}}
    }{
      \jul{\genv_1+\genv_2}{(\plamf{\pvar}{\utmfour})\,\utmthree}{\ltyp}
    }
  \]
  where $\genv = \genv_1+\genv_2$
  and where we may assume by $\alpha$-conversion that $\pvar \notin \fv{\utmthree}$.
  By the weighted substitution lemma~(\cref{lem:lin:weighted_substitution_lemma})
  there exists a derivation $\deriv'$
  such that $\derivs{\deriv'}{\jul{\genv_1+\genv_2}{\utmfour\sub{\pvar}{\utmthree}}{\ltyp}}$
  where
    $\size{\deriv}
    = \size{\deriv_1} + \size{\deriv_2} + 2
    > \size{\deriv_1} - \size{\mtyp} + \size{\deriv_2}
    = \size{\deriv'}
    $.

\Case{$\ruleLAllI$:}
  Then $\utm = \pallfi{\tvar}{\utmtwo}$, which is $\tof$-irreducible.

\Case{$\ruleLAllE$:}
  Then $\utm = \utmtwo\,\ttm$.
  If the reduction is internal to $\utmtwo$, it is immediate to conclude by \ih.
  So we may assume that the reduction is at the root.
  Then $\utmtwo = \pallfi{\tvar}{\utmthree}$
  and the step is of the form
  $\utm = (\pallfi{\tvar}{\utmthree})\,\ttm
     \tof \utmthree\sub{\tvar}{\ttm} = \utm'$
  and the derivation is of the form:
  \[
    \indrule{\ruleLAllE}{
      \indrule{\ruleLAllI}{
        \inderiv{\deriv'}{\jul{\genv}{\utmthree\sub{\tvar}{\ttm}}{\ltyp}}
      }{
        \jul{\genv}{\pallfi{\tvar}{\utmthree}}{\ttm\allimp\ltyp}
      }
    }{
      \jul{\genv}{(\pallfi{\tvar}{\utmthree})\,\ttm}{\ltyp}
    }
  \]
  To conclude, note that
  $\derivs{\deriv'}{\jul{\genv}{\utmthree\sub{\tvar}{\ttm}}{\ltyp}}$
  and
  $\size{\deriv}
  = \size{\deriv'} + 2
  > \size{\deriv'}
  $.

\Case{$\ruleLUnitI$:}
  Then $\utm = \iunit$, which is $\tof$-irreducible.

\Case{$\ruleLUnitE$:}
  Then $\utm = \eunit{\utmtwo}{\utmthree}$.
  If the reduction is internal to $\utmtwo$, it is immediate to conclude by \ih.
  So we may assume that the reduction is at the root.
  Then $\utmtwo = \iunit$
  and the step is of the form
  $\utm = \eunit{\iunit}{\utmthree} \tof \utmthree = \utm'$
  and the derivation is of the form:
  \[
    \indrule{\ruleLUnitE}{
      \indrule{\ruleLUnitI}{
      }{
        \jul{}{\iunit}{\iunit}
      }
      \HS
      \inderiv{\deriv'}{\jul{\genv_2}{\utmthree}{\ltyp}}
    }{
      \jul{\genv_1+\genv_2}{\eunit{\iunit}{\utmthree}}{\ltyp}
    }
  \]
  where $\genv = \genv_1+\genv_2$.
  To conclude, note that
    $\derivs{\deriv'}{\jul{\genv_2}{\utmthree}{\ltyp}}$
  and
  $\size{\deriv}
  = \size{\deriv'} + 2
  > \size{\deriv'}
  $.

\Case{$\ruleLFresh$:}
  Then $\utm = \freshf{\teig}{\utmtwo}$.
  If the reduction is internal to $\utmtwo$, it is immediate to conclude by \ih.
  So we may assume that the reduction is at the root.
  Then $\teig\notin\fv{\utmtwo}$
  and the step is of the form
  $\utm = \freshf{\teig}{\utmtwo} \tof \utmtwo = \utm'$
  and the derivation is of the form:
  \[
    \indrule{\ruleLFresh}{
      \inderiv{\deriv'}{\jul{\genv}{\utmtwo}{\ltyp}}
      \HS
      \teig\notin\fv{\genv,\ltyp}
    }{
      \jul{\genv}{\freshf{\teig}{\utmtwo}}{\ltyp}
    }
  \]
  To conclude, note that
    $\derivs{\deriv'}{\jul{\genv}{\utmtwo}{\ltyp}}$
  and
  $\size{\deriv}
  = \size{\deriv'} + 1
  > \size{\deriv'}
  $.

\Case{$\ruleLVerifTeig$:}
  Then $\utm = \verif{\teig}{\utmtwo}$.
  If the reduction is internal to $\utmtwo$,
  it is immediate to conclude by \ih.
  So we may assume that the reduction is at the root.
  Then $\utmtwo = \gen{\teig}$
  and the step is of the form
  $\utm = \verif{\teig}{\utmtwo} \tof \iunit = \utm'$
  and the derivation is of the form:
  \[
    \indrule{\ruleLVerifTeig}{
      \indrule{\ruleLGenTeig}{
      }{
        \jul{}{\gen{\teig}}{\gen{\teig}}
      }
    }{
      \jul{}{\verif{\teig}{\gen{\teig}}}{\iunit}
    }
  \]
  where in particular the context $\genv$ must necessarily be empty.
  Define $\deriv'$ as the derivation constructed using the
  rule $\ruleLUnitI$,
  so that $\derivs{\deriv'}{\jul{}{\iunit}{\iunit}}$.
  To conclude, note that
  $\size{\deriv}
  = 2
  > 1
  = \size{\deriv'}
  $.

\Case{$\ruleLVerifImp$:}
  Then $\utm = \verif{\ttm\imp\ttmtwo}{\utmtwo}$.
  If the reduction is internal to $\utmtwo$,
  it is immediate to conclude by \ih.
  So we may assume that the reduction is at the root.
  Then the step is of the form
  $\utm = \verif{\ttm\imp\ttmtwo}{\utmtwo}
     \tof \verif{\ttmtwo}{(\utmtwo\,\gen{\ttm})} = \utm'$
  and the derivation is of the form:
  \[
    \indrule{\ruleLVerifTeig}{
      \inderiv{\deriv'}{\jul{\genv}{\verif{\ttmtwo}{(\utmtwo\,\gen{\ttm})}}{\iunit}}
    }{
      \jul{\genv}{\verif{\ttm\imp\ttmtwo}{\utmtwo}}{\iunit}
    }
  \]
  To conclude, note that
    $\derivs{\deriv'}{\jul{\genv}{\verif{\ttmtwo}{(\utmtwo\,\gen{\ttm})}}{\iunit}}$
  and
    $\size{\deriv}
    = 1 + \size{\deriv'}
    > \size{\deriv'}$.

\Case{$\ruleLVerifAll$:}
  Then $\utm = \verif{\allf{\tvar}{\ttm}}{\utmtwo}$.
  If the reduction is internal to $\utmtwo$,
  it is immediate to conclude by \ih.
  So we may assume that the reduction is at the root.
  Then the step is of the form
  $\utm = \verif{\allf{\tvar}{\ttm}}{\utmtwo}
     \tof \freshf{\teig}{\verif{\ttm\sub{\tvar}{\teig}}{(\utmtwo\,\teig)}} = \utm'$,
  where $\teig$ is assumed to be fresh, \ie $\teig \notin \fv{\ttm,\utmtwo}$,
  and the derivation is of the form:
  \[
    \indrule{\ruleLVerifAll}{
      \inderiv{\deriv'}{
        \jul{\genv}{\freshf{\teig}{\verif{\ttm\sub{\tvar}{\teig}}{(\utmtwo\,\teig)}}}{\iunit}
      }
      \HS
      \teig\notin\fv{\ttm,\utmtwo}
    }{
      \jul{\genv}{\verif{\allf{\tvar}{\ttm}}{\utmtwo}}{\iunit}
    }
  \]
  To conclude, note that
    $\derivs{\deriv'}{
       \jul{\genv}{\freshf{\teig}{\verif{\ttm\sub{\tvar}{\teig}}{(\utmtwo\,\teig)}}}{\iunit}
     }$
  and
    $\size{\deriv}
    = 1 + \size{\deriv'}
    > \size{\deriv'}$.

\Case{$\ruleLGenTeig$:}
  Then $\utm = \gen{\teig}$, which is $\tof$-irreducible.

\Case{$\ruleLGenImp$:}
  Then $\utm = \gen{\ttm\imp\ttmtwo}$,
  the step is of the form
  $\utm
   = \gen{\ttm\imp\ttmtwo}
   \tof \plamf{\pvar}{\eunit{\verif{\ttm}{\pvar}}{\gen{\ttmtwo}}}
   = \utm'
   $
  and the derivation is of the form:
  \[
    \indrule{\ruleLGenImp}{
      \inderiv{\deriv'}{
        \jul{\genv}{\plamf{\pvar}{\eunit{\verif{\ttm}{\pvar}}{\gen{\ttmtwo}}}}{\ltyp}
      }
    }{
      \jul{\genv}{\gen{\ttm\imp\ttmtwo}}{\ltyp}
    }
  \]
  To conclude, note that
  $\derivs{\deriv'}{\jul{\genv}{\plamf{\pvar}{\eunit{\verif{\ttm}{\pvar}}{\gen{\ttmtwo}}}}{\ltyp}}$
  and
  $\size{\deriv}
  = 1 + \size{\deriv'}
  > \size{\deriv'}$.

\Case{$\ruleLGenAll$:}
  Then $\utm = \gen{\allf{\tvar}{\ttm}}$,
  the step is of the form
  $\utm
   = \gen{\allf{\tvar}{\ttm}}
   \tof \pallfi{\tvar}{\gen{\ttm}}
   = \utm'
   $
  and the derivation is of the form:
  \[
    \indrule{\ruleLGenAll}{
      \inderiv{\deriv'}{\jul{\genv}{\pallfi{\tvar}{\gen{\ttm}}}{\ltyp}}
    }{
      \jul{\genv}{\gen{\allf{\tvar}{\ttm}}}{\ltyp}
    }
  \]
  To conclude, note that
  $\derivs{\deriv'}{\jul{\genv}{\pallfi{\tvar}{\gen{\ttm}}}{\ltyp}}$
  and
  $\size{\deriv}
  = 1 + \size{\deriv'}
  > \size{\deriv'}$.
\end{proof}

\begin{lemma}[Anti-substitution lemma]
\label{lem:lin:anti_substitution_lemma}
If $\derivs{\deriv}{\jul{\genv}{\utm\sub{\pvar}{\utmtwo}}{\ltyp}}$,
then there exist \linearEnvironments $\genv_0,\genvtwo$
and a \linearMultitype $\mtyp$
such that $\genv = \genv_0 + \genvtwo$
and $\derivs{\deriv_0}{\jul{\genv_0,\pvar:\mtyp}{\utm}{\ltyp}}$
and $\derivs{\derivtwo}{\jum{\genvtwo}{\utmtwo}{\mtyp}}$.
\end{lemma}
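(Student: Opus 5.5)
I would prove the anti-substitution lemma by induction on the structure of $\utm$, carrying along the derivation $\deriv$ of $\jul{\genv}{\utm\sub{\pvar}{\utmtwo}}{\ltyp}$. By $\alpha$-conversion I assume that bound variables of $\utm$ (term and type variables, and $\nu$-bound eigenvariables) are chosen fresh for $\pvar$, $\utmtwo$ and $\genv$. The overall pattern mirrors the weighted substitution lemma (\cref{lem:lin:weighted_substitution_lemma}), run backwards.

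\textbf{Base cases.} If $\utm = \pvar$, then $\utm\sub{\pvar}{\utmtwo} = \utmtwo$. I take $\genv_0 := \emptyset$, $\genvtwo := \genv$ and $\mtyp := \mset{\ltyp}$. The derivation $\deriv_0$ is an instance of $\ruleLVar$, and $\derivtwo$ is $\ruleLMulti$ with the single premise $\deriv$.

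If $\utm$ is a variable $\pvartwo \neq \pvar$, or $\iunit$, or a generator $\gen{\ttm}$, then $\pvar$ does not occur in $\utm$. In that case I take $\mtyp := \mset{\,}$, $\genvtwo := \emptyset$ and $\genv_0 := \genv$. The derivation $\deriv_0$ is $\deriv$ itself, since $\utm\sub{\pvar}{\utmtwo} = \utm$, and $\derivtwo$ is $\ruleLMulti$ with no premises. More generally, this same choice handles any $\utm$ with $\pvar \notin \fv{\utm}$, so I can dispatch that situation once at the start.

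\textbf{Inductive cases.} Here I do case analysis on the last rule of $\deriv$, which is determined by the head constructor of $\utm\sub{\pvar}{\utmtwo}$. This head constructor coincides with that of $\utm$ whenever $\utm$ is not $\pvar$. For example, a generator $\gen{\ttm\imp\ttmtwo}$ typed via $\ruleLGenImp$ falls under the case $\pvar \notin \fv{\utm}$ above.

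\textbf{Unary rules.} For $\ruleLImpI$, $\ruleLAllI$, $\ruleLAllE$, $\ruleLFresh$, $\ruleLVerifTeig$, $\ruleLVerifImp$ and $\ruleLVerifAll$, I apply the \ih\ to the premise and then reapply the same rule.
\begin{itemize}
\item For $\ruleLImpI$ on $\plamf{\pvartwo}{\utm'}$, the premise has environment $\genv,\pvartwo:\mtyp'$. Since $\pvartwo \notin \fv{\utmtwo}$, the $\pvartwo$-component lands entirely in $\genv_0$, so I can discharge it.
\item For $\ruleLAllI$, the premise concerns $\utm'\sub{\pvar}{\utmtwo}\sub{\tvar}{\ttm} = \utm'\sub{\tvar}{\ttm}\sub{\pvar}{\utmtwo}$, using $\tvar \notin \fv{\utmtwo}$. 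So the \ih\ applies to the metaterm $\utm'\sub{\tvar}{\ttm}$. For this reason the induction should be on the size of $\utm$ measured so that type substitution does not increase it, or more simply on the derivation $\deriv$.
\item For $\ruleLVerifImp$ and $\ruleLVerifAll$, the premise is about the metaterm $\verif{\ttmtwo}{(\utm'\,\gen{\ttm})}$, respectively $\freshf{\teig}{\verif{\ttm\sub{\tvar}{\teig}}{(\utm'\,\teig)}}$. Each is of the form $N\sub{\pvar}{\utmtwo}$ for the obvious $N$, using freshness of $\teig$. This again suggests inducting on $\deriv$ rather than on $\utm$.
\item For $\ruleLFresh$, I must check the side condition $\teig \notin \fv{\genv_0,\pvar:\mtyp,\ltyp}$. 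The types in $\genv_0$ and $\mtyp$ are drawn from subderivations whose environments sum to $\genv$. Because linear environments are added multiset-wise, every type occurring in $\genv_0$ or $\mtyp$ occurs in $\genv$, so freshness is inherited.
\end{itemize}

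\textbf{Binary rules.} For $\ruleLImpE$ on $\utmthree\,\utmfour$, I apply the \ih\ to the left premise, obtaining $\genv_0^1 + \genvtwo^1$ and $\mtyp^1$. For the right premise, which is $\ruleLMulti$ with premises indexed by $i \in I$, I apply the \ih\ to each premise, obtaining $\mtyp^i$ and $\derivtwo_i$. I then set $\mtyp := \mtyp^1 +_{i \in I} \mtyp^i$ and merge all the $\derivtwo$'s by a single $\ruleLMulti$. The rule $\ruleLUnitE$ is handled the same way with two parts. The key bookkeeping is that multiset union is associative and commutative, and that a $\ruleLMulti$ derivation for a sum $\mtyp_1 + \mtyp_2$ is exactly two $\ruleLMulti$ derivations juxtaposed.

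\textbf{Main obstacle.} I expect the main obstacle to be choosing an induction measure that survives the rules whose premises are not about subterms: $\ruleLAllI$, the verifier rules and the generator rules. I will therefore induct on the derivation $\deriv$. In each such case I need to exhibit the premise's subject as $N\sub{\pvar}{\utmtwo}$ for a suitable $N$ and verify this syntactically using freshness. The other delicate point is the eigenvariable side condition in $\ruleLFresh$, which I handle as described under the unary rules.
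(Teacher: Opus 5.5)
Your plan follows the paper's proof: induction on $\deriv$, the case $\utm=\pvar$ first, then each rule of \cref{lem:lin:weighted_substitution_lemma} read backwards, regrouping the $\ruleLMulti$ premises in the binary cases. The step that fails is the one you flagged, $\ruleLFresh$. To re-apply that rule you need $\teig\notin\fv{\mtyp}$, and your argument treats $\mtyp$ as if it came from an environment. It does not: $\genvtwo$ is indeed part of $\genv$, but $\mtyp$ is the conclusion type of $\derivtwo$, i.e.\ the multitype assigned to $\utmtwo$, and need not occur in $\genv$ at all. Choosing $\teig\notin\fv{\utmtwo}$ does not help either. A term can receive types mentioning eigenvariables it does not contain, e.g.\ $\jul{}{\plamf{y}{y}}{\mset{\gen{\teig}}\limp\gen{\teig}}$. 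Your $\ruleLVerifAll$ case inherits the same problem, since its premise ends with $\ruleLFresh$.

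This cannot be fixed by a better argument, because the statement is false for $\LambdaFInters$ as defined. Take $\utm:=\freshf{\teig}{\verif{\teig}{(\pvar\,\gen{\teig})}}$ and $\utmtwo:=\plamf{y}{y}$. Then $\jul{}{\utm\sub{\pvar}{\utmtwo}}{\iunit}$ is derivable using $\ruleLGenTeig$, $\ruleLVar$, $\ruleLImpI$, $\ruleLImpE$, $\ruleLVerifTeig$ and finally $\ruleLFresh$, whose side condition $\teig\notin\fv{\emptyset,\iunit}$ holds. Now suppose $\jul{\pvar:\mtyp}{\utm}{\iunit}$ were derivable. Its last rule must be $\ruleLFresh$, with premise $\jul{\pvar:\mtyp}{\verif{\teigtwo}{(\pvar\,\gen{\teigtwo})}}{\iunit}$ and $\teigtwo\notin\fv{\mtyp}$, whatever name $\teigtwo$ the bound eigenvariable receives. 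The rules above it are then forced to be $\ruleLVerifTeig$, $\ruleLImpE$ and $\ruleLVar$. So $\mtyp$ contains some $\mtyp'\limp\gen{\teigtwo}$, a contradiction.

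The paper's proof details only the $\ruleLUnitE$ case and glosses over exactly this one. As a consequence \cref{lem:lin:subject_expansion} also fails: $(\plamf{\pvar}{\utm})\,(\plamf{y}{y})\tos\iunit$, yet this term is untypable. So the proof of \cref{prop:f:standardization} breaks as written. A repair has to change the typing rule itself. For instance, let $\ruleLFresh$ derive $\jul{\genv}{\freshf{\teig}{\utmthree}}{\ltyp}$ from $\jul{\genv}{\utmthree\sub{\teig}{\teigtwo}}{\ltyp}$ for an arbitrary eigenvariable $\teigtwo$, with no freshness condition. With that rule your inductive step goes through after commuting $\sub{\teig}{\teigtwo}$ with $\sub{\pvar}{\utmtwo}$, but the other lemmas about $\LambdaFInters$ would need rechecking.
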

\begin{proof}
By induction on $\deriv$.
We start by considering the case in which $\utm = \pvar$.
In such case, we have by hypothesis $\derivs{\deriv}{\jul{\genv}{\utmtwo}{\ltyp}}$.
So it suffices to take
$\genv_0 := \emptyset$
and $\genvtwo := \genv$
and $\mtyp := \mset{\ltyp}$,
with the derivation
$\derivs{\deriv_0}{\jul{\pvar:\mset{\ltyp}}{\pvar}{\ltyp}}$
constructed applying $\ruleLVar$,
and
$\derivs{\derivtwo}{\jum{\genv}{\utmtwo}{\mset{\ltyp}}}$
constructed applying $\ruleLMulti$, with $\deriv$ as the single premise.
\\
\indent Now, in the remainder of the proof, we may assume that $\utm \neq \pvar$.
We proceed by case analysis on the last rule applied to construct $\deriv$.
The proof in each case is similar to the weighted substitution lemma~(\cref{lem:lin:weighted_substitution_lemma}),
but reasoning backwards.
For example, in the case of rule $\ruleLUnitE$,
we have that $\utm\sub{\pvar}{\utmtwo}$ is of the form $\eunit{\utmthree}{\utmfour}$,
and since $\utm \neq \pvar$
we have that $\utm = \eunit{\utmthree_0}{\utmfour_0}$
with $\utmthree_0\sub{\pvar}{\utmtwo} = \utmthree$
and $\utmfour_0\sub{\pvar}{\utmtwo} = \utmfour$.
The derivation $\deriv$ is of the form:
\[
  \indrule{\ruleLUnitE}{
    \inderiv{\deriv_1}{
      \jul{\genv_1}{\utmthree_0\sub{\pvar}{\utmtwo}}{\iunit}
    }
    \HS
    \inderiv{\deriv_2}{
      \jul{\genv_1}{\utmfour_0\sub{\pvar}{\utmtwo}}{\ltyp}
    }
  }{
    \jul{\genv_1+\genv_2}{
      \eunit{\utmthree_0\sub{\pvar}{\utmtwo}}{\utmfour_0\sub{\pvar}{\utmtwo}}
    }{\ltyp}
  }
\]
where $\genv = \genv_1+\genv_2$.
By \ih on $\deriv_1$ we have that there exist $\genv_{10},\genvtwo_1,\mtyp_1$
such that $\genv_1 = \genv_{10} + \genvtwo_1$
and there exist derivations
$\derivs{\deriv_{10}}{\jul{\genv_{10},\pvar:\mtyp_1}{\utmthree_0}{\iunit}}$
and $\derivs{\derivtwo_1}{\jum{\genvtwo_1}{\utmtwo}{\mtyp_1}}$.
Similarly, by \ih on $\deriv_2$ we have that there exist $\genv_{20},\genvtwo_2,\mtyp_2$
such that $\genv_2 = \genv_{20} + \genvtwo_2$
and there exist derivations
$\derivs{\deriv_{20}}{\jul{\genv_{20},\pvar:\mtyp_2}{\utmfour_0}{\ltyp}}$
and $\derivs{\derivtwo_2}{\jum{\genvtwo_2}{\utmtwo}{\mtyp_2}}$.
\\
\indent
To conclude, take
$\genv_0 := \genv_{10} + \genv_{20}$
and $\genvtwo := \genvtwo_1 + \genvtwo_2$
and $\mtyp := \mtyp_1 + \mtyp_2$.
Note that
$\genv
 = \genv_1 + \genv_2
 = \genv_{10} + \genvtwo_1 + \genv_{20} + \genvtwo_2
 = \genv_0 + \genvtwo$.
Applying $\ruleLUnitE$ on the derivations $\deriv_{10}$ and $\deriv_{20}$
we obtain a derivation
$\derivs{\deriv_0}{\jul{\genv_{10}+\genv_{20},\pvar:\mtyp_1+\mtyp_2}{\eunit{\utmthree_0}{\utmfour_0}}{\ltyp}}$,
that is,
$\derivs{\deriv_0}{\jul{\genv_0,\pvar:\mtyp}{\utm}{\ltyp}}$.
Moreover, the derivations $\derivtwo_1$ and $\derivtwo_2$ must be derived using
$\ruleLMulti$, so they can be joined to obtain a derivation
$\derivs{\derivtwo_1+\derivtwo_2}{\jum{\genvtwo_1+\genvtwo_2}{\utmtwo}{\mtyp_1+\mtyp_2}}$,
that is, $\derivs{\derivtwo}{\jum{\genvtwo}{\utmtwo}{\mtyp}}$.
\end{proof}

\begin{lemma}[Subject expansion]
\label{lem:lin:subject_expansion}
Let $\utm \to \utm'$.
If $\derivs{\deriv}{\jul{\genv}{\utm'}{\ltyp}}$,
then $\jul{\genv}{\utm}{\ltyp}$.
\end{lemma}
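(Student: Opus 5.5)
We prove subject expansion by induction on the context in which the step $\utm \to \utm'$ takes place, reducing to the case of a root step. For root steps we argue rule by rule, as in the weak head subject reduction proof (\cref{lem:lin:weak_head_subject_reduction}) but backwards. For the congruence cases the step is internal to an immediate subterm of $\utm$. The key observation is that every rule of $\LambdaFInters$ is syntax-directed except the generator rules, which are tied to the head constructor. So the last rule of $\deriv$ determines typings of the immediate subterms of $\utm'$. We then apply the \ih to the subterm that reduced and rebuild the derivation with the same rule.

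Some subterms may lie under $\ruleLMulti$ (the argument of an application), including with an empty multiset. In that case we apply the \ih to each premise separately. When there are no premises nothing needs to be done, since $\jum{}{\utmtwo}{\mset{\,}}$ holds for any $\utmtwo$. Subterms that are not typed at all cannot arise except as such arguments. In the verifier cases the reduct's typing is forced by the type $\gen{\teig}$ or by the unfolding rules, so it remains of the right shape.

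Two congruence cases need care. For congruence under $\plamf{\pvar}{\cdot}$, $\pallfi{\tvar}{\cdot}$ and $\freshf{\teig}{\cdot}$ the argument is routine. For $\ruleLAllI$, however, the premise types $\utmthree\sub{\tvar}{\ttm}$. We therefore need that $\utmthree \to \utmthree'$ implies $\utmthree\sub{\tvar}{\ttm} \to \utmthree'\sub{\tvar}{\ttm}$, which is the stability of reduction under type substitution (cf.\ \cref{lem:f:sto_type_substitution}). The other delicate congruence is a step inside $\gen{\cdot}$ or $\verif{\ttm}{\cdot}$ whose target type has been unfolded by $\ruleLVerifImp$ or $\ruleLVerifAll$. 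Those rules have premises of the form $\verif{\ttmtwo}{(\utmtwo'\,\gen{\ttm})}$ and $\freshf{\teig}{\verif{\ttm\sub{\tvar}{\teig}}{(\utmtwo'\,\teig)}}$. By the \ih on these premises, using the contexts $\verif{\ttmtwo}{(\ctxhole\,\gen{\ttm})}$ and $\freshf{\teig}{\verif{\cdot}{(\ctxhole\,\teig)}}$, we recover typings with $\utmtwo$ in place of $\utmtwo'$. We then reapply the rule. So the induction should be on $\utm$ with the context generalized, or on the step, quantifying over all derivations.

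For root steps:

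\begin{enumerate}
\item For rule~1, $(\plamf{\pvar}{\utmfour})\,\utmthree \to \utmfour\sub{\pvar}{\utmthree}$, the anti-substitution lemma (\cref{lem:lin:anti_substitution_lemma}) splits $\genv = \genv_0+\genvtwo$ and gives $\jul{\genv_0,\pvar:\mtyp}{\utmfour}{\ltyp}$ and $\jum{\genvtwo}{\utmthree}{\mtyp}$. We conclude by $\ruleLImpI$ followed by $\ruleLImpE$.
\item For rule~2 we use $\ruleLAllI$ with $\ttm$ and then $\ruleLAllE$. This works because $\ruleLAllI$ types the instantiated body, so no type anti-substitution is needed.
\item For rule~3 we prefix $\ruleLUnitI$ and use $\ruleLUnitE$ with an empty left environment.
\item For rule~4 the reduct is $\iunit$, so $\genv=\emptyset$ and $\ltyp=\iunit$. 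We use $\ruleLGenTeig$ and then $\ruleLVerifTeig$.
\item Rules~5--8 are inverted by $\ruleLGenImp$, $\ruleLVerifImp$, $\ruleLGenAll$ and $\ruleLVerifAll$ respectively. For $\ruleLVerifAll$ we choose $\teig$ fresh by $\alpha$-conversion.
\item For rule~9 we apply $\ruleLFresh$. This needs $\teig\notin\fv{\genv,\ltyp}$, and we choose $\teig$ fresh, renaming it if necessary. We would justify this by a lemma that renaming a free eigenvariable not occurring in $\utmtwo$ is harmless for typings. The subtle point is that linear types may mention $\ttm$ in $\ttm\allimp\ltyp$.
\end{enumerate}

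The main obstacle is this rule~9 / $\ruleLFresh$ freshness condition. My first attempt is to prove an auxiliary renaming lemma: if $\jul{\genv}{\utmtwo}{\ltyp}$ and $\teig\notin\fv{\utmtwo}$, then the same judgment holds with $\teig$ replaced by a fresh $\teigtwo$ in $\genv,\ltyp$. The proof would go by induction on the derivation. With this lemma we conclude by choosing the bound $\teig$ in $\freshf{\teig}{\utmtwo}$ outside $\fv{\genv,\ltyp}$ via $\alpha$-equivalence.
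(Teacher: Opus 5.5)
Your case analysis matches the paper's proof almost line by line. It uses anti-substitution for rule~1, $\ruleLAllI$ followed by $\ruleLAllE$ for rule~2, and the generator/verifier rules for rules~4--8. It applies the induction hypothesis premise by premise under $\ruleLMulti$, and uses stability of $\to$ under type substitution for the $\ruleLAllI$ congruence.

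What does not work as written is the induction principle. You induct on the context of the step, or on $\utm$, or on the step. None of these licenses the call you make in the $\ruleLVerifImp$ congruence case, where you apply the hypothesis to $\verif{\ttmtwo}{(\utmtwo\,\gen{\ttm})} \to \verif{\ttmtwo}{(\utmtwo'\,\gen{\ttm})}$. That source term is not a subterm of $\utm = \verif{\ttm\imp\ttmtwo}{\utmtwo}$. Its context $\verif{\ttmtwo}{(\ctxhole\,\gen{\ttm})}$ is not a subcontext of $\verif{\ttm\imp\ttmtwo}{\ctxhole}$; it is deeper. The same holds for $\freshf{\teig}{\verif{\ttm\sub{\tvar}{\teig}}{(\ctxhole\,\teig)}}$ in the $\ruleLVerifAll$ case. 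If you induct on $\utm$, the $\ruleLAllI$ case fails as well, since $\utmthree\sub{\tvar}{\ttm}$ is not a subterm of $\pallfi{\tvar}{\utmthree}$ and can be bigger.

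The fix is what the paper does: induct on $\size{\deriv}$ (or structurally on $\deriv$), with the statement quantified over all steps $\utm \to \utm'$. Each of these premises is then a proper subderivation, so the hypothesis applies wherever the step sits. If you want to keep induction on the step, you need an auxiliary induction on the verified type. It would peel off $\ruleLVerifImp$, $\ruleLVerifAll$ and $\ruleLFresh$ down to the application spine headed by $\utmtwo'$.

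Your ``main obstacle'' is not one. Since $\teig\notin\fv{\utm'}$, $\alpha$-renaming the binder in $\freshf{\teig}{\utm'}$ to some $\teigtwo\notin\fv{\utm',\genv,\ltyp}$ leaves the body unchanged. So $\ruleLFresh$ applies directly to $\deriv$; this is the paper's one-line argument. No renaming lemma is needed, and the one you sketch would change $\genv$ and $\ltyp$, which is not the judgment you want.

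The freshness point that does need a word is in the $\ruleLVerifAll$ congruence. There the premise's bound $\teig$ is only known to avoid $\fv{\utmtwo'}$, and reduction can erase eigenvariables. So before reapplying $\ruleLVerifAll$ you must choose it, by $\alpha$-conversion, outside $\fv{\utmtwo}$ too. Also, there are no steps inside $\gen{\cdot}$, since generators contain no metaterms.
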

\begin{proof}
We proceed by induction on the size of $\deriv$.
We consider cases depending on the way in which the step $\utm \to \utm'$
is derived,
including cases for root reduction
(given by an application of a rewriting rule at the root of $\utm$)
as well as congruence closure
(given by reducing internally to a subterm).

\begin{enumerate}
\item[] \textbf{Root reduction cases.}
\item
  Suppose that
  $\utm = (\plamf{\pvar}{\utmtwo})\,\utmthree
      \to \utmtwo\sub{\pvar}{\utmthree}
      = \utm'$:
  Then by hypothesis
  $\derivs{\deriv}{\jul{\genv}{\utmtwo\sub{\pvar}{\utmthree}}{\ltyp}}$.
  By the anti-substitution lemma~(\cref{lem:lin:anti_substitution_lemma})
  there exist $\genv_0,\genvtwo,\mtyp$ such that
  $\jul{\genv_0,\pvar:\mtyp}{\utmtwo}{\ltyp}$
  and $\jum{\genvtwo}{\utmthree}{\mtyp}$.
  Hence:
  \[
    \indrule{\ruleLImpE}{
      \indrule{\ruleLImpI}{
        \jul{\genv_0,\pvar:\mtyp}{\utmtwo}{\ltyp}
      }{
        \jul{\genv_0}{\plamf{\pvar}{\utmtwo}}{\mtyp\limp\ltyp}
      }
      \HS
      \jum{\genvtwo}{\utmthree}{\mtyp}
    }{
      \jul{\genv_0+\genvtwo}{(\plamf{\pvar}{\utmtwo})\,\utmthree}{\ltyp}
    }
  \]
\item
  Suppose that
  $\utm = (\pallfi{\tvar}{\utmtwo})\,\ttm
      \to \utmtwo\sub{\tvar}{\ttm}
      = \utm'$:
  Then by hypothesis
  $\derivs{\deriv}{\jul{\genv}{\utmtwo\sub{\tvar}{\ttm}}}{\ltyp}$.
  Hence:
  \[
    \indrule{\ruleLAllE}{
      \indrule{\ruleLAllI}{
        \jul{\genv}{\utmtwo\sub{\tvar}{\ttm}}{\ltyp}
      }{
        \jul{\genv}{\pallfi{\tvar}{\utmtwo}}{\ttm\allimp\ltyp}
      }
    }{
      \jul{\genv}{(\pallfi{\tvar}{\utmtwo})\,\ttm}{\ltyp}
    }
  \]
\item
  Suppose that
  $\utm = \eunit{\iunit}{\utm'} \to \utm'$:
  Then by hypothesis
  $\derivs{\deriv}{\jul{\genv}{\eunit{\iunit}{\utm'}}{\ltyp}}$.
  Hence:
  \[
    \indrule{\ruleLUnitE}{
      \indrule{\ruleLUnitI}{
      }{
        \jul{}{\iunit}{\iunit}
      }
      \HS
      \jul{\genv}{\utm'}{\ltyp}
    }{
      \jul{\genv}{\eunit{\iunit}{\utm'}}{\ltyp}
    }
  \]
\item
  Suppose that
  $\utm = \verif{\teig}{\gen{\teig}} \to \iunit = \utm'$:
  Note that $\derivs{\deriv}{\jul{\genv}{\utm'}{\ltyp}}$
  has the judgment $\jul{\genv}{\iunit}{\ltyp}$ as its conclusion,
  which only be derived using $\ruleLUnitI$
  with $\genv = \emptyset$ and $\ltyp = \iunit$.
  To conclude, note that:
  \[
    \indrule{\ruleLVerifTeig}{
      \indrule{\ruleLGenTeig}{
      }{
        \jul{}{\gen{\teig}}{\gen{\teig}}
      }
    }{
      \jul{}{\verif{\teig}{\gen{\teig}}}{\iunit}
    }
  \]
\item
  Suppose that
  $\utm
   = \verif{\ttm\imp\ttmtwo}{\utmtwo}
   \to \verif{\ttmtwo}{(\utmtwo\,\gen{\ttm})}
   = \utm'$:
  Then by hypothesis
  $\derivs{\deriv}{\jul{\genv}{\verif{\ttmtwo}{(\utmtwo\,\gen{\ttm})}}{\ltyp}}$.
  Hence:
  \[
    \indrule{\ruleLVerifImp}{
      \jul{\genv}{\verif{\ttmtwo}{(\utmtwo\,\gen{\ttm})}}{\ltyp}
    }{
      \jul{\genv}{\verif{\ttm\imp\ttmtwo}{\utmtwo}}{\ltyp}
    }
  \]
\item
  Suppose that
  $\utm
   = \verif{\allf{\tvar}{\ttm}}{\utmtwo}
   \to \freshf{\teig}{\verif{\ttm\sub{\tvar}{\teig}}{(\utmtwo\,\teig)}}
   = \utm'$,
  where $\teig$ is fresh, \ie $\teig \notin \fv{\ttm,\utmtwo}$:
  Then by hypothesis
  $\derivs{\deriv}{\jul{\genv}{\freshf{\teig}{\verif{\ttm\sub{\tvar}{\teig}}{(\utmtwo\,\teig)}}}{\ltyp}}$.
  Hence:
  \[
    \indrule{\ruleLVerifAll}{
      \jul{\genv}{\freshf{\teig}{\verif{\ttm\sub{\tvar}{\teig}}{(\utmtwo\,\teig)}}}{\ltyp}
    }{
      \jul{\genv}{\verif{\allf{\tvar}{\ttm}}{\utmtwo}}{\ltyp}
    }
  \]
\item
  Suppose that
  $\utm
   = \gen{\ttm\imp\ttmtwo}
   \to \plamf{\pvar}{\eunit{\verif{\ttm}{\pvar}}{\gen{\ttmtwo}}}
   = \utm'$:
  Then by hypothesis
  $\derivs{\deriv}{\jul{\genv}{\plamf{\pvar}{\eunit{\verif{\ttm}{\pvar}}{\gen{\ttmtwo}}}}{\ltyp}}$.
  Hence:
  \[
    \indrule{\ruleLGenImp}{
      \jul{\genv}{\plamf{\pvar}{\eunit{\verif{\ttm}{\pvar}}{\gen{\ttmtwo}}}}{\ltyp}
    }{
      \jul{\genv}{\gen{\ttm\imp\ttmtwo}}{\ltyp}
    }
  \]
\item
  Suppose that
  $\utm
   = \gen{\allf{\tvar}{\ttm}}
   \to \pallfi{\tvar}{\gen{\ttm}}
   = \utm'$:
  Then by hypothesis
  $\derivs{\deriv}{\jul{\genv}{\pallfi{\tvar}{\gen{\ttm}}}{\ltyp}}$.
  Hence:
  \[
    \indrule{\ruleLGenAll}{
      \jul{\genv}{\pallfi{\tvar}{\gen{\ttm}}}{\ltyp}
    }{
      \jul{\genv}{\gen{\allf{\tvar}{\ttm}}}{\ltyp}
    }
  \]
\item
  Suppose that
  $\utm
   = \freshf{\teig}{\utm'}
   \to \utm'$,
  where $\teig \notin \fv{\utm'}$:
  Then by hypothesis
  $\derivs{\deriv}{\jul{\genv}{\utm'}{\ltyp}}$.
  By $\alpha$-conversion, we may suppose that
  $\teig \notin \fv{\genv,\ltyp}$.
  Hence:
  \[
    \indrule{\ruleLFresh}{
      \jul{\genv}{\utm'}{\ltyp}
    }{
      \jul{\genv}{\freshf{\teig}{\utm'}}{\ltyp}
    }
  \]
\item[]
  \textbf{Congruence closure cases.} \\
  Most congruence closure cases are straightforward by resorting to the \ih;
  we only include the proofs of the most illustrative or interesting cases.
\item
  \label{lem:lin:subject_expansion__case_cong_left_app}
  Left of a \proofApplication,
    $\utm = \utmtwo\,\utmthree
        \to \utmtwo'\,\utmthree
        = \utm'$
    with $\utmtwo \to \utmtwo'$.
  The derivation $\derivs{\deriv}{\jul{\genv}{\utm'}{\ltyp}}$ is of the form:
  \[
    \indrule{\ruleLImpE}{
      \jul{\genv_1}{\utmtwo'}{\mtyp\limp\ltyp}
      \HS
      \jum{\genv_2}{\utmthree}{\mtyp}
    }{
      \jul{\genv_1+\genv_2}{\utmtwo'\,\utmthree}{\ltyp}
    }
  \]
  where $\genv = \genv_1+\genv_2$.
  Since $\utmtwo \to \utmtwo'$, by \ih we have that
  $\jul{\genv_1}{\utmtwo}{\mtyp\limp\ltyp}$,
  hence applying the rule $\ruleLImpE$
  we obtain that $\jul{\genv_1+\genv_2}{\utmtwo\,\utmthree}{\ltyp}$.
\item
  Right of a \proofApplication,
    $\utm = \utmtwo\,\utmthree
        \to \utmtwo\,\utmthree'
        = \utm'$
    with $\utmthree \to \utmthree'$.
  The derivation $\derivs{\deriv}{\jul{\genv}{\utm'}{\ltyp}}$ is of the form:
  \[
    \indrule{\ruleLImpE}{
      \jul{\genv_1}{\utmtwo}{\mtyp\limp\ltyp}
      \HS
      \indrule{\ruleLMulti}{
        (\jum{\genv_{2,i}}{\utmthree'}{\ltyp_i})_\iI
      }{
        \jum{+_\iI\genv_{2,i}}{\utmthree'}{\mset{\ltyp_i}_\iI}
      }
    }{
      \jul{\genv_1+_\iI\genv_{2,i}}{\utmtwo\,\utmthree'}{\ltyp}
    }
  \]
  where $\genv = \genv_1+_\iI\genv_{2,i}$.
  Since $\utmthree \to \utmthree'$, applying the \ih once per each $\iI$
  we have that $\jul{\genv_{2,i}}{\utmthree}{\ltyp_i}$ holds for each $\iI$.
  Hence applying the rules $\ruleLMulti$ and $\ruleLImpE$
  we obtain that $\jul{\genv_1+_\iI\genv_{2,i}}{\utmtwo\,\utmthree}{\ltyp}$,
  as required.
\item
  Inside a \logicalAbstraction,
    $\utm = \pallfi{\tvar}{\utmtwo}
        \to \pallfi{\tvar}{\utmtwo'}
        = \utm'$
    with $\utmtwo \to \utmtwo'$.
  The derivation $\derivs{\deriv}{\jul{\genv}{\utm'}{\ltyp}}$ is of the form:
  \[
    \indrule{\ruleLAllI}{
      \jul{\genv}{\utmtwo'\sub{\tvar}{\ttm}}{\ltyp'}
    }{
      \jul{\genv}{\pallfi{\tvar}{\utmtwo'}}{\ttm\allimp\ltyp'}
    }
  \]
  Note that since $\utmtwo \to \utmtwo'$
  we have that $\utmtwo\sub{\tvar}{\ttm} \to \utmtwo'\sub{\tvar}{\ttm}$.
  By \ih we have that
    $\jul{\genv}{\utmtwo\sub{\tvar}{\ttm}}{\ltyp'}$,
  hence applying the rule $\ruleLAllI$
  we obtain that
    $\jul{\genv}{\pallfi{\tvar}{\utmtwo}}{\ttm\allimp\ltyp'}$,
  as required.
\item
  Verification of an implication,
    $\utm = \verif{\ttm\imp\ttmtwo}{\utmtwo}
        \to \verif{\ttm\imp\ttmtwo}{\utmtwo'}
        = \utm'$
    with $\utmtwo \to \utmtwo'$.
  The derivation $\derivs{\deriv}{\jul{\genv}{\utm'}{\ltyp}}$ is of the form:
  \[
    \indrule{\ruleLVerifImp}{
      \jul{\genv}{\verif{\ttmtwo}{(\utmtwo'\,\gen{\ttm})}}{\iunit}
    }{
      \jul{\genv}{\verif{\ttm\imp\ttmtwo}{\utmtwo'}}{\iunit}
    }
  \]
  Note that since $\utmtwo \to \utmtwo'$
  we have that 
  $\verif{\ttmtwo}{(\utmtwo\,\gen{\ttm})} \to \verif{\ttmtwo}{(\utmtwo'\,\gen{\ttm})}$.
  By \ih we have that
    $\jul{\genv}{\verif{\ttmtwo}{(\utmtwo\,\gen{\ttm})}}{\iunit}$,
  hence applying the rule $\ruleLVerifImp$
  we obtain that
    $\jul{\genv}{\verif{\ttm\imp\ttmtwo}{\utmtwo}}{\iunit}$
  as required.
\item
  Verification of a second-order universal quantifier,
    $\utm = \verif{\allf{\tvar}{\ttm}}{\utmtwo}
        \to \verif{\allf{\tvar}{\ttm}}{\utmtwo'}
        = \utm'$
    with $\utmtwo \to \utmtwo'$.
  The derivation $\derivs{\deriv}{\jul{\genv}{\utm'}{\ltyp}}$ is of the form:
  \[
    \indrule{\ruleLVerifAll}{
      \jul{\genv}{\freshf{\teig}{\verif{\ttm\sub{\tvar}{\teig}}{(\utmtwo'\,\teig)}}}{\iunit}
      \HS
      \teig\notin\fv{\ttm,\utmtwo'}
    }{
      \jul{\genv}{\verif{\allf{\tvar}{\ttm}}{\utmtwo'}}{\iunit}
    }
  \]
  Note that since $\utmtwo \to \utmtwo'$
  we have that 
  $\freshf{\teig}{\verif{\ttm\sub{\tvar}{\teig}}{(\utmtwo\,\teig)}}
  \to \freshf{\teig}{\verif{\ttm\sub{\tvar}{\teig}}{(\utmtwo'\,\teig)}}$.
  Furthermore, by $\alpha$-renaming we may assume that $\teig \notin \fv{\utmtwo}$.
  By \ih we have that
    $\jul{\genv}{\freshf{\teig}{\verif{\ttm\sub{\tvar}{\teig}}{(\utmtwo\,\teig)}}}{\iunit}$
  hence applying the rule $\ruleLVerifImp$
  we obtain that
    $\jul{\genv}{\verif{\allf{\tvar}{\ttm}}{\utmtwo}}{\iunit}$
  as required.
\item
  The remaning congruence cases are straightforward by \ih,
  similar to the proof of \caseref{lem:lin:subject_expansion__case_cong_left_app}
  in this lemma.
\end{enumerate}
\end{proof}

\begin{proposition}[Standardization]
\label{a:prop:f:standardization}
If $\utm \tos \iunit$ then $\utm \tofs \iunit$.
\end{proposition}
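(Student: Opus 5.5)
We follow the strategy outlined in the main text for \cref{prop:f:standardization}, now that the two key lemmas about the system $\LambdaFInters$ are available. Assume $\utm \tos \iunit$, say $\utm = \utm_0 \to \utm_1 \to \hdots \to \utm_k = \iunit$. The constant $\iunit$ is typable by rule $\ruleLUnitI$, giving $\jul{}{\iunit}{\iunit}$. Applying subject expansion (\cref{lem:lin:subject_expansion}) backwards along the sequence, $k$ times, we obtain a derivation $\derivs{\deriv}{\jul{}{\utm}{\iunit}}$. Note that the environment stays empty, because subject expansion preserves the environment and the type.

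Next we consider any maximal sequence of weak head steps $\utm = \utmtwo_0 \tow \utmtwo_1 \tow \utmtwo_2 \tow \hdots$. By weak head subject reduction (\cref{lem:lin:weak_head_subject_reduction}), each $\utmtwo_j$ has a derivation $\deriv_j$ of $\jul{}{\utmtwo_j}{\iunit}$ with $\size{\deriv_0} > \size{\deriv_1} > \hdots$. Sizes are natural numbers, so the sequence is finite, of length at most $\size{\deriv}$. It therefore ends in some $\utmtwo_n$ that is $\tow$-irreducible, and $\utm \tofs \utmtwo_n$.

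It remains to show $\utmtwo_n = \iunit$. Since ${\tow} \subseteq {\to}$, we have $\utm \tos \utmtwo_n$ and $\utm \tos \iunit$. Confluence (\cref{prop:f:confluence}) gives a common reduct $\utmthree$ with $\iunit \tos \utmthree$, so $\utmthree = \iunit$, and hence $\utmtwo_n \tos \iunit$.

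\textbf{Main obstacle.} The delicate step is showing that a $\tow$-normal metaterm $\utmtwo_n$ with $\utmtwo_n \tos \iunit$ must equal $\iunit$. The paper's sketch glosses over this point. We would use the typing to settle it: $\utmtwo_n$ is typable with $\jul{}{\utmtwo_n}{\iunit}$ in the empty linear environment.

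We would prove, by induction on the derivation, an auxiliary claim. A $\tow$-normal metaterm typable as $\jul{\genv}{\utmthree}{\ltyp}$ has a restricted shape: it is $\iunit$, an abstraction, a type-abstraction, $\gen{\teig}$, or a ``stuck'' term headed, in weak head position, by a variable $\pvar$ with $\genv(\pvar)$ non-empty. Stuck shapes such as $\verif{\tvar}{\hdots}$ or $\verif{\teig}{\utmfour}$ with $\utmfour \neq \gen{\teig}$ are excluded: the rule $\ruleLVerifTeig$ forces the argument to have type $\gen{\teig}$, and by the inductive claim it must then literally be $\gen{\teig}$, which would make it a redex. In the empty environment the only shape of type $\iunit$ is $\iunit$ itself, since abstractions receive arrow or $\allimp$ types and $\gen{\teig}$ receives $\gen{\teig}$.

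Alternatively, we could argue directly that reducts of a weak-head-stuck term remain stuck with the same head, contradicting $\utmtwo_n \tos \iunit$. We would try the typing argument first, because it is uniform.
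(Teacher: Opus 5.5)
Your first three steps (subject expansion from $\jul{}{\iunit}{\iunit}$, termination of $\tow$ via weak head subject reduction, and confluence to get $\utmtwo_n \tos \iunit$) are exactly the paper's proof. You are right that its closing sentence, ``$\utm_n$ is a weak head normal form, so $\utm_n = \iunit$'', is unjustified. However, your auxiliary claim is false, and no lemma can fill the hole: that step, and with it the proposition, fails for $\lambdaCheckF$ as defined.

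The culprit is rule~9. Its side condition $\teig\notin\fv{\utm}$ can be violated by an occurrence of $\teig$ under a $\lambda$, where weak head reduction never goes but full reduction may erase it. Take
\[
  \utm_0 \eqdef (\freshf{\teig}{\plamf{\pvar}{(\plamf{\vartwo}{\pvar})\,\gen{\teig}}})\,\iunit .
\]
This term is $\tow$-irreducible. The root is not a redex because the function part is a $\nu$. The $\nu$ is not a redex because $\teig$ occurs in $\gen{\teig}$. The abstraction below it is not a redex, and no weak head context reaches further. Yet contracting $(\plamf{\vartwo}{\pvar})\,\gen{\teig}$ under the $\lambda$ erases the last occurrence of $\teig$, so
$\utm_0 \to (\freshf{\teig}{\plamf{\pvar}{\pvar}})\,\iunit \to (\plamf{\pvar}{\pvar})\,\iunit \to \iunit$.

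Concretely, your shape classification omits $\nu$-binders. Rule \ruleLFresh{} leaves the type unchanged, so a $\nu$-wrapped abstraction gets an arrow type and can sit in function position without forming a redex. Indeed $\jul{}{\utm_0}{\iunit}$ is derivable: give $\gen{\teig}$ the empty multiset, so the $\nu$-abstraction has type $\mset{\iunit}\limp\iunit$. So ``in the empty environment the only $\tow$-normal term of type $\iunit$ is $\iunit$'' is false. Your fallback (reducts of a stuck term stay stuck with the same head) breaks at exactly the same point.

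What is needed is a change to the statement or to the calculus, not a better lemma. For instance, weaken rule~9 to $\freshf{\teig}{\iunit}\to\iunit$, which is the instance the paper's correctness and adequacy computations actually use. Then every redex condition at a weak head position depends only on the constructors and type annotations along the weak head spine. Full reduction of a $\tow$-normal term never touches that spine. Hence $\to$ preserves both $\tow$-normality and being different from $\iunit$, and your fallback argument completes the proof.
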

\begin{proof}
Suppose that $\utm \tos \iunit$.
Note that $\jul{}{\iunit}{\iunit}$.
Since $\utm \tos \iunit$,
by applying subject expansion~(\cref{lem:lin:subject_expansion})
repeatedly, we obtain a derivation $\deriv_0$
such that $\derivs{\deriv_0}{\jul{}{\utm}{\iunit}}$.
Consider a maximal, potentially infinite,
sequence of weak head reduction steps $\utm = \utm_0 \tow \utm_1 \tow \utm_2 \hdots$.
Applying weak head subject reduction~(\cref{lem:lin:weak_head_subject_reduction})
repeatedly, we construct a sequence of derivations
$\deriv_0,\deriv_1,\deriv_2,\hdots$
such that $\derivs{\deriv_i}{\jul{}{\utm_i}{\iunit}}$ for all $i$
and such that $\size{\deriv_i} > \size{\deriv_{i+1}}$ for all $i$.
Then the reduction sequence $\utm = \utm_0 \tow \utm_1 \tow \utm_2 \hdots$
must be finite because otherwise we would have an infinite decreasing sequence
of natural numbers
$\size{\deriv_0} > \size{\deriv_1} > \size{\deriv_2} > \hdots$.
Let $\utm_n$ be the last term of the sequence,
which must be a $\tow$-normal form.
Since $\utm \tos \iunit$ and $\utm \tos \utm_n$
by confluence~(\cref{prop:f:confluence}) we have that $\utm_n \tos \iunit$,
but $\utm_n$ is a weak head normal form, so $\utm_n = \iunit$.
\end{proof}

\subsection{Second-Order: Soundness (\cref{thm:f:soundness})}
  \label{a:sec:f:soundness}
  
The soundness theorem is proved in the main body (\cref{thm:f:soundness}).
In this section we prove all the required auxiliary lemmas: 
\begin{itemize}
\item
  \textsc{Correctness}
  (\cref{a:lem:f:correctness})
  corresponding to \cref{lem:f:correctness} in the main body.
\item
  \textsc{Adequacy for \logicalTerms}
  (\cref{a:lem:f:adequacy_for_propositions}),
  corresponding to \cref{lem:f:adequacy_for_propositions} in the main body.
\item
  \textsc{Irrelevance}
  (\cref{a:lem:f:irrelevance})
  corresponding to \cref{lem:f:irrelevance} in the main body.
\item
  \textsc{Substitution lemma}
  (\cref{a:lem:f:substitution})
  corresponding to \cref{lem:f:substitution} in the main body.
\end{itemize}

\begin{lemma}[Correctness]
\label{a:lem:f:correctness}
If $\ttm$ has no free \logicalVariables then
$\verif{\ttm}{\gen{\ttm}} \tofs \iunit$.
\end{lemma}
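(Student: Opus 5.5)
We argue by induction on the size $\size{\ttm}$ of the closed \logicalTerm $\ttm$, where the size counts connectives and atoms. The point of using size rather than structure is that we must pass through $\ttm\sub{\tvar}{\teig}$, which is not a subterm of $\allf{\tvar}{\ttm}$ but has the same size as $\ttm$, hence strictly smaller than $\allf{\tvar}{\ttm}$. It is also still closed, since $\teig$ is an \logicalEigenvariable rather than a \logicalVariable. Because $\ttm$ has no free \logicalVariables, it is either an \logicalEigenvariable, an implication, or a universal quantification.

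Throughout, we use one fact: weak head reduction is closed under weak head contexts. Thus if $\utmtwo \tofs \utmtwo'$ and $\wctx$ is a weak head context, then $\ctxof{\wctx}{\utmtwo} \tofs \ctxof{\wctx}{\utmtwo'}$.

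\textbf{Case $\ttm = \teig$.} We have $\verif{\teig}{\gen{\teig}} \tof \iunit$ by rule~4 at the root.

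\textbf{Case $\ttm = \ttmtwo\imp\ttmthree$.} We mimic \cref{lem:correctness}:
\[
\verif{\ttmtwo\imp\ttmthree}{\gen{\ttmtwo\imp\ttmthree}}
\tof \verif{\ttmthree}{(\gen{\ttmtwo\imp\ttmthree}\,\gen{\ttmtwo})}
\tof \verif{\ttmthree}{((\plamf{\pvar}{\eunit{\verif{\ttmtwo}{\pvar}}{\gen{\ttmthree}}})\,\gen{\ttmtwo})}
\tof \verif{\ttmthree}{(\eunit{\verif{\ttmtwo}{\gen{\ttmtwo}}}{\gen{\ttmthree}})}.
\]
Each step contracts a redex in a weak head position: the root, then inside $\verif{\ttmthree}{(\ctxhole\,\gen{\ttmtwo})}$, then inside $\verif{\ttmthree}{\ctxhole}$. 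Since $\ttmtwo$ and $\ttmthree$ are closed and smaller, the \ih{} gives $\verif{\ttmtwo}{\gen{\ttmtwo}} \tofs \iunit$. Because $\verif{\ttmthree}{(\eunit{\ctxhole}{\gen{\ttmthree}})}$ is a weak head context, this yields a weak head reduction to $\verif{\ttmthree}{(\eunit{\iunit}{\gen{\ttmthree}})}$. Rule~3 then gives $\verif{\ttmthree}{\gen{\ttmthree}}$, and the \ih{} on $\ttmthree$ gives $\iunit$.

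\textbf{Case $\ttm = \allf{\tvar}{\ttmtwo}$.} Pick $\teig$ fresh. Then
\[
\verif{\allf{\tvar}{\ttmtwo}}{\gen{\allf{\tvar}{\ttmtwo}}}
\tof \freshf{\teig}{\verif{\ttmtwo\sub{\tvar}{\teig}}{(\gen{\allf{\tvar}{\ttmtwo}}\,\teig)}}
\tof \freshf{\teig}{\verif{\ttmtwo\sub{\tvar}{\teig}}{((\pallfi{\tvar}{\gen{\ttmtwo}})\,\teig)}}
\tof \freshf{\teig}{\verif{\ttmtwo\sub{\tvar}{\teig}}{\gen{\ttmtwo\sub{\tvar}{\teig}}}}.
\]
The last two steps are rules~7 and~2, performed inside the weak head context $\freshf{\teig}{\verif{\ldots}{(\ctxhole\,\teig)}}$. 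The type $\ttmtwo\sub{\tvar}{\teig}$ is closed and has size $\size{\ttmtwo} < \size{\allf{\tvar}{\ttmtwo}}$, so the \ih{} gives $\verif{\ttmtwo\sub{\tvar}{\teig}}{\gen{\ttmtwo\sub{\tvar}{\teig}}} \tofs \iunit$. Running this under the weak head context $\freshf{\teig}{\ctxhole}$ produces $\freshf{\teig}{\iunit}$, and rule~9 applies because $\teig\notin\fv{\iunit}$, giving $\iunit$.

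The only delicate point is the choice of induction measure. Structural induction fails in the $\forall$ case because of the substitution $\ttmtwo\sub{\tvar}{\teig}$, so we need an eigenvariable-for-variable substitution to preserve size, which holds since both are atoms of size~$1$. The remaining bookkeeping is checking that every contracted redex sits in a weak head context, which holds by inspecting the grammar of $\wctx$.
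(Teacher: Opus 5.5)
Your proof is correct and follows the paper's argument: induction on the size of $\ttm$, the same three cases (eigenvariable, implication, universal quantification), and the same reduction sequences, with the size measure covering $\ttmtwo\sub{\tvar}{\teig}$ in the $\forall$ case. You also check explicitly that every contracted redex lies in a weak head context, which the paper leaves implicit.
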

\begin{proof}
By induction on the size of $\ttm$:
\begin{enumerate}
\item
  \LogicalEigenvariable, $\ttm = \teig$:
  Then $\verif{\teig}{\gen{\teig}} \to \iunit$.
\item
  Implication, $\ttm = (\ttmtwo \imp \ttmthree)$:
  \[
    \begin{array}{rcll}
    &&
      \verif{\ttm\imp\ttmtwo}{\gen{\ttm\imp\ttmtwo}}
    \\
    & \tof &
      \verif{\ttmtwo}{(\gen{\ttm\imp\ttmtwo}\,\gen{\ttm})}
    \\
    & \tof &
      \verif{\ttmtwo}{((\plamf{\pvar}{\eunit{\verif{\ttm}{\pvar}}{\gen{\ttmtwo}}})\,\gen{\ttm})}
    \\
    & \tof &
      \verif{\ttmtwo}{(\eunit{\verif{\ttm}{\gen{\ttm}}}{\gen{\ttmtwo}})}
    \\
    & \tofs &
      \verif{\ttmtwo}{(\eunit{\iunit}{\gen{\ttmtwo}})}
      & \text{by \ih}
    \\
    & \tof &
      \verif{\ttmtwo}{\gen{\ttmtwo}}
    \\
    & \tofs &
      \iunit
      & \text{by \ih}
    \end{array}
  \]
\item
  Second-order universal quantifier, $\ttm = (\allf{\tvar}{\ttmtwo})$:
  \[
    \begin{array}{rcll}
    &&
      \verif{\allf{\tvar}{\ttmtwo}}{\gen{\allf{\tvar}{\ttmtwo}}}
    \\
    & \tof &
      \freshf{\teig}{\verif{\ttmtwo\sub{\tvar}{\teig}}{(\gen{\allf{\tvar}{\ttmtwo}}\,\teig)}}
    \\
    & \tof &
      \freshf{\teig}{\verif{\ttmtwo\sub{\tvar}{\teig}}{((\pallfi{\tvar}{\gen{\ttmtwo}})\,\teig)}}
    \\
    & \tof &
      \freshf{\teig}{\verif{\ttmtwo\sub{\tvar}{\teig}}{\gen{\ttmtwo\sub{\tvar}{\teig}}}}
    \\
    & \tofs &
      \freshf{\teig}{\iunit}
      & \text{by \ih}
    \\
    & \tof &
      \iunit
    \end{array}
  \]
\end{enumerate}
\end{proof}

\begin{lemma}[Adequacy for \logicalTerms]
\label{a:lem:f:adequacy_for_propositions}
$\semf{\ttm}{\asig} \in \VC{\subs{\ttm}{\asig}}$
\end{lemma}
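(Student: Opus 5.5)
We prove $\semf{\ttm}{\asig} \in \VC{\subs{\ttm}{\asig}}$ by induction on the structure of $\ttm$, simultaneously for all \candidateAssignments $\asig$, checking in each case the three defining conditions: \itemNum{1} $\gen{\subs{\ttm}{\asig}} \in \semf{\ttm}{\asig}$, \itemNum{2} every member is accepted by $\verif{\subs{\ttm}{\asig}}{}$ under $\tofs$, and \itemNum{3} closure of $\semf{\ttm}{\asig}$ under $\tof$-expansion and $\tof$-reduction.

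\textbf{Atomic cases.} If $\ttm = \tvar$, then $\semf{\tvar}{\asig} = \vcset$ with $\asig(\tvar) = (\ttmtwo,\vcset)$ and $\vcset \in \VC{\ttmtwo}$. Since $\subs{\tvar}{\asig} = \ttmtwo$, there is nothing to prove.

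If $\ttm = \teig$, then \itemNum{1} is trivial, and \itemNum{2} holds because $\utm \tofs \gen{\teig}$ gives $\verif{\teig}{\utm} \tofs \verif{\teig}{\gen{\teig}} \tof \iunit$, using that $\verif{\teig}{\wctx}$ is a weak head context. For \itemNum{3}, backward closure is immediate. Forward closure uses subcommutativity (\cref{prop:f:subcommutative}): if $\utm \tof \utm'$ and $\utm \tofs \gen{\teig}$, we tile the diagram, using that $\gen{\teig}$ is $\tow$-irreducible, to conclude $\utm' \tofs \gen{\teig}$. More precisely, a standard induction on the length of $\utm \tofs \gen{\teig}$ with subcommutativity shows that $\tow$-normal forms reached by $\tofs$ remain reachable from every $\tow$-reduct.

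\textbf{Implication.} For $\ttm = \ttmtwo\imp\ttmthree$, condition \itemNum{2} is built into the definition, and \itemNum{3} follows in the first conjunct by the same confluence-to-normal-form argument applied to $\verif{\cdot}{\utm} \tofs \iunit$. Here $\iunit$ is $\tow$-normal, and $\utm\tof\utm'$ lifts to $\verif{\ttmthree'}{\utm}\tof\verif{\ttmthree'}{\utm'}$. The second conjunct follows because $\utm\,\utmtwo \tof \utm'\,\utmtwo$ (application context), together with the \ih \itemNum{3} for $\ttmthree$.

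The main work is \itemNum{1}: $\gen{\subs{\ttm}{\asig}}$ is accepted by the verifier by \cref{lem:f:correctness}, since $\subs{\ttm}{\asig}$ is closed. We must also show that $\gen{\ttmtwo'\imp\ttmthree'}\,\utmtwo \in \semf{\ttmthree}{\asig}$ for $\utmtwo \in \semf{\ttmtwo}{\asig}$, where primes denote $\asig$-instances. We have
\[\gen{\ttmtwo'\imp\ttmthree'}\,\utmtwo \tof (\plamf{\pvar}{\eunit{\verif{\ttmtwo'}{\pvar}}{\gen{\ttmthree'}}})\,\utmtwo \tof \eunit{\verif{\ttmtwo'}{\utmtwo}}{\gen{\ttmthree'}}.\]
By \ih \itemNum{2} for $\ttmtwo$, we get $\verif{\ttmtwo'}{\utmtwo}\tofs\iunit$, so the term weak-head reduces to $\gen{\ttmthree'}$, which lies in $\semf{\ttmthree}{\asig}$ by \ih \itemNum{1}. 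Backward closure \itemNum{3} for $\ttmthree$ then concludes this case.

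\textbf{Universal quantification.} For $\ttm = \allf{\tvar}{\ttmtwo}$, conditions \itemNum{2} and \itemNum{3} are handled as in the implication case, using the context $\wctx\,\ttmthree$. For \itemNum{1}, correctness again gives the first conjunct. For any closed $\ttmthree$ and $\vcset \in \VC{\ttmthree}$,
\[\gen{\allf{\tvar}{\subs{\ttmtwo}{\asig}}}\,\ttmthree \tof (\pallfi{\tvar}{\gen{\subs{\ttmtwo}{\asig}}})\,\ttmthree \tof \gen{\subs{\ttmtwo}{\asig}\sub{\tvar}{\ttmthree}} = \gen{\subs{\ttmtwo}{\asig\extsub{\tvar}{(\ttmthree,\vcset)}}}.\]
The \ih for $\ttmtwo$ under $\asig\extsub{\tvar}{(\ttmthree,\vcset)}$ puts this generator in the candidate, and backward closure finishes.

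\textbf{Expected obstacle.} The delicate point is forward closure in \itemNum{3}. Since $\tow$ is not deterministic in $\lambdaCheckF$, we need the lemma: if $\utm \tofs N$ with $N$ $\tow$-normal and $\utm \tof \utm'$, then $\utm' \tofs N$. This is proved by induction on the length of $\utm\tofs N$ using subcommutativity. If the first step coincides with $\utm\tof\utm'$, or the diamond closes reflexively, we are done; otherwise we recurse. We must also check that $\alpha$-renaming of eigenvariables and substitution commute with $\asig$, which is routine.
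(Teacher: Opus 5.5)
Your proof is correct and follows essentially the same route as the paper. You do induction on $\ttm$ for all \candidateAssignments, check the three candidate conditions case by case, and obtain condition~(1) for $\imp$ and $\forall$ from correctness plus the reductions $\gen{\ttmtwo'\imp\ttmthree'}\,\utmtwo \tofs \gen{\ttmthree'}$ and $\gen{\subs{(\allf{\tvar}{\ttmtwo})}{\asig}}\,\ttmthree \tofs \gen{\subs{\ttmtwo}{\asig\extsub{\tvar}{(\ttmthree,\vcset)}}}$, followed by backward closure. The only difference is forward closure under $\tow$: the paper just cites confluence of $\to$, whereas you use subcommutativity of $\tow$ to show that a $\tow$-normal target ($\gen{\teig}$ or $\iunit$) remains $\tofs$-reachable from every $\tow$-reduct. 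Your justification is the more precise one, since confluence of $\to$ alone only yields $\tos$ and would need standardization to get back to $\tofs$.
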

\begin{proof}
We proceed by induction on $\ttm$.

If $\ttm$ is a \textbf{\logicalVariable} ($\ttm = \tvar$):
  Let $\asig(\tvar) = (\ttmtwo,\vcset)$, where
  we have that $\vcset \in \VC{\ttmtwo}$
  by definition of $\asig$ being a \candidateAssignment.
  Note that $\semf{\ttm}{\asig} = \semf{\tvar}{\asig} = \vcset$ and 
  $\subs{\ttm}{\asig} = \subs{\tvar}{\asig} = \ttmtwo$,
  so to check that $\semf{\ttm}{\asig} \in \VC{\subs{\ttm}{\asig}}$
  amounts to checking that $\vcset \in \VC{\ttmtwo}$, which holds by hypothesis.

If $\ttm$ is an \textbf{\logicalEigenvariable} ($\ttm = \teig$):
  Note that, by definition,
    $\semf{\teig}{\asig} \eqdef \set{\utm \ST \utm \tofs \gen{\teig}}$
  and $\subs{\teig}{\asig} = \teig$,
  because \logicalEigenvariables are not affected by substitutions.
  In order to show that $\semf{\teig}{\asig} \in \VC{\teig}$,
  we check each of the three conditions defining $\VC{\teig}$:
  \begin{enumerate}
  \item
    In order to show that $\gen{\teig} \in \semf{\teig}{\asig}$,
    note that $\gen{\teig} \tofs \gen{\teig}$ in zero reduction steps.
  \item
    Let $\utm \in \semf{\teig}{\asig}$.
    Then $\utm \tofs \gen{\teig}$,
    so $\verif{\teig}{\utm} \tofs \verif{\teig}{\gen{\teig}} \tof \iunit$.
  \item
    Let $\utm,\utm'$ be \metaterms such that $\utm \tow \utm'$.
    Then by confluence~(\cref{prop:f:confluence})
    we have that $\utm \tofs \gen{\teig}$
    if and only if $\utm' \tofs \gen{\teig}$,
    so $\utm \in \semf{\teig}{\asig}$
    if and only if $\utm' \in \semf{\teig}{\asig}$.
  \end{enumerate}

If $\ttm$ is an \textbf{implication} ($\ttm = (\ttmtwo\imp\ttmthree)$):
  Note first that $\semf{\ttmtwo}{\asig} \in \VC{\subs{\ttmtwo}{\asig}}$
  and $\semf{\ttmthree}{\asig} \in \VC{\subs{\ttmthree}{\asig}}$ by \ih,
  so in particular
  $\verif{\subs{\ttmtwo}{\asig}}{\gen{\subs{\ttmtwo}{\asig}}} \tofs \iunit$
  and
  $\verif{\subs{\ttmthree}{\asig}}{\gen{\subs{\ttmthree}{\asig}}} \tofs \iunit$
  by \cref{lem:f:correctness}.
  In order to show that $\semf{\ttmtwo\imp\ttmthree}{\asig} \in \VC{\subs{(\ttmtwo\imp\ttmthree)}{\asig}}$,
  we check each of the three conditions defining $\VC{\subs{(\ttmtwo\imp\ttmthree)}{\asig}}$:
  \begin{enumerate}
  \item
    In order to show that $\gen{\subs{(\ttmtwo\imp\ttmthree)}{\asig}} \in \semf{\ttmtwo\imp\ttmthree}{\asig}$,
    it suffices to show the two conditions defining $\semf{\ttmtwo\imp\ttmthree}{\asig}$.
    By correctness~\cref{lem:f:correctness}, we have that
      $\verif{\subs{(\ttmtwo\imp\ttmthree)}{\asig}}{\gen{\subs{(\ttmtwo\imp\ttmthree)}{\asig}}} \tofs
       \iunit$,
    so it suffices to show that for every $\utm \in \semf{\ttmtwo}{\asig}$ we have that
    $\gen{\subs{(\ttmtwo\imp\ttmthree)}{\asig}}\,\utm \in \semf{\ttmthree}{\asig}$.
    Indeed, let $\utm \in \semf{\ttmtwo}{\asig}$,
    and let us show that
    $\gen{\subs{(\ttmtwo\imp\ttmthree)}{\asig}}\,\utm \in \semf{\ttmthree}{\asig}$.
    Since $\semf{\ttmthree}{\asig} \in \VC{\subs{\ttmthree}{\asig}}$ by \ih,
    we know that $\gen{\subs{\ttmthree}{\asig}} \in \VC{\subs{\ttmthree}{\asig}}$
    and that $\VC{\subs{\ttmthree}{\asig}}$ is closed by $\tof$-convertibility,
    hence it suffices to show that
    $\gen{\subs{(\ttmtwo\imp\ttmthree)}{\asig}}\,\utm \tofs \gen{\subs{\ttmthree}{\asig}}$.
    Moreover, since $\semf{\ttmtwo}{\asig} \in \VC{\subs{\ttmtwo}{\asig}}$ by \ih,
    we know in particular that $\verif{\subs{\ttmtwo}{\asig}}{\utm} \tofs \iunit$.
    Then:
    \[
      \begin{array}{llll}
      &&
        \gen{\subs{(\ttmtwo\imp\ttmthree)}{\asig}}\,\utm
      \\
      & = &
        \gen{\subs{\ttmtwo}{\asig}\imp\subs{\ttmthree}{\asig}}\,
        \utm
      \\
      & \tof &
          (\plamf{\pvar}{
            \eunit{
              \verif{\subs{\ttmtwo}{\asig}}{\pvar}\,
            }{
              \gen{\subs{\ttmthree}{\asig}}
            }
          })\,
          \utm
      \\
      & \tof &
          \eunit{
            \verif{\subs{\ttmtwo}{\asig}}{\utm}
          }{
            \gen{\subs{\ttmthree}{\asig}}
          }
      \\
      & \tofs &
          \eunit{
            \iunit
          }{
            \gen{\subs{\ttmthree}{\asig}}
          }
        &
        \text{since $\verif{\subs{\ttmtwo}{\asig}}{\utm} \tofs \iunit$}
      \\
      & \tof &
        \gen{\subs{\ttmthree}{\asig}}
      \end{array}
    \]
  \item
    Let $\utm \in \semf{\ttmtwo\imp\ttmthree}{\asig}$
    and observe that $\verif{\subs{(\ttmtwo\imp\ttmthree)}{\asig}}{\utm} \tofs \iunit$
    holds by definition of $\semf{\ttmtwo\imp\ttmthree}{\asig}$.
  \item
    Let $\utm$, $\utm'$ be \metaterms such that $\utm \tof \utm'$.
    We argue that $\utm \in \semf{\ttmtwo\imp\ttmthree}{\asig}$
    if and only if $\utm' \in \semf{\ttmtwo\imp\ttmthree}{\asig}$.
    \begin{itemize}
    \item[$(\Rightarrow)$]
      Suppose that $\utm \in \semf{\ttmtwo\imp\ttmthree}{\asig}$.
      By definition, this means that two conditions hold,
      first $\verif{\subs{(\ttmtwo\imp\ttmthree)}{\asig}}{\utm} \tofs \iunit$
      and, second, $\utm\,\utmtwo \in \semf{\ttmthree}{\asig}$
      holds for every $\utmtwo \in \semf{\ttmtwo}{\asig}$.
      From the first condition and confluence~(\cref{prop:f:confluence})
      it is immediate to conclude that
      $\verif{\subs{(\ttmtwo\imp\ttmthree)}{\asig}}{\utm'} \tofs \iunit$.
      Moreover, for each $\utmtwo \in \semf{\ttmtwo}{\asig}$,
      we have that $\utm\,\utmtwo \tof \utm'\,\utmtwo$,
      so by the second condition and the fact that
      $\semf{\ttmthree}{\asig} \in \VC{\subs{\ttmthree}{\asig}}$ holds by \ih,
      we obtain that $\utm'\,\utmtwo \in \semf{\ttmthree}{\asig}$.
      Hence $\utm' \in \semf{\ttmtwo\imp\ttmthree}{\asig}$.
    \item[$(\Leftarrow)$]
      Similar to the proof of the $(\Rightarrow)$ direction.
    \end{itemize}
  \end{enumerate}

If $\ttm$ is a \textbf{universal quantification} ($\ttm = \allf{\tvar}{\ttmtwo}$):
  Given an arbitrary \logicalTerm $\ttmthree$ and an arbitrary $\vcset \in \VC{\ttmthree}$,
  we may consider the \candidateAssignment $\asigext \eqdef \asig\extsub{\tvar}{(\ttmthree,\vcset)}$.
  Note by \ih that in this case we have that
  $\semf{\ttmtwo}{\asigext} \in \VC{\subs{\ttmtwo}{\asigext}}$.

  Now, in order to show that
  $\semf{\allf{\tvar}{\ttmtwo}}{\asig} \in \VC{\subs{(\allf{\tvar}{\ttmtwo})}{\asig}}$,
  we check each of the three conditions defining $\VC{\subs{(\allf{\tvar}{\ttmtwo})}{\asig}}$:
  \begin{enumerate}
  \item
    To show that
    $\gen{\subs{(\allf{\tvar}{\ttmtwo})}{\asig}} \in \semf{\allf{\tvar}{\ttmtwo}}{\asig}$,
    it suffices to show the two conditions defining $\semf{\allf{\tvar}{\ttmtwo}}{\asig}$.
    By correctness~(\cref{lem:f:correctness}), we have that
      $\verif{\subs{(\allf{\tvar}{\ttmtwo})}{\asig}}{\gen{\subs{(\allf{\tvar}{\ttmtwo})}{\asig}}}
       \tofs \iunit$,
    so it suffices to show that
    for every \logicalTerm $\ttmthree$ and every $\vcset \in \VC{\ttmthree}$
    we have that
    $\gen{\subs{(\allf{\tvar}{\ttmtwo})}{\asig}}\,\ttmthree
     \in \semf{\ttmtwo}{\asig\extsub{\tvar}{(\ttmthree,\vcset)}}$.
    Indeed,
    let $\ttmthree$ be any \logicalTerm and let $\vcset \in \VC{\ttmthree}$,
    and let us write $\asigext \eqdef \asig\extsub{\tvar}{(\ttmthree,\vcset)}$.
    Our goal is to show that
    $
      \gen{\subs{(\allf{\tvar}{\ttmtwo})}{\asig}}\,\ttmthree
      \in \semf{\ttmtwo}{\asigext}
    $.
    Since by \ih we have that
    $\semf{\ttmtwo}{\asigext} \in \VC{\subs{\ttmtwo}{\asigext}}$,
    we know in particular that
    $\gen{\subs{\ttmtwo}{\asigext}} \in \semf{\ttmtwo}{\asigext}$
    and that $\semf{\ttmtwo}{\asigext}$ is closed by $\tof$-convertibility,
    hence it suffices to show that
    $\gen{\subs{(\allf{\tvar}{\ttmtwo})}{\asig}}\,\ttmthree
     \tofs \gen{\subs{\ttmtwo}{\asigext}}$.
    Let us write $\tsubst \eqdef \tilde{\asig}\extsub{\tvar}{\tvar}$
    be the \logicalSubstitution corresponding to the first projection
    of $\asig$ but fixing $\tvar$,
    in such a way that
    $\subs{(\allf{\tvar}{\ttmtwo})}{\asig}
     = \allf{\tvar}{\subs{\ttmtwo}{\tsubst}}$.
    Then we have:
    \[
      \begin{array}{rcll}
      &&
        \gen{\subs{(\allf{\tvar}{\ttmtwo})}{\asig}}\,\ttmthree
      \\
      & = &
        \gen{\allf{\tvar}{\subs{\ttmtwo}{\tsubst}}}\,\ttmthree
      \\
      & \tof &
        (\pallfi{\tvar}{\gen{\subs{\ttmtwo}{\tsubst}}})\,\ttmthree
      \\
      & \tof &
        \gen{\subs{\ttmtwo}{\tsubst}\sub{\tvar}{\ttmthree}}
      \\
      & = &
        \gen{\subs{\ttmtwo}{\asigext}}
      \end{array}
    \]
  \item
    Let $\utm \in \semf{\allf{\tvar}{\ttmtwo}}{\asig}$
    and observe that
    $\verif{\subs{(\allf{\tvar}{\ttmtwo})}{\asig}}{\utm} \tofs \iunit$
    holds by definition of $\semf{\allf{\tvar}{\ttmtwo}}{\asig}$.
  \item
    Let $\utm$, $\utm'$ be \metaterms such that $\utm \tof \utmtwo$.
    We argue that $\utm \in \semf{\allf{\tvar}{\ttmtwo}}{\asig}$
    if and only if $\utm' \in \semf{\allf{\tvar}{\ttmtwo}}{\asig}$.
    \begin{itemize}
    \item[$(\Rightarrow)$]
      Suppose that $\utm \in \semf{\allf{\tvar}{\ttmtwo}}{\asig}$.
      By definition,
      this means that two conditions hold,
      first $\verif{\subs{(\allf{\tvar}{\ttmtwo})}{\asig}}{\utm} \tofs \iunit$
      and, second, $\utm\,\ttmthree \in \semf{\ttmtwo}{\asig\extsub{\tvar}{(\ttmthree,\vcset)}}$
      holds for every \logicalTerm $\ttmthree$ and every $\vcset \in \VC{\ttmthree}$.
      From the first condition and confluence~(\cref{prop:f:confluence})
      it is immediate to conclude that
      $\verif{\subs{(\allf{\tvar}{\ttmtwo})}{\asig}}{\utm'} \tofs \iunit$.
      Moreover, for each \logicalTerm $\ttmthree$ and each $\vcset \in \VC{\ttmthree}$,
      and writing $\asigext \eqdef \asig\extsub{\tvar}{(\ttmthree,\vcset)}$,
      we have that $\utm\,\ttmthree \tof \utm'\,\ttmthree$,
      so by the second condition and the fact that
      $\semf{\ttmtwo}{\asigext} \in \VC{\subs{\ttmtwo}{\asigext}}$ holds by \ih
      we obtain that $\utm'\,\ttmthree \in \semf{\ttmtwo}{\asigext}$.
      Hence $\utm' \in \semf{\allf{\tvar}{\ttmtwo}}{\asig}$.
    \item[$(\Leftarrow)$]
      Similar to the proof of the $(\Rightarrow)$ direction.
    \end{itemize}
  \end{enumerate}
\end{proof}

\begin{lemma}[Irrelevance]
\label{a:lem:f:irrelevance}
If $\asig_1$ and $\asig_2$ agree on the free variables of $\ttm$,
then $\semf{\ttm}{\asig_1} = \semf{\ttm}{\asig_2}$.
\end{lemma}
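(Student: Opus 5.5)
We prove the statement by structural induction on the \logicalTerm $\ttm$, simultaneously for all pairs of \candidateAssignments $\asig_1,\asig_2$ that agree on $\fv{\ttm}$. The hypothesis "agree" means that $\asig_1(\tvar) = \asig_2(\tvar)$ as pairs $(\ttmtwo,\vcset)$ for every $\tvar \in \fv{\ttm}$. A preliminary observation is used in every inductive case: since the first projections $\tilde{\asig}_1,\tilde{\asig}_2$ agree on $\fv{\ttm}$ and substitution only touches free variables, we have $\subs{\ttm}{\asig_1} = \subs{\ttm}{\asig_2}$. The same holds for every subterm of $\ttm$ whose free variables are contained in $\fv{\ttm}$. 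This standard syntactic fact about substitution is assumed without proof, as the paper does elsewhere.

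The base cases are immediate. If $\ttm = \tvar$, then $\tvar \in \fv{\ttm}$, so $\asig_1(\tvar) = \asig_2(\tvar) = (\ttmtwo,\vcset)$ and both interpretations equal $\vcset$. If $\ttm = \teig$, then $\semf{\teig}{\asig}$ is the set of \metaterms $\utm$ with $\utm \tofs \gen{\teig}$, which does not depend on $\asig$ at all.

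For an implication $\ttm = \ttmtwo\imp\ttmthree$, the assignments $\asig_1,\asig_2$ also agree on $\fv{\ttmtwo}$ and on $\fv{\ttmthree}$. Hence the induction hypothesis gives $\semf{\ttmtwo}{\asig_1} = \semf{\ttmtwo}{\asig_2}$ and $\semf{\ttmthree}{\asig_1} = \semf{\ttmthree}{\asig_2}$. The verifier condition $\verif{\subs{(\ttmtwo\imp\ttmthree)}{\asig_i}}\utm \tofs \iunit$ is the same for $i=1,2$ by the preliminary observation. The second clause quantifies over the same set $\semf{\ttmtwo}{\asig_i}$ and requires membership in the same set $\semf{\ttmthree}{\asig_i}$. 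The two defining conditions therefore coincide, and so do the sets.

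The universal case $\ttm = \allf{\tvar}{\ttmtwo}$ is the only one needing care, and it is why the induction must quantify over all pairs of assignments rather than fixing them. The verifier clause again coincides because $\subs{(\allf{\tvar}{\ttmtwo})}{\asig_1} = \subs{(\allf{\tvar}{\ttmtwo})}{\asig_2}$. For the second clause, fix a closed $\ttmthree$ and $\vcset \in \VC{\ttmthree}$. The extended assignments $\asig_1\extsub{\tvar}{(\ttmthree,\vcset)}$ and $\asig_2\extsub{\tvar}{(\ttmthree,\vcset)}$ agree on $\fv{\ttmtwo} \subseteq \fv{\ttm} \cup \set{\tvar}$: on $\tvar$ by construction, and elsewhere by hypothesis. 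The induction hypothesis applied to $\ttmtwo$, which is smaller than $\ttm$, gives equality of $\semf{\ttmtwo}{\asig_1\extsub{\tvar}{(\ttmthree,\vcset)}}$ and $\semf{\ttmtwo}{\asig_2\extsub{\tvar}{(\ttmthree,\vcset)}}$. Hence the condition $\utm\,\ttmthree \in \semf{\ttmtwo}{\ldots}$ is the same for both assignments, for all $\ttmthree,\vcset$. The only mild obstacle is this bookkeeping of the bound variable: we take $\tvar$ bound, with $\alpha$-renaming as needed, so that the substitution $\subs{(\allf{\tvar}{\ttmtwo})}{\asig_i}$ is computed with $\tvar$ fixed, as in the proof of \cref{a:lem:f:adequacy_for_propositions}. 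Otherwise the case is routine.
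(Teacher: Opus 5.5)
Your proof is correct and takes the same approach as the paper. The paper's proof is just ``straightforward by induction on $\ttm$''. Your write-up supplies the details, and you rightly state the induction for all pairs of assignments, so that the hypothesis applies to the extended assignments $\asig_i\extsub{\tvar}{(\ttmthree,\vcset)}$ in the universal case.
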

\begin{proof}
Straightforward by induction on $\ttm$.
\end{proof}

\begin{lemma}[Substitution lemma]
\label{a:lem:f:substitution}
$\semf{\ttm\sub{\tvar}{\ttmtwo}}{\asig}
 = \semf{\ttm}{\asig\extsub{\tvar}{(\subs{\ttmtwo}{\asig},\semf{\ttmtwo}{\asig})}}$
\end{lemma}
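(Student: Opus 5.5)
We prove the equation by structural induction on $\ttm$, for all \candidateAssignments $\asig$ simultaneously. Throughout, write $\asigext \eqdef \asig\extsub{\tvar}{(\subs{\ttmtwo}{\asig},\semf{\ttmtwo}{\asig})}$. First we check that $\asigext$ is a legitimate \candidateAssignment, meaning that $\semf{\ttmtwo}{\asig} \in \VC{\subs{\ttmtwo}{\asig}}$. This is exactly \cref{lem:f:adequacy_for_propositions}. We also need the syntactic fact about first projections:
\[
  \subs{(\ttm\sub{\tvar}{\ttmtwo})}{\asig} = \subs{\ttm}{\asigext} .
\]
This is the standard composition property of substitutions, and it holds because the types assigned by $\asig$ are closed, so no capture can occur.

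The atomic cases are direct. If $\ttm = \tvar$, the left-hand side is $\semf{\ttmtwo}{\asig}$, and the right-hand side is the second component of $\asigext(\tvar)$, which is the same set. If $\ttm = \tvartwo \neq \tvar$, both sides equal the second component of $\asig(\tvartwo)$. If $\ttm = \teig$, both sides are $\set{\utm \ST \utm \tofs \gen{\teig}}$, since this set does not depend on the assignment.

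For an implication $\ttm = \ttmthree\imp\ttmfour$, we unfold \cref{def:f:interpreation_of_types} on both sides. The verifier conditions agree by the projection identity, applied to $\ttmthree\imp\ttmfour$. The functional condition quantifies over $\utmtwo\in\semf{\ttmthree\sub{\tvar}{\ttmtwo}}{\asig}$ and requires $\utm\,\utmtwo \in \semf{\ttmfour\sub{\tvar}{\ttmtwo}}{\asig}$. The induction hypothesis rewrites both sets as the corresponding sets under $\asigext$, so the two definitions coincide.

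The main care is needed in the quantifier case $\ttm = \allf{\tvartwo}{\ttmthree}$, where by $\alpha$-conversion we take $\tvartwo \neq \tvar$ and $\tvartwo \notin \fv{\ttmtwo}$. The verifier conditions again agree via the projection identity. For the second condition we fix a closed $\ttmfive$ and $\vcset \in \VC{\ttmfive}$, and we must compare $\semf{\ttmthree\sub{\tvar}{\ttmtwo}}{\asig\extsub{\tvartwo}{(\ttmfive,\vcset)}}$ with $\semf{\ttmthree}{\asigext\extsub{\tvartwo}{(\ttmfive,\vcset)}}$. We apply the induction hypothesis to $\ttmthree$ with the assignment $\asig' \eqdef \asig\extsub{\tvartwo}{(\ttmfive,\vcset)}$. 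It gives
\[
  \semf{\ttmthree}{\asig'\extsub{\tvar}{(\subs{\ttmtwo}{\asig'},\semf{\ttmtwo}{\asig'})}} .
\]
Since $\tvartwo\notin\fv{\ttmtwo}$, \cref{lem:f:irrelevance} yields $\semf{\ttmtwo}{\asig'} = \semf{\ttmtwo}{\asig}$, and similarly $\subs{\ttmtwo}{\asig'} = \subs{\ttmtwo}{\asig}$. Because $\tvar\neq\tvartwo$, the two updates commute, so this assignment is $\asigext\extsub{\tvartwo}{(\ttmfive,\vcset)}$, as required. The only real obstacle is this bookkeeping of freshness and commuting updates. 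Note also that the induction must be stated for arbitrary $\asig$, since here it is used at the modified assignment $\asig'$.
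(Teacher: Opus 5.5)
Your proof is correct and follows essentially the same route as the paper. Both argue by induction on the type. The atomic cases are immediate. The verifier conditions are matched using the identity between substituting and then projecting, versus projecting under the updated assignment. The quantifier case applies the induction hypothesis at the assignment updated with $(\ttmfive,\vcset)$, then uses irrelevance (the bound variable is fresh for $\ttmtwo$) and commutes the two updates.

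Your explicit check that the updated assignment is a legitimate candidate assignment, via adequacy for types, is a point the paper leaves implicit. It introduces no circularity, since adequacy is proved without this lemma.
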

\begin{proof}
By induction on $\ttm$.

If $\ttm$ is the \textbf{same \logicalVariable} as $\tvar$, \ie $\ttm = \tvar$:
  Then 
  $\semf{\tvar\sub{\tvar}{\ttmtwo}}{\asig}
   = \semf{\ttmtwo}{\asig}
   = \semf{\tvar}{\asig\extsub{\tvar}{(\subs{\ttmtwo}{\asig},\semf{\ttmtwo}{\asig})}}$.

If $\ttm$ is a \textbf{different \logicalVariable} than $\tvar$, \ie $\ttm = \tvartwo \neq \tvar$:
  Then 
  $\semf{\tvartwo\sub{\tvar}{\ttmtwo}}{\asig}
   = \semf{\tvartwo}{\asig}
   = \semf{\tvartwo}{\asig\extsub{\tvar}{(\subs{\ttmtwo}{\asig},\semf{\ttmtwo}{\asig})}}$.
\item
  \LogicalEigenvariable, $\ttm = \teig$.
  Then 
  $\semf{\teig\sub{\tvar}{\ttmtwo}}{\asig}
   = \semf{\teig}{\asig}
   = \set{\utm \ST \utm \tofs \gen{\teig}}
   = \semf{\teig}{\asig\extsub{\tvar}{(\subs{\ttmtwo}{\asig},\semf{\ttmtwo}{\asig})}}$.

If $\ttm$ is an \textbf{implication}, \ie $\ttm = (\ttm_1\imp\ttm_2)$:
  Our goal is to show that
  $\semf{(\ttm_1\imp\ttm_2)\sub{\tvar}{\ttmtwo}}{\asig}
   = \semf{\ttm_1\imp\ttm_2}{\asig\extsub{\tvar}{(\subs{\ttmtwo}{\asig},\semf{\ttmtwo}{\asig})}}$
  are equal as sets. We show the two inclusions:
  \begin{itemize}
  \item[$(\subseteq)$]
    Let $\utm \in \semf{(\ttm_1\imp\ttm_2)\sub{\tvar}{\ttmtwo}}{\asig}$,
    that is $\utm \in \semf{\ttm_1\sub{\tvar}{\ttmtwo}\imp\ttm_2\sub{\tvar}{\ttmtwo}}{\asig}$.
    This means that two conditions hold,
    first
    $\verif{\subs{(\ttm_1\sub{\tvar}{\ttmtwo}\imp\ttm_2\sub{\tvar}{\ttmtwo})}{\asig}}{\utm} \tofs \iunit$
    and, second,
    $\utm\,\utmtwo \in \semf{\ttm_2\sub{\tvar}{\ttmtwo}}{\asig}$
    holds for every $\utmtwo \in \semf{\ttm_1\sub{\tvar}{\ttmtwo}}{\asig}$.
    Note that
    $\subs{(\ttm_1\sub{\tvar}{\ttmtwo}\imp\ttm_2\sub{\tvar}{\ttmtwo})}{\asig}
    = \subs{(\ttm_1\imp\ttm_2)\sub{\tvar}{\ttmtwo}}{\asig}
    = \subs{(\ttm_1\imp\ttm_2)}{\asig\extsub{\tvar}{(\subs{\ttmtwo}{\asig},\semf{\ttmtwo}{\asig})}}$,
    so the first condition implies that
    $\verif{\subs{(\ttm_1\imp\ttm_2)}{\asig\extsub{\tvar}{(\subs{\ttmtwo}{\asig},\semf{\ttmtwo}{\asig})}}}{\utm} \tofs \iunit$.
    Moreover, by resorting to the \ih, the second condition implies that
    $\utm\,\utmtwo \in \semf{\ttm_2}{\asig\extsub{\tvar}{(\subs{\ttmtwo}{\asig},\semf{\ttmtwo}{\asig})}}$
    holds for every $\utmtwo \in \semf{\ttm_1}{\asig\extsub{\tvar}{(\subs{\ttmtwo}{\asig},\semf{\ttmtwo}{\asig})}}$.
    Hence
    $\utm \in \semf{(\ttm_1\imp\ttm_2)}{\asig\extsub{\tvar}{(\subs{\ttmtwo}{\asig},\semf{\ttmtwo}{\asig})}}$.
  \item[$(\supseteq)$]
    Symmetric to the proof of the $(\subseteq)$ case.
  \end{itemize}

If $\ttm$ is a \textbf{universal quantification}, \ie $\ttm = (\allf{\tvartwo}{\ttm'})$:
  Assume, using $\alpha$-conversion if necessary, that $\tvartwo \notin \set{\tvar} \cup \fv{\ttmtwo}$.
  Our goal is to show that
  $\semf{(\allf{\tvartwo}{\ttm'})\sub{\tvar}{\ttmtwo}}{\asig}
   = \semf{\allf{\tvartwo}{\ttm'}}{\asig\extsub{\tvar}{(\subs{\ttmtwo}{\asig},\semf{\ttmtwo}{\asig})}}$
  are equal as sets.
  We show the two inclusions:
  \begin{itemize}
  \item[$(\subseteq)$]
    Let $\utm \in \semf{(\allf{\tvartwo}{\ttm'})\sub{\tvar}{\ttmtwo}}{\asig}$,
    that is $\utm \in \semf{\allf{\tvartwo}{\ttm'\sub{\tvar}{\ttmtwo}}}{\asig}$.
    This means that two conditions hold,
    first $\verif{\subs{(\allf{\tvartwo}{\ttm'\sub{\tvar}{\ttmtwo}})}{\asig}}{\utm} \tofs \iunit$
    and, second,
    $\utm\,\ttmthree \in \semf{\ttm'\sub{\tvar}{\ttmtwo}}{\asig\extsub{\tvartwo}{(\ttmthree,\vcset)}}$
    holds for every \logicalTerm $\ttmthree$ and every $\vcset \in \VC{\ttmthree}$.
    Note that
    $\subs{(\allf{\tvartwo}{\ttm'\sub{\tvar}{\ttmtwo}})}{\asig}
     = \subs{(\allf{\tvartwo}{\ttm'})\sub{\tvar}{\ttmtwo}}{\asig}
     = \subs{(\allf{\tvartwo}{\ttm'})}{\asig\extsub{\tvar}{(\subs{\ttmtwo}{\asig},\semf{\ttmtwo}{\asig})}}$,
    so the first condition implies that
    $\verif{\subs{(\allf{\tvartwo}{\ttm'})}{\asig\extsub{\tvar}{(\subs{\ttmtwo}{\asig},\semf{\ttmtwo}{\asig})}}}{\utm}
     \tofs \iunit$.
    Consider an arbitrary \logicalTerm $\ttmthree$,
    an arbitrary $\vcset \in \VC{\ttmthree}$,
    and let us write $\asigext \eqdef \asig\extsub{\tvartwo}{(\ttmthree,\vcset)}$.
    By resorting to the \ih, the second condition
    implies that
    $\utm\,\ttmthree \in \semf{\ttm'}{\asigext\extsub{\tvar}{(\subs{\ttmtwo}{\asigext},\semf{\ttmtwo}{\asigext})}}$.
    Moreover, since $\tvartwo \notin \fv{\ttmtwo}$
    we have that
    $\subs{\ttmtwo}{\asigext} = \subs{\ttmtwo}{\asig}$
    and
    $\semf{\ttmtwo}{\asigext} = \semf{\ttmtwo}{\asig}$ by irrelevance~(\cref{lem:f:irrelevance}).
    From this we obtain that
    $\utm\,\ttmthree \in \semf{\ttm'}{\asig\extsub{\tvartwo}{(\ttmthree,\vcset)}\extsub{\tvar}{(\subs{\ttmtwo}{\asig},\semf{\ttmtwo}{\asig})}}$
    holds for every \logicalTerm $\ttmthree$ and every $\vcset \in \VC{\ttmthree}$.
    Hence
    $\utm \in \semf{(\allf{\tvartwo}{\ttm'})}{\asig\extsub{\tvar}{(\subs{\ttmtwo}{\asig},\semf{\ttmtwo}{\asig})}}$.
  \item[$(\supseteq)$]
    Symmetric to the proof of the $(\subseteq)$ case.
  \end{itemize}
\end{proof}

\begin{lemma}[Adequacy for terms]
\label{a:lem:f:adequacy_for_terms}
Let $\judgf{\penv}{\ptm}{\ttm}$ and $\compatf{\penv}{\psubst}$.
Then $\subs{\ptm}{\asig\psubst} \in \semf{\ttm}{\asig}$.
\end{lemma}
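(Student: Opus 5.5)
We argue by induction on the derivation of $\judgf{\penv}{\ptm}{\ttm}$, as in the main body. Throughout, $\subs{\ptm}{\asig\psubst}$ means: replace free \logicalVariables by their first projection under $\asig$ and free \proofVariables by $\psubst$. We assume by $\alpha$-conversion that bound variables are safe for both substitutions. Two facts are used repeatedly. By \cref{lem:f:adequacy_for_propositions}, every $\semf{\ttmtwo}{\asig'}$ is a $\subs{\ttmtwo}{\asig'}$-\verificationCandidate, hence closed under $\tof$-expansion and $\tof$-reduction. Moreover, it contains $\gen{\subs{\ttmtwo}{\asig'}}$ and all its elements pass the verifier.

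\textbf{Variable.} $\subs{\pvar}{\asig\psubst} = \subs{\pvar}{\psubst} \in \semf{\ttm}{\asig}$ directly by $\compatf{\penv}{\psubst}$.

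\textbf{Application.} If $\ptm = \ptmtwo\,\ptmthree$ with $\ptmtwo:\ttmtwo\imp\ttm$ and $\ptmthree:\ttmtwo$, the \ih gives $\subs{\ptmtwo}{\asig\psubst}\in\semf{\ttmtwo\imp\ttm}{\asig}$ and $\subs{\ptmthree}{\asig\psubst}\in\semf{\ttmtwo}{\asig}$. The second clause of the definition of $\semf{\ttmtwo\imp\ttm}{\asig}$ then yields the result immediately.

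\textbf{Type application.} If $\ptm=\ptmtwo\,\ttmtwo$ with $\ptmtwo:\allf{\tvar}{\ttm'}$, apply the \ih and instantiate the second clause of $\semf{\allf{\tvar}{\ttm'}}{\asig}$ with the closed type $\subs{\ttmtwo}{\asig}$ and the candidate $\semf{\ttmtwo}{\asig}$. This candidate lies in $\VC{\subs{\ttmtwo}{\asig}}$ by \cref{lem:f:adequacy_for_propositions}. We obtain $\subs{\ptmtwo}{\asig\psubst}\,\subs{\ttmtwo}{\asig}\in\semf{\ttm'}{\asig\extsub{\tvar}{(\subs{\ttmtwo}{\asig},\semf{\ttmtwo}{\asig})}}$, which equals $\semf{\ttm'\sub{\tvar}{\ttmtwo}}{\asig}$ by \cref{lem:f:substitution}.

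\textbf{Abstraction.} Let $\ptm=\plamf{\pvar}{\ptmtwo}$ with $\ttm=\ttmtwo\imp\ttmthree$, and write $\utm:=\plamf{\pvar}{\subs{\ptmtwo}{\asig\psubst}}$. There are two clauses to check.
For the second clause, given $\utmtwo\in\semf{\ttmtwo}{\asig}$, the extended substitution $\psubst\extsub{\pvar}{\utmtwo}$ is compatible with $\penv,\pvar:\ttmtwo$. The \ih gives $\subs{\ptmtwo}{\asig\psubst}\sub{\pvar}{\utmtwo}\in\semf{\ttmthree}{\asig}$. Since $\utm\,\utmtwo\tof$ this term, closure under $\tof$-expansion gives $\utm\,\utmtwo\in\semf{\ttmthree}{\asig}$.
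For the first clause, we need $\verif{\subs{(\ttmtwo\imp\ttmthree)}{\asig}}{\utm}\tofs\iunit$. The verifier steps to $\verif{\subs{\ttmthree}{\asig}}{(\utm\,\gen{\subs{\ttmtwo}{\asig}})}$. The generator lies in $\semf{\ttmtwo}{\asig}$, so by the previous point $\utm\,\gen{\subs{\ttmtwo}{\asig}}\in\semf{\ttmthree}{\asig}$, and condition (2) of candidates concludes.

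\textbf{Type abstraction.} Let $\ptm=\pallfi{\tvar}{\ptmtwo}$ with $\tvar\notin\fv{\penv}$, and write $\utm:=\pallfi{\tvar}{\subs{\ptmtwo}{\asig'\psubst}}$, where $\asig'$ fixes $\tvar$.
For the second clause, take a closed $\ttmthree$ and $\vcset\in\VC{\ttmthree}$, and set $\asigext:=\asig\extsub{\tvar}{(\ttmthree,\vcset)}$. By \cref{lem:f:irrelevance}, since $\tvar\notin\fv{\penv}$, we still have $\compatf[\asigext]{\penv}{\psubst}$. The \ih gives $\subs{\ptmtwo}{\asigext\psubst}\in\semf{\ttm'}{\asigext}$, and $\utm\,\ttmthree\tof\subs{\ptmtwo}{\asigext\psubst}$, so expansion closure applies.
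For the first clause, the verifier steps to $\freshf{\teig}{\verif{\subs{\ttm'}{\asig'}\sub{\tvar}{\teig}}{(\utm\,\teig)}}$ for a fresh eigenvariable $\teig$. Instantiate the second clause with $\ttmthree:=\teig$ and $\vcset:=\semf{\teig}{}$, which is a candidate by \cref{lem:f:adequacy_for_propositions}. Condition (2) then gives $\verif{\subs{\ttm'}{\asig'}\sub{\tvar}{\teig}}{(\utm\,\teig)}\tofs\iunit$. This reduction lifts under $\nu$, since $\freshf{\teig}{\wctx}$ is a weak head context, and rule 9 finishes.

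\textbf{Expected main obstacle.} The delicate point is this last step. We must check that eigenvariable freshness is compatible with substitution: $\teig$ must not occur in $\psubst$ or in the range of $\asig$, and we choose it accordingly. We must also confirm that the reduction inside $\nu$ indeed stays a weak head reduction.
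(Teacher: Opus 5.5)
Your proof is correct and follows the paper's argument: induction on the typing derivation, using adequacy for types, irrelevance and the substitution lemma at the same points. The one difference is in the two introduction cases. You obtain the verifier clause from the application clause, instantiated at the generator (respectively at a fresh eigenvariable with its canonical candidate), whereas the paper re-invokes the induction hypothesis with the extended substitution or assignment and writes out the reduction; the two are equivalent.
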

\begin{proof}
We proceed by induction on the derivation of the judgment $\judgf{\penv}{\ptm}{\ttm}$:

\Case[]{Axiom rule:}
  Let $\ptm = \pvar$ and $\judgf{\penv,\pvar:\ttm}{\pvar}{\ttm}$.
  Then since $\compatf{\penv}{\psubst}$
  we have that
  $\subs{\ptm}{\asig\psubst}
   = \subs{\pvar}{\asig\psubst}
   = \subs{\pvar}{\psubst}
   \in \semf{\ttm}{\asig}$.

\Case[]{Introduction of implication:}
  Let $\ptm = \plamf{\pvar}{\ptmtwo}$ and $\ttm = \ttmtwo\imp\ttmthree$,
  and let
    $\judgf{\penv}{\plamf{\pvar}{\ptmtwo}}{\ttmtwo\imp\ttmthree}$
  be derived from
    $\judgf{\penv,\pvar:\ttmtwo}{\ptmtwo}{\ttmthree}$.
  Note that by definition of $\semf{\ttmtwo\imp\ttmthree}{\asig}$,
  in order to show that
  $\subs{(\plamf{\pvar}{\ptmtwo})}{\asig\psubst} \in \semf{\ttmtwo\imp\ttmthree}{\asig}$
  we have to show two conditions,
  first that
  $\verif{\subs{(\ttmtwo\imp\ttmthree)}{\asig}}\subs{(\plamf{\pvar}{\ptmtwo})}{\asig\psubst} \tofs \iunit$
  and, second,
  that $\subs{(\plamf{\pvar}{\ptmtwo})}{\asig\psubst}\,\utm \in \semf{\ttmthree}{\asig}$
  holds for every $\utm \in \semf{\ttmtwo}{\asig}$.
  Let us show each of the two conditions:
  \begin{enumerate}
  \item
    To show the first condition,
    recall that
    $\semf{\ttmtwo}{\asig}$ is a $\subs{\ttmtwo}{\asig}$-\verificationCandidate
    by~\cref{lem:f:adequacy_for_propositions},
    so in particular $\gen{\subs{\ttmtwo}{\asig}} \in \semf{\ttmtwo}{\asig}$.
    Consider the \proofSubstitution
    $\psubstext \eqdef \psubst\extsub{\pvar}{\gen{\subs{\ttmtwo}{\asig}}}$
    and observe that
    $\compatf{\penv,\pvar:\ttmtwo}{\psubstext}$
    because $\subs{\pvar}{\psubstext} = \gen{\subs{\ttmtwo}{\asig}} \in \semf{\ttmtwo}{\asig}$.
    Applying the \ih, this implies that
      $\subs{\ptmtwo}{\asig\psubstext} \in \semf{\ttmthree}{\asig}$,
    and since
    $\semf{\ttmthree}{\asig}$ is a $\subs{\ttmthree}{\asig}$-\verificationCandidate
    by~\cref{lem:f:adequacy_for_propositions},
    this means in particular that
      $\verif{\subs{\ttmthree}{\asig}} \subs{\ptmtwo}{\asig\psubstext} \tofs \iunit$.
    Then:
    \[
      \begin{array}{rcl}
      &&
        \verif{\subs{(\ttmtwo\imp\ttmthree)}{\asig}}\subs{(\plamf{\pvar}{\ptmtwo})}{\asig\psubst}
      \\
      & = &
        \verif{\subs{\ttmtwo}{\asig}\imp\subs{\ttmthree}{\asig}}(\plamf{\pvar}{\subs{\ptmtwo}{\asig\psubst\extsub{\pvar}{\pvar}}})
      \\
      & \tof &
        \verif{\subs{\ttmthree}{\asig}}(
          (\plamf{\pvar}{\subs{\ptmtwo}{\asig\psubst\extsub{\pvar}{\pvar}}})
          \gen{\subs{\ttmtwo}{\asig}}
        )
      \\
      & \tof &
        \verif{\subs{\ttmthree}{\asig}}(
          \subs{\ptmtwo}{\asig\psubst\extsub{\pvar}{\pvar}}
            \sub{\pvar}{\gen{\subs{\ttmtwo}{\asig}}}
        )
      \\
      & = &
        \verif{\subs{\ttmthree}{\asig}}
          \subs{\ptmtwo}{\asig\psubstext}
      \\
      & \tofs &
        \iunit
      \end{array}
    \]
  \item
    To show the second condition, let $\utm$ be an arbitrary \metaterm
    such that $\utm \in \semf{\ttmtwo}{\asig}$.
    Consider the \proofSubstitution
    $\psubstext \eqdef \psubst\extsub{\pvar}{\utm}$
    and observe that
    $\compatf{\penv,\pvar:\ttmtwo}{\psubstext}$
    because $\subs{\pvar}{\psubstext} = \utm \in \semf{\ttmtwo}{\asig}$.
    Applying the \ih, this implies that
      $\subs{\ptmtwo}{\asig\psubstext} \in \semf{\ttmthree}{\asig}$.
    Then we have that:
    \[
      \begin{array}{rcl}
      &&
        \subs{(\plamf{\pvar}{\ptmtwo})}{\asig\psubst}\,\utm
      \\
      & = &
        (\plamf{\pvar}{\subs{\ptmtwo}{\asig\psubst\extsub{\pvar}{\pvar}}})\,\utm
      \\
      & \tof &
        \subs{\ptmtwo}{\asig\psubst\extsub{\pvar}{\pvar}}\sub{\pvar}{\utm}
      \\
      & = &
        \subs{\ptmtwo}{\asig\psubst\extsub{\pvar}{\utm}}
      \\
      & = &
        \subs{\ptmtwo}{\asig\psubstext}
      \end{array}
    \]
    Since
    $\semf{\ttmtwo}{\asig}$ is a $\subs{\ttmtwo}{\asig}$-\verificationCandidate
    by~\cref{lem:f:adequacy_for_propositions},
    in particular it is closed by $\tof$-convertibility,
    and since we have that
    $\subs{(\plamf{\pvar}{\ptmtwo})}{\asig\psubst}\,\utm
     \tofs
     \subs{\ptmtwo}{\asig\psubstext} \in \semf{\ttmthree}{\asig}$
    we conclude that
    $\subs{(\plamf{\pvar}{\ptmtwo})}{\asig\psubst}\,\utm \in \semf{\ttmthree}{\asig}$.
  \end{enumerate}

\Case[]{Elimination of implication:}
  Let $\ptm = \ptmtwo\,\ptmthree$,
  and let 
    $\judgf{\penv}{\ptmtwo\,\ptmthree}{\ttm}$
  be derived from
    $\judgf{\penv}{\ptmtwo}{\ttmtwo\imp\ttm}$
  and
    $\judgf{\penv}{\ptmthree}{\ttmtwo}$.
  By \ih we have that
  $\subs{\ptmtwo}{\asig\psubst} \in \semf{\ttmtwo\imp\ttm}{\asig}$
  and that
  $\subs{\ptmthree}{\asig\psubst} \in \semf{\ttmtwo}{\asig}$.
  Moreover, by definition of $\semf{\ttmtwo\imp\ttm}{\asig}$,
  the fact that
  $\subs{\ptmtwo}{\asig\psubst} \in \semf{\ttmtwo\imp\ttm}{\asig}$
  entails that
  $\subs{\ptmtwo}{\asig\psubst}\,\utm \in \semf{\ttm}{\asig}$
  holds for every $\utm \in \semf{\ttmtwo}{\asig}$.
  In particular,
  $\subs{\ptmtwo}{\asig\psubst}\,\subs{\ptmthree}{\asig\psubst} \in \semf{\ttm}{\asig}$,
  \ie
  $\subs{\ptm}{\asig\psubst} \in \semf{\ttm}{\asig}$.

\Case[]{Introduction of second-order universal quantifier:}
  Let $\ptm = \pallfi{\tvar}{\ptmtwo}$
  and $\ttm = \allf{\tvar}{\ttmtwo}$,
  and let
    $\judgf{\penv}{\pallfi{\tvar}{\ptmtwo}}{\allf{\tvar}{\ttmtwo}}$
  be derived from
    $\judgf{\penv}{\ptmtwo}{\ttmtwo}$,
  where $\tvar\notin\fv{\penv}$.
  Note that, by definition of $\semf{\allf{\tvar}{\ttmtwo}}{\asig}$,
  in order to show that
  $\subs{(\pallfi{\tvar}{\ptmtwo})}{\asig\psubst} \in \semf{\allf{\tvar}{\ttmtwo}}{\asig}$
  we have to show two conditions,
  first that
    $\verif{\subs{(\allf{\tvar}{\ttmtwo})}{\asig}}{
       \subs{(\pallfi{\tvar}{\ptmtwo})}{\asig\psubst}
     } \tofs \iunit$
  and, second, that
  $\subs{(\pallfi{\tvar}{\ptmtwo})}{\asig\psubst}\,\ttmthree
   \in \semf{\ttmtwo}{\asig\extsub{\tvar}{(\ttmthree,\vcset)}}$
  for every \logicalTerm $\ttmthree$
  and every $\vcset \in \VC{\ttmthree}$.
  \begin{enumerate}
  \item
    To show the first condition,
    begin by considering the \candidateAssignment
    $\asigext \eqdef \asig\extsub{\tvar}{(\teig,\semf{\teig}{\asig})}$.
    Recall that $\compatf[\asig]{\penv}{\psubst}$ holds by hypothesis,
    and observe that $\compatf[\asigext]{\penv}{\psubst}$ also holds,
    because if $\pvar:\ttmthree \in \penv$
    then by hypothesis
    we know that $\subs{\pvar}{\psubst} \in \semf{\ttmthree}{\asig}$
    and the application of the introduction rule for $\forall$
    ensures that $\tvar \notin \fv{\ttmthree}$;
    hence $\asig$ and $\asigext$ agree on the free variables of $\ttmthree$,
    so $\subs{\pvar}{\psubst} \in \semf{\ttmthree}{\asigext}$
    by irrelevance~(\cref{lem:f:irrelevance}).
    By \ih we obtain that $\subs{\ptmtwo}{\asigext\psubst} \in \semf{\ttmtwo}{\asigext}$,
    and since $\semf{\ttmtwo}{\asigext}$
    is a $\subs{\ttmtwo}{\asigext}$-\verificationCandidate
    by~\cref{lem:f:adequacy_for_propositions},
    this means in particular that
    $\verif{\subs{\ttmtwo}{\asigext}}{\subs{\ptmtwo}{\asigext\psubst}} \tofs \iunit$.
    Let us write $\tsubst \eqdef \tilde{\asig}\extsub{\tvar}{\tvar}$
    be the \logicalSubstitution corresponding to the first projection
    of $\asig$ but fixing $\tvar$,
    and note that
    $\subs{(\allf{\tvar}{\ttmtwo})}{\asig}
     = \allf{\tvar}{\subs{\ttmtwo}{\tsubst}}$
    and
    $\subs{(\pallfi{\tvar}{\ptmtwo})}{\asig}
     = \pallfi{\tvar}{\subs{\ptmtwo}{\tsubst}}$.
    Moreover, assume by $\alpha$-conversion that $\tvar$ does not occur
    free in the image of $\psubst$. Then:
    \[
      \begin{array}{rcll}
      &&
        \verif{\subs{(\allf{\tvar}{\ttmtwo})}{\asig}}{
          (\subs{(\pallfi{\tvar}{\ptmtwo})}{\asig\psubst})
        }
      \\
      & = &
        \verif{\allf{\tvar}{\subs{\ttmtwo}{\tsubst}}}{
          (\pallfi{\tvar}{\subs{\ptmtwo}{\tsubst\psubst}})
        }
      \\
      & \tof &
        \freshf{\teig}{
          \verif{\subs{\ttmtwo}{\tsubst}\sub{\tvar}{\teig}}{
            ((\pallfi{\tvar}{\subs{\ptmtwo}{\tsubst\psubst}})\,\teig)
          }
        }
      \\
      & \tof &
        \freshf{\teig}{
          \verif{\subs{\ttmtwo}{\tsubst\sub{\tvar}{\teig}}}{
            (\subs{\ptmtwo}{\tsubst\psubst}\sub{\tvar}{\teig})
          }
        }
      \\
      & \tof &
        \freshf{\teig}{
          \verif{\subs{\ttmtwo}{\asigext}}{
            (\subs{\ptmtwo}{\asigext\psubst})
          }
        }
      \\
      & \tofs &
        \freshf{\teig}{\iunit}
        \HS\text{since
             $\verif{\subs{\ttmtwo}{\asigext}}{\subs{\ptmtwo}{\asigext\psubst}} \tofs \iunit$
           }
      \\
      & \tof &
        \iunit
      \end{array}
    \]
  \item
    To show the second condition,
    let $\ttmthree$ be a \logicalTerm and $\vcset \in \VC{\ttmthree}$,
    and consider the \candidateAssignment
    $\asigext \eqdef \asig\extsub{\tvar}{(\ttmthree,\vcset)}$.
    Recall that $\compatf[\asig]{\penv}{\psubst}$ holds by hypothesis,
    and observe that $\compatf[\asigext]{\penv}{\psubst}$ also holds,
    because if $\pvar:\ttmfour \in \penv$
    then by hypothesis
    we know that $\subs{\pvar}{\psubst} \in \semf{\ttmfour}{\asig}$
    and the application of the introduction rule for $\forall$ ensures
    that $\tvar \notin \fv{\ttmfour}$;
    hence $\asig$ and $\asigext$ agree on the free variables of $\ttmfour$,
    so $\subs{\pvar}{\psubst} \in \semf{\ttmfour}{\asigext}$
    by irrelevance~(\cref{lem:f:irrelevance}).
    Applying the \ih, this implies that
    $\subs{\ptmtwo}{\asigext\psubst} \in \semf{\ttmtwo}{\asigext}$.
    Note that we may assume by $\alpha$-conversion
    that $\tvar$ does not occur free in the image of $\psubst$.
    Moreover, consider the \logicalSubstitution
    $\tsubst = \tilde{\asig}\extsub{\tvar}{\tvar}$
    and note that
    $\subs{(\pallfi{\tvar}{\ptmtwo})}{\asig}
    = \pallfi{\tvar}{\subs{\ptmtwo}{\tsubst}}$.
    Then we have that:
    \[
      \begin{array}{rcll}
      &&
        \subs{(\pallfi{\tvar}{\ptmtwo})}{\asig\psubst}\,\ttmthree
      \\
      & = &
        (\pallfi{\tvar}{\subs{\ptmtwo}{\tsubst\psubst}})\,\ttmthree
      \\
      & \tof &
        \subs{\ptmtwo}{\tsubst\psubst}\sub{\tvar}{\ttmthree}
      \\
      & = &
        \subs{(\subs{\ptmtwo}{\tsubst}\sub{\tvar}{\ttmthree})}{\psubst}
      \\
      & = &
        \subs{\ptmtwo}{\asigext\psubst}
      \end{array}
    \]
    Since $\semf{\ttmtwo}{\asigext}$ is a
    $\subs{\ttmtwo}{\asigext}$-\verificationCandidate
    by~\cref{lem:f:adequacy_for_propositions},
    in particular it is closed by $\tof$-convertibility,
    and since we have that
    $\subs{(\pallfi{\tvar}{\ptmtwo})}{\asig\psubst}\,\ttmthree
     \tofs
     \subs{\ptmtwo}{\asigext\psubst} \in \semf{\ttmtwo}{\asigext}$,
    we conclude that
    $\subs{(\pallfi{\tvar}{\ptmtwo})}{\asig\psubst}\,\ttmthree
     \in \semf{\ttmtwo}{\asigext}$.
  \end{enumerate}

\Case[]{Elimination of the second-order universal quantifier:}
  Let $\ptm = \ptmtwo\,\ttmthree$
  and let $\ttmtwo,\tvar,\ttmthree$
  be such that $\ttm = \ttmtwo\sub{\tvar}{\ttmthree}$
  and
    $\judgf{\penv}{\ptmtwo\,\ttmthree}{\ttmtwo\sub{\tvar}{\ttmthree}}$
  is derived from
    $\judgf{\penv}{\ptmtwo}{\allf{\tvar}{\ttmtwo}}$.
  By \ih we have that
    $\subs{\ptmtwo}{\asig\psubst} \in \semf{\allf{\tvar}{\ttmtwo}}{\asig}$.
  Moreover, by definition of $\semf{\allf{\tvar}{\ttmtwo}}{\asig}$,
  the fact that
    $\subs{\ptmtwo}{\asig\psubst} \in \semf{\allf{\tvar}{\ttmtwo}}{\asig}$
  entails that
    $\subs{\ptmtwo}{\asig\psubst}\,\ttmfour \in \semf{\ttmtwo}{\asig\extsub{\tvar}{(\ttmfour,\vcset)}}$
  for every \logicalTerm $\ttmfour$ and every $\vcset \in \VC{\ttmfour}$.
  In particular, taking $\ttmfour \eqdef \subs{\ttmthree}{\asig}$
  and $\vcset \eqdef \semf{\ttmthree}{\asig}$,
  which is a $\subs{\ttmthree}{\asig}$-\verificationCandidate
  by~\cref{lem:f:adequacy_for_propositions},
  we have that
    $\subs{\ptmtwo}{\asig\psubst}\,\subs{\ttmthree}{\asig} \in \semf{\ttmtwo}{\asig\extsub{\tvar}{(\subs{\ttmthree}{\asig},\semf{\ttmthree}{\asig})}}$.
  Remark that
  $\subs{(\ptmtwo\,\ttmthree)}{\asig\psubst}
  = \subs{\ptmtwo}{\asig\psubst}\,\subs{\ttmthree}{\psubst}$.
  Moreover, note that
  $\semf{\ttmtwo}{\asig\extsub{\tvar}{(\subs{\ttmthree}{\asig},\semf{\ttmthree}{\asig})}}
   =
   \semf{\ttmtwo\sub{\tvar}{\ttmthree}}{\asig}$
  by the substitution lemma~(\cref{lem:f:substitution}).
  Hence
  $\subs{(\ptmtwo\,\ttmthree)}{\asig\psubst} \in \semf{\ttmtwo\sub{\tvar}{\ttmthree}}{\asig}$,
  as required.
\end{proof}

\subsection{Second-Order: Completeness (\cref{thm:f:completeness})}
  \label{a:sec:f:completeness}
  
\begin{lemma}[Characterization of normal forms]
\label{lem:f:characterization_of_nfs}
A pure \metaterm $\utm$ is in $\beta$-normal form
if and only if it is a normal \proofTerm, according to
the grammar of normal forms given in \cref{sec:second_order:completeness}.
\end{lemma}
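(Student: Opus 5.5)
The statement has two directions, and I would prove both by structural induction, using the mutually inductive grammar of \cref{sec:second_order:completeness}. Recall that a pure \metaterm is built only from \proofVariables, \proofAbstractions $\plamf{\pvar}{\utm}$, \proofApplications $\utm\,\utmtwo$, \logicalAbstractions $\pallfi{\tvar}{\utm}$ and \logicalApplications $\utm\,\ttm$. On such terms the only $\beta$-redexes are $(\plamf{\pvar}{\utm})\,\utmtwo$ with $\utmtwo$ a \metaterm and $(\pallfi{\tvar}{\utm})\,\ttm$ with $\ttm$ a \logicalTerm. The \emph{ill-sorted} combinations $(\plamf{\pvar}{\utm})\,\ttm$ and $(\pallfi{\tvar}{\utm})\,\utmtwo$ are not redexes. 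They are exactly the pseudo-redexes $\Psi$, which is why they appear among the \headForms.

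\textbf{($\Leftarrow$) Normal \proofTerms are $\beta$-normal.} I would show by simultaneous induction on the grammar that every pseudo-redex, \headForm, normal \argument, neutral \proofTerm and normal \proofTerm contains no $\beta$-redex. The non-trivial case is a neutral term $\uhd\,\argunf_1\hdots\argunf_n$. The arguments $\argunf_i$ are redex-free by the induction hypothesis (or are \logicalTerms, which contain no metaterm redexes). So the only candidate redexes are the body of $\uhd$ and the applications $\uhd\,\argunf_1\hdots\argunf_j$.
\begin{itemize}
\item If $j\ge 1$ and $\uhd\,\argunf_1\hdots\argunf_{j-1}$ were an abstraction, then $j=1$ and $\uhd$ would itself be an abstraction. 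But $\uhd$ is either a variable or an application, so this is impossible.
\item If $j=0$ and $\uhd=\Psi$, the root of $\Psi$ is an application of an abstraction to an argument of the wrong sort, so it is not a redex. The body $\unf$ of the abstraction and the argument ($\ttm$ or $\unftwo$) are redex-free by the induction hypothesis.
\end{itemize}
The abstraction cases $\plamf{\pvar}{\unf}$ and $\pallfi{\tvar}{\unf}$ are immediate from the induction hypothesis.

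\textbf{($\Rightarrow$) $\beta$-normal pure \metaterms are normal \proofTerms.} I would proceed by induction on $\utm$.
\begin{itemize}
\item If $\utm$ is a \proofAbstraction or a \logicalAbstraction, its body is $\beta$-normal, so by the induction hypothesis it is a normal \proofTerm, and the corresponding production applies.
\item Otherwise $\utm$ is a variable or an application. I would decompose it into its spine, $\utm = \utmthree\,\argu_1\hdots\argu_n$ with $\utmthree$ not an application and $n\ge 0$. Each $\argu_i$ is either a \logicalTerm or a $\beta$-normal pure \metaterm, and in the latter case it is a normal \proofTerm by the induction hypothesis.
\end{itemize}
For the spine I would split on the shape of $\utmthree$.
\begin{itemize}
\item If $\utmthree$ is a variable, $\utm$ is neutral with head $\pvar$.
\item If $\utmthree$ is an abstraction, then $n\ge 1$, since $\utm$ is not itself an abstraction in this branch. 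Because $\utmthree\,\argu_1$ is not a $\beta$-redex, the sort of $\argu_1$ must mismatch the kind of abstraction. Concretely, either $\utmthree=\plamf{\pvar}{\utm'}$ with $\argu_1=\ttm$ a \logicalTerm, or $\utmthree=\pallfi{\tvar}{\utm'}$ with $\argu_1$ a \metaterm. In both cases $\utm'$ is normal by the induction hypothesis, so $\utmthree\,\argu_1$ is a pseudo-redex $\Psi$. Then $\utm=\Psi\,\argu_2\hdots\argu_n$ is neutral with head $\Psi$.
\end{itemize}

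\textbf{Main obstacle.} I expect the only delicate point to be the bookkeeping in the spine decomposition. Arguments of the two sorts are interleaved, so one must check that only the first application can create a redex, and that the remaining applications $\Psi\,\argu_2\hdots$ are never redexes because $\Psi$ is not an abstraction. To avoid a separate measure, I would state the spine decomposition as an auxiliary observation proved by induction on the size of $\utm$.
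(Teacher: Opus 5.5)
Your proposal is correct and follows essentially the same route as the paper, whose proof consists only of the remark that both implications are routine by induction on $\utm$. Your spine decomposition, together with the observation that only the first application in the spine can be a redex and that it fails to be one exactly when the sorts mismatch, is precisely the detail that remark leaves implicit.
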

\begin{proof}
Both implications are routine by induction on $\utm$.
\end{proof}

\begin{remark}[Determinism of \argument matching]
\label{rem:f:match_deterministic}
The predicate $\match{\ttm}{\argus}{\ttms}{\ttm}$
depends functionally on $\ttm$ and $\argus$.
More precisely,
if $\match{\ttm}{\argus}{\ttms_1}{\ttm_1}$
and $\match{\ttm}{\argus}{\ttms_2}{\ttm_2}$
hold, then $\ttms_1 = \ttms_2$ and $\ttm_1 = \ttm_2$.
\end{remark}

\begin{remark}[\Argument matching preserves the length]
\label{rem:f:match_preserves_length}
If $\match{\ttm}{\argus}{\ttms}{\ttm}$ holds,
then $\size{\ttms} = \size{\argus}$,
\ie they are lists of the same length.
\end{remark}

\begin{lemma}[\Argument matching of a concatenation]
\label{lem:f:match_concat}
Let $\ttm,\ttm'$ be \logicalTerms and $\ttms$ be a sequence of \logicalTerms.
Moreover, let $\argus_1,\argus_2$ be two sequences of \arguments, and let us write
$\argus_1,\argus_2$ for their concatenation.
Then the following are equivalent:
\begin{enumerate}
\item
  $\match{\ttm}{(\argus_1,\argus_2)}{\ttms}{\ttm'}$
\item
  There exist sequences of \logicalTerms $\ttms_1,\ttms_2$
  and a \logicalTerm $\ttmthree$ such that
  $\ttms = \ttms_1,\ttms_2$
  and
  $\match{\ttm}{\argus_1}{\ttms_1}{\ttm''}$
  and
  $\match{\ttm''}{\argus_2}{\ttms_2}{\ttm'}$.
\end{enumerate}
\end{lemma}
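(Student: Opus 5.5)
The plan is to prove both directions by induction on the length of $\argus_1$, using only the two inductive rules defining $\match{\cdot}{\cdot}{\cdot}{\cdot}$ and the base case $\match{\ttm}{\emptyset}{\emptyset}{\ttm}$. (The statement's intermediate \logicalTerm is written $\ttmthree$ in one place and $\ttm''$ in another; I read both as the same \logicalTerm $\ttm''$.) Throughout, the key tool is inversion: a derivation of $\match{\ttm}{(\argu,\argus)}{\ttms}{\ttm'}$ must end with exactly one of the two rules. Which rule it is depends on whether $\argu$ is a \metaterm (then $\ttm = \ttmthree\imp\ttmfour$) or a \logicalTerm (then $\ttm = \allf{\tvar}{\ttmthree}$).

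For $(1 \Rightarrow 2)$, suppose first that $\argus_1$ is empty. Take $\ttms_1 := \emptyset$, $\ttm'' := \ttm$ and $\ttms_2 := \ttms$. Then $\match{\ttm}{\emptyset}{\emptyset}{\ttm}$ holds by the base rule, and the second judgment is exactly the hypothesis.

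If instead $\argus_1 = (\argu,\argus_1')$, invert the last rule of the hypothesis derivation.
\begin{itemize}
\item If $\argu$ is a \metaterm, then $\ttm = \ttmthree\imp\ttmfour$, $\ttms = (\ttmthree,\ttms')$, and $\match{\ttmfour}{(\argus_1',\argus_2)}{\ttms'}{\ttm'}$. The induction hypothesis splits $\ttms' = \ttms_1',\ttms_2$ through some $\ttm''$. Re-applying the implication rule to the first half gives $\match{\ttm}{(\argu,\argus_1')}{(\ttmthree,\ttms_1')}{\ttm''}$, so we set $\ttms_1 := (\ttmthree,\ttms_1')$.
\item If $\argu = \ttmtwo$ is a \logicalTerm, then $\ttm = \allf{\tvar}{\ttmthree}$, and the recursive premise is stated for $\ttmthree\sub{\tvar}{\ttmtwo}$ with $\ttms = (\ttmtwo,\ttms')$. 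We apply the induction hypothesis at this substituted \logicalTerm and re-apply the $\forall$ rule in the same way.
\end{itemize}
The induction must therefore be on the length of $\argus_1$, generalized over $\ttm$, because the \logicalTerm changes under substitution and cannot serve as the measure.

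For $(2 \Rightarrow 1)$ I proceed symmetrically, by induction on the derivation of $\match{\ttm}{\argus_1}{\ttms_1}{\ttm''}$. In the base case we have $\argus_1 = \ttms_1 = \emptyset$ and $\ttm'' = \ttm$, so the second hypothesis is already the conclusion. In each inductive case, the induction hypothesis combines the premise with $\match{\ttm''}{\argus_2}{\ttms_2}{\ttm'}$, and the same rule is then re-applied to prefix the head \argument and its \logicalTerm.

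I do not expect a serious obstacle. The only point requiring care is that inversion is unambiguous: the grammar distinguishes \metaterm \arguments from \logicalTerm \arguments, so exactly one rule can apply at each step, consistent with the determinism noted in \cref{rem:f:match_deterministic}. Beyond that, each case only needs bookkeeping of list concatenation.
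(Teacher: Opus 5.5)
Your proof is correct and follows essentially the same route as the paper. The paper also inducts on the length of $\argus_1$ with the same three cases (empty, leading \metaterm, leading \logicalTerm), inverting the unique applicable matching rule and re-applying it after the induction hypothesis. Your rule induction for $(2 \Rightarrow 1)$ is simply an explicit version of the paper's remark that the converse is ``similar going backwards''.
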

\begin{proof}
By induction on the length of $\argus_1$.
We only prove the $(\Rightarrow)$ direction; the $(\Leftarrow)$ direction
is similar going backwards.
We consider three cases depending on whether $\argus_1$ is empty,
if it starts with a \metaterm, or it starts with a \logicalTerm:

If $\argus_1$ is \textbf{empty}, \ie $\argus_1 = \emptyset$:
  We know that $\match{\ttm}{\argus_2}{\ttms}{\ttm'}$.
  Take $\ttms_1 := \emptyset$, $\ttms_2 := \ttms$, and $\ttm'' := \ttm$.
  Then $\match{\ttm}{\argus_1}{\ttms_1}{\ttm''}$ 
  holds because $\match{\ttm}{\emptyset}{\emptyset}{\ttm}$ holds,
  and $\match{\ttm''}{\argus_2}{\ttms_2}{\ttm'}$
  holds because $\match{\ttm}{\argus_2}{\ttms}{\ttm'}$ holds.

If $\argus_1$ \textbf{starts with a \metaterm}, \ie $\argus_1 = (\utm,\argus'_1)$:
  Suppose that $\match{\ttm}{(\utm,\argus'_1,\argus_2)}{\ttms}{\ttm'}$.
  The only way to derive this judgment is with
  $\ttm = (\ttmtwo\imp\ttmthree)$
  and $\match{\ttmthree}{(\argus'_1,\argus_2)}{\ttmthrees}{\ttm'}$
  and $\ttms = (\ttmtwo,\ttmthrees)$.
  By \ih there exist $\ttmthrees_1,\ttmthrees_2,\ttm''$
  such that
  $\ttmthrees = (\ttmthrees_1,\ttmthrees_2)$
  and
  $\match{\ttmthree}{\argus'_1}{\ttmthrees_1}{\ttm''}$
  and
  $\match{\ttm''}{\argus_2}{\ttmthrees_2}{\ttm'}$.
  Taking $\ttms_1 := (\ttmtwo,\ttmthrees_1)$ and $\ttms_2 := \ttmthrees_2$,
  we have that
  $\ttms
   = (\ttmtwo,\ttmthrees)
   = (\ttmtwo,\ttmthrees_1,\ttmthrees_2)
   = (\ttms_1,\ttms_2)$.
  Note that
  $\match{\ttm}{\argus_1}{\ttms_1}{\ttm''}$
  holds,
  that is to say
  $\match{(\ttmtwo\imp\ttmthree)}{(\utm,\argus'_1)}{(\ttmtwo,\ttmthrees_1)}{\ttm''}$
  holds,
  because
  $\match{\ttmthree}{\argus'_1}{\ttmthrees_1}{\ttm''}$
  holds.
  Moreover, $\match{\ttm''}{\argus_2}{\ttms_2}{\ttm'}$ holds
  because $\match{\ttm''}{\argus_2}{\ttmthrees_2}{\ttm'}$ holds.

If $\argus_1$ \textbf{starts with a \logicalTerm}, \ie $\argus_1 = (\ttmtwo,\argus'_1)$:
  Then suppose that $\match{\ttm}{(\ttmtwo,\argus'_1,\argus_2)}{\ttms}{\ttm'}$.
  The only way to derive this judgment is with $\ttm = (\allf{\tvar}{\ttmthree})$
  and $\match{\ttmthree\sub{\tvar}{\ttmtwo}}{(\argus'_1,\argus_2)}{\ttmthrees}{\ttm'}$
  and $\ttms = (\ttmtwo,\ttmthrees)$.
  By \ih there exist $\ttmthrees_1,\ttmthrees_2,\ttm''$ such that
  $\ttmthrees = (\ttmthrees_1,\ttmthrees_2)$
  and
  $\match{\ttmthree\sub{\tvar}{\ttmtwo}}{\argus'_1}{\ttmthrees_1}{\ttm''}$
  and
  $\match{\ttmthree''}{\argus_2}{\ttmthrees_2}{\ttm'}$.
  Taking $\ttms_1 := (\ttmtwo,\ttmthrees_1)$ and $\ttms_2 := \ttmthrees_2$,
  we have that
  $\ttms
   = (\ttmtwo,\ttmthrees)
   = (\ttmtwo,\ttmthrees_1,\ttmthrees_2)
   = (\ttms_1,\ttms_2)$.
  Note that
  $\match{\ttm}{\argus_1}{\ttms_1}{\ttm''}$
  holds,
  that is to say
  $\match{(\allf{\tvar}{\ttmthree})}{(\ttmtwo,\argus'_1)}{(\ttmtwo,\ttmthrees_1)}{\ttm''}$
  holds,
  because
  $\match{\ttmthree\sub{\tvar}{\ttmtwo}}{\argus'_1}{\ttmthrees_1}{\ttm''}$
  holds.
  Moreover,
  $\match{\ttm''}{\argus_2}{\ttms_2}{\ttm'}$
  holds
  because
  $\match{\ttm''}{\argus_2}{\ttmthrees_2}{\ttm'}$ holds.
\end{proof}

\begin{definition}[Coherence of \arguments]
Let $\argu$ and $\argutwo$ be \arguments,
and define $\argu \coherent \argutwo$ to hold if one of the
two following conditions holds:
\begin{enumerate}
\item $\argu$ and $\argutwo$ are both \metaterms (but not necessarily equal).
\item $\argu$ and $\argutwo$ are the same \logicalTerms,
      \ie there exists a \logicalTerm $\ttm$ such that $\argu = \ttm = \argutwo$.
\end{enumerate}
Moreover, if $\argus = (\argu_1,\hdots,\argu_n)$
and $\argutwos = (\argutwo_1,\hdots,\argutwo_n)$,
we write $\argus \coherent \argutwos$
if $\argu_i \coherent \argutwo_i$ holds for every index $1 \leq i \leq n$.
\end{definition}

\begin{remark}
\label{rem:f:match_coherent}
The question of whether the predicate $\match{\ttm}{\argus}{\ttms}{\ttm'}$ holds
does \textbf{not} depend on the \metaterms appearing in the list of \arguments $\argus$.
More precisely, if $\argus \coherent \argutwos$
then $\match{\ttm}{\argus}{\ttms}{\ttm'}$
if and only if $\match{\ttm}{\argutwos}{\ttms}{\ttm'}$.
\end{remark}

\begin{remark}
$\sig[\penv,\pvar:\ttm]{\pvar}{\ttm'} = \minof{\size{\ttm}}{\size{\ttm'}}$.
\end{remark}

\begin{lemma}
\label{lemma:f:size_weakening}
If $\pvar \notin \fv{\unf}$
then $\sig[\penv]{\unf}{\ttm} = \sig[\penv,\pvar:\ttmtwo]{\unf}{\ttm}$
\end{lemma}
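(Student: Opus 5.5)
We proceed by structural induction on the normal \proofTerm $\unf$, following the clauses of \cref{def:f:proof_size}, and prove the stronger-looking but equivalent statement for normal \arguments $\argunf$ simultaneously. When $\argunf$ is a \logicalTerm both sides are $0$ by definition. The only other ingredient is the observation that $\pvar \notin \fv{\unf}$ is inherited by every immediate subterm of $\unf$, after $\alpha$-renaming bound \proofVariables away from $\pvar$.

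In the \textbf{abstraction} case $\unf = \plamf{\pvartwo}{\unf'}$, choose $\pvartwo \neq \pvar$ by $\alpha$-conversion. If $\ttm = \ttmthree_1\imp\ttmthree_2$, both sides unfold to the sizes of $\unf'$ under $\penv,\pvartwo:\ttmthree_1$ and $\penv,\pvar:\ttmtwo,\pvartwo:\ttmthree_1$ respectively. Since environments are sets, the latter equals $(\penv,\pvartwo:\ttmthree_1),\pvar:\ttmtwo$, so the \ih applies. Otherwise both sides are $0$. The \textbf{type abstraction} case $\pallfi{\tvar}{\unf'}$ with $\ttm = \allf{\tvar}{\ttmthree}$ is analogous and keeps the environment fixed. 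The side condition $\tvar\notin\fv{\penv}$ is handled by choosing $\tvar$ fresh for both $\penv$ and $\ttmtwo$ via $\alpha$-conversion of the binder in $\ttm$ and $\unf$ simultaneously.

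In the \textbf{neutral} case $\unf = \uhd\,\argunf_1\hdots\argunf_n$, first suppose the head is a variable $\pvartwo$. Since $\pvar\notin\fv{\unf}$, we have $\pvartwo\neq\pvar$, so $\pvartwo:\ttmthree\in\penv$ iff $\pvartwo:\ttmthree\in(\penv,\pvar:\ttmtwo)$, and the matching condition $\match{\ttmthree}{\argus}{\ttmthrees}{\ttm'}$ does not mention the environment. Hence the same clause applies on both sides: either both are $0$, or both equal $\minof{\size{\ttm}}{\size{\ttm'}}$ plus a sum of sizes of the $\argunf_i$. The summands agree by the \ih. By determinism of \argument matching (\cref{rem:f:match_deterministic}), $\ttmthrees$ and $\ttm'$ are the same on both sides, so no choice is involved. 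If the head is a pseudo-redex, both sides are $0$ by the note after \cref{def:f:proof_size}.

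The only point requiring care, and hence the main obstacle, is making sure the definition of $\sig{\cdot}{\cdot}$ is really functional and invariant under $\alpha$-renaming. In particular, for the $\Lambda$ clause, the renaming of $\tvar$ needed to satisfy $\tvar\notin\fv{\penv,\pvar:\ttmtwo}$ must not change the value. I would discharge this with a small auxiliary remark: the proof size is invariant under simultaneous renaming of a \logicalVariable in $\unf$ and $\ttm$ when that variable is fresh for $\penv$. This remark is itself proved by the same structural induction.
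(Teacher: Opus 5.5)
Your proposal is correct and follows the paper's argument: structural induction on $\unf$ over the clauses of \cref{def:f:proof_size}, which the paper simply calls straightforward and you spell out case by case. One small fix to your side remark on $\alpha$-invariance: as you state it (renamed type variable fresh for $\penv$), its own induction fails in the $\lambda$-abstraction case, because there $\penv$ is extended by a type that may mention that variable, so state it with the renaming applied to $\penv$ as well, as in item~2 of \cref{lem:f:subst_by_teig}.
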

\begin{proof}
Straightforward by induction on $\unf$.
\end{proof}

\begin{lemma}[Proof size of application to \proofVariable]
\label{lem:f:size_proof_applications}
  Let $\unf$ be a normal \proofTerm,
  and let $\pvar \notin \fv{\unf}$. 
  Then there exists a $\beta$-normal form $\unftwo$
  such that $\unf\,\pvar \tobetas \unftwo$
  and $\sig{\unf}{\ttm\imp\ttmtwo} \geq \sig[\penv,\pvar:\ttm]{\unftwo}{\ttmtwo}$.
\end{lemma}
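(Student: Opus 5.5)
The proof is by case analysis on the shape of $\unf$ according to the grammar of normal forms of \cref{sec:second_order:completeness}, mirroring the proof of \cref{lemma:size_normalize}. There are three shapes: a \proofAbstraction $\plamf{\vartwo}{\unf'}$, a \logicalAbstraction $\pallfi{\tvar}{\unf'}$, and a neutral term $\uhd\,\argunf_1\hdots\argunf_n$ whose head $\uhd$ is either a \proofVariable or a pseudo-redex.

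\textbf{Proof abstraction.} If $\unf = \plamf{\vartwo}{\unf'}$, I take $\unftwo := \unf'\sub{\vartwo}{\pvar}$. This is a renaming of a normal form, hence normal. Since $\pvar\notin\fv{\unf}$, we may assume by $\alpha$-conversion that $\vartwo = \pvar$, and then $\unftwo = \unf'$. By definition,
\[
\sig{\plamf{\pvar}{\unf'}}{\ttm\imp\ttmtwo} = \sig[\penv,\pvar:\ttm]{\unf'}{\ttmtwo},
\]
so equality holds.

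\textbf{Type abstraction.} If $\unf = \pallfi{\tvar}{\unf'}$, then $\unf\,\pvar$ is a pseudo-redex, which is already a $\beta$-normal neutral term. I take $\unftwo := \unf\,\pvar$. Its head is a pseudo-redex, so its size is $0$ by the final clause of \cref{def:f:proof_size}, and the inequality is trivial.

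\textbf{Neutral term.} If $\unf = \uhd\,\argunf_1\hdots\argunf_n$, then $\unftwo := \unf\,\pvar$ is again neutral and normal. When $\uhd$ is a pseudo-redex, both sides are $0$. When $\uhd = \vartwo$ and the \argument matching $\match{\ttmthree}{(\argunf_1,\hdots,\argunf_n,\pvar)}{\ttmthrees}{\ttmtwo'}$ fails for $\vartwo:\ttmthree\in\penv$ (or $\vartwo$ is not in $\penv$), the right-hand side is $0$ and we are done.

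Otherwise, by \cref{lem:f:match_concat} together with determinism and length preservation (\cref{rem:f:match_deterministic,rem:f:match_preserves_length}), the matching splits as
\[
\match{\ttmthree}{(\argunf_1,\hdots,\argunf_n)}{(\ttmthree_1,\hdots,\ttmthree_n)}{\ttm'\imp\ttmtwo'},
\]
followed by the single-step match of $\pvar$ against $\ttm'\imp\ttmtwo'$. The intermediate type must be an implication, since the last \argument is a \metaterm. Hence the left-hand side is
\[
\minof{\size{\ttm\imp\ttmtwo}}{\size{\ttm'\imp\ttmtwo'}} + \sum_i \sig{\argunf_i}{\ttmthree_i}.
\]
The right-hand side is
\[
\minof{\size{\ttmtwo}}{\size{\ttmtwo'}} + \sum_i \sig[\penv,\pvar:\ttm]{\argunf_i}{\ttmthree_i} + \sig[\penv,\pvar:\ttm]{\pvar}{\ttm'}.
\]
By the remark, the last summand equals $\minof{\size{\ttm}}{\size{\ttm'}}$. \Cref{lemma:f:size_weakening}, using $\pvar\notin\fv{\unf}$, removes $\pvar$ from the environment in the middle sum; for \logicalTerm \arguments both sides of that summand are $0$. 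The inequality then follows from
\[
\minof{1+a+b}{1+a'+b'} > \minof{a}{a'}+\minof{b}{b'},
\]
exactly as in \cref{lemma:size_normalize}.

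\textbf{Main obstacle.} The delicate step is the matching decomposition in the neutral case. I need that the prefix matching yields the same $\ttmthree_i$ as in the original sum, and that the intermediate type is exactly an implication $\ttm'\imp\ttmtwo'$. Both follow from the definition of $\match{}{}{}{}$ together with \cref{lem:f:match_concat}. The issue is that in the second-order setting the intermediate type may arise after \logicalTerm substitutions $\ttmthree\sub{\tvar}{\ttmtwo}$, so I must rely only on the syntactic shape of the final matching step rather than on the shape of $\ttmthree$ itself.
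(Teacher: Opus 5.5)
Your proof is correct and follows the paper's argument step for step. It uses the same three-way case split on the shape of the normal form, the same splitting of the matching via \cref{lem:f:match_concat} with determinism and length preservation, weakening (\cref{lemma:f:size_weakening}) for the argument sizes, and the inequality $\minof{1+a+b}{1+a'+b'} > \minof{a}{a'}+\minof{b}{b'}$. Your explicit $\alpha$-renaming in the abstraction case and your separate treatment of pseudo-redex heads (which the paper absorbs into the failure of its condition $\mathcal{Q}$) are only presentational refinements.
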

\begin{proof}
We consider three cases depending on the shape of $\unf$,
according to \cref{lem:f:characterization_of_nfs}.

If $\unf$ is a \textbf{\headForm}:
  Let $\unf = \uhd\,\argus$ where $\argus = (\argunf_1,\hdots,\argunf_n)$.
  Take $\unftwo := \uhd\,\argus\,\pvar$, and observe that it is in normal form.
  Note that $\unf\,\pvar = \unftwo$.
  It suffices to show that
  $\sig{\unf}{\ttm\imp\ttmtwo} \geq \sig[\penv,\pvar:\ttm]{\unftwo}{\ttmtwo}$.
  Let $\mathcal{P}$ and $\mathcal{Q}$ be the following conditions:
  \[
    \begin{array}{lll}
      \mathcal{P} & \equiv &
      \begin{array}[t]{l}
        \text{$\uhd$ is a variable, $\uhd = \pvartwo$, and there exist $\ttmthree,\ttmthrees,\ttmthree'$}
      \\
        \text{such that
          $\pvartwo : \ttmthree \in \penv$ and $\match{\ttmthree}{\argus}{\ttmthrees}{\ttmthree'}$.
        }
      \end{array}
    \\
      \mathcal{Q} & \equiv &
      \begin{array}[t]{l}
        \text{$\uhd$ is a variable, $\uhd = \pvartwo$, and there exist $\ttmthree,\ttmfours,\ttmfour'$}
      \\
        \text{such that
          $\pvartwo : \ttmthree \in \penv$ and $\match{\ttmthree}{(\argus,\pvar)}{\ttmfours}{\ttmfour'}$.
        }
      \end{array}
    \end{array}
  \]
  Note that if $\mathcal{Q}$ does not hold then it is immediate to conclude,
  since
  $\sig[\penv,\pvar:\ttm]{\unftwo}{\ttmtwo}
  = \sig[\penv,\pvar:\ttm]{\pvartwo\,\argus\,\pvar}{\ttmtwo}
  = 0$,
  so
  $\sig{\unf}{\ttm\imp\ttmtwo}
   \geq 0
   = \sig[\penv,\pvar:\ttm]{\unftwo}{\ttmtwo}$.
  Therefore, we may assume that $\mathcal{Q}$ holds.
  We claim that then $\mathcal{P}$ must also hold.
  Indeed, since $\mathcal{Q}$ holds
  we have that $\match{\ttmthree}{(\argus,\pvar)}{\ttmfours}{\ttmfour'}$,
  which implies that $\mathcal{P}$ holds by \cref{lem:f:match_concat}.

  Since $\mathcal{P}$ and $\mathcal{Q}$ both hold,
  the situation must be (by \cref{lem:f:match_concat} and \cref{rem:f:match_deterministic})
  that $\uhd = \pvartwo$ and $\pvartwo : \ttmthree \in \penv$
  and $\match{\ttmthree}{\argus}{\ttmthrees}{(\ttm'\imp\ttmtwo')}$
  and $\match{(\ttm'\imp\ttmtwo')}{\pvar}{\ttm'}{\ttmtwo'}$,
  where moreover we know that $\ttmthrees$ is of the form
  $\ttmthrees = (\ttmthree_1,\hdots,\ttmthree_n)$
  by \cref{rem:f:match_preserves_length}.
  Hence:
  \[
    \begin{array}{rcll}
    &&
      \sig{\unf}{\ttm\imp\ttmtwo}
    \\
    & = &
      \sig{\pvartwo\,\argus}{\ttm\imp\ttmtwo}
    \\
    & = &
      \minof{\size{\ttm\imp\ttmtwo}}{\size{\ttm'\imp\ttmtwo'}}
        + \sum_{i=1}^{n} \sig{\argunf_i}{\ttmthree_i}
    \\
    & = &
      \minof{\size{\ttm\imp\ttmtwo}}{\size{\ttm'\imp\ttmtwo'}}
        + \sum_{i=1}^{n} \sig[\penv,\pvar:\ttm]{\argunf_i}{\ttmthree_i}
    \\
    && \HS\text{by \cref{lemma:f:size_weakening}}
    \\
    & = &
      \minof{1+\size{\ttm}+\size{\ttmtwo}}{1+\size{\ttm'}+\size{\ttmtwo'}}
        + \sum_{i=1}^{n} \sig[\penv,\pvar:\ttm]{\argunf_i}{\ttmthree_i}
    \\
    & = &
      \minof{\size{\ttm}}{\size{\ttm'}}
        + \minof{\size{\ttmtwo}}{\size{\ttmtwo'}}
        + \sum_{i=1}^{n} \sig[\penv,\pvar:\ttm]{\argunf_i}{\ttmthree_i}
    \\
    & = &
      \minof{\size{\ttmtwo}}{\size{\ttmtwo'}}
        + \sum_{i=1}^{n} \sig[\penv,\pvar:\ttm]{\argunf_i}{\ttmthree_i}
        + \sig[\penv,\pvar:\ttm]{\pvar}{\ttm'}
    \\
    & = &
      \sig[\penv,\pvar:\ttm]{\pvartwo\,\argus\,\pvar}{\ttmtwo}
    \\
    & = &
      \sig[\penv,\pvar:\ttm]{\unftwo}{\ttmtwo}
    \end{array}
  \]

If $\unf$ is a \textbf{\proofAbstraction}:
  Let $\unf = \plamf{\pvar}{\unftwo}$.
  Note that $\unf\,\pvar \to \unftwo$ in exactly one step.
  Moreover
  $\sig{\unf}{\ttm\imp\ttmtwo}
   = \sig{\plamf{\pvar}{\unftwo}}{\ttm\imp\ttmtwo}
   = \sig[\penv,\pvar:\ttm]{\unftwo}{\ttmtwo}
  $.

If $\unf$ is a \textbf{\logicalAbstraction}:
  Let $\unf = \pallfi{\tvar}{\unfthree}$.
  Take $\unftwo := (\pallfi{\tvar}{\unfthree})\,\tvar$, and observe that it is in normal form.
  Note that $\unf\,\tvar = \unftwo$.
  Moreover
  $\sig{\unf}{\ttm\imp\ttmtwo}
   = \sig{\pallfi{\tvar}{\unfthree}}{\ttm\imp\ttmtwo}
   = 0
   = \sig[\penv,\pvar:\ttm]{(\pallfi{\tvar}{\unfthree})\,\pvar}{\ttmtwo}
   = \sig[\penv,\pvar:\ttm]{\unftwo}{\ttmtwo}
  $.
\end{proof}

\begin{lemma}[Substitution by \logicalEigenvariable]
\label{lem:f:subst_by_teig}
\quad
\begin{enumerate}
\item
  If
    $\match{\ttm}{\argus}{\ttms}{\ttm'}$
  then
    $\match{\ttm\sub{\tvar}{\teig}}{\argus\sub{\tvar}{\teig}}{\ttms\sub{\tvar}{\teig}}{\ttm'\sub{\tvar}{\teig}}$.
\item
  $\sig{\unf}{\ttm} = \sig[\penv\sub{\tvar}{\teig}]{\unf\sub{\tvar}{\teig}}{\ttm\sub{\tvar}{\teig}}$
\end{enumerate}
\end{lemma}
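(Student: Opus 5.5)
The first item goes by induction on the derivation of $\match{\ttm}{\argus}{\ttms}{\ttm'}$, and the second by induction on the shape of the normal \proofTerm $\unf$ (following the grammar of \cref{sec:second_order:completeness}), using the first item in the neutral case. Throughout we assume, as usual, that $\teig$ is an \logicalEigenvariable, so it never occurs bound. By $\alpha$-conversion we may also assume that bound \logicalVariables differ from $\tvar$ and $\teig$, so that the substitution $\sub{\tvar}{\teig}$ commutes with the binders we encounter.

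\textbf{Item 1.} The base case is immediate, since $\match{\ttm\sub{\tvar}{\teig}}{\emptyset}{\emptyset}{\ttm\sub{\tvar}{\teig}}$.

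For the implication rule, $(\ttm_1\imp\ttm_2)\sub{\tvar}{\teig} = \ttm_1\sub{\tvar}{\teig}\imp\ttm_2\sub{\tvar}{\teig}$, and the \metaterm argument $\utm$ becomes $\utm\sub{\tvar}{\teig}$. We apply the \ih and then the same rule.

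For the quantifier rule, with conclusion $\match{\allf{\tvartwo}{\ttmthree}}{(\ttmtwo,\argus)}{(\ttmtwo,\ttmthrees)}{\ttmthree'}$, we take $\tvartwo\neq\tvar$ and $\tvartwo\notin\fv{\ttmtwo}$. The substitution lemma for \logicalTerms gives
$\ttmthree\sub{\tvartwo}{\ttmtwo}\sub{\tvar}{\teig} = \ttmthree\sub{\tvar}{\teig}\sub{\tvartwo}{\ttmtwo\sub{\tvar}{\teig}}$,
which is exactly the premise of the rule instantiated at $(\allf{\tvartwo}{\ttmthree})\sub{\tvar}{\teig}$ with argument $\ttmtwo\sub{\tvar}{\teig}$. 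We then conclude by the \ih.

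\textbf{Item 2.} The key auxiliary fact is that sizes are preserved: $\size{\ttm\sub{\tvar}{\teig}} = \size{\ttm}$, because atoms are replaced by atoms. This is proved by a routine induction.

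In the neutral case $\unf = \uhd\,\argunf_1\hdots\argunf_n$, if $\uhd$ is a pseudo-redex then so is its image, and both sides are $0$. If $\uhd=\pvar$ with $\pvar:\ttmtwo\in\penv$, then $\pvar:\ttmtwo\sub{\tvar}{\teig}\in\penv\sub{\tvar}{\teig}$. Item 1 transports the matching $\match{\ttmtwo}{\argus}{\ttmtwos}{\ttm'}$ to the substituted one. By determinism (\cref{rem:f:match_deterministic}) this is the matching used on the right-hand side.

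Conversely, if no matching exists on the left, none exists on the right. For this we prove the converse of item 1, again by induction. The point is that $\ttm\sub{\tvar}{\teig}$ is an arrow or a $\forall$ exactly when $\ttm$ is, since a variable $\tvar$ turns into the atom $\teig$, which matches nothing.

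The $\min$ terms then agree by size preservation, and the summands agree by the \ih. For \logicalTerm arguments both sides are $0$.

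The abstraction cases are the next concern. For $\plamf{\pvartwo}{\unf'}$, note that $\ttm$ is an arrow iff $\ttm\sub{\tvar}{\teig}$ is, and we apply the \ih with the extended environment. For $\pallfi{\tvartwo}{\unf'}$, we choose $\tvartwo$ fresh with respect to $\tvar$, $\teig$ and $\penv$. Then $\ttm=\allf{\tvartwo}{\ttmtwo}$ iff $\ttm\sub{\tvar}{\teig}=\allf{\tvartwo}{\ttmtwo\sub{\tvar}{\teig}}$. The side condition $\tvartwo\notin\fv{\penv\sub{\tvar}{\teig}}$ still holds, and the \ih applies.

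Otherwise both sides are $0$, once more because of the shape preservation just described.

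The main obstacle is the neutral case, specifically establishing the ``iff'' of matchability under substitution. The forward direction follows from item 1, but the backward direction has to be stated and proved separately. It rests on the fact that $\tvar$ is mapped to an atom, never to a compound type. Once that converse lemma is in place, everything else is bookkeeping.
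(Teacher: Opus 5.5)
Your proposal is correct and follows the same route as the paper: induction on the matching derivation for item 1, and induction on the shape of $\unf$ for item 2, using item 1 in the variable-headed neutral case. You are more explicit than the paper's ``straightforward'' sketch, since you spell out two facts needed for the ``otherwise $0$'' clauses: that sizes are preserved, and the converse of item 1 (types keep and reflect their shape under $\sub{\tvar}{\teig}$ because $\teig$ is atomic).
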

\begin{proof}
The first item is straightforward by induction on the derivation of the
judgment $\match{\ttm}{\argus}{\ttms}{\ttm'}$.
The second item is straightforward by induction on $\unf$,
considering each case in \cref{def:f:proof_size}
and resorting to the first item when $\unf$ is of the form $\pvar\argunf_1\hdots\argunf_n$.
\end{proof}

\begin{lemma}[Proof size of application to \logicalVariable]
\label{lem:f:size_logical_applications}
Let $\unf$ be a normal \proofTerm,
and suppose that $\tvar,\teig\notin\fv{\penv}$.
Then there exists a $\beta$-normal form $\unftwo$ 
such that $\unf\,\teig \tobetas \unftwo$,
and $\sig{\unf}{\allf{\tvar}{\ttmtwo}} \geq \sig[\penv]{\unftwo}{\ttmtwo\sub{\tvar}{\teig}}$.
\end{lemma}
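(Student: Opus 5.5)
Following the proof of \cref{lem:f:size_proof_applications}, I will do a case analysis on the shape of the normal \proofTerm $\unf$, as given by \cref{lem:f:characterization_of_nfs}. There are three cases: a \headForm $\uhd\,\argus$, a \proofAbstraction, or a \logicalAbstraction. In each case I exhibit $\unftwo$ and compare the proof sizes. The new ingredient relative to \cref{lem:f:size_proof_applications} is that the target type changes from $\allf{\tvar}{\ttmtwo}$ to $\ttmtwo\sub{\tvar}{\teig}$. So I will repeatedly use \cref{lem:f:subst_by_teig}, together with the hypothesis $\tvar,\teig\notin\fv{\penv}$, which guarantees $\penv\sub{\tvar}{\teig}=\penv$.

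\Case[]{\LogicalAbstraction, $\unf=\pallfi{\tvar'}{\unfthree}$.} By $\alpha$-conversion I take $\tvar'=\tvar$; this is legitimate since $\tvar\notin\fv{\penv}$. Then $\unf\,\teig\to\unfthree\sub{\tvar}{\teig}$, which is normal because substituting a type into a normal \proofTerm creates no redex. I take $\unftwo:=\unfthree\sub{\tvar}{\teig}$. By definition $\sig{\unf}{\allf{\tvar}{\ttmtwo}}=\sig{\unfthree}{\ttmtwo}$, and by \cref{lem:f:subst_by_teig}(2) this equals $\sig[\penv\sub{\tvar}{\teig}]{\unfthree\sub{\tvar}{\teig}}{\ttmtwo\sub{\tvar}{\teig}}=\sig{\unftwo}{\ttmtwo\sub{\tvar}{\teig}}$, so equality holds.

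\Case[]{\ProofAbstraction, $\unf=\plamf{\pvar}{\unfthree}$.} Here $\unftwo:=(\plamf{\pvar}{\unfthree})\,\teig$ is itself a pseudo-redex, hence normal. Its size is $0$ because its head is not a variable, so the inequality holds trivially.

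\Case[]{\HeadForm, $\unf=\uhd\,\argus$.} I take $\unftwo:=\uhd\,\argus\,\teig$, which is normal. If the head of $\unftwo$ is not a variable $\pvartwo$ with $\pvartwo:\ttmthree\in\penv$ and $\match{\ttmthree}{(\argus,\teig)}{\ttmfours}{\ttmfour'}$, the right-hand size is $0$ and we are done. Otherwise, by \cref{lem:f:match_concat} and \cref{rem:f:match_deterministic}, we get $\match{\ttmthree}{\argus}{\ttmthrees}{\allf{\tvar}{\ttmtwo'}}$ and then $\match{\allf{\tvar}{\ttmtwo'}}{\teig}{\teig}{\ttmtwo'\sub{\tvar}{\teig}}$. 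Writing $\Sigma:=\sum_i\sig{\argunf_i}{\ttmthree_i}$ (the \logicalTerm \argument $\teig$ contributes $0$), the two sizes are
\[\sig{\unf}{\allf{\tvar}{\ttmtwo}}=\minof{\size{\allf{\tvar}{\ttmtwo}}}{\size{\allf{\tvar}{\ttmtwo'}}}+\Sigma\]
and
\[\sig{\unftwo}{\ttmtwo\sub{\tvar}{\teig}}=\minof{\size{\ttmtwo\sub{\tvar}{\teig}}}{\size{\ttmtwo'\sub{\tvar}{\teig}}}+\Sigma.\]
Since substituting an eigenvariable for a variable preserves size, and $\size{\allf{\tvar}{\ttmtwo}}=1+\size{\ttmtwo}$, the first minimum exceeds the second by exactly one. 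This gives the required (in fact strict) inequality.

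The only delicate point I expect is in the head-form case: making sure the matching for $\allf{\tvar}{\ttmtwo'}$ applied to $\teig$ really yields $\ttmtwo'\sub{\tvar}{\teig}$ after suitable $\alpha$-renaming of the bound variable, and that $\size{\cdot}$ is invariant under substitution of one atom for another. Both facts are routine.
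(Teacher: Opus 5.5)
Your proof is correct and matches the paper's proof essentially step for step. You use the same case split on the shape of $\unf$ via \cref{lem:f:characterization_of_nfs} and the same witnesses $\unftwo$: $\unfthree\sub{\tvar}{\teig}$, the pseudo-redex $(\plamf{\pvar}{\unfthree})\,\teig$, and $\uhd\,\argus\,\teig$. You also rely on the same lemmas, \cref{lem:f:subst_by_teig}, \cref{lem:f:match_concat} and \cref{rem:f:match_deterministic}, and you likewise get strict inequality in the head-form case.

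One minor quibble concerns the type-abstraction case. Renaming the binder of $\pallfi{\tvar'}{\unfthree}$ to $\tvar$ requires $\tvar\notin\fv{\unf}$, not $\tvar\notin\fv{\penv}$. The hypothesis $\tvar\notin\fv{\penv}$ is instead what the size clause for $\pallfi{\tvar}{\cdot}$ and the equation $\penv\sub{\tvar}{\teig}=\penv$ need. Renaming both binders to a common fresh type variable resolves this.
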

\begin{proof}
We consider three cases depending on the shape of $\unf$,
according to \cref{lem:f:characterization_of_nfs}.

If $\unf$ is a \textbf{\headForm}:
  Let $\unf = \uhd\,\argus$
  where $\argus = (\argunf_1,\hdots,\argunf_n)$.
  Take $\unftwo := \uhd\,\argus\,\teig$, and observe that it is in normal form.
  Then we have that $\unf\,\teig = \unftwo$,
  and it suffices to show that
  $\sig{\unf}{\allf{\tvar}{\ttmtwo}}
   \geq \sig[\penv]{\unftwo}{\ttmtwo\sub{\tvar}{\teig}}$.
  Let $\mathcal{P}$ and $\mathcal{Q}$ be the following conditions:
  \[
    \begin{array}{lll}
      \mathcal{P} & \equiv &
      \begin{array}[t]{l}
        \text{$\uhd$ is a variable, $\uhd = \pvar$, and there exist $\ttmthree,\ttmthrees,\ttmthree'$}
      \\
        \text{such that
          $\pvar : \ttmthree \in \penv$ and $\match{\ttmthree}{\argus}{\ttmthrees}{\ttmthree'}$.
        }
      \end{array}
    \\
      \mathcal{Q} & \equiv &
      \begin{array}[t]{l}
        \text{$\uhd$ is a variable, $\uhd = \pvar$, and there exist $\ttmthree,\ttmfours,\ttmfour'$}
      \\
        \text{such that
          $\pvar : \ttmthree \in \penv$ and $\match{\ttmthree}{(\argus,\teig)}{\ttmfours}{\ttmfour'}$.
        }
      \end{array}
    \end{array}
  \]
  Note that if $\mathcal{Q}$ does not hold then it is immediate to conclude,
  since
  $\sig{\unftwo}{\ttmtwo\sub{\tvar}{\teig}}
  = \sig{\pvar\,\argus\,\teig}{\ttmtwo\sub{\tvar}{\teig}}
  = 0$,
  so
  $\sig{\unf}{\allf{\tvar}{\ttmtwo}}
   \geq 0
   = \sig{\unftwo}{\ttmtwo\sub{\tvar}{\teig}}$.
  Therefore, we may assume that $\mathcal{Q}$ holds.
  We claim that then $\mathcal{P}$ must also hold.
  Indeed, since $\mathcal{Q}$ holds
  we have that $\match{\ttmthree}{(\argus,\teig)}{\ttmfours}{\ttmfour'}$,
  which implies that $\mathcal{P}$ holds by \cref{lem:f:match_concat}.

  Since $\mathcal{P}$ and $\mathcal{Q}$ both hold,
  the situation must be (by \cref{lem:f:match_concat} and \cref{rem:f:match_deterministic})
  that $\uhd = \pvartwo$ and $\pvartwo : \ttmthree \in \penv$
  and $\match{\ttmthree}{\argus}{\ttmthrees}{(\allf{\tvar}{\ttmtwo'})}$
  and $\match{(\allf{\tvar}{\ttmtwo'})}{\teig}{\teig}{\ttmtwo'\sub{\tvar}{\teig}}$,
  where moreover we know that $\ttmthrees$ is of the form
  $\ttmthrees = (\ttmthree_1,\hdots,\ttmthree_n)$
  by \cref{rem:f:match_preserves_length}.
  Hence:
  \[
    \begin{array}{rcll}
    &&
      \sig{\unf}{\allf{\tvar}{\ttmtwo}}
    \\
    & = &
      \sig{\pvar\,\argus}{\allf{\tvar}{\ttmtwo}}
    \\
    & = &
      \minof{\size{\allf{\tvar}{\ttmtwo}}}{\size{\allf{\tvar}{\ttmtwo'}}}
      + \sum_{i=1}^{n} \sig{\argunf_i}{\ttmthree_i}
    \\
    & = &
      \minof{1+\size{\ttmtwo}}{1+\size{\ttmtwo'}}
      + \sum_{i=1}^{n} \sig{\argunf_i}{\ttmthree_i}
    \\
    & > &
      \minof{\size{\ttmtwo}}{\size{\ttmtwo'}}
      + \sum_{i=1}^{n} \sig{\argunf_i}{\ttmthree_i}
    \\
    & = &
      \minof{\size{\ttmtwo\sub{\tvar}{\teig}}}{\size{\ttmtwo'\sub{\tvar}{\teig}}}
      + \sum_{i=1}^{n} \sig{\argunf_i}{\ttmthree_i}
    \\
    & = &
      \minof{\size{\ttmtwo\sub{\tvar}{\teig}}}{\size{\ttmtwo'\sub{\tvar}{\teig}}}
      + \sum_{i=1}^{n} \sig{\argunf_i}{\ttmthree_i}
      + \sig{\teig}{\teig}
    \\
    & = &
      \sig{\pvar\,\argus\,\teig}{\ttmtwo\sub{\tvar}{\teig}}
    \\
    & = &
      \sig{\unftwo}{\ttmtwo\sub{\tvar}{\teig}}
    \end{array}
  \]

If $\unf$ is a \textbf{\proofAbstraction}:
  Let $\unf = \plamf{\pvar}{\unfthree}$.
  Take $\unftwo := (\plamf{\pvar}{\unfthree})\,\teig$ 
  and observe that it is in normal form.
  Note that $\unf\,\pvar = \unftwo$.
  Then
  $\sig{\unf}{\allf{\pvar}{\ttmtwo}}
  = \sig{\plamf{\pvar}{\unfthree}}{\allf{\pvar}{\ttmtwo}}
  = 0
  = \sig{(\plamf{\pvar}{\unfthree})\,\teig}{\ttmtwo\sub{\tvar}{\teig}}
  = \sig{\unftwo}{\ttmtwo\sub{\tvar}{\teig}}$.

If $\unf$ is a \textbf{\logicalAbstraction}:
  Let $\unf = \pallfi{\tvar}{\unfthree}$.
  Take $\unftwo := \unfthree\sub{\tvar}{\teig}$,
  where we remark that $\unfthree\sub{\tvar}{\teig}$ is in normal form
  because $\unfthree$ is pure.
  Note that $\unf\,\teig \to \unfthree\sub{\tvar}{\teig} = \unftwo$ 
  in exactly one step.
  Then
  $\sig{\unf}{\allf{\tvar}{\ttmtwo}}
   = \sig{\pallfi{\tvar}{\unfthree}}{\allf{\tvar}{\ttmtwo}}
   = \sig{\unfthree}{\ttmtwo}
   \eqby{\text{\cref{lem:f:subst_by_teig}}}
     \sig{\unfthree\sub{\tvar}{\teig}}{\ttmtwo\sub{\tvar}{\teig}}
   = \sig{\unftwo}{\ttmtwo\sub{\tvar}{\teig}}$.
\end{proof}

\begin{lemma}[Proof size of \arguments]
\label{lem:f:size_arguments}
Let $\argus = (\argu_1,\hdots,\argu_n)$ be a sequence of normal \arguments,
and let $\ttms = (\ttm_1,\hdots,\ttm_n)$ be a sequence of \logicalTerms.
Suppose that $\pvar:\ttm\in\penv$
and that $\match{\ttm}{\argus}{\ttms}{\teig}$.
Then for every index $1 \leq i \leq n$
we have that $\sig{\pvar\,\argus}{\teig} > \sig{\argu_i}{\ttm_i}$.
\end{lemma}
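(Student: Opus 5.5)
The claim follows by unfolding \cref{def:f:proof_size} at the variable-headed case. We also need the fact that the summands indexed by \logicalTerm arguments contribute nothing.

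First, since $\pvar:\ttm\in\penv$ and $\match{\ttm}{\argus}{\ttms}{\teig}$ holds, the first clause of \cref{def:f:proof_size} applies to $\pvar\,\argus$ with target type $\teig$. By \cref{rem:f:match_deterministic} the data $\ttms$ and the result type $\teig$ are uniquely determined, so there is no ambiguity in which clause is used. Hence
\[
  \sig{\pvar\,\argus}{\teig}
  = \minof{\size{\teig}}{\size{\teig}} + \sum_{j=1}^{n}\sig{\argu_j}{\ttm_j}
  = 1 + \sum_{j=1}^{n}\sig{\argu_j}{\ttm_j},
\]
using $\size{\teig} = 1$. By \cref{rem:f:match_preserves_length} the lists $\argus$ and $\ttms$ have the same length $n$, so the sum is well indexed.

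Second, every summand is a non-negative integer. This is immediate from the definition of proof size, which only ever yields $0$ or sums of $\min$'s of sizes and further proof sizes. For an argument $\argu_j$ that is a \logicalTerm, the summand is $0$ by convention. Therefore, for each fixed $i$,
\[
  \sig{\pvar\,\argus}{\teig} \;\geq\; 1 + \sig{\argu_i}{\ttm_i} \;>\; \sig{\argu_i}{\ttm_i}.
\]
When $\argu_i$ is a \logicalTerm, the right-hand side is $0$ and the inequality $1>0$ holds trivially.

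The only subtle point is making sure that the first clause of the definition is really the one that fires, rather than the fallback ``$0$ otherwise''. This requires that the hypothesis $\match{\ttm}{\argus}{\ttms}{\teig}$ is exactly the side condition of that clause, with $\ttm' = \teig$. One should also check that the head $\pvar$ is a variable and not a pseudo-redex, which holds by assumption. Beyond this there is no real obstacle; the lemma is the second-order analogue of \cref{lemma:sizeVar_argument}.
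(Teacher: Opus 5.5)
Your proposal is correct and matches the paper's proof. Both unfold the variable-headed clause of the proof-size definition to get $\size{\teig} + \sum_{j}\sig{\argu_j}{\ttm_j}$, and then note that a positive constant plus a sum of non-negative terms strictly exceeds each term. Your extra remarks on determinism of matching and on the zero convention for type inputs only make explicit what the paper calls immediate.
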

\begin{proof}
Immediate, since by definition:
\[
  \sig{\pvar\,\argus}{\teig}
  = \size{\teig} + \sum_{i=1}^{n} \sig{\argu_i}{\ttm_i}
  > \sig{\argu_i}{\ttm_i}
\]
\end{proof}

\begin{lemma}[Generator matching]
\label{lem:f:verif_gen_match}
Let $\argus = (\argu_1,\hdots,\argu_n)$ be a sequence of \arguments.
If $\verif{\teig}{(\gen{\ttm}\,\argus)} \tos \iunit$
then there exists a sequence of \logicalTerms $\ttms = (\ttm_1,\hdots,\ttm_n)$
such that $\match{\ttm}{\argus}{\ttms}{\teig}$.
Moreover, for every index $1 \leq i \leq n$
such that the \argument $\argu_i$ is a \metaterm,
we have that $\verif{\ttm_i}{\argu_i} \tos \iunit$.
\end{lemma}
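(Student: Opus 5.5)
We argue by induction on the length $n$ of the sequence of \arguments $\argus$, generalizing over $\ttm$. By standardization (\cref{prop:f:standardization}) the hypothesis can be replaced by $\verif{\teig}{(\gen{\ttm}\,\argus)} \tofs \iunit$. We then analyse the shape of $\ttm$, which is not a bare \logicalVariable because otherwise $\gen{\tvar}$ is inert and the reduction would get stuck.

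\textbf{Base case ($n = 0$).} We have $\verif{\teig}{\gen{\ttm}} \tofs \iunit$. The only rule that can fire at the root is rule~4, and it requires the argument of the verifier to weak-head reduce to $\gen{\teig}$. If $\ttm$ is an implication or a universal, then $\gen{\ttm}$ reduces only to an abstraction or a \logicalAbstraction, and these are $\tof$-irreducible and never become $\gen{\teig}$. If $\ttm$ is a different \logicalEigenvariable or a \logicalVariable, then $\gen{\ttm}$ is stuck. Hence $\ttm = \teig$, and $\match{\teig}{\emptyset}{\emptyset}{\teig}$ holds.

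\textbf{Inductive step, first \argument a \metaterm.} Write $\argus = (\argu_1,\argus')$ with $\argu_1 = \utm$.
\begin{itemize}
\item If $\ttm$ is a universal quantifier, then $\gen{\ttm}\,\utm$ reduces to $(\pallfi{\tvar}{\gen{\ttmtwo}})\,\utm$, which is stuck.
\item If $\ttm$ is atomic, then $\gen{\ttm}\,\utm\cdots$ is stuck.
\item So $\ttm = \ttm_1\imp\ttmtwo$, and the weak head reduction must pass through $\verif{\teig}{((\eunit{\verif{\ttm_1}{\utm}}{\gen{\ttmtwo}})\,\argus')}$.
\end{itemize}
Here we need a splitting lemma analogous to \cref{lem:splitting}: a weak head reduction of $\ctxof{\wctx}{\eunit{\utmthree}{\utmthree'}}$ to $\iunit$ yields $\utmthree \tofs \iunit$ and $\ctxof{\wctx}{\utmthree'} \tofs \iunit$. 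Determinism of $\tow$ fails in $\lambdaCheckF$, but the proof carries over using subcommutativity (\cref{prop:f:subcommutative}) together with confluence (\cref{prop:f:confluence}). A simpler route is to use confluence directly: if $\verif{\ttm_1}{\utm}$ never reached $\iunit$, then no reduct of the guard is $\iunit$, so the term could never reach $\iunit$. Either way we obtain $\verif{\ttm_1}{\utm} \tos \iunit$ and $\verif{\teig}{(\gen{\ttmtwo}\,\argus')} \tos \iunit$. The induction hypothesis on $\argus'$ gives $\ttms'$ with $\match{\ttmtwo}{\argus'}{\ttms'}{\teig}$, and we conclude with $\ttms = (\ttm_1,\ttms')$.

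\textbf{Inductive step, first \argument a \logicalTerm.} Write $\argu_1 = \ttmthree$. By the same stuckness reasoning, $\ttm$ must be $\allf{\tvar}{\ttmtwo}$. The term then reduces to $\verif{\teig}{(\gen{\ttmtwo\sub{\tvar}{\ttmthree}}\,\argus')}$, and by confluence this term also reduces to $\iunit$. The induction hypothesis applied to $\ttmtwo\sub{\tvar}{\ttmthree}$ and $\argus'$ gives the result with $\ttms = (\ttmthree,\ttms')$. Generalizing over $\ttm$ is what allows the induction hypothesis to apply to this substituted type.

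\textbf{Main obstacle.} The delicate part is the family of ``stuck'' claims. These are statements that certain \metaterms have no reduct (under full $\to$) equal to $\iunit$:
\begin{itemize}
\item $\verif{\teig}{\utmthree}$ where $\utmthree$ is headed by an abstraction, a stuck generator, or a guard whose test never succeeds;
\item an application of $\pallfi{\tvar}{\cdot}$ to a \metaterm;
\item an application of $\plamf{\pvar}{\cdot}$ to a \logicalTerm.
\end{itemize}
I would establish these by an invariant-preservation argument: the relevant head shape is preserved by every reduction step, so $\iunit$ is never a reduct. Rule~4 is the only way to produce $\iunit$ beneath $\verif{\teig}{}$, and it requires the argument to be literally $\gen{\teig}$.
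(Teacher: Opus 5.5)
Your proof is correct and follows essentially the paper's route. Both argue by induction on the number of inputs, generalized over $\ttm$, and split on the shape of $\ttm$. Wrong shapes are excluded because a stuck head persists under reduction. Confluence passes to the relevant reduct, and the induction hypothesis is then applied to $\ttmtwo$ or to $\ttmtwo\sub{\tvar}{\ttmthree}$.

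The paper uses neither standardization nor a splitting lemma. Your ``simpler route'' is the whole argument there: from $\verif{\ttm_1}{\utm}\tos\iunit$ (by persistence of the guard), confluence gives $\verif{\teig}{(\gen{\ttmtwo}\,\argus')}\tos\iunit$, which is exactly what the paper's one-word ``necessarily'' step leaves implicit.
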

\begin{proof}
We proceed by induction on the number of \arguments $n$,
\ie the length of $\argus$.
We consider four cases, depending on the shape of $\ttm$:

If $\ttm$ is a \textbf{\logicalVariable}, \ie $\ttm = \tvartwo$:
  We claim that this case is impossible.
  Indeed, we have that
  $\verif{\teig}{(\gen{\tvartwo}\,\argus)} \tos \iunit$.
  But the reducts of
  $\verif{\teig}{(\gen{\tvartwo}\,\argus)}$
  are of the form
  $\verif{\teig}{(\gen{\tvartwo}\,\argus')}$,
  which is a contradiction.
  Observe, in the particular case when $n = 0$,
  that $\verif{\teig}{\gen{\tvartwo}}$ is irreducible
  because $\tvartwo$ is not an \logicalEigenvariable, so $\tvartwo \neq \teig$.

If $\ttm$ is a \textbf{\logicalEigenvariable}, \ie $\ttm = \teigtwo$:
  There are two cases, depending on whether $\teig = \teigtwo$
  or $\teig \neq \teigtwo$.
  The proof when $\teig \neq \teigtwo$ is similar to the case
  in which $\ttm$ is an \logicalEigenvariable,
  \ie it is impossible to construct a reduction
  $\verif{\teig}{(\gen{\teigtwo}\,\argus)} \tos \iunit$.
  If $\teig = \teigtwo$,
  then we claim that $n = 0$.
  Indeed, if $n > 0$,
  all the reducts of
  $\verif{\teig}{(\gen{\teig}\,\argus)}$
  are of the form
  $\verif{\teig}{(\gen{\teig}\,\placeholder)}$,
  which is a contradiction.
  Hence $n = 0$
  and we have that $\match{\teig}{\emptyset}{\emptyset}{\teig}$.

If $\ttm$ is an \textbf{implication}, \ie $\ttm = (\ttmtwo\imp\ttmthree)$:
  Then
  $\verif{\teig}{(\gen{\ttmtwo\imp\ttmthree}\,\argus)} \tos \iunit$.
  Note that
  $\verif{\teig}{(\gen{\ttmtwo\imp\ttmthree}\,\argus)}
   \to
   \verif{\teig}{((\plamf{\pvar}{\eunit{\verif{\ttmtwo}{\pvar}}{\gen{\ttmthree}}})\,\argus)}
  $
  so by confluence~(\cref{prop:f:confluence}) 
  we have that
  $\verif{\teig}{((\plamf{\pvar}{\eunit{\verif{\ttmtwo}{\pvar}}{\gen{\ttmthree}}})\,\argus)}
   \tos \iunit$.
  We claim that $\argus$ must be of the form $\argus = (\utm,\argutwos)$ where
  the first element $\utm$ is a \metaterm, and $\argutwos$ is the tail.
  Indeed, note that $\argus$ cannot be empty
  because we would have that
  $\verif{\teig}{(\plamf{\pvar}{\eunit{\verif{\ttmtwo}{\pvar}}{\gen{\ttmthree}}})}
   \tos \iunit$,
  but this is impossible because all the reducts of
  $\verif{\teig}{(\plamf{\pvar}{\eunit{\verif{\ttmtwo}{\pvar}}{\gen{\ttmthree}}})}$
  are of the form
  $\verif{\teig}{(\plamf{\pvar}{\placeholder})}$, which is a contradiction.
  Moreover, $\argus$ cannot be of the form $\argus = (\ttmfour,\argutwos)$
  where $\ttmfour$ is a \logicalTerm, because we would have that
  $\verif{\teig}{((\plamf{\pvar}{\eunit{\verif{\ttmtwo}{\pvar}}{\gen{\ttmthree}}})\,\ttmfour\,\argutwos)}
   \tos \iunit$,
  but there are no rules to contract a pseudo-redex of the form
  $(\plamf{\pvar}{\placeholder})\,\ttmfour$,
  so all the reducts of
  $\verif{\teig}{((\plamf{\pvar}{\eunit{\verif{\ttmtwo}{\pvar}}{\gen{\ttmthree}}})\,\ttmfour\,\argutwos)}$
  are of the form
  $\verif{\teig}{((\plamf{\pvar}{\placeholder})\,\ttmfour\,\placeholder)}$,
  which is a contradiction.

  Therefore, $\argus = (\utm,\argutwos)$ where $\utm$ is a \metaterm,
  and we have that
  \[
    \verif{\teig}{((\plamf{\pvar}{\eunit{\verif{\ttmtwo}{\pvar}}{\gen{\ttmthree}}})\,\utm\,\argutwos)}
   \tos \verif{\teig}{((\eunit{\verif{\ttmtwo}{\utm}}{\gen{\ttmthree}})\,\argutwos)}
  \]
  so by confluence~(\cref{prop:f:confluence}) we have that
  $\verif{\teig}{((\eunit{\verif{\ttmtwo}{\utm}}{\gen{\ttmthree}})\,\argutwos)}
   \tos \iunit$.
  Then we necessarily have that
  $\verif{\ttmtwo}{\utm} \tos \iunit$
  and
  $\verif{\teig}{(\gen{\ttmthree}\,\argutwos)} \tos \iunit$.
  Since $\size{\argus} = 1 + \size{\argutwos}$,
  we may resort to the \ih, obtaining that there exists a sequence of types $\ttmthrees$ 
  such that $\match{\ttmthree}{\argutwos}{\ttmthrees}{\teig}$
  and such that $\verif{\ttmthree_i}{\argutwo_i} \tos \iunit$
  holds whenever $1 \leq i \leq n-1$ and $\argutwo_i$ is a \metaterm.
  To conclude, take $\ttms := (\ttmtwo,\ttmthrees)$
  where $\ttmtwo$ is the first element and $\ttmthrees$ is the tail.
  Observe that $\match{\ttm}{\argus}{\ttms}{\teig}$ holds
  because it means that
  $\match{(\ttmtwo\imp\ttmthree)}{(\utm,\argutwos)}{(\ttmtwo,\ttmthrees)}{\teig}$
  which in turn holds by definition,
  because we have that $\match{\ttmthree}{\argutwos}{\ttmthrees}{\teig}$.
  Moreover, the condition $\verif{\ttm_i}{\argu_i} \tos \iunit$
  holds whenever $1 \leq i \leq n$ and $\argu_i$ is a \metaterm.
  Indeed, when $i = 1$ this corresponds to the fact that $\verif{\ttmtwo}{\utm} \tos \iunit$,
  and when $i > 1$ this follows from the \ih.

If $\ttm$ is a \textbf{universal quantification}, \ie $\ttm = (\allf{\tvartwo}{\ttmtwo})$:
  Then we have that $\verif{\teig}{\gen{\allf{\tvartwo}{\ttmtwo}}\,\argus} \tos \iunit$,
  and we can note that
  $\verif{\teig}{\gen{\allf{\tvartwo}{\ttmtwo}}\,\argus}
   \to \verif{\teig}{((\pallfi{\tvartwo}{\gen{\ttmtwo}})\,\argus)}$
  so by confluence (\cref{prop:f:confluence}) we have that also
  $\verif{\teig}{((\pallfi{\tvartwo}{\gen{\ttmtwo}})\,\argus)} \tos \iunit$.
  We claim that $\argus$ must be of the form $\argus = (\ttmthree,\argutwos)$ where
  the first element $\ttmthree$ is a \logicalTerm, and $\argutwos$ is the tail.
  Indeed, note that $\argus$ cannot be empty because we would have that
  $\verif{\teig}{(\pallfi{\tvartwo}{\gen{\ttmtwo}})} \tos \iunit$,
  but this is impossible because all the reducts of
  $\verif{\teig}{(\pallfi{\tvartwo}{\gen{\ttmtwo}})}$
  are of the form $\verif{\teig}{(\pallfi{\tvartwo}{\placeholder})}$,
  which is a contradiction.
  Moreover, $\argus$ cannot be of the form $\argus = (\utm,\argutwos)$
  where $\utm$ is a \metaterm, because we would have that
  $\verif{\teig}{((\pallfi{\tvartwo}{\gen{\ttmtwo}})\,\utm\,\argutwos)} \tos \iunit$
  but there are no rules to contract a pseudo-redex of the form
  $(\pallfi{\tvartwo}{\placeholder})\,\utm$,
  so all the reducts of
  $\verif{\teig}{((\pallfi{\tvartwo}{\gen{\ttmtwo}})\,\utm\,\argutwos)}$
  are of the form
  $\verif{\teig}{((\pallfi{\tvartwo}{\placeholder})\,\utm'\,\placeholder)}$,
  which is a contradiction.

  Therefore, $\argus = (\ttmthree,\argutwos)$ where $\ttmthree$ is a \logicalTerm,
  and we have that
  \[
    \verif{\teig}{((\pallfi{\tvartwo}{\gen{\ttmtwo}})\,\ttmthree\,\argutwos)}
    \to
    \verif{\teig}{(\gen{\ttmtwo\sub{\tvartwo}{\ttmthree}}\,\argutwos)}
  \]
  so by confluence~(\cref{prop:f:confluence}) we have that
  $\verif{\teig}{(\gen{\ttmtwo\sub{\tvartwo}{\ttmthree}}\,\argutwos)} \tos \iunit$.
  Since $\size{\argus} = 1 + \size{\argutwos}$, we may resort to the \ih,
  obtaining that there exists a sequence of types $\ttmfours$
  such that $\match{\ttmtwo\sub{\tvartwo}{\ttmthree}}{\argutwos}{\ttmfours}{\teig}$,
  and such that $\verif{\ttmfour_i}{\argutwo_i} \tos \iunit$
  holds whenever $1 \leq i \leq n - 1$ and $\argutwo_i$ is a \metaterm.
  To conclude, take $\ttms := (\ttmthree,\ttmfours)$
  where $\ttmthree$ is the first element and $\ttmfours$ is the tail.
  Observe that $\match{\ttm}{\argus}{\ttms}{\teig}$ holds
  because it means that
  $\match{(\allf{\tvartwo}{\ttmtwo})}{(\ttmthree,\argutwos)}{(\ttmthree,\ttmfours)}{\teig}$
  which in turn holds by definition,
  because we have that
  $\match{\ttmtwo\sub{\tvartwo}{\ttmthree}}{\argutwos}{\ttmfours}{\teig}$.
  Moreover, the condition $\verif{\ttm_i}{\argu_i} \tos \iunit$
  holds whenever $1 \leq i \leq n$ and $\argu_i$ is a \metaterm,
  which follows directly from the \ih.
\end{proof}

\begin{lemma}[Reconstruction of typing derivation]
\label{lem:f:typing_derivation_reconstruction}
Let $\argus = (\argu_1,\hdots,\argu_n)$ be a sequence of \arguments
such that all \metaterms appearing in the sequence are pure,
and let $\ttms = (\ttm_1,\hdots,\ttm_n)$ be a sequence of \logicalTerms.
Suppose that $\judgf{\penv}{\ptm}{\ttm}$
and that $\match{\ttm}{\argus}{\ttms}{\ttmtwo}$.
Suppose moreover that $\judgf{\penv}{\argu_i}{\ttm_i}$
holds whenever $1 \leq i \leq n$ and $\argu_i$ is a \metaterm.
Then $\judgf{\penv}{\ptm\,\argus}{\ttmtwo}$.
\end{lemma}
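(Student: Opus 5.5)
We proceed by induction on the length $n$ of the sequence $\argus$, or more precisely on the derivation of $\match{\ttm}{\argus}{\ttms}{\ttmtwo}$, generalizing over $\ptm$ and $\ttm$. The idea is that input matching exactly mirrors a chain of elimination rules: each step of the matching derivation corresponds to one application of either the $\imp$-elimination rule or the $\forall$-elimination rule of $\lambdaF$. At each step we extend the current term by one more input and keep track of its type.

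In the base case $\argus = \emptyset$, the only matching derivation is $\match{\ttm}{\emptyset}{\emptyset}{\ttm}$. Hence $\ttmtwo = \ttm$ and $\ptm\,\argus = \ptm$, so the hypothesis $\judgf{\penv}{\ptm}{\ttm}$ is already the conclusion.

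For the inductive step we split on the first input.

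If $\argus = (\utm,\argutwos)$ with $\utm$ a \metaterm, inversion of the matching derivation gives $\ttm = (\ttm_1\imp\ttmthree)$, $\ttms = (\ttm_1,\ttmthrees)$ and $\match{\ttmthree}{\argutwos}{\ttmthrees}{\ttmtwo}$. Since $\utm$ is pure, it is a \proofTerm. The hypothesis gives $\judgf{\penv}{\utm}{\ttm_1}$, so the application rule yields $\judgf{\penv}{\ptm\,\utm}{\ttmthree}$. The remaining inputs $\argutwos$ satisfy the typing hypotheses with respect to $\ttmthrees$ (the indices simply shift by one). The induction hypothesis applied to $\ptm\,\utm$ then gives $\judgf{\penv}{\ptm\,\utm\,\argutwos}{\ttmtwo}$, which is $\judgf{\penv}{\ptm\,\argus}{\ttmtwo}$.

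If $\argus = (\ttmfour,\argutwos)$ with $\ttmfour$ a \logicalTerm, inversion gives $\ttm = \allf{\tvar}{\ttmthree}$, $\ttms = (\ttmfour,\ttmthrees)$ and $\match{\ttmthree\sub{\tvar}{\ttmfour}}{\argutwos}{\ttmthrees}{\ttmtwo}$. The $\forall$-elimination rule yields $\judgf{\penv}{\ptm\,\ttmfour}{\ttmthree\sub{\tvar}{\ttmfour}}$. In this case the statement imposes no typing hypothesis on the first input, since it is not a \metaterm. We conclude again by the induction hypothesis on $\ptm\,\ttmfour$ and the tail $\argutwos$.

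There is no real obstacle in this argument. The only points needing care are that the induction must be stated for an arbitrary head term $\ptm$ (it changes at each step), and that the index bookkeeping between $\argus$ and $\ttms$ is consistent. The latter is guaranteed by \cref{rem:f:match_preserves_length}, and \cref{rem:f:match_deterministic} ensures the inversion of the matching judgment is unambiguous.
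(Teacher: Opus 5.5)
Your proof is correct and follows the paper's argument: induction on the derivation of the matching judgment with the head term generalized, where the empty case is immediate and each $\imp$ or $\forall$ case applies one elimination rule and then the induction hypothesis. The appeal to determinism of matching is not needed, since the matching rules are already syntax-directed on the first input, but it does no harm.
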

\begin{proof}
By induction on the derivation of the judgment $\match{\ttm}{\argus}{\ttms}{\ttmtwo}$:
\begin{enumerate}
\item
  Suppose that the judgment is of the form $\match{\ttm}{\emptyset}{\emptyset}{\ttm}$,
  where $\argus = \emptyset$ and $\ttms = \emptyset$
  and $\ttmtwo = \ttm$.
  Then immediately we have that $\judgf{\penv}{\ptm}{\ttm}$ holds by hypothesis.
\item
  Suppose that the judgment $\match{(\ttmthree\imp\ttmfour)}{(\ptmtwo,\argutwos)}{(\ttmthree,\ttmfours)}{\ttmtwo}$
  is derived from $\match{\ttmfour}{\argutwos}{\ttmfours}{\ttmtwo}$,
  where $\ttm = (\ttmthree\imp\ttmfour)$
  and $\argus = (\ptmtwo,\argutwos)$ and $\ttms = (\ttmthree,\ttmfours)$.
  Then we know by hypothesis that $\judgf{\penv}{\ptm}{\ttmthree\imp\ttmfour}$
  and also that $\judgf{\penv}{\ptmtwo}{\ttmthree}$,
  so by the elimination of the implication we have $\judgf{\penv}{\ptm\,\ptmtwo}{\ttmfour}$.
  Applying the \ih on the judgments $\judgf{\penv}{\ptm\,\ptmtwo}{\ttmfour}$
  and $\match{\ttmfour}{\argutwos}{\ttmfours}{\ttmtwo}$,
  we obtain that $\judgf{\penv}{\ptm\,\ptmtwo\,\argutwos}{\ttmtwo}$,
  that is, $\judgf{\penv}{\ptm\,\argus}{\ttmtwo}$.
\item
  Suppose that the judgment $\match{(\allf{\tvar}{\ttmthree})}{(\ttmfour,\argutwos)}{(\ttmfour,\ttmfives)}{\ttmtwo}$
  is derived from $\match{\ttmthree\sub{\tvar}{\ttmfour}}{\argutwos}{\ttmfives}{\ttmtwo}$,
  where $\ttm = (\allf{\tvar}{\ttmthree})$
  and $\argus = (\ttmfour,\argutwos)$
  and $\ttms = (\ttmfour,\ttmfives)$.
  Then we know by hypothesis that $\judgf{\penv}{\ptm}{\allf{\tvar}{\ttmthree}}$
  so by the elimination of the second-order universal quantifier we have
  $\judgf{\penv}{\ptm\,\ttmfour}{\ttmthree\sub{\tvar}{\ttmfour}}$.
  Applying the \ih on the judgments $\judgf{\penv}{\ptm\,\ttmfour}{\ttmthree\sub{\tvar}{\ttmfour}}$
  and $\match{\ttmthree\sub{\tvar}{\ttmfour}}{\argutwos}{\ttmfives}{\ttmtwo}$,
  we obtain that $\judgf{\penv}{\ptm\,\ttmfour\,\argutwos}{\ttmtwo}$,
  that is, $\judgf{\penv}{\ptm\,\argus}{\ttmtwo}$.
\end{enumerate}
\end{proof}

\begin{theorem}[Completeness]
\label{a:thm:f:completeness}
If $\verif{\ttm}{\subs{\utm}{\penv}} \tos \iunit$
and $\utm$ is good,
then its $\beta$-normal form is a \proofTerm $\utm^\downarrow$ such
that $\judgf{\penv}{\utm^\downarrow}{\ttm}$.
\end{theorem}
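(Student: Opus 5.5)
We proceed by induction on the lexicographic pair $(\sig{\unf}{\ttm},\size{\ttm})$, where $\unf$ is the $\beta$-normal form of $\utm$ (it exists since $\utm$ is good, and it is a normal \proofTerm by \cref{lem:f:characterization_of_nfs}). By confluence~(\cref{prop:f:confluence}) we may replace $\utm$ by $\unf$ throughout: $\verif{\ttm}{\subs{\utm}{\penv}} \tos \iunit$ implies $\verif{\ttm}{\subs{\unf}{\penv}} \tos \iunit$. The proof is a case analysis on the shape of $\ttm$. We assume, as in the footnote to \cref{thm:f:soundness}, that free \logicalVariables of $\ttm$ and $\penv$ are handled uniformly. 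In particular, a bare \logicalVariable $\ttm=\tvar$ is impossible, since $\verif{\tvar}{\placeholder}$ can only reduce inside its argument.

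\textbf{Implication, $\ttm=\ttmtwo\imp\ttmthree$.} Rule~6 together with confluence gives $\verif{\ttmthree}{(\subs{\unf}{\penv}\,\gen{\ttmtwo})} \tos \iunit$. This means $\verif{\ttmthree}{\subs{(\unf\,\pvar)}{\penv,\pvar:\ttmtwo}} \tos \iunit$ for a fresh $\pvar$. By \cref{lem:f:size_proof_applications}, $\unf\,\pvar \tobetas \unftwo$ with $\sig{\unf}{\ttmtwo\imp\ttmthree} \geq \sig[\penv,\pvar:\ttmtwo]{\unftwo}{\ttmthree}$. Since $\size{\ttmthree}<\size{\ttm}$, the lexicographic measure decreases, and the \ih yields $\judgf{\penv,\pvar:\ttmtwo}{\unftwo}{\ttmthree}$. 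Hence $\judgf{\penv}{\plamf{\pvar}{\unftwo}}{\ttm}$. Now $\plamf{\pvar}{\unftwo} \mathrel{=_\beta} \plamf{\pvar}{\unf\,\pvar} \toeta \unf$, and as in the proof of \cref{thm:completeness} (confluence of $\beta$, commutation of $\beta$ and $\eta$, $\unf$ normal) we obtain $\plamf{\pvar}{\unftwo}\tobetas\toetas\unf$. Subject reduction for System~F then gives $\judgf{\penv}{\unf}{\ttm}$.

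\textbf{Universal quantification, $\ttm=\allf{\tvar}{\ttmtwo}$.} Rule~8 gives $\freshf{\teig}{\verif{\ttmtwo\sub{\tvar}{\teig}}{(\subs{\unf}{\penv}\,\teig)}}\tos\iunit$. We must strip the $\nu$. Every reduct keeps the binder until rule~9 fires, and the body reduces independently, so the body must reduce to $\iunit$. By \cref{lem:f:size_logical_applications}, $\unf\,\teig\tobetas\unftwo$ with a non-increasing proof size. The type $\ttmtwo\sub{\tvar}{\teig}$ is smaller than $\ttm$, so the \ih gives $\judgf{\penv}{\unftwo}{\ttmtwo\sub{\tvar}{\teig}}$. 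We then replace the eigenvariable $\teig$, which is fresh for $\penv$, by $\tvar$ in the derivation, apply $\forall$-introduction, and conclude by $\eta$ and subject reduction as in the previous case.

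\textbf{Atomic, $\ttm=\teig$.} Here $\unf$ cannot be an abstraction of either kind, because its reducts stay abstractions and never become $\gen{\teig}$. It also cannot be headed by a pseudo-redex: $(\plamf{\pvar}{\placeholder})\,\ttmthree$ and $(\pallfi{\tvar}{\placeholder})\,\unftwo$ are stuck, and substitution of generators keeps them stuck. So $\unf=\pvar\,\argus$. The variable $\pvar$ lies in $\dom{\penv}$, since otherwise the head stays $\pvar$. Write $\pvar:\ttmthree\in\penv$; then $\verif{\teig}{(\gen{\ttmthree}\,\subs{\argus}{\penv})}\tos\iunit$. \Cref{lem:f:verif_gen_match} gives $\match{\ttmthree}{\subs{\argus}{\penv}}{\ttms}{\teig}$ together with $\verif{\ttm_i}{\subs{\argu_i}{\penv}}\tos\iunit$ for each \metaterm argument. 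By \cref{rem:f:match_coherent}, and because \logicalTerm arguments are unaffected by $\penv$, this is the same as $\match{\ttmthree}{\argus}{\ttms}{\teig}$. \Cref{lem:f:size_arguments} shows $\sig{\unf}{\teig}>\sig{\argu_i}{\ttm_i}$, so the \ih gives $\judgf{\penv}{\argu_i}{\ttm_i}$; the normal forms of the $\argu_i$ are the $\argu_i$ themselves. \Cref{lem:f:typing_derivation_reconstruction} then yields $\judgf{\penv}{\unf}{\teig}$.

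\textbf{Main obstacle.} The delicate points are the stuck-term arguments excluding pseudo-redex heads and bare type variables, and the $\nu$-stripping in the $\forall$ case. The $\nu$-stripping needs a small lemma: if $\freshf{\teig}{\utmthree}\tos\iunit$ then $\utmthree\tos\iunit$. We would prove it by standardization~(\cref{prop:f:standardization}) and induction on the length of the weak head reduction. We expect this lemma, together with the eigenvariable-to-variable renaming, to be the main technical burden.
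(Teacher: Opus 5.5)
Your proposal is correct and matches the paper's proof essentially step for step. It uses the same lexicographic induction on $(\sig{\unf}{\ttm},\size{\ttm})$, the same case split on the shape of $\ttm$ (with $\ttm=\tvar$ impossible), and the same auxiliary lemmas in each case, ending with \cref{lem:f:typing_derivation_reconstruction} in the eigenvariable case. The $\nu$-stripping step you single out, which the paper simply asserts, needs no standardization: the only root redex of $\freshf{\teig}{\utmthree}$ is rule~9, so any reduction to $\iunit$ is a sequence of steps inside the body followed by one root rule-9 step, which is exactly the argument you already sketch in the $\forall$ case.
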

\begin{proof}
Let $\unf$ denote the $\beta$-normal form of $\utm$ and proceed by induction
on the lexicographic pair $(\sig{\unf}{\ttm},\size{\ttm})$,
where $\size{\ttm}$ denotes the size of $\ttm$.
We consider four cases, depending on the shape of $\ttm$.

If $\ttm$ is an \textbf{implication}, \ie $\ttm = (\ttmtwo\imp\ttmthree)$:
  Note that
  $\verif{\ttm}{\subs{\utm}{\penv}}
   = \verif{\ttmtwo\imp\ttmthree}{\subs{\utm}{\penv}}
   \to \verif{\ttmthree}{(\subs{\utm}{\penv}\,\gen{\ttmtwo})}$,
  so by confluence~(\cref{prop:f:confluence}) we have that
   $\verif{\ttmthree}{(\subs{\utm}{\penv}\,\gen{\ttmtwo})}
   \tos \iunit$.
  Let $\pvar$ be a fresh \proofVariable,
  consider the \proofEnvironment $\penvtwo := (\penv,\pvar:\ttmtwo)$,
  and the \metaterm $\utm\,\pvar$,
  and note by \cref{lem:f:size_proof_applications}
  that there exists a pure \metaterm $\unftwo$ in $\beta$-normal form
  such that $\utm\,\pvar \tos \unf\,\pvar \tos \unftwo$,
  and
  $\sig{\unf}{\ttm}
   = \sig{\unf}{\ttmtwo\imp\ttmthree}
   \geq \sig[\penv,\pvar:\ttmtwo]{\unftwo}{\ttmthree}
   = \sig[\penvtwo]{\unftwo}{\ttmthree}$.
  We claim that we are now in a position to apply the \ih
  on $\verif{\ttmthree}{(\subs{(\unftwo)}{\penvtwo})}$:
  \begin{itemize}
  \item
    First, from the fact that $\utm\,\pvar \tos \unftwo$,
    we obtain that $\subs{(\utm\,\pvar)}{\penvtwo} \tos \subs{(\unftwo)}{\penvtwo}$,
    and since we know that
     $\verif{\ttmthree}{(\subs{(\utm\,\pvar)}{\penvtwo})}
     = \verif{\ttmthree}{(\subs{(\utm\,\pvar)}{\penv,\pvar:\ttmtwo})}
     = \verif{\ttmthree}{(\subs{\utm}{\penv}\,\gen{\ttmtwo})}
     \tos \iunit$,
    by confluence~(\cref{prop:f:confluence})
    we obtain that
    $\verif{\ttmthree}{(\subs{(\unftwo)}{\penvtwo})}
     = \verif{\ttmthree}{(\subs{(\unftwo)}{\penv,\pvar:\ttmthree})}
     \tos \iunit$.
  \item
    Second, note that the pair
    $(\sig{\unf}{\ttm},\size{\ttm}) = (\sig{\unf}{\ttm},\size{\ttmtwo\imp\ttmthree})$
    is strictly greater than
    $(\sig[\penvtwo]{\unftwo}{\ttmthree},\size{\ttmthree})$
    according to the lexicographic order,
    because we have already shown that
    $\sig{\unf}{\ttm} \geq \sig[\penvtwo]{\unftwo}{\ttmthree}$
    and moreover
    $\size{\ttmtwo\imp\ttmthree} > \size{\ttmthree}$.
  \end{itemize}
  Hence by \ih we obtain that $\judgf{\penvtwo}{\unftwo}{\ttmthree}$.
  This means that $\judgf{\penv,\pvar:\ttmtwo}{\unftwo}{\ttmthree}$,
  so using the introduction rule for the implication
  we have that $\judgf{\penv}{\plamf{\pvar}{\unftwo}}{\ttmtwo\imp\ttmthree}$.
  To conclude,
  note that
  $\lam{\pvar}{\utmtwo^\downarrow} \tobetainvs \lam{\pvar}{\utm^\downarrow\,\pvar} \toeta \utm^\downarrow$.
  Since $\tobeta$ and $\toeta$ commute, and $\utm^\downarrow$
  is in $\beta$-normal form, we have that
  $\lam{\pvar}{\utmtwo^\downarrow} \toetas \utm^\downarrow$.
  Hence, by subject reduction $\judgf{\penv}{\unf}{\ttmtwo\imp\ttmthree}$.

If $\ttm$ is a \textbf{universal quantification}, \ie $\ttm = (\allf{\tvar}{\ttmtwo})$:
  Assume by $\alpha$-conversion that $\tvar \notin \fv{\penv}$.
  Note that
  $\verif{\ttm}{\subs{\utm}{\penv}}
  = \verif{\allf{\tvar}{\ttmtwo}}{\subs{\utm}{\penv}}
  \to \freshf{\teig}{\verif{\ttmtwo\sub{\tvar}{\teig}}{(\subs{\utm}{\penv}\,\teig)}}$,
  where $\teig$ is assumed to be fresh in the sense that $\teig \notin\fv{\ttmtwo,\utm,\penv}$,
  so by confluence~(\cref{prop:f:confluence}) we have that
  $\freshf{\teig}{\verif{\ttmtwo\sub{\tvar}{\teig}}{(\subs{\utm}{\penv}\,\teig)}}
   \tos \iunit$.
  This in turn implies that
  $\verif{\ttmtwo\sub{\tvar}{\teig}}{(\subs{\utm}{\penv}\,\teig)} \tos \iunit$.
  Consider the \metaterm $\utm\,\teig$,
  and note by \cref{lem:f:size_logical_applications}
  that there exists a pure \metaterm $\unftwo$ in $\beta$-normal form
  such that $\utm\,\teig \tos \unf\,\teig \tos \unftwo$,
  and
  $\sig{\unf}{\ttm}
   = \sig{\unf}{\allf{\tvar}{\ttmtwo}}
   \geq \sig{\unftwo}{\ttmtwo\sub{\tvar}{\teig}}$.
  We claim that we are now in a position to apply the \ih on
  $\verif{\ttmtwo\sub{\tvar}{\teig}}{(\subs{(\unftwo)}{\penv})}$.
  \begin{itemize}
  \item
    First, from the fact that $\utm\,\teig \tos \unftwo$
    we obtain that $\subs{(\utm\,\teig)}{\penv} \tos \subs{(\unftwo)}{\penv}$,
    and since we know that
    $\verif{\ttmtwo\sub{\tvar}{\teig}}{(\subs{(\utm\,\teig)}{\penv})}
     = \verif{\ttmtwo\sub{\tvar}{\teig}}{(\subs{\utm}{\penv}\,\teig)}
     \tos \iunit$, by confluence~(\cref{prop:f:confluence})
    we obtain that
    $\verif{\ttmtwo\sub{\tvar}{\teig}}{(\subs{(\unftwo)}{\penv})}
     \tos \iunit$.
  \item
    Second, note that the pair
    $(\sig{\unf}{\ttm},\size{\ttm})
     = (\sig{\unf}{\allf{\tvar}{\ttmtwo}},\size{\allf{\tvar}{\ttmtwo}})$
    is strictly greater than the pair
    $(\sig{\unftwo}{\ttmtwo\sub{\tvar}{\teig}},\size{\ttm\sub{\tvar}{\teig}})$
    according to the lexicographic order,
    because we have already shown that
    $\sig{\unf}{\allf{\tvar}{\ttmtwo}} \geq \sig{\unftwo}{\ttmtwo\sub{\tvar}{\teig}}$
    and moreover
    $\size{\allf{\tvar}{\ttmtwo}}
    > \size{\ttmtwo}
    = \size{\ttmtwo\sub{\tvar}{\teig}}$.
  \end{itemize}
  Hence by \ih we obtain that
  $\judgf{\penv}{\unftwo}{\ttmtwo\sub{\tvar}{\teig}}$.
  Since we know that $\tvar\notin\fv{\penv}$,
  renaming $\teig$ by $\tvar$, this means that
  $\judgf{\penv}{\unftwo\sub{\teig}{\tvar}}{\ttmtwo}$,
  and using the introduction rule for the second-order universal quantifier
  we have that $\judgf{\penv}{\pallfi{\tvar}{\unftwo\sub{\teig}{\tvar}}}{\allf{\tvar}{\ttmtwo}}$.
  To conclude,
  note that
  $\pallfi{\tvar}{\utmtwo\sub{\teig}{\tvar}}
   \tobetainvs \pallfi{\tvar}{\utm^\downarrow\,\tvar}
   \toeta \unf$.
  Since $\tobeta$ and $\toeta$ commute, and $\unf$
  is in $\beta$-normal form, we have that
  $\pallfi{\tvar}{\unftwo\sub{\teig}{\tvar}} \toetas \unf$.
  Hence, by subject reduction $\judgf{\penv}{\unf}{\allf{\tvar}{\ttmtwo}}$.

If $\ttm$ is a \textbf{\logicalVariable}, \ie $\ttm = \tvar$:
  We claim that this case is impossible.
  Indeed, by hypothesis, $\verif{\tvar}{\subs{\utm}{\penv}} \tos \iunit$,
  but the outermost $\tvar$-verifier is persistent,
  \ie all the reducts of $\verif{\tvar}{\subs{\utm}{\penv}}$
  are of the form $\verif{\tvar}{\placeholder}$, which is a contradiction.

If $\ttm$ is an \textbf{\logicalEigenvariable}, $\ttm = \teig$:
  Then $\verif{\teig}{\subs{\utm}{\penv}} \tos \iunit$
  so by confluence
  $\verif{\teig}{(\subs{(\unf)}{\penv})} \tos \iunit$.
  According to the characterization of normal forms given by \cref{lem:f:characterization_of_nfs},
  $\unf$ might be of three possible shapes,
  namely either $\unf$ is a \proofAbstraction ($\unf = \plamf{\pvar}{\unftwo}$)
  or a \logicalAbstraction ($\unf = \pallfi{\tvar}{\unftwo}$)
  or a neutral \proofTerm ($\unf = \uhd\,\argunf_1\hdots\argunf_n$).
  We claim that $\unf$ cannot be a \proofAbstraction,
  as this implies that
  $\verif{\teig}{(\plamf{\pvar}{(\subs{(\unftwo)}{\penv})})} \tos \iunit$,
  but there are no rewriting rules that allow forming a redex that involves
  the outermost $\teig$-verifier and the \proofAbstraction,
  so all the reducts of $\verif{\teig}{(\plamf{\pvar}{(\subs{(\unftwo)}{\penv})})}$
  are of the form $\verif{\teig}{(\plamf{\pvar}{\placeholder})}$,
  which is a contradiction.
  Similarly, $\unf$ cannot be a \logicalAbstraction,
  as we would have that
  $\verif{\teig}{(\pallfi{\tvar}{(\subs{(\unftwo)}{\penv})})} \tos \iunit$,
  but this is impossible.

  Therefore, $\unf$ must be a neutral \proofTerm,
  \ie $\unf = \uhd\,\argus$,
  where $\uhd$ is a \headForm
  and $\argus = (\argu_1,\hdots,\argu_n)$ is a sequence of \arguments\footnote{Note that
  each \argument $\argu_i$ is a normal \argument; we write just $\argu_i$ rather than
  the more explicit $\argunf_i$ to alleviate the notation.}.
  Again, we have three possibilities,
  either $\uhd$ is formed by a \proofAbstraction and a \logicalApplication
  ($\uhd = (\plamf{\pvar}{\unftwo})\,\ttm$),
  or by a \logicalAbstraction and a \proofApplication
  ($\uhd = (\pallfi{\tvar}{\unftwo})\,\unfthree$),
  or $\uhd$ is a \proofVariable ($\uhd = \pvar$).
  We claim that $\uhd$ must be a variable.
  Indeed, the case $\uhd = (\plamf{\pvar}{\unftwo})\,\ttm$
  is impossible, as we would have that
  $\verif{\teig}{(
    (\plamf{\pvar}{\subs{(\unftwo)}{\penv}})\,\ttm\,\subs{\argus}{\penv}} \tos \iunit$,
  but there are no rewriting rules that allow contracting a
  pseudo-redex of the form $(\plamf{\pvar}{\placeholder})\,\ttm$,
  so all the reducts of
  $\verif{\teig}{(
    (\plamf{\pvar}{\subs{(\unftwo)}{\penv}})\,\ttm\,\subs{\argus}{\penv})}$
  must necessarily be of the form
  $\verif{\teig}{(
    (\plamf{\pvar}{\placeholder})\,\ttm\,\placeholder
   )}$,
  which is a contradiction.
  Similarly, the case $\uhd = (\pallfi{\tvar}{\unftwo})\,\unfthree$
  is impossible.

  Therefore, $\uhd$ must be a a variable,
  and $\unf = \pvar\,\argus$. Note that we have that
  $\verif{\teig}{(
     \subs{\pvar}{\penv}\,\subs{\argus}{\penv}
   )} \tos \iunit$.
  We claim that $\pvar \in \dom{\penv}$.
  Indeed, if $\pvar \notin \dom{\penv}$
  we would have that
  $\verif{\teig}{(
     \pvar\,\subs{\argus}{\penv}
   )} \tos \iunit$,
  but all the reducts of
  $\verif{\teig}{(
     \pvar\,\subs{\argus}{\penv}
   )}$
  must be of the form
  $\verif{\teig}{(
     \pvar\,\placeholder
   )}$, a contradiction.
  Hence there exists a \logicalTerm $\ttmtwo$
  such that $\pvar:\ttmtwo \in \penv$,
  and we have that
  $\verif{\teig}{(
     \gen{\ttmtwo}\,\subs{\argus}{\penv}
   )} \tos \iunit$.
  By \cref{lem:f:verif_gen_match}
  this implies that there exists a sequence of \logicalTerms
  $\ttmtwos = (\ttmtwo_1,\hdots,\ttmtwo_n)$
  such that
  $\match{\ttmtwo}{\subs{\argus}{\penv}}{\ttmtwos}{\teig}$
  and such that
  for every index $1 \leq i \leq n$ 
  such that the \argument $\subs{\argu_i}{\penv}$ is a \metaterm
  we have that
  $\verif{\ttmtwo_i}{\subs{\argu_i}{\penv}} \tos \iunit$.
  Remark, for each index $1 \leq i \leq n$, that
  $\subs{\argu_i}{\penv}$ is a \metaterm
  if and only if $\argu_i$ is a \metaterm.
  Now note that
  $\match{\ttmtwo}{\argus}{\ttmtwos}{\teig}$
  holds by \cref{rem:f:match_coherent}
  since $\argus \coherent \subs{\argus}{\penv}$.
  Hence, for every index $1 \leq i \leq n$ such that $\argu_i$ is a \metaterm
  we have that
  $\sig{\unf}{\teig}
  = \sig{\pvar\,\argus}{\teig}
  > \sig{\argu_i}{\ttmtwo_i}$
  by \cref{lem:f:size_arguments}.
  In particular, the pair
  $(\sig{\unf}{\teig}, \size{\teig})$
  is strictly greater than the pair
  $(\sig{\argu_i}{\ttmtwo_i}, \size{\ttmtwo_i})$
  according to the lexicographic order,
  because the first component strictly decreases.

  We may apply the \ih to obtain that $\judgf{\penv}{\argunf_i}{\ttmtwo_i}$
  holds for every $1 \leq i \leq n$ such that $\argu_i$ is a \metaterm.
  Observe that $\argu_i$ must be in $\beta$-normal form
  because $\unf$ is a $\beta$-normal form and $\unf = \var\argus$,
  so $\argunf_i = \argu_i$.
  Finally, since we have $\match{\ttmtwo}{\argus}{\ttmtwos}{\teig}$,
  and $\judgf{\penv}{\pvar}{\ttmtwo}$ (since $\pvar:\ttmtwo \in \penv$),
  and $\judgf{\penv}{\argu_i}{\ttmtwo_i}$
  holds for each \metaterm in the sequence $\argus$,
  by \cref{lem:f:typing_derivation_reconstruction}
  we conclude that $\judgf{\penv}{\pvar\,\argus}{\teig}$.
  This means that $\judgf{\penv}{\unf}{\teig}$, as required.
\end{proof}

\section{Second-Order Logic --- Properties of $\lambdaCheckFw$}

\subsection{Higher-Order: Soundness (\cref{thm:hol:soundness})}
  \label{a:sec:hol:soundness}
  
In this section, we show that $\lambdaFw$ is sound with respect to the
\epistemic realizability semantics induced by $\lambdaCheckFw$.

A straightforward observation is that \logicalTerms have unique \kinds under the global
\kindAssignment, \ie if $\tjudg{\kiasig}{\ttm}{\ki_1}$ and $\tjudg{\kiasig}{\ttm}{\ki_2}$
then $\ki_1 = \ki_2$.

\begin{lemma}[Reduction preserves well-kindedness]
If $\wkjudg{\utm}$ and $\utm \to \utm'$
then $\wkjudg{\utm'}$.
\end{lemma}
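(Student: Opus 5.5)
We prove the claim by induction on the derivation of the step $\utm \to \utm'$, i.e.\ on the context in which a root step is performed. First we isolate two auxiliary substitution facts about well-kindedness, each proved by a routine induction on the derivation of $\wkjudg{\cdot}$.
\begin{enumerate}
\item If $\wkjudg{\utm}$ and $\wkjudg{\utmtwo}$, then $\wkjudg{\utm\sub{\pvar}{\utmtwo}}$. This holds because proof variables carry no kinding constraints.
\item If $\wkjudg{\utm}$, $\kiasig(\tvar) = \ki$ and $\tjudg{\kiasig}{\ttm}{\ki}$, then $\wkjudg{\utm\sub{\tvar}{\ttm}}$. This rests on the standard kinding substitution lemma for $\lambdaFw$ types: if $\tjudg{\tenv,\tvar:\ki}{\ttmtwo}{\kitwo}$ and $\tjudg{\tenv}{\ttm}{\ki}$ then $\tjudg{\tenv}{\ttmtwo\sub{\tvar}{\ttm}}{\kitwo}$. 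The lemma is used at each generator, verifier and \logicalApplication node.
\end{enumerate}
Since \metaterms are classes modulo \logicalBetaEquivalence, we also note that $\wkjudg{\cdot}$ is invariant under $\equiv$. This follows because $\beta$-conversion of well-kinded types preserves kinds (subject reduction at the type level, together with uniqueness of kinds under $\kiasig$, as observed just above).

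Next comes the root case, treating each rule in turn.
\begin{itemize}
\item Rule~1 follows from fact (1).
\item For rule~2, the side condition $\tjudg{\kiasig}{\ttm}{\ki}$, together with $\kiasig(\tvar)=\ki$ (available since $\wkjudg{\palli{\tvar}{\ki}{\utm}}$), gives the hypotheses of fact (2).
\item Rules~3, 4 and 9 are immediate: the reduct is a subterm or $\iunit$.
\item For rule~5, from $\tjudg{\kiasig}{\ttm\imp\ttmtwo}{\Prop}$ we invert to get $\ttm,\ttmtwo:\Prop$. Hence $\gen{\ttm}$ and $\verif{\ttmtwo}{(\utm\,\gen{\ttm})}$ are well-kinded.
\item Rule~7 is handled the same way, building $\plam{\pvar}{(\eunit{\verif{\ttm}{\pvar}}{\gen{\ttmtwo}})}$.
\item For rule~8, inversion gives $\tjudg{\kiasig}{\ttm}{\Prop}$ with $\kiasig(\tvar)=\ki$, so $\palli{\tvar}{\ki}{\gen{\ttm}}$ is well-kinded.
\item For rule~6, we choose the fresh $\teig$ with $\kiasig(\teig)=\ki$, which is possible since $\kiasig$ has enough eigenvariables of each kind. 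Then $\tjudg{\kiasig}{\ttm\sub{\tvar}{\teig}}{\Prop}$ by kinding substitution, $\tjudg{\kiasig}{\teig}{\ki}$ makes $\utm\,\teig$ well-kinded, and $\fresh{\teig}{\ki}{\cdot}$ is well-kinded.
\end{itemize}

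The congruence cases follow directly from the induction hypothesis, since every $\wkjudg{}$ rule is compositional in its \metaterm premises and the side conditions on types and kinds are untouched by reducing a \metaterm subterm.

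The main obstacle is the interaction with the quotient. A representative of the redex class may only become a syntactic redex after $\beta$-normalizing its types, for instance when the type in $\verif{\ttm}{\cdot}$ is only $\beta$-equal to an implication. We therefore argue on $\beta$-normal representatives of the types, which exist uniquely for well-kinded types, and check that inversion of kinding is valid there. For example, if $(\tlam{\tvartwo}{\kitwo}{\ttmthree})\,\ttmfour \eqbeta \ttm\imp\ttmtwo$ has kind $\Prop$, then $\ttm$ and $\ttmtwo$ have kind $\Prop$; this follows from subject reduction and confluence of type-level $\beta$. Once we commit to normal representatives, the rule-by-rule analysis above goes through unchanged.
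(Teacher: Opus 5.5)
Your proof is correct. The paper states this lemma without proof, so there is no argument of its own to compare against. Its only remark is that the side condition on rule~2 was added so that this lemma holds. Your argument is the routine induction being left implicit: substitution lemmas for $\wkjudg{\cdot}$ under proof- and type-substitution, a rule-by-rule root analysis, and compositional congruence cases. You also correctly isolate the kind-sensitive points. One is the side condition $\tjudg{\kiasig}{\ttm}{\ki}$ of rule~2, since well-kindedness of $\utm\,\ttm$ alone only gives \emph{some} kind for $\ttm$. The other is choosing the bound eigenvariable in rule~6 and the binder in rule~8 with the matching kind in $\kiasig$.

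Your final paragraph is an unnecessary detour, and its justification is slightly off. The rule-by-rule analysis never uses normality of the types. Whenever a rule fires on a well-kinded pre-\metaterms, the type it inspects is syntactically of the required shape and is itself well-kinded. So kinding inversion applies directly to whichever representative witnesses the step.

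Likewise, invariance of $\wkjudg{\cdot}$ under $\equiv$ holds by definition, since $\equiv$ only relates well-kinded pre-\metaterms. It does not follow from subject reduction and confluence, because $\beta$-conversion between raw types can erase ill-kinded subterms. For instance, suppose $\teigtwo$ has kind $\Prop$ and $\tvarthree$ has a base kind $\bki$. Then $\teigtwo\imp\teigtwo \eqbeta \teigtwo\imp(\tlam{\tvarthree}{\bki}{\teigtwo})\,\teigtwo$, although the second component of the latter is ill-kinded. For the same reason, your example about $\ttm\imp\ttmtwo$ is valid only when $\ttm\imp\ttmtwo$ is a reduct of the well-kinded type, such as its normal form. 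It does not hold for an arbitrary $\eqbeta$-equal type.
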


\begin{definition}[Sets of valid expressions]
We write $\ValidKinds$ for the set of all \kinds.
A \logicalTerm is \emph{ground} if it has no free \logicalVariables
(but note that it may have free \logicalEigenvariables).
We write $\ValidTypes$ for the set of \emph{valid \logicalTerm/\kind pairs},
defined as $\ValidTypes \eqdef \set{(\ttm,\ki) \ST \tjudg{\kiasig}{\ttm}{\ki}}$.
The subset of \emph{valid ground \logicalTerm/\kind pairs}
is defined by
$\GValidTypes := \set{(\ttm,\ki) \ST \ttm \text{ is ground} \land (\ttm,\ki) \in \ValidTypes}$.
We write $\ValidTypesK{\ki}$ for the set of \emph{valid \logicalTerms of \kind $\ki$},
defined as $\ValidTypesK{\ki} \eqdef \set{\ttm \ST \tjudg{\kiasig}{\ttm}{\ki}}$.
The subset of \emph{valid ground \logicalTerms of \kind $\ki$}
is defined by $\GValidTypesK{\ki} \eqdef \set{\ttm \ST \ttm \text{ is ground} \land \tjudg{\kiasig}{\ttm}{\ki}}$.
Sometimes we write ``$\ttm:\ki$'' instead of ``$(\ttm,\ki)$'' for
elements of $\ValidTypes$ and $\GValidTypes$.
We write $\ValidMetaterms$ for the set of all well-kinded \metaterms,
\ie $\set{\utm \ST \wkjudg{\utm}}$.
\end{definition}

\begin{definition}[Dependent function space]
If $I$ is a set of indices and $(J_i)_\iI$ is an indexed family of sets,
we write $\Prod{i \in I} J_i$
for the set of all functions $f : I \to \cup_{i \in I} J_i$
such that $f(i) \in J_i$.
\end{definition}
Sometimes we implicitly use an uncurried notation for functions;
so for example we write $f : \Prod{i \in I} J_i \to K_i$
and $f(i, j) = k_{i,j}$
for the function $f$ such that
$f(i)$ is again a function $f(i) : J_i \to K_i$ for every $i \in I$,
and such that
$f(i)(j) = k_{i,j}$ holds for each $i \in I$ and each $j \in J_i$.

\begin{definition}[\CandidateFamilies]
\label{def:hol:candidateFamilies}
A family of sets $(\VC{i})_{i \in \GValidTypes}$
indexed by valid ground \logicalTerm/\kind pairs
is a \defn{\candidateFamily} if the following conditions hold:
\begin{enumerate}
\item
  \textbf{Stability under $\beta$-conversion.}
  For all $\ki \in \ValidKinds$
  and all ground $\ttm_1,\ttm_2 \in \GValidTypesK{\ki}$,
  if $\ttm_1 \eqbeta \ttm_2$
  then $\VCK{\ttm_1}{\ki} = \VCK{\ttm_2}{\ki}$.
\item
  \textbf{Inhabited.}
  \label{def:hol:candidateFamilies__prop_nonempty}
  For every ground $\ttm \in \GValidTypesK{\ki}$ 
  there is an element $\semgbase{\ttm} \in \VCK{\ttm}{\ki}$.
  Moreover, if $\ttm_1,\ttm_2 \in \GValidTypesK{\ki}$ and $\ttm_1 \eqbeta \ttm_2$
  then $\semgbase{\ttm_1} = \semgbase{\ttm_2}$.
\item
  \textbf{Candidate condition for base \kinds.}
  For every base \kind $\bki$ and every ground \logicalTerm $\ttm\in\GValidTypesK{\bki}$
  we have that $\VCK{\ttm}{\bki}$ is a singleton.
  We write $\singleElem$ for its unique element, so that
  $\VCK{\ttm}{\bki} = \set{\singleElem}$.
\item
  \textbf{Candidate condition for propositions.}
  For every ground proposition $\ttm\in\GValidTypesK{\Prop}$,
  we have that $\VCK{\ttm}{\Prop}$ is a set of sets of well-kinded \metaterms,
  that is, $\VCK{\ttm}{\Prop} \subseteq \powerset{\ValidMetaterms}$.
  Moreover, for every $\vcset \in \VCK{\ttm}{\Prop}$ we have that:
  \begin{enumerate}
  \item $\gen{\ttm} \in \vcset$
  \item If $\utm \in \vcset$ then $\verif{\ttm}{\utm} \tows \iunit$.
  \item If $\utm_1,\utm_2$ are \metaterms such that $\utm_1 \tow \utm_2$
        then $\utm_1 \in \vcset$ if and only if $\utm_2 \in \vcset$.
  \end{enumerate}
\item
  \textbf{Candidate condition for higher-kinded \logicalTerms.}
  \label{def:hol:candidateFamilies:case_arrow}
  For all $\ki_1,\ki_2 \in \ValidKinds$
  and every ground $\ttm\in\GValidTypesK{\ki_1\to\ki_2}$,
  we have that
  $\VCK{\ttm}{\ki_1\to\ki_2}$ is the set of all functions $f$
  such that:
  \[
      f : \Prod{\ttmtwo \in \GValidTypesK{\ki_1}}
          \VCK{\ttmtwo}{\ki_1}
          \to
          \VCK{\ttm\ttmtwo}{\ki_2}
  \]
  and such that
  for every $\ttmtwo_1,\ttmtwo_2 \in \GValidTypesK{\ki_1}$,
  and every $\vcset \in \VCK{\ttmtwo}{\ki_1}$
  if $\ttmtwo_1 \eqbeta \ttmtwo_2$
  then $f(\ttmtwo_1,\vcset) = f(\ttmtwo_2,\vcset)$.
\end{enumerate}
\end{definition}

\begin{lemma}[Correctness]
\label{lem:hol:correctness}
For any ground \logicalTerm $\ttm \in \GValidTypesK{\Prop}$
we have $\verif{\ttm}{\gen{\ttm}} \tows \iunit$.
\end{lemma}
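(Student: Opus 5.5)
We will prove this by well-founded induction, mirroring the second-order correctness lemma (\cref{a:lem:f:correctness}). Since \metaterms are taken modulo \logicalBetaEquivalence, we may work with the $\beta$-normal form of $\ttm$. The induction measure will be the size of the $\beta$-normal form of $\ttm$, or equivalently the number of connectives $\imp$ and $\forall$ in it. This number is invariant under substituting a \logicalEigenvariable for a \logicalVariable: such a substitution creates no new type-level redexes, because the \logicalEigenvariable is a head symbol and not an abstraction. As observed in the paper, a ground well-kinded proposition in normal form has exactly one of three shapes: $\teig\,\vec{\ttmtwo}$, $\ttmtwo\imp\ttmthree$, or $\all{\tvar}{\ki}{\ttmtwo}$.

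\textbf{Atomic case.} For $\ttm = \teig\,\vec{\ttmtwo}$, rule~4 gives $\verif{\teig\vec{\ttmtwo}}{\gen{\teig\vec{\ttmtwo}}} \tow \iunit$ in one step.

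\textbf{Implication case.} Here $\ttmtwo$ and $\ttmthree$ are ground propositions, by inversion of kinding and groundness. We reduce by weak head steps:
\[
\verif{\ttmtwo\imp\ttmthree}{\gen{\ttmtwo\imp\ttmthree}} \tow \verif{\ttmthree}{(\gen{\ttmtwo\imp\ttmthree}\,\gen{\ttmtwo})} \tow \verif{\ttmthree}{((\plam{\pvar}{\eunit{\verif{\ttmtwo}{\pvar}}{\gen{\ttmthree}}})\,\gen{\ttmtwo})} \tow \verif{\ttmthree}{(\eunit{\verif{\ttmtwo}{\gen{\ttmtwo}}}{\gen{\ttmthree}})}.
\]
The inner verifier sits in the weak head context $\verif{\ttmthree}{(\eunit{\ctxhole}{\gen{\ttmthree}})}$, so we can apply the induction hypothesis to $\ttmtwo$ there. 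A guard step then follows, and finally the induction hypothesis for $\ttmthree$.

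\textbf{Universal case.} For $\ttm = \all{\tvar}{\ki}{\ttmtwo}$, rule~6 with a fresh $\teig$ of kind $\ki$ (so $\kiasig(\teig)=\ki$) gives $\fresh{\teig}{\ki}{\verif{\ttmtwo\sub{\tvar}{\teig}}{(\gen{\all{\tvar}{\ki}{\ttmtwo}}\,\teig)}}$. Rule~8 under the context $\fresh{\teig}{\ki}{\verif{\cdot}{(\ctxhole\,\teig)}}$ followed by rule~2 yields $\gen{\ttmtwo\sub{\tvar}{\teig}}$. Rule~2 applies because its side condition $\tjudg{\kiasig}{\teig}{\ki}$ holds. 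The proposition $\ttmtwo\sub{\tvar}{\teig}$ is ground and well-kinded of kind $\Prop$, and it is smaller in our measure, so the induction hypothesis gives $\fresh{\teig}{\ki}{\iunit}$. Rule~9 then closes the argument.

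\textbf{Main obstacle.} The delicate point is the measure in the universal case. In $\lambdaCheckFw$ the type $\ttmtwo\sub{\tvar}{\teig}$ may have to be renormalized, since $\tvar$ may appear applied inside $\ttmtwo$. We must show that the chosen measure still drops. We will argue via normal forms. If $\ttmtwo$ is $\beta$-normal, then $\ttmtwo\sub{\tvar}{\teig}$ is also $\beta$-normal, because replacing a variable head by a constant head creates no redex. Hence the syntactic size is literally preserved, and $\size{\all{\tvar}{\ki}{\ttmtwo}} > \size{\ttmtwo\sub{\tvar}{\teig}}$. We also check that each weak head step respects the quotient by \logicalBetaEquivalence, i.e., that the rules apply to the chosen normal representatives; this follows from the paper's standing convention on lifting to \metaterms.
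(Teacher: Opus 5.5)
Your proposal is correct and follows the paper's own route: induction on the size of $\ttm$, with the applied-eigenvariable, implication and universal cases handled exactly as in the second-order correctness lemma. You also make the measure precise as the size of the $\beta$-normal form and check that it drops in the universal case, because substituting an eigenvariable for a type variable preserves $\beta$-normality and size. The paper's one-line proof leaves that point implicit, and a naive ``size of a representative'' would not obviously suffice.
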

\begin{proof}
By induction on the size of $\ttm$.
Similar to the proof for the second-order case (\cref{lem:f:correctness}).
\end{proof}

\begin{definition}[\CandidateAssignments]
Let $\mathcal{F} = (\VC{i})_{i \in \GValidTypes}$ be a \candidateFamily.
A \defn{$\mathcal{F}$-\candidateAssignment} is a function $\asig$
mapping each \logicalVariable $\tvar$
to a pair $(\ttm,\vcset)$ where,
if we let $\ki := \kiasig(\tvar)$
(that is, $\ki$ is the \kind of $\tvar$ in the global \kindAssignment $\kiasig$),
then we have that $\ttm \in \GValidTypesK{\ki}$ is a ground \logicalTerm
and $\vcset \in \VCK{\ttm}{\ki}$.
Usually the ambient \candidateFamily $\mathcal{F}$ is clear from the context, and we
speak just of \candidateAssignments, rather than of $\mathcal{F}$-\candidateAssignments.
We write $\tilde{\asig}$ for the \defn{first projection} of $\asig$,
defined as the \logicalSubstitution such that
$\tilde{\asig}(\tvar) = \ttm$ whenever $\asig(\tvar) = (\ttm,\vcset)$.
By abuse of notation, sometimes we write $\asig$ for the first projection of
the \candidateAssignment $\asig$ instead of $\tilde{\asig}$, if there is little danger of confusion.
If $\asig_1,\asig_2$ are \candidateAssignments,
we write $\asig_1 \eqbeta \asig_2$
if, for every \logicalVariable $\tvar$,
letting $\asig_1(\tvar) = (\ttm_1,\vcset_1)$
and $\asig_2(\tvar) = (\ttm_2,\vcset_2)$,
we have that $\ttm_1 \eqbeta \ttm_2$ and $\vcset_1 = \vcset_2$.
\end{definition}

\begin{definition}[Candidate interpretation of \logicalTerms]
Let $(\VC{i})_{i \in \GValidTypes}$ be a \candidateFamily.
For every valid \logicalTerm/\kind pair $\ttm:\ki \in \ValidTypes$
and every \candidateAssignment $\asig$
we define the interpretation $\semg{\ttm}{\ki}{\asig}$
by recursion on $\ttm$ according to the equations below.
The construction is well-defined in the sense that $\semg{\ttm}{\ki}{\asig} \in \VCK{\subs{\ttm}{\asig}}{\ki}$.
Technically speaking, well-definedness has to be established
simultaneously with its definition (in inductive-recursive fashion),
but we state it as an adequacy lemma below.
\begin{enumerate}
\item
    $\semg{\tvar}{\ki}{\asig}
    \eqdef
    \vcset$
    if $\asig(\tvar) = (\ttm,\vcset)$.
\item
    $\semg{\teig}{\ki}{\asig}$
    is the set of well-kinded \metaterms
    $\utm \in \ValidMetaterms$
    such that $\utm \tows \gen{\teig}$.
\item
    $\semg{(\tlam{\tvar}{\ki_1}{\ttm})}{\ki_1\to\ki_2}{\asig} \eqdef f$,
    where $f$ is the following function:
    \[
    \begin{array}{l}
      f : \Prod{\ttmtwo \in \GValidTypesK{\ki_1}}
          \VC{\ttmtwo:\ki_1}
          \to
          \VC{\subs{({\tlam{\tvar}{\ki_1}{\ttm}})}{\asig}\ttmtwo:\ki_2}
    \\
      f(\ttmtwo,\vcset) =
        \semg{\ttm}{\ki_2}{\asig\extsub{\tvar}{(\ttmtwo,\vcset)}}
    \end{array}
    \]
\item
    $\semg{\ttm\,\ttmtwo}{\ki_2}{\asig}
     \eqdef
     \semg{\ttm}{\ki_1\to\ki_2}{\asig}\,(\subs{\ttmtwo}{\asig},\semg{\ttmtwo}{\ki_2}{\asig})$
\item
    $\semg{(\ttm\imp\ttmtwo)}{\Prop}{\asig}$
    is the set of well-kinded \metaterms $\utm$ such that:
    \begin{itemize}
    \item[]
        $\verif{\subs{(\ttm\imp\ttmtwo)}{\asig}}{\utm} \tows \iunit$, and
    \item[]
        for every $\utmtwo\in\semg{\ttm}{\Prop}{\asig}$
        we have $\utm\,\utmtwo\in\semg{\ttmtwo}{\Prop}{\asig}$.
    \end{itemize}
\item
    $\semg{(\all{\tvar}{\ki}{\ttm})}{\Prop}{\asig}$
    is the set of well-kinded \metaterms $\utm$ such that:
    \begin{itemize}
    \item[]
        $\verif{\subs{(\all{\tvar}{\ki}{\ttm})}{\asig}}{\utm} \tows \iunit$, and
    \item[]
        for every well-kinded \logicalTerm
          $\ttmtwo\in\ValidTypesK{\ki}$
        and every
          $\vcset\in\VCK{\ttmtwo}{\ki}$ \\
        \hphantom{\HS} we have
          $\utm\,\ttmtwo\in\semg{\ttm}{\Prop}{\asig\extsub{\tvar}{(\ttmtwo,\vcset)}}$.
    \end{itemize}
\end{enumerate}
\end{definition}

\begin{lemma}[Adequacy for types]
\label{lem:hol:adequacy_for_types}
Let $(\VC{i})_{i \in \GValidTypes}$ be a \candidateFamily.
\begin{enumerate}
\item
  Let $\asig$ be a \candidateAssignment and $\ttm\in\ValidTypesK{\ki}$.
  Then $\semg{\ttm}{\ki}{\asig} \in \VCK{\subs{\ttm}{\asig}}{\ki}$.
\item
  Let $\asig_1,\asig_2$ be \candidateAssignments and $\ttm\in\ValidTypesK{\ki}$.
  If $\asig_1 \eqbeta \asig_2$ then
  $\semg{\ttm}{\ki}{\asig_1} = \semg{\ttm}{\ki}{\asig_2}$.
\end{enumerate}
\end{lemma}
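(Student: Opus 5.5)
I would prove both items simultaneously by induction on the structure of $\ttm$ (with the kinding derivation $\tjudg{\kiasig}{\ttm}{\ki}$ as guide). The interpretation is defined inductive-recursively, so item~1 is exactly the well-definedness side condition. Item~2 is needed alongside it: in the abstraction and application cases one must check that the functions produced are invariant under $\beta$-conversion of their \logicalTerm argument, as demanded by the candidate condition for higher-kinded \logicalTerms.

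\textbf{Item 1.}

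\emph{Variable} ($\ttm=\tvar$): immediate from the definition of a \candidateAssignment, since $\asig(\tvar)=(\ttmtwo,\vcset)$ with $\vcset\in\VCK{\ttmtwo}{\ki}$ and $\subs{\tvar}{\asig}=\ttmtwo$.

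\emph{Eigenvariable} ($\ttm=\teig$): if $\ki$ is a base kind there is an issue, since the set $\set{\utm \ST \utm\tows\gen{\teig}}$ is not $\singleElem$. I would treat the case $\ki=\Prop$ exactly as in \cref{a:lem:f:adequacy_for_propositions}, using confluence and the rule $\verif{\teig}{\gen{\teig}}\to\iunit$. For other kinds I would read the clause as the intended base element $\semgbase{\teig}$, which lies in $\VCK{\teig}{\ki}$ by the inhabitation condition. This is a point where I expect to have to fix the definition.

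\emph{Implication and universal quantification}: these are the propositional clauses. They repeat the second-order proof verbatim, checking the three closure conditions:
\begin{itemize}
\item[(a)] the generator belongs to the set, using \cref{lem:hol:correctness} and the reductions $\gen{\ttmtwo\imp\ttmthree}\,\utm\tows\gen{\ttmthree}$ and $\gen{\all{\tvar}{\ki}{\ttmtwo}}\,\ttmthree\tows\gen{\ttmtwo\sub{\tvar}{\ttmthree}}$;
\item[(b)] acceptance by the verifier holds by definition;
\item[(c)] closure under $\tow$ holds by subcommutativity/confluence.
\end{itemize}
Here reduction is modulo \logicalBetaEquivalence, and candidate sets are stable under $\beta$-conversion of their index (condition~1 of \cref{def:hol:candidateFamilies}).

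\emph{Abstraction} ($\tlam{\tvar}{\ki_1}{\ttm}$): by the induction hypothesis, $f(\ttmtwo,\vcset)=\semg{\ttm}{\ki_2}{\asig\extsub{\tvar}{(\ttmtwo,\vcset)}}$ lies in $\VCK{\subs{\ttm}{\asig\extsub{\tvar}{\ttmtwo}}}{\ki_2}$. This set equals $\VCK{(\subs{(\tlam{\tvar}{\ki_1}{\ttm})}{\asig})\ttmtwo}{\ki_2}$ by $\beta$-stability. The invariance $f(\ttmtwo_1,\vcset)=f(\ttmtwo_2,\vcset)$ for $\ttmtwo_1\eqbeta\ttmtwo_2$ is item~2 of the induction hypothesis, applied to the extended assignments, which are $\eqbeta$.

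\emph{Application} ($\ttm\,\ttmtwo$): the induction hypothesis gives $\semg{\ttm}{\ki_1\to\ki_2}{\asig}$ as a function of the right type and $\semg{\ttmtwo}{\ki_1}{\asig}\in\VCK{\subs{\ttmtwo}{\asig}}{\ki_1}$. Applying the function lands in $\VCK{\subs{\ttm}{\asig}\,\subs{\ttmtwo}{\asig}}{\ki_2}$.

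\textbf{Item 2.} I would prove this by the same induction, since it only inspects $\asig$ at variables. The variable case uses $\vcset_1=\vcset_2$. The propositional cases compare sets whose defining verifier conditions mention $\subs{\ttm}{\asig_1}\eqbeta\subs{\ttm}{\asig_2}$; these are identical because metaterms are taken modulo \logicalBetaEquivalence. The application case additionally uses the $\beta$-invariance of candidate functions.

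\textbf{Main obstacle.} A second, subtler point is the conversion rule: interpretations of $\beta$-equal \logicalTerms should coincide, but this is a separate lemma and is not needed here. The main obstacle I expect is getting the mutual dependence right, so that item~2 is available at the point where the abstraction case of item~1 needs it.
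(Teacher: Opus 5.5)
Your induction is the paper's own: simultaneous structural induction on $\ttm$, with item~2 supplying the $\beta$-invariance needed in the abstraction case, $\beta$-stability of the family to match codomains, and the propositional clauses transported from the second-order proof. You are also right that the eigenvariable clause makes no sense outside $\Prop$: there $\gen{\teig}$ is not well-kinded, so the clause yields $\emptyset$. The paper's proof silently skips the eigenvariable case, and reading the clause as $\semgbase{\teig}$ is a sensible repair.

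The real gap is in the propositional step, and the paper's proof has it too. Checking (a)--(c) shows that $\semg{\ttm}{\Prop}{\asig}$ is \emph{some} verification candidate for $\subs{\ttm}{\asig}$. However, condition~4 of \cref{def:hol:candidateFamilies} only requires that the members of $\VCK{\ttm}{\Prop}$ satisfy (a)--(c); it does not make every such set a member. In the second-order case, $\VC{\ttm}$ was by definition the set of \emph{all} candidates, which is why ``verbatim'' worked there.

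Read literally, item~1 is false. Set $\VCK{\teig\imp\teig}{\Prop}=\set{\vcset'}$, where $\vcset'$ is the $\tow$-convertibility class of $\gen{\teig\imp\teig}$. This still gives a candidate family, yet $\plamf{\var}{\var}\in\semg{\teig\imp\teig}{\Prop}{\asig}\setminus\vcset'$. Moreover, for the family of \cref{prop:hol:verif_family}, which is a singleton at $\Prop$, membership at $\ttmtwo\imp\ttmthree$ is exactly modus ponens for arbitrary metaterms. That is the universality property the candidate technique was introduced to avoid.

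The repair is to let $\VCK{\ttm}{\Prop}$ be the set of \emph{all} $\vcset\subseteq\ValidMetaterms$ satisfying (a)--(c). Correspondingly, in \cref{prop:hol:verif_family} take all candidates, with $\semgbase{\ttm}\eqdef\set{\utm\in\ValidMetaterms\ST\verif{\ttm}{\utm}\tows\iunit}$; soundness then follows from (b). With that change your argument is complete.

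For $\teig$ of kind $\Prop$ the literal definition happens to be harmless. Conditions (a)--(c) force every candidate for $\teig$ to equal $\set{\utm\in\ValidMetaterms\ST\utm\tows\gen{\teig}}$, and inhabitation makes $\VCK{\teig}{\Prop}$ nonempty. But you need that argument there; ``exactly as in the second-order case'' does not give membership.
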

\begin{proof}
We prove both items simultaneously by induction on $\ttm$.
The second item is straightforward.
To prove the first item:
\begin{enumerate}
\item
  Variable, $\ttm = \tvar$:
  Let $\asig(\tvar) = (\ttmtwo,\vcset)$
  where we know that $\vcset \in \VCK{\ttmtwo}{\ki}$
  Then $\subs{\ttm}{\asig} = \subs{\tvar}{\asig} = \ttmtwo$
  and by definition
  $\semg{\ttm}{\ki}{\asig}
  = \semg{\tvar}{\ki}{\asig}
  = \vcset
  \in \VCK{\ttmtwo}{\ki}
  = \VCK{\subs{\ttm}{\asig}}{\ki}
  $.
\item
  Implication, $\ttm = (\ttmtwo\imp\ttmthree)$:
  Similar to the proof for the second-order case
  (\cref{lem:f:adequacy_for_propositions}).
\item
  Universal quantifier,
  $\ttm = (\all{\tvar}{\ki}{\ttmtwo})$:
  Similar to the proof for the second-order case
  (\cref{lem:f:adequacy_for_propositions}).
\item
  Abstraction, $\ttm = \tlam{\tvar}{\ki_1}{\ttmtwo}$ with $\ki = (\ki_1\to\ki_2)$:
  Then by definition
  $\semg{\tlam{\tvar}{\ki}{\ttmtwo}}{\ki_1\to\ki_2} = f$ is a function
  $f : \Prod{\ttmthree\in\GValidTypesK{\ki_1}}
       \VCK{\ttmthree}{\ki_1}
       \to \VCK{\subs{(\tlam{\tvar}{\ki_1}{\ttmtwo})}{\asig}\,\ttmthree}{\ki_2}$
  such that
  $f(\ttmthree,\vcset) = \semg{\ttmtwo}{\ki_2}{\asig\extsub{\tvar}{(\ttmthree,\vcset)}}$.
  Since $(\VC{i})_{i\in\GValidTypes}$ is a \candidateFamily,
  to show that
  $f \in \VCK{\subs{(\tlam{\tvar}{\ki}{\ttmtwo})}{\asig}}{\ki_1\to\ki_2}$,
  we need to show, first, that the domain and codomain of $f$
  correspond to
  $f : \Prod{\ttmthree\in\GValidTypesK{\ki_1}}
       \VCK{\ttmthree}{\ki_1}
       \to \VCK{(\subs{(\tlam{\tvar}{\ki}{\ttmtwo})}{\asig})\,\ttmthree}{\ki_2}$,
  which is indeed the case.
  Second, we need to show that
  if $\ttmthree_1,\ttmthree_2 \in \GValidTypesK{\ki_1}$
  and $\vcset \in \VCK{\ttmthree_1}{\ki_1}$
  and $\ttmthree_1 \eqbeta \ttmthree_2$
  then $f(\ttmthree_1,\vcset) = f(\ttmthree_2,\vcset)$.
  Indeed,
  $\semg{\ttmtwo}{\ki_2}{\asig\extsub{\tvar}{(\ttmthree_1,\vcset)}}
   = \semg{\ttmtwo}{\ki_2}{\asig\extsub{\tvar}{(\ttmthree_2,\vcset)}}$
  holds by resorting to the \ih, using the second item in the statement
  of this lemma,
  noting that
  $\asig\extsub{\tvar}{(\ttmthree_1,\vcset)}
   \eqbeta
   \asig\extsub{\tvar}{(\ttmthree_2,\vcset)}$.
\item
  Application, $\ttm = \ttmtwo\,\ttmthree$,
  where $\ttmtwo \in \ValidTypesK{\ki_1\to\ki_2}$
  and $\ttmthree \in \ValidTypesK{\ki_1}$
  and $\ki = \ki_2$:
  Then by \ih we have that
  $\semg{\ttmtwo}{\ki_1\to\ki_2}{\asig} \in \VCK{\subs{\ttmtwo}{\asig}}{\ki_1\to\ki_2}$
  and
  $\semg{\ttmthree}{\ki_1}{\asig} \in \VCK{\subs{\ttmthree}{\asig}}{\ki_1}$.
  By the fact that $(\VC{i})_{i\in\GValidTypes}$ is a \candidateFamily,
  we know that $\semg{\ttmtwo}{\ki_1\to\ki_2}{\asig}$ is a function
  $f : \Prod{\ttmfour\in\GValidTypesK{\ki_1}}
       \VCK{\ttmfour}{\ki_1}
       \to \VCK{\subs{\ttmtwo}{\asig}\,\ttmfour}{\ki_2}$.
  In particular,
  $\semg{\ttmtwo\,\ttmthree}{\ki_2}{\asig}
   =
   \semg{\ttmtwo}{\ki_1\to\ki_2}{\asig}(
     \subs{\ttmthree}{\asig},
     \semg{\ttmthree}{\ki_1}{\asig}
   ) \in \VCK{\subs{\ttmtwo}{\asig}\,\subs{\ttmthree}{\asig}}{\ki_2}
       = \VCK{\subs{(\ttmtwo\,\ttmthree)}{\asig}}{\ki}$.
\end{enumerate}
\end{proof}

\begin{lemma}[Irrelevance]
\label{lem:hol:irrelevance}
Let $(\VC{i})_{i \in \GValidTypes}$ be a \candidateFamily.
If $\ttm:\ki \in \ValidTypes$
and $\asig_1$ and $\asig_2$ are \candidateAssignments that
agree on the free variables of $\ttm$,
then $\semg{\ttm}{\ki}{\asig_1} = \semg{\ttm}{\ki}{\asig_2}$.
\end{lemma}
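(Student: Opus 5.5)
The proof goes by induction on the structure of $\ttm$, following the kinding derivation of $\ttm:\ki\in\ValidTypes$. The claim to prove in each case is that if $\asig_1(\tvar) = \asig_2(\tvar)$ for every $\tvar\in\fv{\ttm}$, then $\semg{\ttm}{\ki}{\asig_1} = \semg{\ttm}{\ki}{\asig_2}$. Before starting, I record the purely syntactic fact that $\subs{\ttm}{\asig_1} = \subs{\ttm}{\asig_2}$ whenever the first projections agree on $\fv{\ttm}$. This is needed because the verifier conditions in the clauses for $\imp$ and $\forall$ mention $\subs{(\cdot)}{\asig}$.

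The base and structural cases are handled as follows.
\begin{itemize}
\item \textbf{Variable $\tvar$.} Here $\tvar\in\fv{\ttm}$, so $\asig_1(\tvar)=\asig_2(\tvar)=(\ttmtwo,\vcset)$, and both sides equal $\vcset$.
\item \textbf{Eigenvariable $\teig$.} The interpretation does not depend on $\asig$ at all.
\item \textbf{Application $\ttmtwo\,\ttmthree$.} The induction hypothesis applies to both $\ttmtwo$ and $\ttmthree$, since their free variables are among those of $\ttm$. This gives equal functions and equal arguments. The first component of the argument is $\subs{\ttmthree}{\asig_i}$, and these agree by the syntactic fact.
\item \textbf{Implication $\ttmtwo\imp\ttmthree$.} The first defining condition, $\verif{\subs{(\ttmtwo\imp\ttmthree)}{\asig_i}}{\utm}\tows\iunit$, is the same for $i=1,2$. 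The second condition quantifies over $\semg{\ttmtwo}{\Prop}{\asig_i}$ and tests membership in $\semg{\ttmthree}{\Prop}{\asig_i}$. By the induction hypothesis both sets are independent of $i$, so the two sets of metaterms coincide.
\end{itemize}

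The binder cases are $\tlam{\tvar}{\ki_1}{\ttmtwo}$ and $\all{\tvar}{\ki}{\ttmtwo}$. After $\alpha$-renaming, $\tvar$ is bound. For every ground $\ttmthree$ and every candidate $\vcset$, the extended assignments $\asig_1\extsub{\tvar}{(\ttmthree,\vcset)}$ and $\asig_2\extsub{\tvar}{(\ttmthree,\vcset)}$ agree on $\fv{\ttmtwo}\subseteq\fv{\ttm}\cup\set{\tvar}$.
\begin{itemize}
\item \textbf{Abstraction.} The induction hypothesis shows the two functions $f_1,f_2$ agree pointwise. Their declared codomains $\VC{\subs{(\tlam{\tvar}{\ki_1}{\ttmtwo})}{\asig_i}\ttmthree:\ki_2}$ also coincide by the syntactic fact, so $f_1=f_2$ as elements of the dependent product.
\item \textbf{Universal quantifier.} The verifier condition is handled as in the implication case. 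The second condition ranges over the same $\ttmthree,\vcset$, and the induction hypothesis gives equality of the target sets.
\end{itemize}

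I expect no serious obstacle here. The one point needing care is that the induction must be on the type itself and not on a measure that changes under extension of $\asig$. This is harmless, because the binder cases only extend the assignment and never substitute into $\ttm$. I also need $\alpha$-renaming of $\tvar$ to keep the variable-agreement hypothesis valid under the extension. Finally, in the abstraction case one must check that equality of functions in the dependent product only requires pointwise equality together with equal index sets, which holds since the domain $\Prod{\ttmthree\in\GValidTypesK{\ki_1}}\VCK{\ttmthree}{\ki_1}$ is independent of $\asig$.
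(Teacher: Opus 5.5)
Your proposal is correct and matches the paper's proof, which is just ``straightforward by induction on $\ttm$''; your case analysis fills in the details the paper omits. In particular, you correctly note that $\subs{\ttm}{\asig_1}=\subs{\ttm}{\asig_2}$ is needed both for the verifier conditions and for the codomains of the functions in the abstraction case.
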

\begin{proof}
Straightforward by induction on $\ttm$.
\end{proof}

\begin{lemma}[Substitution lemma]
\label{lem:hol:substitution}
Let $(\VC{i})_{i \in \GValidTypes}$ be a \candidateFamily.
Let $\ttm:\ki \in \ValidTypes$ and $\ttmtwo:\kitwo \in \ValidTypes$
and $\kiasig(\tvar) = \kitwo$.
Moreover, suppose that $\asig$ is a \candidateAssignment.
Then
$\semg{\ttm\sub{\tvar}{\ttmtwo}}{\ki}{\asig}
 = \semg{\ttm}{\ki}{\asig\extsub{\tvar}{(\subs{\ttmtwo}{\asig},\semg{\ttmtwo}{\kitwo}{\asig})}}$.
\end{lemma}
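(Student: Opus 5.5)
We prove the statement by structural induction on $\ttm$, for all \candidateAssignments $\asig$ at once. This mirrors the proof of the second-order substitution lemma (\cref{a:lem:f:substitution}), with two additional cases for \logicalAbstraction and \logicalApplication. Throughout we write $\asigext \eqdef \asig\extsub{\tvar}{(\subs{\ttmtwo}{\asig},\semg{\ttmtwo}{\kitwo}{\asig})}$. By adequacy (\cref{lem:hol:adequacy_for_types}, item~1), $\semg{\ttmtwo}{\kitwo}{\asig} \in \VCK{\subs{\ttmtwo}{\asig}}{\kitwo}$, so $\asigext$ is indeed a \candidateAssignment. We also use the syntactic identity $\subs{(\ttm\sub{\tvar}{\ttmtwo})}{\asig} = \subs{\ttm}{\asigext}$ on first projections.

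\textbf{Base cases.} If $\ttm = \tvar$, both sides equal $\semg{\ttmtwo}{\kitwo}{\asig}$, and here $\ki = \kitwo$ by uniqueness of kinds. If $\ttm = \tvartwo \neq \tvar$, or $\ttm = \teig$, both sides coincide by definition.

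\textbf{Implication and universal quantifier.} For $\ttm = \ttm_1\imp\ttm_2$, we compare the two conditions defining each set.
\begin{itemize}
\item The verifier conditions coincide by the syntactic identity above.
\item The closure conditions coincide because the \ih rewrites $\semg{\ttm_i\sub{\tvar}{\ttmtwo}}{\Prop}{\asig}$ as $\semg{\ttm_i}{\Prop}{\asigext}$.
\end{itemize}
For $\ttm = \all{\tvartwo}{\kithree}{\ttm'}$, we first choose $\tvartwo \notin \set{\tvar}\cup\fv{\ttmtwo}$ by $\alpha$-conversion. Then for each $\ttmthree,\vcset$ we apply the \ih with the assignment $\asig\extsub{\tvartwo}{(\ttmthree,\vcset)}$. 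Irrelevance (\cref{lem:hol:irrelevance}) gives $\semg{\ttmtwo}{\kitwo}{\asig\extsub{\tvartwo}{(\ttmthree,\vcset)}} = \semg{\ttmtwo}{\kitwo}{\asig}$, and $\subs{\ttmtwo}{\asig\extsub{\tvartwo}{(\ttmthree,\vcset)}} = \subs{\ttmtwo}{\asig}$ holds because $\tvartwo \notin \fv{\ttmtwo}$. The two updates commute since $\tvar \neq \tvartwo$, which yields the required equality.

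\textbf{Abstraction.} Let $\ttm = \tlam{\tvartwo}{\ki_1}{\ttm'}$ with $\ki = \ki_1\to\ki_2$, again with $\tvartwo$ fresh. Both sides are functions on $\Prod{\ttmthree\in\GValidTypesK{\ki_1}}\VCK{\ttmthree}{\ki_1}$, so we argue by function extensionality. At the argument $(\ttmthree,\vcset)$, the left side gives $\semg{\ttm'\sub{\tvar}{\ttmtwo}}{\ki_2}{\asig\extsub{\tvartwo}{(\ttmthree,\vcset)}}$. The \ih, together with the same irrelevance argument as in the quantifier case, turns this into $\semg{\ttm'}{\ki_2}{\asigext\extsub{\tvartwo}{(\ttmthree,\vcset)}}$, which is the right side at that argument. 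The codomains agree because $\subs{(\tlam{\tvartwo}{\ki_1}{\ttm'\sub{\tvar}{\ttmtwo}})}{\asig} = \subs{(\tlam{\tvartwo}{\ki_1}{\ttm'})}{\asigext}$.

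\textbf{Application.} Let $\ttm = \ttm_1\,\ttm_2$. The left side is $\semg{\ttm_1\sub{\tvar}{\ttmtwo}}{}{\asig}\,(\subs{(\ttm_2\sub{\tvar}{\ttmtwo})}{\asig},\semg{\ttm_2\sub{\tvar}{\ttmtwo}}{}{\asig})$. We apply the \ih to $\ttm_1$ and to $\ttm_2$, and use the syntactic identity for the \logicalTerm argument; this gives exactly $\semg{\ttm_1}{}{\asigext}\,(\subs{\ttm_2}{\asigext},\semg{\ttm_2}{}{\asigext})$, which is the right side.

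\textbf{Main obstacle.} The delicate point is that \metaterms and candidate sets are taken modulo $\beta$-conversion of \logicalTerms, while the induction runs over the syntax of $\ttm$. Substitution does not introduce new redexes into the interpretation here, since the interpretation is defined compositionally on raw syntax. Still, at each step where two \logicalTerms are identified only up to $\eqbeta$, we invoke the stability clause of \candidateFamilies (\cref{def:hol:candidateFamilies}) and item~2 of \cref{lem:hol:adequacy_for_types}. This concerns, for example, the indices of $\VC{}$ in the codomain typing, and the verifier conditions, which hold modulo $\preq{}{}$. Once that bookkeeping is done, each case is an equality of sets or functions that follows directly from the definitions.
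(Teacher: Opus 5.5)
Your proof is correct and follows the paper's own argument: structural induction on $\ttm$, uniformly in $\asig$, with the same variable, eigenvariable, implication, quantifier, abstraction and application cases, and you spell out the irrelevance step and the commutation of the two updates that the paper folds into ``by \ih''. The closing remark about $\eqbeta$ bookkeeping is harmless but unnecessary: the first-projection identity you use holds exactly on syntax, so no index or verifier is ever identified only up to $\eqbeta$.
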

\begin{proof}
By induction on $\ttm$:
\begin{enumerate}
\item
  Same \logicalVariable, $\ttm = \tvar$:
  Then $\ki = \kitwo$,
  and
  $\semg{\ttm\sub{\tvar}{\ttmtwo}}{\ki}{\asig}
  = \semg{\tvar\sub{\tvar}{\ttmtwo}}{\ki}{\asig}
  = \semg{\ttmtwo}{\ki}{\asig}
  = \semg{\ttmtwo}{\kitwo}{\asig}
  = \semg{\tvar}{\ki}{\asig\extsub{\tvar}{(\subs{\ttmtwo}{\asig},\semg{\ttmtwo}{\kitwo}{\asig})}}
  = \semg{\ttm}{\ki}{\asig\extsub{\tvar}{(\subs{\ttmtwo}{\asig},\semg{\ttmtwo}{\kitwo}{\asig})}}
  $.
\item
  Different \logicalVariable, $\ttm = \tvartwo \neq \tvar$:
  Note that $\kiasig(\tvartwo) = \ki$,
  and let $\asig(\tvartwo) = (\ttmthree,\vcset)$
  where $\ttmthree \in \GValidTypesK{\ki}$
  and $\vcset \in \VCK{\ttmthree}{\ki}$.
  Then
  $\semg{\ttm\sub{\tvar}{\ttmtwo}}{\ki}{\asig}
  = \semg{\tvartwo\sub{\tvar}{\ttmtwo}}{\ki}{\asig}
  = \semg{\tvartwo}{\ki}{\asig}
  = \vcset
  = \semg{\tvartwo}{\ki}{\asig\extsub{\tvar}{(\subs{\ttmtwo}{\asig},\semg{\ttmtwo}{\kitwo}{\asig})}}
  = \semg{\ttm}{\ki}{\asig\extsub{\tvar}{(\subs{\ttmtwo}{\asig},\semg{\ttmtwo}{\kitwo}{\asig})}}
  $.
\item
  \LogicalEigenvariable, $\ttm = \teig$:
  Note that $\kiasig(\teig) = \ki$.
  Then
  $\semg{\ttm\sub{\tvar}{\ttmtwo}}{\ki}{\asig}
  = \semg{\teig\sub{\tvar}{\ttmtwo}}{\ki}{\asig}
  = \semg{\teig}{\ki}{\asig}
  = \set{\utm \in \ValidMetaterms \ST \utm \tows \gen{\teig}}
  = \semg{\teig}{\ki}{\asig\extsub{\tvar}{(\subs{\ttmtwo}{\asig},\semg{\ttmtwo}{\kitwo}{\asig})}}
  = \semg{\ttm}{\ki}{\asig\extsub{\tvar}{(\subs{\ttmtwo}{\asig},\semg{\ttmtwo}{\kitwo}{\asig})}}
  $.
\item
  Implication $\ttm = (\ttmthree\imp\ttmfour)$:
  Similar to the proof for the second-order case
  (\cref{lem:f:substitution}).
\item
  Universal quantifier $\ttm = (\all{\tvartwo}{\kithree}{\ttmthree})$:
  Similar to the proof for the second-order case
  (\cref{lem:f:substitution}).
\item
  Abstraction, $\ttm = \tlam{\tvartwo}{\ki_1}{\ttmthree}$
    with $\ki = \ki_1\to\ki_2$:
  Assume, by $\alpha$-conversion,
  that $\tvartwo \notin \set{\tvar}\cup\fv{\ttmtwo}$
  then the left-hand side of the equation is
  $\semg{\ttm\sub{\tvar}{\ttmtwo}}{\ki_1\to\ki_2}{\asig}
  = \semg{(\tlam{\tvartwo}{\ki_1}{\ttmthree})\sub{\tvar}{\ttmtwo}}{\ki_1\to\ki_2}{\asig}
  = \semg{\tlam{\tvartwo}{\ki_1}{\ttmthree\sub{\tvar}{\ttmtwo}}}{\ki_1\to\ki_2}{\asig}$,
  which is a function $f$ such that:
  \[
    \begin{array}{l}
      f
      :
      \Prod{\ttmfour\in\ValidTypesK{\ki_1}}
      \VCK{\ttmfour}{\ki_1}
      \to
      \VCK{(\subs{(\tlam{\tvartwo}{\ki_1}{\ttmthree\sub{\tvar}{\ttmtwo}})}{\asig}\ttmfour)}{\ki_2}
    \\
      f(\ttmfour,\vcset)
      =
      \semg{\ttmthree\sub{\tvar}{\ttmtwo}}{\ki_2}{\asig\extsub{\tvartwo}{(\ttmfour,\vcset)}}
    \end{array}
  \]
  On the other hand, 
  let $\asigext = \asig\extsub{\tvar}{(\subs{\ttmtwo}{\asig},\semg{\ttmtwo}{\kitwo}{\asig})}$.
  The right-hand side of the equation is
  $\semg{\ttm}{\ki_1\to\ki_2}{\asig\extsub{\tvar}{(\subs{\ttmtwo}{\asig},\semg{\ttmtwo}{\kitwo}{\asig})}}
  = \semg{\tlam{\tvartwo}{\ki_1}{\ttmthree}}{\ki_1\to\ki_2}{\asigext}
  $
  is a function $g$ such that:
  \[
    \begin{array}{l}
      g
      :
      \Prod{\ttmfour\in\ValidTypesK{\ki_1}}
      \VCK{\ttmfour}{\ki_1}
      \to
      \VCK{(\subs{(\tlam{\tvartwo}{\ki_1}{\ttmthree})}{\asigext}\,\ttmfour)}{\ki_2}
    \\
      g(\ttmfour,\vcset)
      =
      \semg{\ttmthree}{\ki_2}{\asigext\extsub{\tvartwo}{(\ttmfour,\vcset)}}
    \end{array}
  \]
  To show that $f = g$,
  note first that they are have the same domain and codomain.
  Indeed, it suffices to observe that for each
  $
    \subs{(\tlam{\tvartwo}{\ki_1}{\ttmthree\sub{\tvar}{\ttmtwo}})}{\asig}
    =
    \subs{(\tlam{\tvartwo}{\ki_1}{\ttmthree})}{\asigext}
  $.
  To conclude, by functional extensionality, it suffices to show that
  $f$ and $g$ are pointwise equal.
  So let $\ttmfour \in \ValidTypesK{\ki_1}$
  and $\vcset \in \VCK{\ttmfour}{\ki_1}$,
  and note that:
  \[
    \begin{array}{rcll}
    &&
      f(\ttmfour,\vcset)
    \\
    & = &
      \semg{\ttmthree\sub{\tvar}{\ttmtwo}}{\ki_2}{\asig\extsub{\tvartwo}{(\ttmfour,\vcset)}}
    \\
    & = &
      \semg{\ttmthree}{\ki_2}{
        \asig
          \extsub{\tvar}{(\subs{\ttmtwo}{\asig},\semg{\ttmtwo}{\kitwo}{\asig})}
          \extsub{\tvartwo}{(\ttmfour,\vcset)}
      }
      & \text{by \ih}
    \\
    & = &
      \semg{\ttmthree}{\ki_2}{\asigext\extsub{\tvartwo}{(\ttmfour,\vcset)}}
    \\
    & = &
      g(\ttmfour,\vcset)
    \end{array}
  \]
\item
  Application, $\ttm = \ttmthree\,\ttmfour$
    with $\ttmthree \in \ValidTypesK{\ki_1\to\ki_2}$
    and $\ttmfour \in \ValidTypesK{\ki_1}$
    and $\ki = \ki_2$:
  Let $\asigext = \asig\extsub{\tvar}{(\subs{\ttmtwo}{\asig},\semg{\ttmtwo}{\kitwo}{\asig})}$.
  Then:
  \[
    \begin{array}{rcll}
    &&
      \semg{\ttm\sub{\tvar}{\ttmtwo}}{\ki}{\asig}
    \\
    & = &
      \semg{(\ttmthree\,\ttmfour)\sub{\tvar}{\ttmtwo}}{\ki_2}{\asig}
    \\
    & = &
      \semg{\ttmthree\sub{\tvar}{\ttmtwo}\,\ttmfour\sub{\tvar}{\ttmtwo}}{\ki_2}{\asig}
    \\
    & = &
      \semg{\ttmthree\sub{\tvar}{\ttmtwo}}{\ki_1\to\ki_2}{\asig}(
        \subs{\ttmfour\sub{\tvar}{\ttmtwo}}{\asig}
        ,
        \semg{\ttmfour\sub{\tvar}{\ttmtwo}}{\ki_1}{\asig}
      )
    \\
    & = &
      \semg{\ttmthree}{\ki_1\to\ki_2}{\asigext}(
        \subs{\ttmfour}{\asigext}
        ,
        \semg{\ttmfour}{\ki_1}{\asigext}
      )
      \HS\text{applying the \ih twice}
    \\
    & = &
      \semg{\ttmthree\,\ttmfour}{\ki_2}{\asigext}
    \\
    & = &
      \semg{\ttm}{\ki}{\asigext}
    \end{array}
  \]
\end{enumerate}
\end{proof}

\begin{lemma}[Stability under $\eqbeta$]
\label{lem:hol:segm_up_to_beta}
Let $(\VC{i})_{i \in \GValidTypes}$ be a \candidateFamily.
Let $\ttm_1,\ttm_2:\ki \in \ValidTypes$
be such that $\ttm_1 \eqbeta \ttm_2$,
and let $\asig$ be a \candidateAssignment.
Then $\semg{\ttm_1}{\ki}{\asig} = \semg{\ttm_2}{\ki}{\asig}$.
\end{lemma}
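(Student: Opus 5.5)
We reduce $\beta$-convertibility to single $\beta$-steps and prove that one step of $\beta$-reduction preserves the interpretation. Since $\eqbeta$ is the equivalence closure of $\tobeta$ on types, and the interpretation is an equality of objects, it suffices to show: if $\ttm_1 \tobeta \ttm_2$ with $\ttm_1:\ki \in \ValidTypes$, then $\semg{\ttm_1}{\ki}{\asig} = \semg{\ttm_2}{\ki}{\asig}$. Subject reduction for kinding gives $\ttm_2:\ki \in \ValidTypes$. We also need intermediate types in a conversion chain to be well-kinded. For this we use confluence and strong normalization of the simply-kinded type level: $\ttm_1 \eqbeta \ttm_2$ implies both reduce to a common reduct $\ttm_3$. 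Then we chain $\ttm_1 \tobetas \ttm_3$ and $\ttm_2 \tobetas \ttm_3$, and every type along these reductions is well-kinded of kind $\ki$.

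We prove the single-step claim by induction on the position of the contracted redex, generalizing over $\asig$ and $\ki$.

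\emph{Root case.} Here $\ttm_1 = (\tlam{\tvar}{\ki_1}{\ttmthree})\,\ttmtwo$ and $\ttm_2 = \ttmthree\sub{\tvar}{\ttmtwo}$. Unfolding the application and abstraction clauses of the interpretation gives
\[
\semg{\ttm_1}{\ki}{\asig} = \semg{\ttmthree}{\ki}{\asig\extsub{\tvar}{(\subs{\ttmtwo}{\asig},\semg{\ttmtwo}{\ki_1}{\asig})}}.
\]
By the substitution lemma (\cref{lem:hol:substitution}), this equals $\semg{\ttmthree\sub{\tvar}{\ttmtwo}}{\ki}{\asig}$.

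\emph{Congruence cases.} These follow from the induction hypothesis.
\begin{itemize}
\item For $\ttm\imp\ttmtwo$ and $\all{\tvar}{\kithree}{\ttm}$, the defining sets mention the verifier $\verif{\subs{(\cdot)}{\asig}}{\utm}$. Since metaterms are taken modulo \logicalBetaEquivalence, $\subs{\ttm_1}{\asig} \eqbeta \subs{\ttm_2}{\asig}$ makes these verifiers literally the same metaterm. The remaining clauses agree by the induction hypothesis; for $\forall$ we apply it under every extended assignment $\asig\extsub{\tvar}{(\ttmthree,\vcset)}$.
\item For an abstraction $\tlam{\tvar}{\ki_1}{\ttm}$, the two functions agree pointwise by the induction hypothesis under extended assignments. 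Their codomains $\VCK{\subs{(\cdot)}{\asig}\ttmthree}{\ki_2}$ coincide by stability of the candidate family under $\beta$-conversion (condition 1 of \cref{def:hol:candidateFamilies}). Functional extensionality then gives equality.
\item For an application $\ttmtwo\,\ttmthree$ with the step inside $\ttmtwo$, the induction hypothesis makes the two functions equal.
\item For an application with the step inside $\ttmthree$, the second component $\semg{\ttmthree}{\ki_1}{\asig}$ is unchanged by the induction hypothesis. The first components $\subs{\ttmthree}{\asig}$ differ only up to $\eqbeta$. We then use the requirement in condition 5 of \cref{def:hol:candidateFamilies} that $f(\ttmthree_1,\vcset) = f(\ttmthree_2,\vcset)$ whenever $\ttmthree_1 \eqbeta \ttmthree_2$; this requirement holds because $\semg{\ttmtwo}{\cdot}{\asig}$ lies in the family by \cref{lem:hol:adequacy_for_types}.
\end{itemize}

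The main obstacle is bookkeeping rather than any single step. Candidate families index by ground types up to $\eqbeta$, so every comparison of sets or functions must check that their ambient index sets agree, which is what stability and the $\eqbeta$-invariance conditions provide. The second delicate point is the root case, where the result rests on the substitution lemma (\cref{lem:hol:substitution}) and on kinds matching ($\kiasig(\tvar) = \ki_1$, ensured by the $\alpha$-renaming convention).
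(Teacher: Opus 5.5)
Your proof is correct and follows the paper's argument. The only substantive case is a root $\beta$-step, which you handle as the paper does: unfold the application and abstraction clauses, invoke the substitution lemma (\cref{lem:hol:substitution}), and let the congruence cases go through by induction. Your detour through a common reduct, together with subject reduction, keeps every intermediate type well-kinded so the interpretation is defined; the paper's induction on the derivation of $\ttm_1 \eqbeta \ttm_2$ leaves this implicit. Only confluence is needed there, not strong normalization. You also make explicit the use of conditions 1 and 5 of \cref{def:hol:candidateFamilies} in the congruence cases, which the paper leaves implicit.
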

\begin{proof}
By induction on the derivation of $\ttm_1 \eqbeta \ttm_2$.
The interesting case is when there is a $\beta$-reduction or expansion
step at the root.
Indeed, let $\ttmtwo\in\ValidTypesK{\ki}$
and $\ttmthree \in \ValidTypesK{\ki_1}$,
and suppose that $\kiasig(\tvar) = \ki_1$
and $\ttm_1 = (\tlam{\tvar}{\ki_1}{\ttmtwo})\,\ttmthree$
and $\ttm_2 = \ttmtwo\sub{\tvar}{\ttmthree}$.
Then we claim that
$\semg{\ttm_1}{\ki}{\asig} = \semg{\ttm_2}{\ki}{\asig}$.
Indeed:
\[
  \begin{array}{rcll}
  &&
    \semg{\ttm_1}{\ki}{\asig}
  \\
  & = &
    \semg{((\tlam{\tvar}{\ki_1}{\ttmtwo})\,\ttmthree)}{\ki}{\asig}
  \\
  & = &
    \semg{(\tlam{\tvar}{\ki_1}{\ttmtwo})}{\ki_1\to\ki}{\asig}(
      \subs{\ttmthree}{\asig},
      \semg{\ttmthree}{\ki_1}{\asig}
    )
  \\
  & = &
    \semg{\ttmtwo}{\ki}{
      \asig
        \extsub{\tvar}{
          (\subs{\ttmthree}{\asig},\semg{\ttmthree}{\ki_1}{\asig})
        }
    }
  \\
  & = &
    \semg{\ttmtwo\sub{\tvar}{\ttmthree}}{\ki}{\asig}
    & \text{by \cref{lem:hol:substitution}}
  \\
  & = &
    \semg{\ttm_2}{\ki}{\asig}
  \end{array}
\]
\end{proof}

\begin{definition}[\ProofSubstitutions and compatibility]
A \defn{\proofSubstitution} is a function $\psubst$
mapping each \proofVariable $\pvar$
to a well-kinded \metaterm $\psubst(\pvar) \in \ValidMetaterms$.
If $\utm \in \ValidMetaterms$ is a well-kinded \metaterm,
we write $\subs{\utm}{\psubst}$
for the well-kinded \metaterm that results from replacing each occurrence of each
free variable $\pvar$ in $\utm$ by $\psubst(\pvar)$, avoiding capture.

If $\asig$ is a \candidateAssignment, $\penv$ is a \proofEnvironment,
and $\psubst$ is a \proofSubstitution,
we say that $\psubst$ is \defn{compatible} with $\penv$ under $\asig$,
written $\compatf{\penv}{\psubst}$,
if for every $\pvar:\ttm \in \penv$
such that $\ttm \in \ValidTypesK{\ki}$
we have that $\subs{\pvar}{\psubst} \in \semg{\ttm}{\ki}{\asig}$.
\end{definition}

\begin{lemma}[Adequacy for terms]
\label{lem:hol:adequacy_for_terms}
Let $(\VC{i})_{i \in \GValidTypes}$ be a \candidateFamily,
and let $\asig$ be a \candidateAssignment.
Let $\pjudg{\tenv}{\penv}{\ptm}{\ttm}$ where $\ttm \in \ValidTypesK{\ki}$
and $\compatf{\penv}{\psubst}$.
Assume moreover that $\tenv$ is compatible with the global \kindAssignment,
\ie $\compat{\tenv}{\kiasig}$ (which can always be fulfilled by appropriately
renaming variables).
Then $\subs{\ptm}{\asig\psubst} \in \semg{\ttm}{\ki}{\asig}$.
\end{lemma}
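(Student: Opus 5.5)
We proceed by induction on the derivation of $\pjudg{\tenv}{\penv}{\ptm}{\ttm}$, following the structure of the proof of \cref{a:lem:f:adequacy_for_terms} for the second-order case. Throughout we use \cref{lem:hol:adequacy_for_types} to know that each $\semg{\ttm}{\Prop}{\asig}$ is a set satisfying the candidate conditions for propositions (it contains $\gen{\subs{\ttm}{\asig}}$, its elements are accepted by $\verif{\subs{\ttm}{\asig}}{}$, and it is closed under $\tow$-conversion). Note that since $\ptm$ is a proof, $\ttm$ is a proposition, so $\ki = \Prop$ in every case.

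The cases are as follows.

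\textbf{Variable.} Immediate from $\compatf{\penv}{\psubst}$.

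\textbf{Introduction of implication.} First, to obtain $\verif{\subs{(\ttmtwo\imp\ttmthree)}{\asig}}{\cdots} \tows \iunit$, we extend $\psubst$ with $\pvar\mapsto\gen{\subs{\ttmtwo}{\asig}}$ and apply the induction hypothesis, exactly as in the second-order proof. Second, for an arbitrary $\utm\in\semg{\ttmtwo}{\Prop}{\asig}$ we extend $\psubst$ with $\pvar\mapsto\utm$, apply the induction hypothesis, and conclude by closure under weak head expansion of one $\beta$-step.

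\textbf{Elimination of implication.} Direct from the second clause of the definition of $\semg{\ttmtwo\imp\ttm}{\Prop}{\asig}$.

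\textbf{Introduction of $\forall$.} For the verifier clause we instantiate $\tvar$ with a fresh \logicalEigenvariable $\teig$ of kind $\ki_0=\kiasig(\tvar)$. For this we need some $\vcset\in\VCK{\teig}{\ki_0}$, which the \emph{inhabited} condition supplies as $\semgbase{\teig}$. We then use irrelevance (\cref{lem:hol:irrelevance}) to transport $\compatf{\penv}{\psubst}$ to $\asig\extsub{\tvar}{(\teig,\semgbase{\teig})}$, using $\tvar\notin\fv{\penv}$. Finally we perform the reduction $\verif{\all{\tvar}{\ki_0}{\cdot}}{(\Lambda\ldots)} \tow \nu\teig.\ldots \tows \nu\teig.\iunit \tow \iunit$, where rule~2 fires because $\teig$ has the right kind. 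The second clause is handled the same way with an arbitrary pair $(\ttmthree,\vcset)$.

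\textbf{Elimination of $\forall$.} We instantiate with $(\subs{\ttmthree}{\asig},\semg{\ttmthree}{\ki_0}{\asig})$, which is legitimate by \cref{lem:hol:adequacy_for_types}, and then rewrite using the substitution lemma (\cref{lem:hol:substitution}).

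\textbf{Conversion rule.} This case is new relative to System~F. From $\ttm\eqbeta\ttmtwo$, \cref{lem:hol:segm_up_to_beta} gives $\semg{\ttm}{\Prop}{\asig}=\semg{\ttmtwo}{\Prop}{\asig}$. Since \metaterms are taken modulo \logicalBetaEquivalence, $\subs{\ptm}{\asig\psubst}$ is unchanged, so the induction hypothesis transfers directly.

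We expect the main obstacle to be bookkeeping about ground types and kinds. Candidate assignments map variables to \emph{ground} types, while the $\forall$ clause of the interpretation quantifies over $\ValidTypesK{\ki}$. We therefore need that $\subs{\ptm}{\asig}$ and the relevant types are ground, so that rule~2 and the \logicalEigenvariable instantiation are kind-correct. We also need that $\psubst$ avoids capture of $\tvar$, which we arrange by $\alpha$-renaming as in the second-order proof, together with $\compat{\tenv}{\kiasig}$. If the interpretation of $\forall$ is read over ground $\ttmtwo$ only, the argument is verbatim; otherwise we restrict to ground witnesses, which suffices for all uses.
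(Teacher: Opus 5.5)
Your proposal is correct and follows the paper's proof. Like the paper, you do induction on the typing derivation, replay the second-order argument using the higher-order irrelevance, substitution and type-adequacy lemmas, and discharge the new conversion case with \cref{lem:hol:segm_up_to_beta}; the appeal to $\tau$-equivalence there is unnecessary, since the conversion rule leaves the proof term literally unchanged. Two further choices of yours are sound refinements of points the paper leaves implicit: taking $\semgbase{\teig}$ from the inhabited condition as the candidate for the fresh eigenvariable in the $\forall$-introduction case, and reading the $\forall$ clause over ground witnesses only, which is in fact forced, because $\VCK{\ttmtwo}{\ki}$ and candidate-assignments exist only for ground $\ttmtwo$.
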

\begin{proof}
By induction on the derivation of the judgment $\pjudg{\tenv}{\penv}{\ptm}{\ttm}$.
The proof is exactly analogous to the proof in the second-order case,
but now using the higher-order versions of the
irrelevance lemma (\cref{lem:hol:irrelevance}),
the substitution lemma (\cref{lem:hol:substitution}),
and the adequacy lemma for types (\cref{lem:hol:adequacy_for_types}).
All rules in the higher-order case follow the same structure reasoning
than in the second order case, except for the conversion rule,
which is new.
In the case of the conversion rule,
let $\pjudg{\tenv}{\penv}{\ptm}{\ttmtwo}$
be derived  from $\pjudg{\tenv}{\penv}{\ptm}{\ttm}$
and $\ttm \eqbeta \ttmtwo$.
By \ih, $\subs{\ptm}{\asig\psubst} \in \semg{\ttm}{\Prop}{\asig}$
so by \cref{lem:hol:segm_up_to_beta}
we conclude that $\subs{\ptm}{\asig\psubst} \in \semg{\ttmtwo}{\Prop}{\asig}$.
\end{proof}

The following result constructs one specific \candidateFamily:

\begin{proposition}
\label{prop:hol:verif_family}
The following indexed family $(\Verif{i})_{i \in \GValidTypes}$,
defined by induction on the \kind, is a \candidateFamily:
\begin{enumerate}
\item $\VerifK{\ttm}{\bki} \eqdef \set{\vcset_\bki}$
      where $\vcset_\bki = \set{\singleElem}$.
\item $\VerifK{\ttm}{\Prop} \eqdef \set{\vcset_{\ttm}}$,
      where $\vcset_{\ttm} \eqdef \set{\utm \in \ValidMetaterms \ST \verif{\ttm}{\utm} \tows \iunit}$.
\item $\VerifK{\ttm}{\ki_1\to\ki_2}$ is the set
      of all functions defined according to
      the definition of \candidateFamilies
      (\cref{def:hol:candidateFamilies},~\cref{def:hol:candidateFamilies:case_arrow}).
\end{enumerate}
\end{proposition}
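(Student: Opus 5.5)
The plan is to verify the five conditions of \cref{def:hol:candidateFamilies} for $(\Verif{i})_{i \in \GValidTypes}$, proceeding by induction on the \kind $\ki$ of the index, since the definition itself is by induction on the \kind. The base-kind and higher-kind clauses are essentially definitional. For a base \kind $\bki$ we set $\VerifK{\ttm}{\bki} = \set{\vcset_\bki}$, and the singleton condition holds directly. Since $\vcset_\bki$ does not depend on $\ttm$, we can take $\semgbase{\ttm} := \vcset_\bki$, which gives both inhabitation and $\beta$-stability. For an arrow \kind $\ki_1\to\ki_2$, condition~\ref{def:hol:candidateFamilies:case_arrow} holds by construction.

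\textbf{Propositions.} For $\Prop$ we take $\semgbase{\ttm} := \vcset_\ttm$ and check the three clauses.
\begin{enumerate}
\item $\gen{\ttm} \in \vcset_\ttm$: this is correctness, \cref{lem:hol:correctness}, using that $\ttm$ is ground and $\gen{\ttm}$ is well-kinded.
\item $\utm \in \vcset_\ttm$ implies $\verif{\ttm}{\utm} \tows \iunit$: this holds by definition of $\vcset_\ttm$.
\item Closure under $\tow$ in both directions. Suppose $\utm_1 \tow \utm_2$. Then $\verif{\ttm}{\utm_1} \tow \verif{\ttm}{\utm_2}$, because $\verif{\ttm}{\wctx}$ is a weak head context. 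Well-kindedness is preserved by reduction.
\begin{itemize}
\item Backward direction: if $\verif{\ttm}{\utm_2} \tows \iunit$, then $\verif{\ttm}{\utm_1} \tows \iunit$ by prepending the step.
\item Forward direction: if $\verif{\ttm}{\utm_1} \tows \iunit$, confluence of $\lambdaCheckFw$ gives $\verif{\ttm}{\utm_2} \tos \iunit$, and standardization then gives $\tows$. Both are the higher-order analogues of \cref{prop:f:confluence,prop:f:standardization}. Alternatively, subcommutativity of $\tow$ (as in \cref{prop:f:subcommutative}) closes the diagram directly, since $\iunit$ is $\tow$-normal.
\end{itemize}
\end{enumerate}

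\textbf{Stability under $\beta$-conversion.} If $\ttm_1 \eqbeta \ttm_2$, then $\gen{\ttm_1}$ and $\gen{\ttm_2}$ are the same \metaterm, because \metaterms are taken modulo \logicalBetaEquivalence. The same holds for $\verif{\ttm_1}{\utm}$ and $\verif{\ttm_2}{\utm}$. Hence $\vcset_{\ttm_1} = \vcset_{\ttm_2}$, and both stability and the equality $\semgbase{\ttm_1}=\semgbase{\ttm_2}$ follow. For arrow kinds, the set of admissible functions depends on $\ttm$ only through the codomains $\VerifK{\ttm\ttmtwo}{\ki_2}$. These agree for $\beta$-convertible $\ttm$ by the induction hypothesis on $\ki_2$, since $\ttm_1\ttmtwo \eqbeta \ttm_2\ttmtwo$.

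\textbf{Inhabitation at arrow kinds (main obstacle).} We must exhibit $\semgbase{\ttm}$ for $\ttm$ of \kind $\ki_1\to\ki_2$ and show it respects $\eqbeta$. We define $\semgbase{\ttm} := f_\ttm$ with $f_\ttm(\ttmtwo,\vcset) := \semgbase{\ttm\ttmtwo}$, using the inductively given inhabitant at $\ki_2$. This lies in the right codomain. It also satisfies the congruence requirement: for $\ttmtwo_1 \eqbeta \ttmtwo_2$ we have $\ttm\ttmtwo_1 \eqbeta \ttm\ttmtwo_2$, so $\semgbase{\ttm\ttmtwo_1} = \semgbase{\ttm\ttmtwo_2}$ by the induction hypothesis on $\ki_2$. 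That same induction hypothesis also yields $f_{\ttm_1} = f_{\ttm_2}$ when $\ttm_1\eqbeta\ttm_2$.

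The subtle point is that the dependent product ranges over $\vcset \in \VerifK{\ttmtwo}{\ki_1}$. We therefore strengthen the induction hypothesis to assert, simultaneously at every \kind, that the inhabitant choice $\semgbase{\cdot}$ is a function of the $\beta$-class of $\ttm$. We also check that the extensional equality of the function spaces at $\ki_1\to\ki_2$ is justified by this hypothesis.
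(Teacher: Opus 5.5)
Your proposal is correct and follows essentially the same route as the paper. It uses induction on the \kind for stability under $\beta$-conversion and for $\beta$-respecting inhabitants, with $\semgbase{\ttm}(\ttmtwo,\vcset) := \semgbase{\ttm\,\ttmtwo}$ at arrow \kinds; correctness (\cref{lem:hol:correctness}) for $\gen{\ttm} \in \vcset_\ttm$; and confluence of weak head reduction for closure under $\tow$, which you justify in more detail than the paper does.
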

\begin{proof}
Let us check that this definition fulfills the
five conditions in \cref{def:hol:candidateFamilies}:
\begin{enumerate}
\item
  Stability under $\beta$-conversion:
  Let $\ttm_1,\ttm_2 \in \GValidTypesK{\ki}$
  such that $\ttm_1 \eqbeta \ttm_2$,
  and let us show that
  $\VerifK{\ttm_1}{\ki} = \VerifK{\ttm_2}{\ki}$.
  We proceed by induction on $\ki$:
  \begin{enumerate}
  \item
    If $\ki$ is a base \kind,
    then
    $\VerifK{\ttm_1}{\ki} = \set{\vcset_\bki} = \VerifK{\ttm_2}{\ki}$.
  \item
    If $\ki = \Prop$,
    then it suffices to note that $\vcset_{\ttm_1} = \vcset_{\ttm_2}$
    because $\verif{\ttm_1}{\utm} \tows \iunit$
    if and only if $\verif{\ttm_2}{\utm} \tows \iunit$
    given that $\ttm_1 \eqbeta \ttm_2$ 
    so $\verif{\ttm_1}{\utm}$ and $\verif{\ttm_2}{\utm}$
    are the same \metaterm.
  \item
    If $\ki = (\ki_1\to\ki_2)$ is a base \kind,
    then $\VerifK{\ttm_1}{\ki_1\to\ki_2}$ is the set of all functions
    \[
      f
      :
      \Prod{\ttmtwo\in\GValidTypesK{\ki_1}}
      \VerifK{\ttmtwo}{\ki_1}
      \to
      \VerifK{\ttm_1\ttmtwo}{\ki_2}
    \]
    such that $f(\ttmtwo_1,\vcset) = f(\ttmtwo_2,\vcset)$ whenever $\ttmtwo_1 \eqbeta \ttmtwo_2$.
    Similarly, $\VerifK{\ttm_1}{\ki_1\to\ki_2}$ is the set of all functions
    \[
      g
      :
      \Prod{\ttmtwo\in\GValidTypesK{\ki_1}}
      \VerifK{\ttmtwo}{\ki_1}
      \to
      \VerifK{\ttm_2\ttmtwo}{\ki_2}
    \]
    such that $g(\ttmtwo_1,\vcset) = g(\ttmtwo_2,\vcset)$ whenever $\ttmtwo_1 \eqbeta \ttmtwo_2$.
    It suffices to note by \ih that 
    $\VerifK{\ttm_1\ttmtwo}{\ki_2} = \VerifK{\ttm_2\ttmtwo}{\ki_2}$,
    so $f$ and $g$ range over functions of the same codomain.
  \end{enumerate}
\item
  Inhabited:
  Let $\ttm \in \GValidTypesK{\ki}$
  and let us construct an element
  $\semgbase{\ttm} \in \VerifK{\ttm}{\ki}$
  in such a way that if $\ttm_1 \eqbeta \ttm_2$
  then $\semgbase{\ttm_1} = \semgbase{\ttm_2}$.
  We proceed by induction on $\ki$.
  \begin{enumerate}
  \item
    If $\ki$ is a base \kind, $\ki = \bki$,
    then $\VerifK{\ttm}{\bki} = \set{\vcset_\bki}$
    so it suffices to take
    $\semgbase{\ttm} \eqdef \vcset_\bki$.
  \item
    If $\ki = \Prop$,
    then $\VerifK{\ttm}{\bki} = \set{\vcset_\ttm}$,
    so it suffices to take
    $\semgbase{\ttm} \eqdef \vcset_\ttm$.
    It is easy to observe that if $\ttm_1 \eqbeta \ttm_2$
    then $\semgbase{\ttm_1} = \semgbase{\ttm_2}$,
    as we already observed for stability under $\beta$-conversion.
  \item
    If $\ki = (\ki_1\to\ki_2)$,
    observe that for every $\ttmtwo\in\GValidTypesK{\ki_1}$
    there exists an element
      $\semgbase{\ttm\,\ttmtwo} \in \VerifK{\ttm\,\ttmtwo}{\ki_2}$
    by \ih. Let:
    \[
      \begin{array}{l}
        \semgbase{\ttm} :
        \Prod{\ttmtwo\in\GValidTypesK{\ki_1}}
        \VerifK{\ttmtwo}{\ki_1}
        \to
        \VerifK{\ttm\ttmtwo}{\ki_2}
      \\
        \semgbase{\ttm}(\ttmtwo,\vcset)
        =
        \semgbase{\ttm\,\ttmtwo}
      \end{array}
    \]
    Note that $\semgbase{\ttm} \in \VerifK{\ttm}{\ki_1\to\ki_2}$
    because if $\ttmtwo_1 \eqbeta \ttmtwo_2$
    and $\vcset \in \VerifK{\ttmtwo_1}{\ki_1}$
    then
    $\semgbase{\ttm}(\ttmtwo_1,\vcset)
     = \semgbase{\ttm\,\ttmtwo_1}
     = \semgbase{\ttm\,\ttmtwo_2}
     = \semgbase{\ttm}(\ttmtwo_2,\vcset)$
    by \ih.
    Moreover, note that if $\ttm_1 \eqbeta \ttm_2$
    then $\semgbase{\ttm_1} = \semgbase{\ttm_2}$
    because for every $\vcset \in \VerifK{\ttmtwo}{\ki_1}$
    we have that
    $\semgbase{\ttm_1}(\ttmtwo,\vcset)
    = \semgbase{\ttm_1\,\ttmtwo}
    = \semgbase{\ttm_2\,\ttmtwo}
    = \semgbase{\ttm_2}(\ttmtwo,\vcset)$
    by \ih.
  \end{enumerate}
\item
  Candidate condition for base \kinds:
  Immediate by definition.
\item
  Candidate condition for propositions:
  Let $\ttm \in \GValidTypesK{\Prop}$.
  Let us check that
  $\vcset_\ttm = \set{\utm \in \ValidMetaterms \ST \verif{\ttm}{\utm} \tows \iunit}$
  has the three required conditions:
  \begin{enumerate}
  \item
    $\gen{\ttm} \in \vcset_\ttm$
    because
    $\verif{\ttm}{\gen{\ttm}} \tos \iunit$.
  \item
    If $\utm \in \vcset_\ttm$
    then $\verif{\ttm}{\gen{\ttm}} \tos \iunit$.
    by definition.
  \item
    Let $\utm_1,\utm_2$ be \metaterms such that $\utm_1 \tow \utm_2$.
    Then $\verif{\ttm}{\utm_1} \tows \iunit$
    if and only if $\verif{\ttm}{\utm_2} \tows \iunit$
    because $\tow$ is confluent.
  \end{enumerate}
\item
  Candidate condition for higher-kinded \logicalTerms:
  Immediate by definition.
\end{enumerate}
\end{proof}

\begin{theorem}[Soundness]
\label{a:thm:hol:soundness}
Let $\penv,\tm,\ttm$ be such that they have no free \logicalVariables.
If $\pjudg{\kiasig}{\penv}{\tm}{\ttm}$
then $\verif{\ttm}{\subs{\tm}{\penv}} \tos \iunit$.
\end{theorem}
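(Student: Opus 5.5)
The plan is to mirror the second-order soundness proof (\cref{thm:f:soundness}), now instantiated with the specific \candidateFamily $(\Verif{i})_{i\in\GValidTypes}$ from \cref{prop:hol:verif_family}. First, fix a \candidateAssignment $\asig$ with respect to this family. For every \logicalVariable $\tvar$ with $\kiasig(\tvar)=\ki$, choose a fresh ground \logicalTerm of \kind $\ki$, for instance a fresh \logicalEigenvariable $\teig_\ki$ with $\kiasig(\teig_\ki)=\ki$; this exists because the global assignment has enough \logicalEigenvariables of each \kind. Then pair it with the distinguished inhabitant $\semgbase{\teig_\ki}\in\VerifK{\teig_\ki}{\ki}$ given by the ``inhabited'' condition of \cref{def:hol:candidateFamilies}. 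Since $\penv$, $\tm$ and $\ttm$ have no free \logicalVariables, $\asig$ acts as the identity on them: $\subs{\ttm}{\asig}=\ttm$, $\subs{\tm}{\asig}=\tm$, and the same holds for the types in $\penv$.

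Next, show that the generative substitution is compatible, namely $\compatf{\penv}{\gensubst{\penv}}$. For each $\pvar:\ttmtwo\in\penv$, the type $\ttmtwo$ is a ground proposition (well-formedness of $\penv$ from the typing derivation). By \cref{lem:hol:adequacy_for_types}, $\semg{\ttmtwo}{\Prop}{\asig}\in\VerifK{\ttmtwo}{\Prop}$, and the candidate condition for propositions gives $\gen{\ttmtwo}\in\semg{\ttmtwo}{\Prop}{\asig}$. In fact $\VerifK{\ttmtwo}{\Prop}$ is the singleton $\set{\vcset_\ttmtwo}$, so membership reduces to correctness (\cref{lem:hol:correctness}).

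Then apply adequacy for terms (\cref{lem:hol:adequacy_for_terms}) to the derivation of $\pjudg{\tenv}{\penv}{\tm}{\ttm}$, with $\tenv\subseteq\kiasig$ witnessing $\pjudg{\kiasig}{\penv}{\tm}{\ttm}$ and $\ttm\in\ValidTypesK{\Prop}$. This yields $\subs{\tm}{\asig\gensubst{\penv}}\in\semg{\ttm}{\Prop}{\asig}$. Using adequacy for types once more, this set is an element of $\VerifK{\subs{\ttm}{\asig}}{\Prop}$, so by condition (b) of the candidate condition for propositions we get $\verif{\subs{\ttm}{\asig}}{\subs{\tm}{\asig\gensubst{\penv}}}\tows\iunit$. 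Erasing $\asig$ by groundness gives $\verif{\ttm}{\subs{\tm}{\penv}}\tos\iunit$.

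The main obstacle is the side bookkeeping, not the core argument: checking that $\asig$ is a legitimate \candidateAssignment, which needs ground \logicalTerms of every \kind together with an inhabitant of each $\VerifK{\cdot}{\ki}$. Both are supplied by the inhabitedness clause, but one must confirm it holds for all \kinds, including arrow \kinds, via the inductive construction in \cref{prop:hol:verif_family}. A second point to check is that $\ttm$ is indeed of \kind $\Prop$ and that $\tenv$ may be taken compatible with $\kiasig$ after renaming, as \cref{lem:hol:adequacy_for_terms} requires.
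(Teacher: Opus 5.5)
Your proposal is correct and follows the paper's proof essentially step by step. It uses the same candidate family $(\Verif{i})_{i\in\GValidTypes}$ and the same assignment $\asig(\tvar)=(\teig_\ki,\semgbase{\teig_\ki})$, gets compatibility of the generative substitution from correctness (the paper states this in one line, and you spell it out via $\VerifK{\ttmtwo}{\Prop}=\set{\vcset_\ttmtwo}$), then applies adequacy for terms and for types and erases $\asig$ by groundness.
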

\begin{proof}
For each \kind $\ki$ suppose that $\teig_\ki$ is a
\logicalEigenvariable such that $\teig_\ki \in \GValidTypesK{\ki}$.
Let $\mathcal{F} = (\Verif{i})_{i \in \GValidTypes}$
be the \candidateFamily of \cref{prop:hol:verif_family}.
Define the following mapping:
\[
  \asig(\tvar) \eqdef (\teig_\ki, \semgbase{\teig_\ki})
  \HS\text{whenever $\kiasig(\tvar) = \ki$}
\]
where $\semgbase{\teig_\ki} \in \VerifK{\teig_\ki}{\ki}$
is given by the construction in~\cref{prop:hol:verif_family}.
Observe that $\asig$ is a $\mathcal{F}$-\candidateAssignment.

Note that $\compat{\penv}{\gensubst{\penv}}$,
where $\gensubst{\penv}$ is the generative substitution for $\penv$,
because if $\pvar:\ttm\in\penv$ then $\verif{\ttm}{\gen{\ttm}} \tows \iunit$
by correctness~(\cref{lem:hol:correctness}).

Note that the fact that the judgment $\pjudg{\kiasig}{\penv}{\tm}{\ttm}$
holds implies that $\tjudg{\kiasig}{\ttm}{\Prop}$,
and moreover $\ttm$ is ground by hypothesis,
so $\ttm \in \GValidTypesK{\Prop}$.
Then by \cref{lem:hol:adequacy_for_terms}
we have that $\subs{\ptm}{\asig\penv} \in \semg{\ttm}{\Prop}{\asig}$,
where moreover $\semg{\ttm}{\Prop}{\asig} \in \VerifK{\subs{\ttm}{\asig}}{\Prop}$
by \cref{lem:hol:adequacy_for_types}.
By the definition of the family $\mathcal{F}$ in \cref{prop:hol:verif_family},
this means that
$\verif{\subs{\ttm}{\asig}}{\subs{\ptm}{\asig\penv}} \tows \iunit$.
Since $\ttm$ and $\ptm$ have no free \logicalVariables,
$\subs{\ttm}{\asig} = \ttm$ and $\subs{\ptm}{\asig} = \ptm$.
Hence $\verif{\ttm}{\ptm} \tows \iunit$,
as required.
\end{proof}

\end{document}
\endinput